\documentclass[12pt]{article}

\usepackage[english]{babel} 
\usepackage[protrusion=true,expansion=true]{microtype} 
\usepackage{setspace}
\usepackage{indentfirst}

\usepackage{amsmath,amsfonts,amsthm}
\usepackage{amssymb}
\usepackage{mathtools}
\mathtoolsset{showonlyrefs}
\usepackage{bigints}
\usepackage{yhmath}

\usepackage{graphicx,psfrag,epsf}
\usepackage{tikz}
\usetikzlibrary{shapes,decorations,arrows,calc,arrows.meta,fit,positioning}

\usepackage{array}
\usepackage{booktabs} 
\usepackage{tabularx}
\usepackage{multirow}
\usepackage{diagbox}
\usepackage[export]{adjustbox}
\usepackage{footnote}
\usepackage{placeins}
\usepackage[font = small,labelfont=bf,textfont=it]{caption} 
\usepackage[font = small,labelfont=bf,textfont=it]{subcaption}

\usepackage{enumerate}
\usepackage{enumitem}
\usepackage{algorithm}
\usepackage{algpseudocode}

\usepackage{apacite}
\usepackage{natbib}
\usepackage{hyperref}
\usepackage{url}
\usepackage[toc,page]{appendix}
\usepackage{textcomp}
\usepackage{chngcntr}
\usepackage{comment}
\usepackage{pdfprivacy}
\usepackage{soul}
\usepackage[table]{xcolor}
\usepackage{etoolbox}
\usepackage{authblk}
\usepackage{makecell}

\newlist{steps}{enumerate}{1}
\setlist[steps, 1]{label = Step \arabic*:}
\tikzstyle{process} = [rectangle, rounded corners, minimum width=2cm, minimum height=1cm, text width=2cm, text centered, draw=black, fill=white]
\tikzstyle{arrow} = [thick,->,>=stealth]
\tikzset{
    -Latex,auto,node distance =1 cm and 1 cm,semithick,
    state/.style ={ellipse, draw, minimum width = 0.7 cm},
    point/.style = {circle, draw, inner sep=0.04cm,fill,node contents={}},
    bidirected/.style={Latex-Latex,dashed},
    el/.style = {inner sep=2pt, align=left, sloped}
}
\tikzstyle{decision} = [diamond, 
minimum width=2cm, 
minimum height=1cm, 
text centered, 
draw=black,
fill=white]

\algblock{ParFor}{EndParFor}
\algnewcommand\algorithmicparfor{\textbf{for}}
\algnewcommand\algorithmicpardo{\textbf{do\ parallel}}
\algnewcommand\algorithmicendparfor{\textbf{end\ parallel\ for}}
\algrenewtext{ParFor}[1]{\algorithmicparfor\ #1\ \algorithmicpardo}
\algrenewtext{EndParFor}{\algorithmicendparfor}

\makeatletter
\def\BState{\State\hskip-\ALG@thistlm}

\newcommand{\distas}[1]{\mathbin{\overset{#1}{\kern\z@\sim}}}%

\newsavebox{\mybox}\newsavebox{\mysim}
\newcommand{\distras}[1]{%
  \savebox{\mybox}{\hbox{\kern3pt$\scriptstyle#1$\kern3pt}}%
  \savebox{\mysim}{\hbox{$\sim$}}%
  \mathbin{\overset{#1}{\kern\z@\resizebox{\wd\mybox}{\ht\mysim}{$\sim$}}}%
}
\newtheorem{theorem}{Theorem}

\newtheorem{proposition}{Proposition}
\newtheorem{corollary}{Corollary}
\newtheorem{lemma}{Lemma}

\newcommand{\be}{\begin{equation}}
\newcommand{\ee}{\end{equation}}
    \newcommand{\bi}{\begin{itemize}}
\newcommand{\ei}{\end{itemize}}
\newcommand{\ben}{\begin{enumerate}}
\newcommand{\een}{\end{enumerate}}

\newcommand*{\rom}[1]{\expandafter\@slowromancap\romannumeral #1@}

\newcolumntype{K}[1]{\geq {\centering\arraybackslash}p{#1}}
\allowdisplaybreaks
\DeclareMathOperator*{\argmin}{\arg\!\min}
\makeatother

\let\oldbibliography\thebibliography
\renewcommand{\thebibliography}[1]{\oldbibliography{#1}
\setlength{\itemsep}{0pt}} 

\newcommand{\blind}{0}

\patchcmd{\footnotemark}{\stepcounter{footnote}}{\refstepcounter{footnote}}{}{}

\begin{document}

\def\spacingset#1{\renewcommand{\baselinestretch}%
{#1}\small\normalsize} \spacingset{1}

\if0\blind
{
  \title{\bf Rational Quadratic Kriging
  }
  \small
   \author{Difan Song and V. Roshan Joseph\thanks{Corresponding author: roshan@gatech.edu.} \vspace{3mm} \\
    H. Milton Stewart School of Industrial and Systems Engineering, \\
    Georgia Institute of Technology, Atlanta, GA, 30332}
  \date{}
  \maketitle
} \fi

\if1\blind
{
 \title{\bf }
  \bigskip
  \bigskip
  \bigskip
  \begin{center}
    {\LARGE\bf Rational Quadratic Kriging}
\end{center}

  \medskip
} \fi

\begin{abstract}
This article proposes rational quadratic kriging, a kriging estimator that addresses the limitations of existing approaches: the mean reversion issue of ordinary kriging and the diverging variance issue of rational kriging. Through the rational quadratic kernel, we establish a connection between rational quadratic kriging and kernel regression. The connection allows us to prove desirable theoretical properties of the new estimator and further propose a novel estimation procedure that utilizes factor weights from kernel regression. We also derive expressions to enable efficient leave-one-out estimation and conformal prediction. Extensive simulations across both deterministic functions and noisy datasets show significant improvements in both prediction and uncertainty quantification. A fast implementation is available in the open-source R package \texttt{RQK}.
\end{abstract}

\noindent
{\it Keywords: Kernel regression; Ordinary kriging; Rational kriging; Surrogate modeling; Uncertainty quantification.}
\vfill

\newpage
\spacingset{1.55} 

\section{Introduction} \label{sec:intro}

Kriging is a widely used nonparametric technique for fitting scattered data while quantifying uncertainty. Originating in geostatistics \citep{Matheron_1963}, kriging/Gaussian process has become prominent in fields including spatial statistics \citep{cressie1993statistics}, computer experiments \citep{Santner_Williams_Notz_2018}, and machine learning \citep{Rasmussen_Williams_2005} for modeling data that are either deterministic or stochastic in nature. Some examples include modeling multiple types of forces acting on rockets \citep{Gramacy_Lee_2012}, optimizing 3D-printed aortic tissues \citep{Chen_2021}, and predicting damage of Aircraft-UAV collisions \citep{liu2023statistical}. 

The simplest form of kriging is \textit{ordinary kriging}, which is the best linear unbiased predictor under the assumption that the data-generating process is second-order stationary \citep{Santner_Williams_Notz_2018}. If the stochastic process has mean $\mu$, variance $\sigma^2$, and correlation kernel $R(\cdot, \cdot)$, then, given the data vectors $\{(\mathbf{x}_i, y_i)\}_{i = 1}^n$, $\mathbf{x}_i, \mathbf{x} \in \mathbb{R}^d$, the ordinary kriging predictor and predictive variance are given by: \begin{align}
    \hat{m}_{\mathrm{OK}}(\mathbf{x}) &= \hat \mu + \mathbf{r}(\mathbf{x})^\top \mathbf{R}^{-1} (\mathbf{y} - \hat \mu \mathbf{1}), \label{eqn:OK_mean} \\
    s^2_{\mathrm{OK}}(\mathbf{x}) &= \sigma^2 \left\{1 - \mathbf{r}(\mathbf{x})^\top \mathbf{R}^{-1}\mathbf{r}(\mathbf{x}) + \frac{[\mathbf{r}(\mathbf{x})^\top \mathbf{R}^{-1} \mathbf{1} - 1]^2}{\mathbf{1}^\top \mathbf{R}^{-1} \mathbf{1}} \right\} \label{eqn:OK_sd},
\end{align} where $\mathbf{1}$ is a vector of ones, $\mathbf{R}$ is the correlation matrix of the $n$ observations, $\mathbf{r}(\mathbf{x}) = [R(\mathbf{x}, \mathbf{x}_1), \dots, R(\mathbf{x}, \mathbf{x}_n)]^\top$, and $\hat \mu = \frac{\mathbf{1}^\top \mathbf{R}^{-1} \mathbf{y}}{\mathbf{1}^\top \mathbf{R}^{-1} \mathbf{1}}$ is the generalized least squares estimate of $\mu$.

Despite its popularity, ordinary kriging is known to suffer from mean reversion \citep{Joseph_2006,Zhang_Apley_2014}. Assume the correlation function \begin{align}
    R(\mathbf{x}_i, \mathbf{x}_j) = G\left(\frac{\|\mathbf{x}_i - \mathbf{x}_j\|}{\theta}\right),
\end{align} where $G(\cdot)$ is nonnegative and decreasing, $\|\cdot\|$ is a norm, and $\theta > 0$ is the length scale parameter. In a low-correlation regime, $\theta \to 0$, and the ordinary kriging predictions converge to $\mu$. We demonstrate this phenomenon with a one-dimensional toy example: \begin{align}
    y = 2\sin(0.9x) + 0.15(x - 5)^2 - 1 + 0.5 \mathbb{I}\{2.5 \leq x \leq 3.5\} \sin(20x),
\end{align} where $\mathbb{I}\{\cdot\}$ denotes the indicator function. Consider panel (a) of Figure \ref{fig:motivating_example}, where we observe 10 samples in the high-frequency interval, and five other samples in the smooth region. The ordinary kriging predictions shown in the dashed line are significantly ``pulled'' towards the mean, and the plug-in prediction intervals $[\hat m_{\mathrm{OK}}(\mathbf{x}) -1.96 s_{\mathrm{OK}}(\mathbf{x}), \hat m_{\mathrm{OK}}(\mathbf{x}) + 1.96 s_{\mathrm{OK}}(\mathbf{x})]$ become large outside the high-frequency interval.

\citet{Joseph_2025} proposed an alternative predictor named \textit{rational kriging} predictor: \begin{align}
    \hat{m}_{\mathrm{RK}}(\mathbf{x}) &= \frac{\mathbf{r}(\mathbf{x})^\top \mathbf{R}^{-1} \mathrm{diag}(\mathbf{R} \mathbf{c})\mathbf{y}}{\mathbf{r}(\mathbf{x})^\top \mathbf{c}}, \label{eqn:RK_mean} \\
    s^2_{\mathrm{RK}}(\mathbf{x}) &= \frac{\nu^2}{\left\{\mathbf{r}(\mathbf{x})^\top \mathbf{c}\right\}^2}\left\{1 - \mathbf{r}(\mathbf{x})^\top \mathbf{R}^{-1}\mathbf{r}(\mathbf{x})\right\}, \label{eqn:RK_sd}
\end{align} where $\mathbf{c} \in \mathbb{R}_+^n$. Although \citet{Joseph_2025} proposed rational kriging with the motivation of improving parameter estimation, a specific choice of $\mathbf{c}$ ties the predictor to the existing \textit{limit kriging} predictor \citep{Joseph_2006}, which is a solution to mean reversion. Panel (b) of Figure \ref{fig:motivating_example} shows the predictions given by rational kriging on the toy example, where $\mathbf{c}$ is calculated by the Perron eigenvector of $\mathbf{R}$. We can see that the mean reversion issue is mitigated, as expected. Unfortunately, this is at the price of undesirable uncertainty quantification. For finite $\mathbf{c}$, the prediction intervals become unreasonably large due to the rational form. \citet{Joseph_2025} proposed a modification named \textit{generalized rational kriging} by replacing $\mathbf{r}(\mathbf{x})^\top \mathbf{c}$ with $c_0 + \mathbf{r}(\mathbf{x})^\top \mathbf{c}$. However, this was only a partial remedy, as in the low-correlation regime, the generalized predictor is both theoretically and practically similar to ordinary kriging, as shown in panel (c) of Figure \ref{fig:motivating_example}.

\begin{figure}[tb]
\centering

\begin{subfigure}{0.33\textwidth}
\centering
\includegraphics[width=\textwidth]{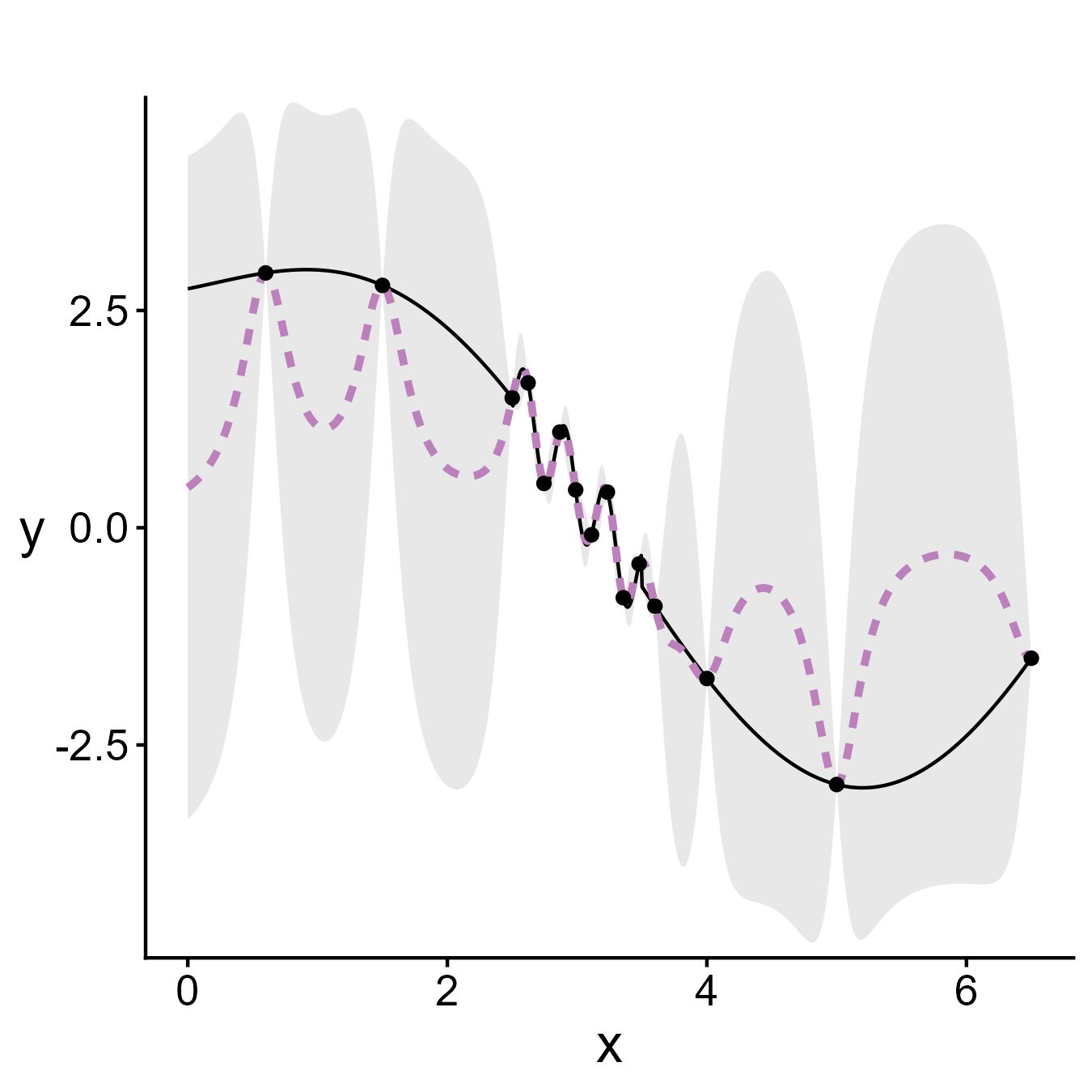}
\caption{Ordinary kriging.}
\end{subfigure}%
\begin{subfigure}{0.33\textwidth}
\centering
\includegraphics[width=\textwidth]{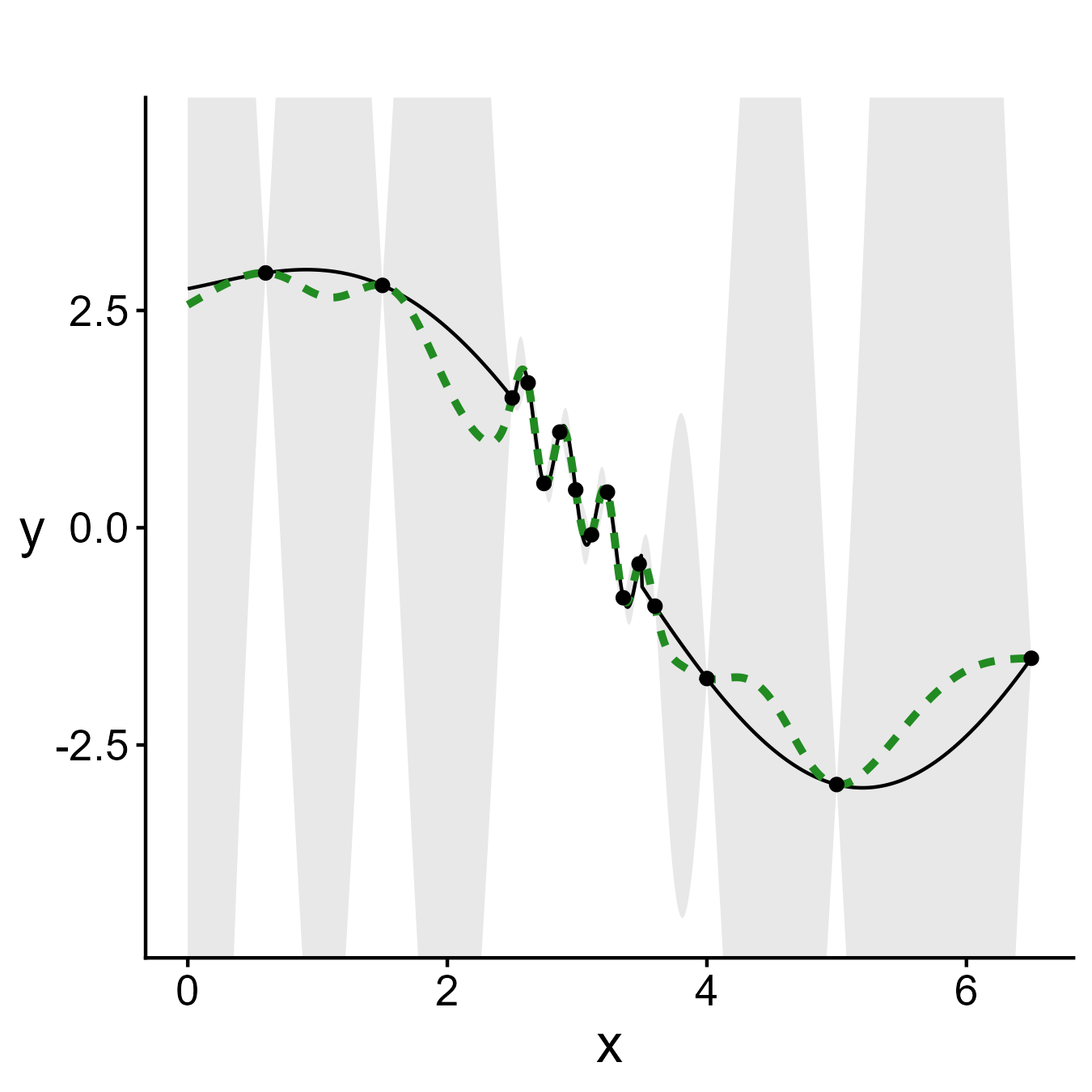}
\caption{Rational kriging.}
\end{subfigure}%
\begin{subfigure}{0.33\textwidth}
\centering
\includegraphics[width=\textwidth]{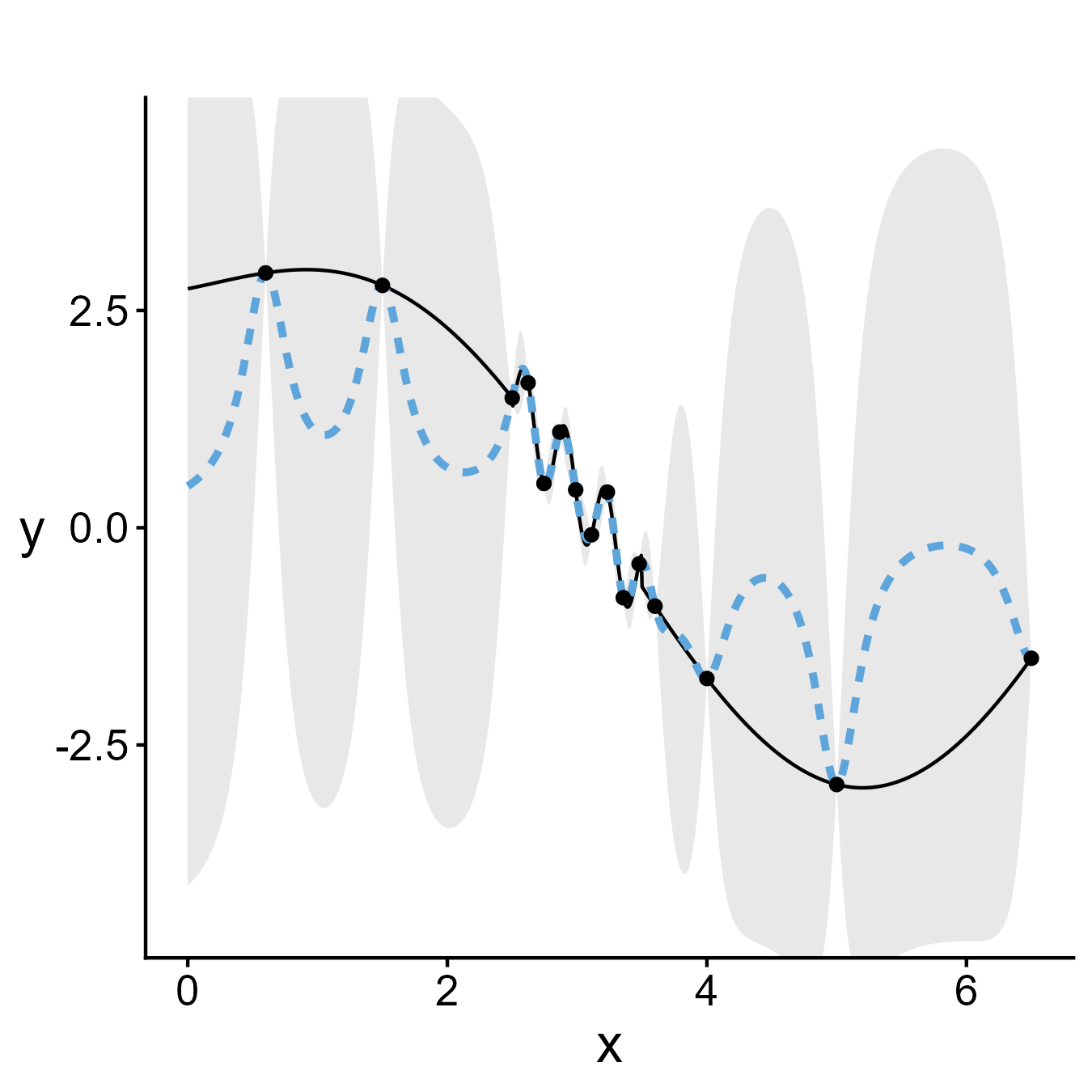}
\caption{Generalized rational kriging.}
\end{subfigure}%

\caption{Predictions and 95\% prediction intervals on the toy example for several existing kriging approaches. The fits are from the \texttt{rkriging} package \citep{rkriging}.}
\label{fig:motivating_example}

\end{figure}

The rational kriging predictor \eqref{eqn:RK_mean} is a special case of rational radial basis functions (RBFs), introduced by \citet{Jakobsson_Anderson_Edelvik_2009}. Later developments \citep{Sarra_Bai_2018, DeMarchi_2019} demonstrated strong performance of such predictors for functions with steep gradients and discontinuities. The theoretical results of \citet{Buhmann_2020} further showed that rational predictors achieve convergence rates similar to those of traditional RBFs, while increasing the stability of the estimates. However, rational RBFs do not provide uncertainty quantification. Therefore, we stick to the kriging framework, aiming to achieve the best of both worlds: accurate point predictions and reliable prediction intervals.

In this article, we show that this is indeed possible by an extension of rational kriging, enabled by the flexibility in choosing $\mathbf{c}$ in Equation \ref{eqn:RK_mean}. Specifically, we use the rational quadratic kernel and a $\mathbf{c}$ vector directly tied to the kernel evaluations. This specification allows us to connect with classic nonparametric estimators, including Shepard-type inverse-distance weighting (IDW) \citep{Shepard_1968} and Nadaraya-Watson-type kernel regression estimators \citep{Nadaraya_1964}. The relationships guide both theoretical analysis and practical implementation. We name our method rational quadratic kriging (RQK), and advocate its use over the existing ordinary and rational kriging implementations.

The remainder of the article is organized as follows. Section \ref{sec:RQK} introduces the RQK estimator. Section \ref{sec:convergence} provides a theoretical analysis of the convergence of the proposed estimator. Section \ref{sec:fitting} outlines practical fitting strategies that increase accuracy and stability, while also improving uncertainty quantification. Section \ref{sec:simulation} presents simulation results across a suite of simulations, including both deterministic functions and noisy datasets. Section \ref{sec:conclusion} concludes with some remarks and future directions.

\section{The rational quadratic kriging estimator} \label{sec:RQK}

We now specify the RQK model as an adaptation of the rational kriging model with particular choices of the kernel $R(\cdot, \cdot)$ and the vector $\mathbf{c}$.

First, we use the rational quadratic kernel \citep{Rasmussen_Williams_2005} as the correlation kernel: \begin{align} \label{eqn:RQ_kernel}
     R(\mathbf{x}_i, \mathbf{x}_j) = \left(1+\frac{\|\mathbf{x}_i - \mathbf{x}_j\|_{\mathbf{w}}^2}{\theta^2}\right)^{-p},
\end{align} where $p > 0$ is the shape parameter, and $\|\mathbf{x}\|_{\mathbf{w}}^2=\sum_{k = 1}^d w_k x_k^2$ with the weights satisfying $w_k > 0, \sum_{k=1}^d w_k = 1$. Note that the kernel is anisotropic due to the weighting, and the usual anisotropic length scales can be recovered by $\theta_k = \theta/\sqrt{w_k}$. For this kernel, we have the following result.

\begin{proposition}
\label{prop:RK_limit}
Assume $\mathbf{x} \notin \{\mathbf{x}_i\}_{i = 1}^n$. If $\mathbf{c} = \mathbf{1}$ or chosen as the adjusted Perron eigenvector of $\mathbf{R}$ given in \cite{Joseph_2025}, then under the rational quadratic kernel in \eqref{eqn:RQ_kernel}: \begin{align}
     \lim_{\theta \to 0} \hat m_{\mathrm{RK}}(\mathbf{x}) &= \frac{\sum_{i = 1}^n \|\mathbf{x} - \mathbf{x}_i\|_{\mathbf{w}}^{-2p} y_i}{\sum_{i = j}^n \|\mathbf{x} - \mathbf{x}_j\|_{\mathbf{w}}^{-2p}}, \\
     \lim_{\theta \to 0} s^2_{\mathrm{RK}}(\mathbf{x}) &= \infty.
\end{align}
\end{proposition}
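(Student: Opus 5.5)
The plan is to track the orders of magnitude of all quantities as $\theta\to 0$ and then take ratios. Write $d_i=\|\mathbf{x}-\mathbf{x}_i\|_{\mathbf{w}}>0$, which is valid since $\mathbf{x}\notin\{\mathbf{x}_i\}$, and $d_{ij}=\|\mathbf{x}_i-\mathbf{x}_j\|_{\mathbf{w}}>0$ for $i\neq j$, assuming the design points are distinct. Under \eqref{eqn:RQ_kernel},
\begin{align}
R(\mathbf{x},\mathbf{x}_i)=\theta^{2p}(\theta^2+d_i^2)^{-p}=\theta^{2p}d_i^{-2p}\{1+o(1)\},
\end{align}
so $\mathbf{r}(\mathbf{x})=\theta^{2p}\{\mathbf{a}+o(1)\}$ with $\mathbf{a}=(d_1^{-2p},\dots,d_n^{-2p})^\top$ strictly positive. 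Similarly $\mathbf{R}=\mathbf{I}+\theta^{2p}\{\mathbf{D}+o(1)\}$, where $\mathbf{D}$ has zero diagonal and off-diagonal entries $d_{ij}^{-2p}$. Consequently $\mathbf{R}\to\mathbf{I}$ and $\mathbf{R}^{-1}=\mathbf{I}+O(\theta^{2p})$.

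\textbf{The vector $\mathbf{c}$.} The predictor \eqref{eqn:RK_mean} is invariant to rescaling $\mathbf{c}$, so only the direction of $\mathbf{c}$ matters. I will show that $\mathbf{c}$, suitably normalized, converges to a positive multiple of $\mathbf{1}$.
\begin{itemize}
\item For $\mathbf{c}=\mathbf{1}$ this is immediate.
\item For the Perron choice, note that $\mathbf{R}$ and $(\mathbf{R}-\mathbf{I})/\theta^{2p}$ have the same eigenvectors. The latter converges to $\mathbf{D}$, which is entrywise nonnegative and irreducible (all off-diagonal entries are positive), so its Perron root is simple. By continuity of a simple eigenvector, the normalized Perron eigenvector of $\mathbf{R}$ converges to a strictly positive vector $\mathbf{u}^*$.
\item I then need to check that the \emph{adjustment} in \cite{Joseph_2025} compensates for $\mathbf{u}^*$, so that the adjusted $\mathbf{c}$ tends to a constant vector up to scale. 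I expect this to be the main obstacle. Checking it requires recalling the exact form of the adjustment and verifying that it cancels the non-uniformity of $\mathbf{u}^*$ at leading order; the limit of $\mathbf{u}^*$ itself is generally not constant.
\end{itemize}
Once $\mathbf{c}\to\kappa\mathbf{1}$ with $\kappa>0$ (after normalization), we get $\mathrm{diag}(\mathbf{R}\mathbf{c})=\kappa\mathbf{I}+o(1)$.

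\textbf{The mean.} The numerator of \eqref{eqn:RK_mean} is
\begin{align}
\mathbf{r}^\top\mathbf{R}^{-1}\mathrm{diag}(\mathbf{R}\mathbf{c})\mathbf{y}=\theta^{2p}\kappa\{\mathbf{a}^\top\mathbf{y}+o(1)\},
\end{align}
and the denominator is $\mathbf{r}^\top\mathbf{c}=\theta^{2p}\kappa\{\mathbf{a}^\top\mathbf{1}+o(1)\}$. Cancelling the common factor $\theta^{2p}\kappa$ gives the limit $\sum_i d_i^{-2p}y_i/\sum_j d_j^{-2p}$, which is the claimed inverse-distance-weighting form. The cancellation is legitimate because $\mathbf{a}^\top\mathbf{1}>0$.

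\textbf{The variance.} In \eqref{eqn:RK_sd}, $\mathbf{r}^\top\mathbf{R}^{-1}\mathbf{r}=O(\theta^{4p})\to 0$, so the braced factor tends to $1$. Meanwhile $\mathbf{r}^\top\mathbf{c}=O(\theta^{2p})\to 0$ under the same normalization of $\mathbf{c}$. This assumes $\nu^2$ is computed consistently with that normalization, i.e.\ $\nu^2$ scales as $\|\mathbf{c}\|^2$ and stays bounded away from zero. I would verify this from its estimator in \cite{Joseph_2025}. Granting it, $s^2_{\mathrm{RK}}(\mathbf{x})\asymp \nu^2\theta^{-4p}\to\infty$.
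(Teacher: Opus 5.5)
Your argument for $\mathbf{c}=\mathbf{1}$ and for the variance is correct. It uses the same device as the paper's appendix computation for the RQK analogue (Proposition~\ref{prop:RQK_limit}): expand $R(\mathbf{x},\mathbf{x}_i)=\theta^{2p}\|\mathbf{x}-\mathbf{x}_i\|_{\mathbf{w}}^{-2p}\{1+o(1)\}$, use $\mathbf{R}\to\mathbf{I}$, and take ratios.

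The gap is the adjusted Perron case, which you leave open, and it is not a routine check. Your own expansion shows what is at stake. If the normalized $\mathbf{c}$ converges to a positive vector $\mathbf{c}^*$, the predictor converges to the weighted IDW $\sum_i c_i^* d_i^{-2p}y_i/\sum_j c_j^* d_j^{-2p}$. This equals the plain IDW for every $\mathbf{y}$ only if $\mathbf{c}^*\propto\mathbf{1}$. For the unadjusted Perron vector you correctly found $\mathbf{c}^*=\mathbf{u}^*$, the Perron vector of $\mathbf{D}$, which is generally not constant. So the whole content of the claim for that choice is that the adjustment forces $c_i/c_j\to 1$, and nothing in your proposal establishes this. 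The paper does not derive it either: it takes the first part of the proposition from Theorem~3 of \citet{Joseph_2025}. To close the gap, you must either invoke that theorem for the Perron case, or write down the adjustment explicitly and prove that the adjusted vector is asymptotically proportional to $\mathbf{1}$. As written, you have proved the mean limit only for $\mathbf{c}=\mathbf{1}$ (or any $\mathbf{c}$ asymptotically proportional to it).

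Your caveat about $\nu^2$ is unnecessary in this paper's setting. The paper treats $\nu^2$ as a fixed hyperparameter (see the remark after Proposition~\ref{prop:RQK_limit}, ``when we fix $\nu^2$''). It also asserts the divergence for finite $\mathbf{c}$ (``For finite $\mathbf{c}$, the prediction intervals become unreasonably large''). With $\nu^2$ fixed and $\mathbf{c}$ bounded, $\mathbf{r}(\mathbf{x})^\top\mathbf{c}=O(\theta^{2p})$ and $1-\mathbf{r}(\mathbf{x})^\top\mathbf{R}^{-1}\mathbf{r}(\mathbf{x})\to 1$ already give $s^2_{\mathrm{RK}}(\mathbf{x})\to\infty$. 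This holds for the Perron choice too, whatever the adjustment, as long as the adjusted vector stays bounded. No appeal to the estimator of $\nu^2$ in \citet{Joseph_2025} is needed.
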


The first part of the proposition is Theorem 3 of \citet{Joseph_2025}, which states that $\hat m_{\mathrm{RK}}(\mathbf{x})$ converges to the IDW estimator \citep{Shepard_1968} as the length scale $\theta \to 0$. However, the predictive variance diverges, which we observed for the toy example in Figure \ref{fig:motivating_example}. The choice of $\mathbf{c}$ plays a crucial role in the convergence of the predictor and the variance. Here, we investigate if another choice of $\mathbf{c}$ can make the prediction variance finite.

To do this, we first establish a connection between kernel regression and rational kriging. Consider a weighted version of the Nadaraya-Watson kernel regression estimator \citep{Nadaraya_1964}: \begin{align}
    \hat m_{\mathrm{KR}}(\mathbf{x}) = \frac{\sum_{i = 1}^n R(\mathbf{x}, \mathbf{x}_i) c_i y_i}{\sum_{j = 1}^n R(\mathbf{x}, \mathbf{x}_j) c_j}, \label{eqn:KR}
\end{align} 
where $c_i \geq 0$ are some weights. In contrast with kriging, kernel regression only smooths the data and does not interpolate. To achieve interpolation, \citet{Kang_Joseph_2016} described the following iterative procedure: first, fit $\mathbf{y}$ by kernel regression, and acquire estimators $\hat m_{\mathrm{KR}}^{(1)}(\mathbf{x})$ and the residuals $\mathbf{e}^{(1)}$. Then, fit $\mathbf{e}^{(1)}$ by kernel regression using the same kernel to acquire $\hat m_{\mathrm{KR}}^{(2)}(\mathbf{x})$ and the residuals $\mathbf{e}^{(2)}$. Repeating the procedure infinitely many times yields rational kriging, as stated in the following result.

\begin{proposition} If the correlation kernel is symmetric positive definite and $\mathbf{r}(\mathbf{x})^\top \mathbf{c} > 0$, 
\begin{align}
    \lim_{N \to \infty} \sum_{l = 1}^N m_{\mathrm{KR}}^{(l)}(\mathbf{x}) \to \hat m_{\mathrm{RK}}(\mathbf{x}).
\end{align}
\end{proposition}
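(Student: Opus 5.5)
The plan is to write the iterated kernel-regression scheme in matrix form, recognize the accumulated fit as a truncated Neumann series, and show that this series converges to the rational kriging predictor \eqref{eqn:RK_mean}. Throughout I assume $c_i>0$ for all $i$ and $R(\cdot,\cdot)\ge 0$, which holds for the rational quadratic kernel \eqref{eqn:RQ_kernel}. The proposition only states $\mathbf{r}(\mathbf{x})^\top\mathbf{c}>0$, so this is a strengthening; I would state it explicitly at the start of the proof. Write $\mathbf{C}=\mathrm{diag}(\mathbf{c})$ and $\mathbf{D}=\mathrm{diag}(\mathbf{R}\mathbf{c})$, so that $\mathbf{D}$ is invertible under these assumptions.

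\textbf{Step 1: matrix form of the iteration.} Evaluating \eqref{eqn:KR} at the design points shows that kernel regression applied to a data vector $\mathbf{v}$ returns the fitted vector $\mathbf{S}\mathbf{v}$, where $\mathbf{S}=\mathbf{D}^{-1}\mathbf{R}\mathbf{C}$. Starting from $\mathbf{e}^{(0)}=\mathbf{y}$, each step replaces the residual $\mathbf{e}^{(l-1)}$ by $\mathbf{e}^{(l)}=(\mathbf{I}-\mathbf{S})\mathbf{e}^{(l-1)}$, so $\mathbf{e}^{(l)}=(\mathbf{I}-\mathbf{S})^{l}\mathbf{y}$. The $l$-th fitted function is
\begin{align}
m^{(l)}_{\mathrm{KR}}(\mathbf{x})=\frac{\mathbf{r}(\mathbf{x})^\top\mathbf{C}\,\mathbf{e}^{(l-1)}}{\mathbf{r}(\mathbf{x})^\top\mathbf{c}}.
\end{align}
Summing over $l$ gives
\begin{align}
\sum_{l=1}^N m^{(l)}_{\mathrm{KR}}(\mathbf{x})=\frac{\mathbf{r}(\mathbf{x})^\top\mathbf{C}\sum_{k=0}^{N-1}(\mathbf{I}-\mathbf{S})^k\mathbf{y}}{\mathbf{r}(\mathbf{x})^\top\mathbf{c}}.
\end{align}

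\textbf{Step 2: the spectrum of $\mathbf{S}$, which is the main obstacle.} Because $\mathbf{C}$ and $\mathbf{D}$ are diagonal, they commute, and a similarity transformation gives
\begin{align}
\mathbf{D}^{1/2}\mathbf{C}^{1/2}\,\mathbf{S}\,\mathbf{C}^{-1/2}\mathbf{D}^{-1/2}=\mathbf{D}^{-1/2}\mathbf{C}^{1/2}\mathbf{R}\mathbf{C}^{1/2}\mathbf{D}^{-1/2}.
\end{align}
The right-hand side is symmetric positive definite, since $\mathbf{R}$ is positive definite and $\mathbf{C}^{1/2}\mathbf{D}^{-1/2}$ is invertible. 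Hence $\mathbf{S}$ is diagonalizable and all of its eigenvalues are real and strictly positive.

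Next, $\mathbf{S}$ has nonnegative entries and $\mathbf{S}\mathbf{1}=\mathbf{D}^{-1}\mathbf{R}\mathbf{c}=\mathbf{1}$, so $\mathbf{S}$ is row-stochastic and its spectral radius equals $1$. Combining the two facts, the eigenvalues of $\mathbf{S}$ lie in $(0,1]$. Consequently the eigenvalues of $\mathbf{I}-\mathbf{S}$ lie in $[0,1)$, and $\mathbf{I}-\mathbf{S}$ is diagonalizable.

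\textbf{Step 3: convergence of the Neumann series.} Since $\mathbf{I}-\mathbf{S}$ is diagonalizable with spectral radius strictly below $1$, the partial sums $\sum_{k=0}^{N-1}(\mathbf{I}-\mathbf{S})^k$ converge to $\mathbf{S}^{-1}$. Inverting the factors of $\mathbf{S}$ gives $\mathbf{S}^{-1}=\mathbf{C}^{-1}\mathbf{R}^{-1}\mathbf{D}$. The denominator $\mathbf{r}(\mathbf{x})^\top\mathbf{c}$ is positive by hypothesis, so the limit passes through the quotient, and
\begin{align}
\lim_{N\to\infty}\sum_{l=1}^N m^{(l)}_{\mathrm{KR}}(\mathbf{x})=\frac{\mathbf{r}(\mathbf{x})^\top\mathbf{C}\,\mathbf{C}^{-1}\mathbf{R}^{-1}\mathrm{diag}(\mathbf{R}\mathbf{c})\,\mathbf{y}}{\mathbf{r}(\mathbf{x})^\top\mathbf{c}}=\frac{\mathbf{r}(\mathbf{x})^\top\mathbf{R}^{-1}\mathrm{diag}(\mathbf{R}\mathbf{c})\,\mathbf{y}}{\mathbf{r}(\mathbf{x})^\top\mathbf{c}}.
\end{align}
This is exactly $\hat m_{\mathrm{RK}}(\mathbf{x})$ in \eqref{eqn:RK_mean}.

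The delicate part is Step 2. It relies on $\mathbf{c}>0$, to form $\mathbf{C}^{1/2}$ and $\mathbf{C}^{-1}$ and to make $\mathbf{D}$ invertible, and on $R\ge0$, for the stochasticity bound. If some $c_i=0$, I would instead run the argument on the subspace spanned by the coordinates with $c_i>0$. Otherwise I would state positivity of $\mathbf{c}$ and nonnegativity of the kernel as standing assumptions, as is the case for the Perron eigenvector choice under the rational quadratic kernel.
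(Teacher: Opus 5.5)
Your proof is correct. The paper gives no argument of its own here; it defers entirely to Theorem 1 of \citet{Kang_Joseph_2016}, so your proposal works as a self-contained replacement for that citation. The three ingredients are exactly what is needed:
\begin{itemize}
\item the residual recursion $\mathbf{e}^{(l)}=(\mathbf{I}-\mathbf{S})^{l}\mathbf{y}$ with $\mathbf{S}=\mathbf{D}^{-1}\mathbf{R}\mathbf{C}$;
\item the diagonal similarity by $\mathbf{D}^{1/2}\mathbf{C}^{1/2}$, which gives real positive eigenvalues by congruence with $\mathbf{R}$;
\item row-stochasticity, which caps those eigenvalues at $1$.
\end{itemize}
Together these make the Neumann series converge to $\mathbf{S}^{-1}=\mathbf{C}^{-1}\mathbf{R}^{-1}\mathbf{D}$.

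What your version adds over the citation is that it exposes the hypotheses actually in use. Positive definiteness and $\mathbf{r}(\mathbf{x})^\top\mathbf{c}>0$ alone are not enough. You also need the paper's standing assumptions that $G\geq 0$ and $\mathbf{c}\in\mathbb{R}_+^n$. For example, take $n=2$, $\mathbf{c}=\mathbf{1}$ and off-diagonal correlation $-\rho$. Then $\mathbf{S}=\mathbf{R}/(1-\rho)$ has eigenvalue $(1+\rho)/(1-\rho)>2$ once $\rho>1/3$, and the partial sums generically diverge. So you are right to state $R\geq 0$ and $c_i>0$ explicitly.

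One correction: drop your fallback for the case $c_i=0$. Restricting the iteration to the coordinates with $c_i>0$ does converge, but to the rational kriging predictor of that subdesign, not to \eqref{eqn:RK_mean}. The reason is that $\mathrm{diag}(\mathbf{R}\mathbf{c})$ is still nonzero at the zero-weight sites, and $\mathbf{R}^{-1}$ couples them to the rest.

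Concretely, take $n=2$ and $\mathbf{c}=(1,0)^\top$. The first fit is the constant $y_1$, after which the residual at $\mathbf{x}_1$ is zero. Every later fit then vanishes, so the iterated sum equals $y_1$ at every $\mathbf{x}$. Yet $\hat m_{\mathrm{RK}}(\mathbf{x}_2)=y_2$, because $\mathbf{r}(\mathbf{x}_2)^\top\mathbf{R}^{-1}$ is the second unit vector and $[\mathbf{R}\mathbf{c}]_2=R(\mathbf{x}_1,\mathbf{x}_2)>0$. Strict positivity of $\mathbf{c}$ is therefore a necessary hypothesis, not a convenience, even though \eqref{eqn:KR} is stated with $c_i\geq 0$.
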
 
The proof follows directly from Theorem 1 of \citet{Kang_Joseph_2016}. Since rational kriging is the limit of the iterated kernel regressions, we can choose $\mathbf{c}$ so that the initial kernel regression in \eqref{eqn:KR} is good. It is well known that the Nadaraya-Watson estimator is the solution to the local weighted least squares problem: \begin{align}
    \hat m(\mathbf{x}) = \argmin_{m} \int_{\mathcal{X}} R(\mathbf{x}, \mathbf{t}) \{y(\mathbf{t}) - m\}^2 d\mathbf{t} = \frac{\int_{\mathcal{X}} R(\mathbf{x}, \mathbf{t}) y(\mathbf{t}) d\mathbf{t}}{\int_{\mathcal{X}} R(\mathbf{x}, \mathbf{t}) d\mathbf{t}}.
\end{align}
Assuming the density $f_{\mathbf{X}}(\cdot)$ exists, the estimator can be written as \begin{align}
    \hat m(\mathbf{x}) = \int_{\mathcal{X}} \frac{R(\mathbf{x}, \mathbf{t}) y(\mathbf{t})}{f_{\mathbf{X}}(\mathbf{t})} f_{\mathbf{X}}(\mathbf{t}) d\mathbf{t} \bigg/ \int_{\mathcal{X}} \frac{R(\mathbf{x}, \mathbf{t})}{f_{\mathbf{X}}(\mathbf{t})} f_{\mathbf{X}}(\mathbf{t}) d\mathbf{t},
\end{align} with sample analog 
\begin{align} \label{eqn:KR_compare}
    \hat m(\mathbf{x}) = \frac{1}{n} \sum_{i = 1}^n \frac{R(\mathbf{x}, \mathbf{x}_i) y_i}{f_{\mathbf{X}}(\mathbf{x}_i)} \bigg/ \frac{1}{n} \sum_{j = 1}^n \frac{R(\mathbf{x}, \mathbf{x}_j)}{f_{\mathbf{X}}(\mathbf{x}_j)}.
\end{align}
Thus, $c_i \propto 1/f_{\mathbf{X}}(\mathbf{x}_i)$, $i = 1, \ldots, n$, seems to be a good choice. Assuming $p > d/2$ and $w_l > 0$ for $l = 1, \dots, d$, the kernel density estimator of $f_{\mathbf{X}}$ using the rational quadratic kernel \eqref{eqn:RQ_kernel} is given by \begin{align}
    \hat{f}_{\mathbf{X}}(\mathbf{x}_i) = \frac{1}{n} \sum_{j = 1}^n  R(\mathbf{x}_i, \mathbf{x}_j)/Z,
\end{align}
where $Z=\Gamma(p - d/2) \pi^{d/2}  \theta^d/\{\Gamma(p) \prod_{k = 1}^d \sqrt{w_l}\}$ is the normalizing constant and $\Gamma(\cdot)$ is the Gamma function. This suggests choosing $c_i = \left\{\sum_{j = 1}^n R(\mathbf{x}_i, \mathbf{x}_j)\right\}^{-1}$. However, a slight modification of this choice using the leave-one-out density estimator is needed to achieve convergence of the variance, as stated in the following result.

\begin{proposition}
\label{prop:RQK_limit}
Assume $\mathbf{x} \notin \{\mathbf{x}_i\}_{i = 1}^n$ and let $a_{i, \mathrm{IDW}}(\mathbf{x}) = \|\mathbf{x} - \mathbf{x}_i\|_{\mathbf{w}}^{-2p}/\sum_{j\neq i}\|\mathbf{x}_i - \mathbf{x}_j\|_{\mathbf{w}}^{-2p}$. If 
\begin{align} \label{eqn:new_c}
     c_i = \left\{\sum_{j \neq i} R(\mathbf{x}_i, \mathbf{x}_j)\right\}^{-1}, \quad i = 1, \dots, n,
\end{align}
then under the rational quadratic kernel 
\begin{align}
     \lim_{\theta \to 0} \hat m_{\mathrm{RQK}}(\mathbf{x}) &= \frac{\sum_{i = 1}^n a_{i, \mathrm{IDW}}(\mathbf{x}) y_i}{\sum_{i = 1}^n a_{i, \mathrm{IDW}}(\mathbf{x})}, \\
     \lim_{\theta \to 0} s^2_{\mathrm{RQK}}(\mathbf{x}) &= \frac{\nu^2}{\left\{\sum_{i = 1}^n a_{i, \mathrm{IDW}}(\mathbf{x})\right\}^2},
\end{align}
where $\hat m_{\mathrm{RQK}}(\mathbf{x})$ and $s^2_{\mathrm{RQK}}(\mathbf{x})$ have the same expressions as Equations \eqref{eqn:RK_mean} and \eqref{eqn:RK_sd}, respectively.
\end{proposition}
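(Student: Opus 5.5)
The plan is a direct asymptotic expansion in the small parameter $\epsilon = \theta^{2p}$. All the quantities in \eqref{eqn:RK_mean} and \eqref{eqn:RK_sd} are algebraic in $\mathbf{R}$, $\mathbf{r}(\mathbf{x})$ and $\mathbf{c}$, so it suffices to expand these three objects to leading order and track the powers of $\epsilon$.

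\textbf{Kernel expansion.} For any pair of distinct points $\mathbf{u}\neq\mathbf{v}$,
\begin{align}
R(\mathbf{u},\mathbf{v}) = \theta^{2p}\left(\theta^2+\|\mathbf{u}-\mathbf{v}\|_{\mathbf{w}}^2\right)^{-p} = \epsilon\,\|\mathbf{u}-\mathbf{v}\|_{\mathbf{w}}^{-2p}\{1+O(\theta^2)\}.
\end{align}
Define $u_i = \|\mathbf{x}-\mathbf{x}_i\|_{\mathbf{w}}^{-2p}$, which is finite because $\mathbf{x}\notin\{\mathbf{x}_i\}$. Let $\mathbf{D}$ be the matrix with entries $D_{ij}=\|\mathbf{x}_i-\mathbf{x}_j\|_{\mathbf{w}}^{-2p}$ for $i\neq j$ and $D_{ii}=0$, and set $b_i = \{\sum_{j\neq i} D_{ij}\}^{-1}$. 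Then, as $\theta\to 0$,
\begin{align}
\mathbf{r}(\mathbf{x}) &= \epsilon\,\mathbf{u}\{1+o(1)\}, \\
\mathbf{R} &= \mathbf{I}+\epsilon\mathbf{D}\{1+o(1)\}, \\
c_i &= \epsilon^{-1} b_i\{1+o(1)\},
\end{align}
where the $o(1)$ factors are entrywise. Consequently $\mathbf{R}^{-1} = \mathbf{I}-\epsilon\mathbf{D}+O(\epsilon^2)$ by a Neumann series, which is valid once $\epsilon\|\mathbf{D}\|<1$.

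\textbf{Denominator and point prediction.} The denominator is
\begin{align}
\mathbf{r}(\mathbf{x})^\top\mathbf{c} = \sum_i u_i b_i + o(1) = \sum_i a_{i,\mathrm{IDW}}(\mathbf{x}) + o(1),
\end{align}
since $u_ib_i = a_{i,\mathrm{IDW}}(\mathbf{x})$. The key point is that the factor $\epsilon$ in $\mathbf{r}(\mathbf{x})$ cancels the factor $\epsilon^{-1}$ in $\mathbf{c}$. This cancellation is exactly what the leave-one-out choice \eqref{eqn:new_c} buys: including the diagonal term $R(\mathbf{x}_i,\mathbf{x}_i)=1$ would keep $c_i$ bounded, and the denominator would then vanish.

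For the numerator, first compute
\begin{align}
(\mathbf{R}\mathbf{c})_i = c_i + \sum_{j\neq i}R_{ij}c_j = \epsilon^{-1}b_i\{1+o(1)\} + O(1),
\end{align}
so that $\mathrm{diag}(\mathbf{R}\mathbf{c}) = \epsilon^{-1}\mathrm{diag}(\mathbf{b})\{1+o(1)\}$. Also,
\begin{align}
\mathbf{r}(\mathbf{x})^\top\mathbf{R}^{-1} = \epsilon\,\mathbf{u}^\top + o(\epsilon).
\end{align}
Multiplying, the numerator equals $\sum_i u_ib_iy_i + o(1)$. Dividing by the denominator gives the stated limit of $\hat m_{\mathrm{RQK}}(\mathbf{x})$.

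\textbf{Variance.} Here $\mathbf{r}(\mathbf{x})^\top\mathbf{R}^{-1}\mathbf{r}(\mathbf{x}) = O(\epsilon^2)\to 0$, so the braced factor in \eqref{eqn:RK_sd} tends to $1$. Combined with the denominator limit, this yields
\begin{align}
s^2_{\mathrm{RQK}}(\mathbf{x}) \to \frac{\nu^2}{\left\{\sum_i a_{i,\mathrm{IDW}}(\mathbf{x})\right\}^2}.
\end{align}

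\textbf{Main obstacle.} The step needing the most care is the error bookkeeping in the product $\mathbf{r}(\mathbf{x})^\top\mathbf{R}^{-1}\mathrm{diag}(\mathbf{R}\mathbf{c})$. There an $O(\epsilon)$ row vector multiplies an $O(\epsilon^{-1})$ diagonal matrix, so every neglected relative error must genuinely be $o(1)$ and not merely bounded. I would handle this by writing each factor as its leading term times $\{1+o(1)\}$ entrywise, with $n$ fixed, and expanding the product explicitly. The only structural facts required are that all pairwise distances are positive, so that each $b_i$ is finite and positive, and that each $u_i>0$.
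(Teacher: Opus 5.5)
Your proposal is correct and follows essentially the same route as the paper's proof: expand $R(\mathbf{x},\mathbf{x}_i)$ and $c_i$ to leading order in $\theta^{2p}$, use $\mathbf{R}^{-1}\to\mathbf{I}$ and $[\mathbf{R}\mathbf{c}]_i=c_i\{1+o(1)\}$ so that the $\theta^{2p}$ in $\mathbf{r}(\mathbf{x})$ cancels the $\theta^{-2p}$ in $\mathbf{c}$, and use $\mathbf{r}(\mathbf{x})^\top\mathbf{R}^{-1}\mathbf{r}(\mathbf{x})\to 0$ for the variance; your additive $o(1)$ bookkeeping is slightly cleaner than the paper's, which attaches a multiplicative $\{1+o(1)\}$ to a sum of terms of varying sign. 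One harmless slip: for $p>1$ the Neumann remainder is $O(\epsilon\theta^2+\epsilon^2)$, not $O(\epsilon^2)$, because the off-diagonal entries carry a relative error of $O(\theta^2)$, but you only ever need $\mathbf{R}^{-1}=\mathbf{I}+O(\epsilon)$, so nothing changes.
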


\begin{figure}[tb]
\centering

\begin{subfigure}{0.33\textwidth}
\centering
\includegraphics[width=\textwidth]{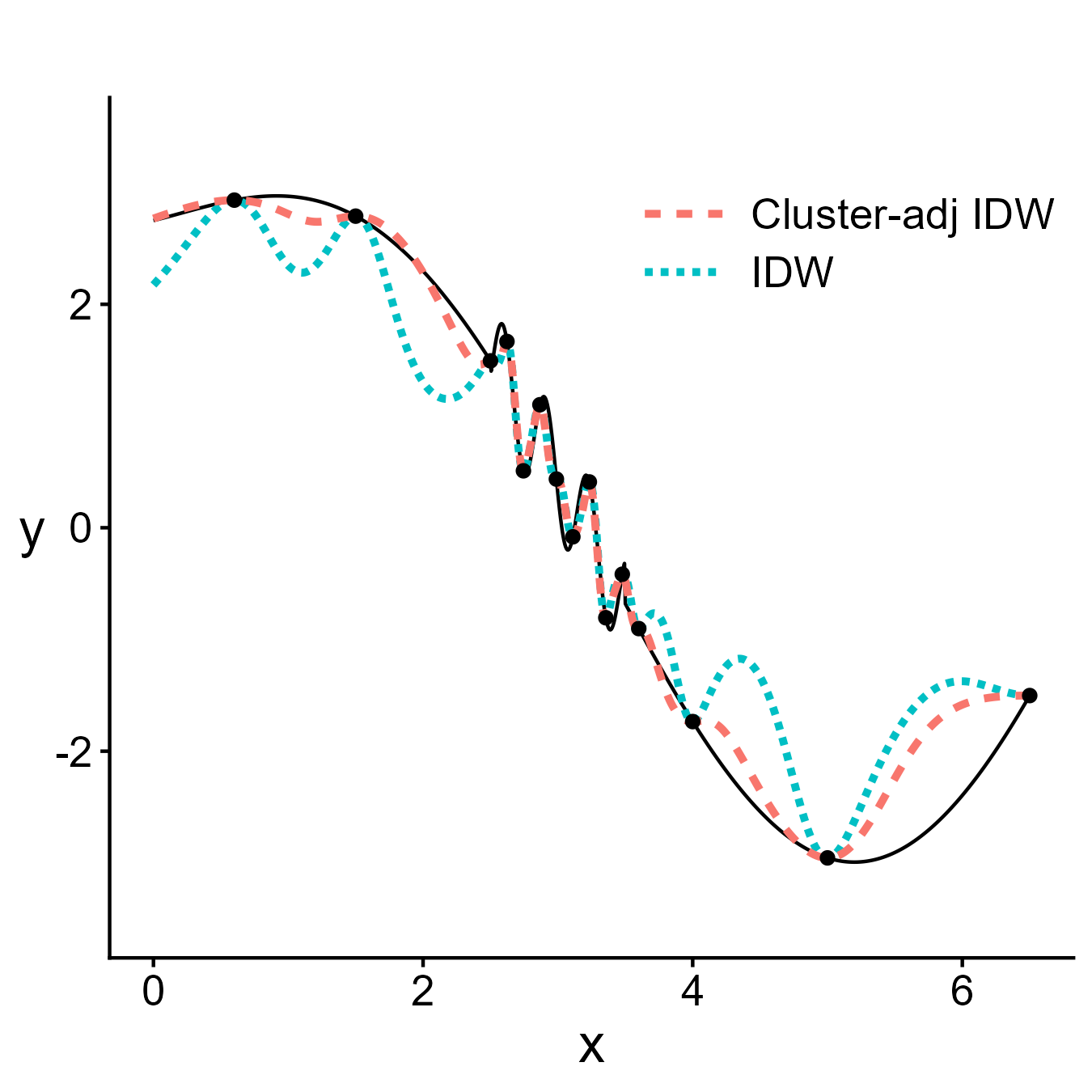}
\caption{IDW estimators.}
\end{subfigure}%
\begin{subfigure}{0.33\textwidth}
\centering
\includegraphics[width=\textwidth]{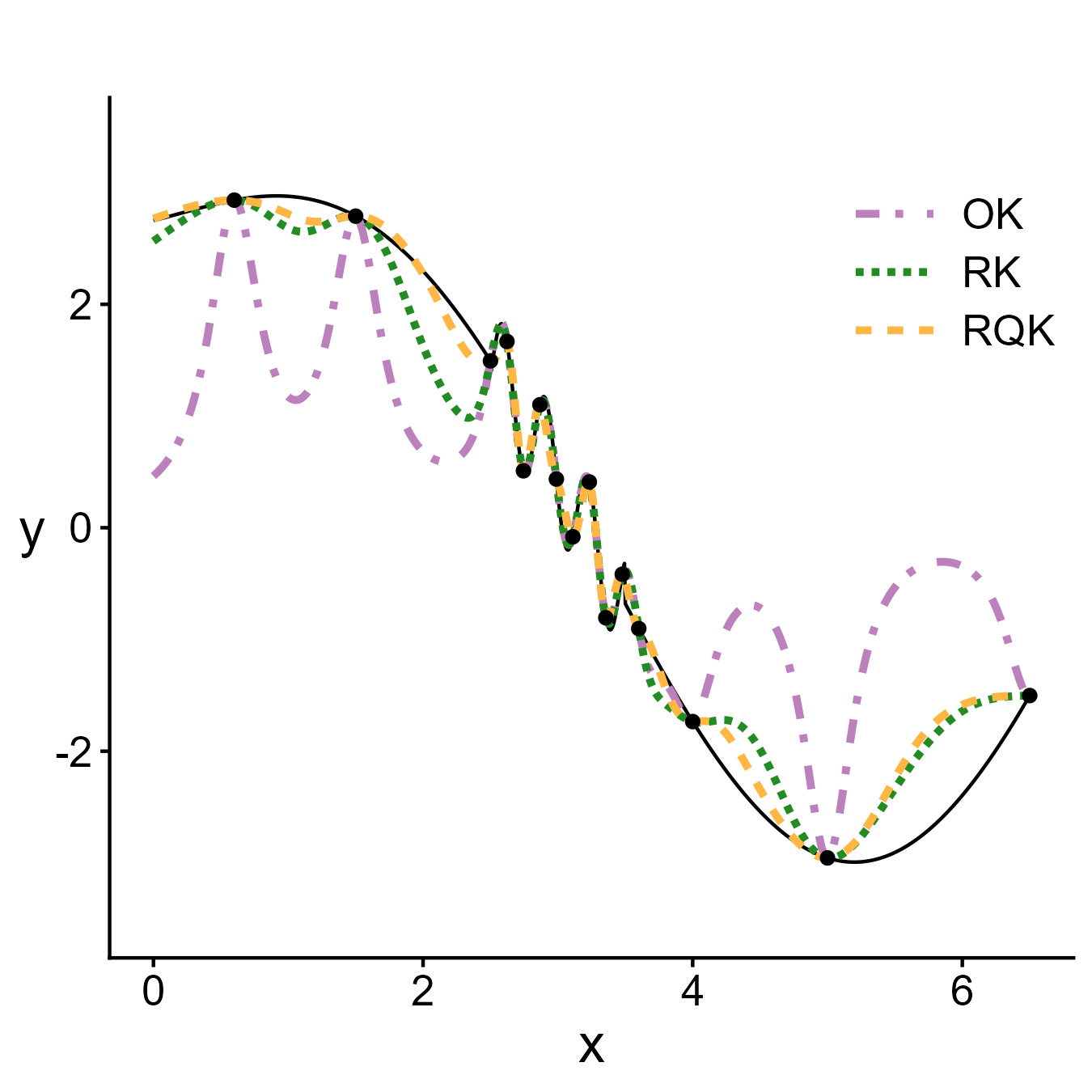}
\caption{Kriging predictors.}
\end{subfigure}%
\begin{subfigure}{0.33\textwidth}
\centering
\includegraphics[width=\textwidth]{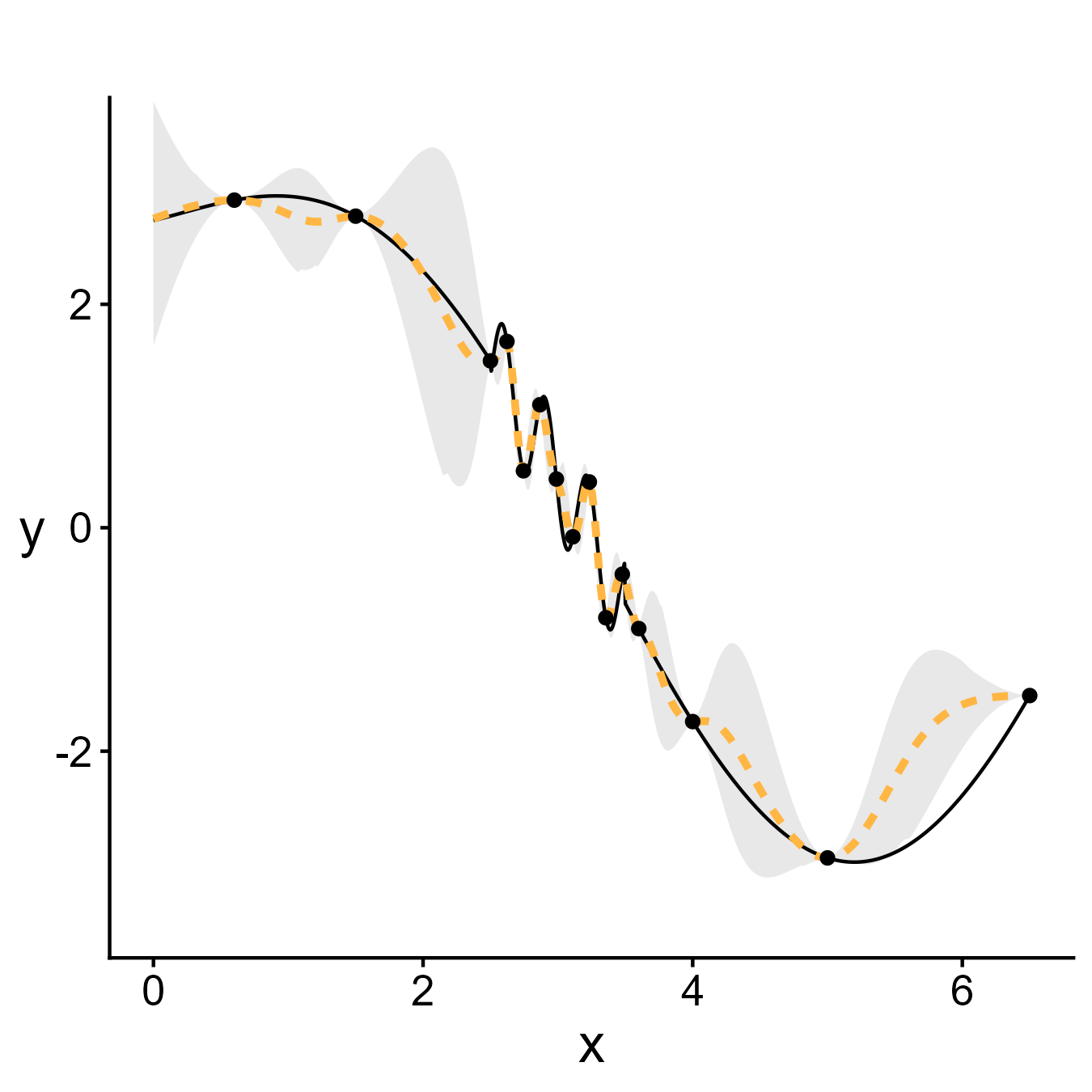}
\caption{RQK.}
\end{subfigure}%

\caption{Predictions for IDW and kriging estimators on the toy example. The 95\% confidence interval for RQK is also provided.}
\label{fig:motivating_example_RQK}

\end{figure}

The first part of Proposition \ref{prop:RQK_limit} shows that RQK also reduces to an IDW-type estimator in the limit. In Section 3, we will investigate the conditions under which the estimator becomes ``local enough'' to overcome the mean reversion issue. Furthermore, instead of the vanilla IDW estimator with weights $\|\mathbf{x} - \mathbf{x}_i\|_{\mathbf{w}}^{-2p}$, the limiting estimator has an extra denominator $\sum_{j \neq i} \|\mathbf{x}_i - \mathbf{x}_j\|_{\mathbf{w}}^{-2p}$. This term downweights observations in dense local clusters, thereby reducing their impact on out-of-sample estimates. Therefore, we expect this \textit{cluster-adjusted} IDW estimator to further mitigate mean reversion caused by non-uniform sampling, which is exactly what panel (a) of Figure \ref{fig:motivating_example_RQK} shows for the toy example introduced in Section \ref{sec:intro}. The pattern also extends to the comparison between RQK and existing kriging predictors. Panel (b) of Figure \ref{fig:motivating_example_RQK}, where ordinary and rational kriging are the same as Figure \ref{fig:motivating_example}, demonstrates that RQK overcomes mean reversion and accurately approximates the smooth portion of the function.

The second part of the proposition indicates that when we fix $\nu^2$, the predictive variance becomes $\mathcal{O}(1)$ in the limiting case, providing a sharp contrast with Proposition \ref{prop:RK_limit}. In RQK, the same kernel that drives $\mathbf{r}(\mathbf{x})$ also scales $\mathbf{c}$, making the denominator of the rational form $\mathbf{r}(\mathbf{x})^\top \mathbf{c}$ stable. This stability is crucial to improving the uncertainty quantification of our model. Panel (c) of Figure \ref{fig:motivating_example_RQK} provides the RQK prediction intervals, which are much shorter than those in Figure \ref{fig:motivating_example} and reasonably reflect the prediction uncertainty. In Section \ref{subsec:conformal}, we will discuss more about calibrating the prediction intervals.

\section{Convergence of the RQK predictor} \label{sec:convergence}

In this section, we investigate the convergence of the RQK predictor as $n \to \infty$. Following Proposition \ref{prop:RQK_limit}, we continue to work in the low-correlation regime where $\theta \to 0$. Throughout the remainder of this article, we assume that $\mathbf{x} \in \mathcal{X} = [0, 1]^d$, where $d$ is the dimension. 


Previously, we established the connection of RQK to the kernel regression estimator \eqref{eqn:KR}. We next show that, in a low-correlation regime, the two estimators converge, without the need for the iterative procedure in \citet{Kang_Joseph_2016}. For notational simplicity, let $\mathbf{a}_{\mathrm{RQK}}(\mathbf{x})$ and $\mathbf{a}_{\mathrm{KR}}(\mathbf{x})$ denote the weight vectors of $\hat m_{\mathrm{RQK}}(\mathbf{x})$ and $\hat {m}_{\mathrm{KR}}(\mathbf{x})$, respectively. In other words, \begin{align}
    \mathbf{a}_{\mathrm{RQK}}(\mathbf{x}) = \frac{\mathrm{diag}(\mathbf{R} \mathbf{c}) \mathbf{R}^{-1} \mathbf{r}(\mathbf{x})}{\mathbf{r}(\mathbf{x})^\top \mathbf{c}}, \quad a_{i, \mathrm{KR}}(\mathbf{x}) = \frac{R(\mathbf{x}, \mathbf{x}_i) c_i}{\sum_{j = 1}^n R(\mathbf{x}, \mathbf{x}_j) c_j}.
\end{align}

\begin{lemma}[Proximity of RQK to kernel regression] \label{lemma:RQK_KR}
Fix $n$ and $\theta > 0$, define \begin{align}
    A_n(\theta) := \frac{\max_{1 \leq i \leq n} c_i}{\min_{1 \leq i \leq n} c_i}, \quad \zeta_n(\theta) := \max_{1 \leq i \leq n} \sum_{j \neq i} R(\mathbf{x}_i, \mathbf{x}_j) = \frac{1}{\min_{1 \leq i \leq n} c_i}
\end{align}

Suppose (i) $A_n(\theta) < \infty$, and (ii) $\zeta_n(\theta) < 1$. Then, there exists a constant $C = C(A_n(\theta), \zeta_n(\theta))$ such that, for every $\mathbf{x}\in \mathcal{X}$, \begin{align}
    \|\mathbf{a}_{\mathrm{RQK}}(\mathbf{x}) - \mathbf{a}_{\mathrm{KR}}(\mathbf{x})\|_1 \leq C \zeta_n(\theta).
\end{align}
Consequently, \begin{align}
    \sup_{x \in \mathcal{X}} |\hat m_{\mathrm{RQK}}(\mathbf{x}) - \hat m_{\mathrm{KR}}(\mathbf{x})| \leq C \zeta_n(\theta) \|\mathbf{y}\|_{\infty}.
\end{align}

Moreover, if for some $\bar \theta > 0$, (i) $\sup_{0 < \theta \leq \bar \theta} A_n(\theta) \leq \bar A < \infty$, and (ii) $\zeta_n(\bar \theta) < 1$, there exists a constant $C = C(\bar A, \zeta_n(\bar \theta))$ such that \begin{align}
    \sup_{0 < \theta \leq \bar \theta} \sup_{x \in \mathcal{X}} |\hat m_{\mathrm{RQK}}(\mathbf{x}) - \hat m_{\mathrm{KR}}(\mathbf{x})| \leq C \zeta_n(\bar \theta) \|\mathbf{y}\|_{\infty}. \label{eqn:lemma_uniform_bound}
\end{align}
\end{lemma}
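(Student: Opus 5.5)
We write $\mathbf{R} = \mathbf{I} + \mathbf{E}$, where $\mathbf{E}$ has zero diagonal (the kernel \eqref{eqn:RQ_kernel} equals one on the diagonal) and off-diagonal entries $R(\mathbf{x}_i,\mathbf{x}_j)\ge 0$. The maximum absolute row sum of $\mathbf{E}$ is exactly $\zeta_n(\theta)$, so $\|\mathbf{E}\|_\infty = \zeta_n(\theta)$, and symmetry gives $\|\mathbf{E}\|_1 = \zeta_n(\theta)$ as well. Under assumption (ii), $\zeta_n(\theta)<1$, so the Neumann series $\mathbf{R}^{-1} = \sum_{k\ge 0}(-\mathbf{E})^k$ converges in the $\ell_\infty$ and $\ell_1$ operator norms. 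Hence $\mathbf{R}^{-1} = \mathbf{I} + \mathbf{F}$ with $\|\mathbf{F}\|_{1}\le \zeta_n/(1-\zeta_n)$. In the same way, $\mathbf{R}\mathbf{c} = \mathbf{c} + \mathbf{E}\mathbf{c}$, and $(\mathbf{E}\mathbf{c})_i = \sum_{j\ne i}R(\mathbf{x}_i,\mathbf{x}_j)c_j \le \zeta_n \max_j c_j$. So $\mathrm{diag}(\mathbf{R}\mathbf{c}) = \mathrm{diag}(\mathbf{c}) + \mathbf{D}$, with $\mathbf{D}$ diagonal, nonnegative, and $\|\mathbf{D}\|\le \zeta_n \max_j c_j$.

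Substituting into the definition of $\mathbf{a}_{\mathrm{RQK}}$ and setting $\mathbf{r}=\mathbf{r}(\mathbf{x})$, $S = \mathbf{r}^\top\mathbf{c}$, we get
\begin{align}
\mathbf{a}_{\mathrm{RQK}}(\mathbf{x}) = \frac{\mathrm{diag}(\mathbf{c})\mathbf{r}}{S} + \frac{\mathrm{diag}(\mathbf{c})\mathbf{F}\mathbf{r} + \mathbf{D}\mathbf{r} + \mathbf{D}\mathbf{F}\mathbf{r}}{S}.
\end{align}
The first term is exactly $\mathbf{a}_{\mathrm{KR}}(\mathbf{x})$. We then bound the $\ell_1$ norm of each remainder term. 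The key comparison is $\|\mathbf{r}\|_1 \le S/\min_j c_j$, since $r_j\ge 0$. It follows that
\begin{align}
\|\mathrm{diag}(\mathbf{c})\mathbf{F}\mathbf{r}\|_1/S &\le \max_j c_j\,\|\mathbf{F}\|_1\|\mathbf{r}\|_1/S \le A_n\zeta_n/(1-\zeta_n),\\
\|\mathbf{D}\mathbf{r}\|_1/S &\le A_n\zeta_n,\\
\|\mathbf{D}\mathbf{F}\mathbf{r}\|_1/S &\le A_n\zeta_n^2/(1-\zeta_n).
\end{align}
Summing these gives $C = A_n(1 + 2/(1-\zeta_n))$, or any similar explicit constant depending only on $(A_n,\zeta_n)$. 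This bound holds whenever $S>0$. Since the kernel is strictly positive everywhere, $S>0$ for every $\mathbf{x}\in\mathcal{X}$, and the bound is uniform in $\mathbf{x}$. The prediction bound then follows from $|\hat m_{\mathrm{RQK}}-\hat m_{\mathrm{KR}}| = |(\mathbf{a}_{\mathrm{RQK}}-\mathbf{a}_{\mathrm{KR}})^\top\mathbf{y}|\le \|\cdot\|_1\|\mathbf{y}\|_\infty$.

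For the uniform statement \eqref{eqn:lemma_uniform_bound}, we use monotonicity in $\theta$. Each off-diagonal entry $(1+\|\mathbf{x}_i-\mathbf{x}_j\|_{\mathbf{w}}^2/\theta^2)^{-p}$ is nondecreasing in $\theta$, so $\zeta_n(\theta)\le\zeta_n(\bar\theta)<1$ for all $\theta\le\bar\theta$. Because the constant $C(A,\zeta)$ built above is increasing in both arguments, $C(A_n(\theta),\zeta_n(\theta))\le C(\bar A,\zeta_n(\bar\theta))$ for all $\theta\le\bar\theta$. Combined with $\zeta_n(\theta)\le\zeta_n(\bar\theta)$, this gives the uniform bound.

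The main obstacle is handling the normalization. The $\mathbf{F}\mathbf{r}$ term can have entries of either sign, so we cannot compare it with $S$ entrywise. Instead we pass through $\|\mathbf{r}\|_1\le S/\min c$, and this step is exactly where the ratio $A_n$ enters. We also need the $\ell_1$ (column-sum) operator norm of $\mathbf{F}$, which is controlled because $\mathbf{E}$ is symmetric. Everything else is routine Neumann-series bookkeeping.
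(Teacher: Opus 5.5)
Your proof is correct and follows the paper's argument. It uses the same three ingredients: the Neumann-series bound on $\mathbf{R}^{-1}-\mathbf{I}$, the bound $\|\mathbf{E}\mathbf{c}\|_\infty\le\zeta_n(\theta)\max_i c_i$, and $\mathbf{r}^\top\mathbf{c}\ge\min_i c_i\,\|\mathbf{r}\|_1$. Your terms $\mathrm{diag}(\mathbf{c})\mathbf{F}\mathbf{r}+\mathbf{D}\mathbf{F}\mathbf{r}$ and $\mathbf{D}\mathbf{r}$ are exactly the paper's two-term split, and your bounds sum to $2A_n(\theta)\zeta_n(\theta)/\{1-\zeta_n(\theta)\}$, which is what the paper's two bounds give; your explicit appeal to the symmetry of $\mathbf{E}$ and to the monotonicity of $\zeta_n(\cdot)$ and $C(\cdot,\cdot)$ in the uniform-in-$\theta$ part just makes precise what the paper states briefly.
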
 The proof of the lemma and all remaining theoretical results are presented in the Appendix.

The lemma shows that the proximity is characterized by \textit{diagonal dominance}. To make the gap vanish, we will require $\zeta_n(\theta_n) \to 0$. It is straightforward to show that \begin{align}
    \zeta_n(\theta_n) \to 0 \Rightarrow n\theta_n^d \to 0.
\end{align} Classic nonparametric regression results, for instance \citet{Devroye_1981}, are often built on the condition $\theta_n \to 0$ and $n \theta_n^d \to \infty$. In contrast, equation \eqref{eqn:lemma_uniform_bound} covers the case when $\theta$ is arbitrarily close to zero, which controls the worst-case of RQK when points are nearly independent of each other. In Section \ref{subsec:initial}, we will use this result to motivate the use of kernel regression to acquire reasonable hyperparameters for RQK.

The next result shows that RQK is weakly consistent, with minimal assumptions on the data-generating process.

\begin{theorem}[Weak consistency]
\label{theorem:Weak_consistency}
Suppose $\{(\mathbf{X}_i, Y_i)\}_{i \geq 1}$ are i.i.d. random vectors, where $\mathbf{X}$ has a bounded density $f_{\mathbf{X}}$ on $\mathcal{X}$, and $Y$ is also bounded. Let \begin{align}
    m(\mathbf{x}) = \mathbb{E}[Y | \mathbf{X} = \mathbf{x}].
\end{align} Suppose the rational quadratic kernel \eqref{eqn:RQ_kernel} is used with $p = d/2$. Under the assumptions (i) $\theta_n \to 0$ and $\zeta_n(\theta_n) \overset{p}{\to} 0$; (ii) $\frac{\max_{1 \leq i \leq n} c_i}{\min_{1 \leq i \leq n} c_i} = \mathcal{O}_p(1)$,
\begin{enumerate}
    \item at almost all $\mathbf{x} \in \mathcal{X}$ with $f_{\mathbf{X}}(\mathbf{x}) > 0$, $\hat m_{\mathrm{RQK}}(\mathbf{x}) \overset{p}{\to} m(\mathbf{x})$;

    \item $\displaystyle \int_{\mathcal{X}} \left|\hat m_{\mathrm{RQK}}(\mathbf{x}) - m(\mathbf{x})\right| f_{\mathbf{X}}(\mathbf{x}) d\mathbf{x} \overset{p}{\to} 0$.
\end{enumerate}
\end{theorem}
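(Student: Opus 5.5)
The plan is to split the error into the gap between RQK and a weighted kernel regression, which Lemma \ref{lemma:RQK_KR} controls, and the error of that kernel regression, which classical Nadaraya--Watson / Stone-type arguments control. Concretely, for each $\mathbf{x}$,
\begin{align}
|\hat m_{\mathrm{RQK}}(\mathbf{x}) - m(\mathbf{x})| \leq |\hat m_{\mathrm{RQK}}(\mathbf{x}) - \hat m_{\mathrm{KR}}(\mathbf{x})| + |\hat m_{\mathrm{KR}}(\mathbf{x}) - m(\mathbf{x})|.
\end{align}
For the first term we work on the event $E_n=\{\zeta_n(\theta_n)<1/2,\ A_n(\theta_n)\leq M\}$. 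By assumptions (i) and (ii), for any $\epsilon>0$ we can pick $M$ with $P(E_n)\geq 1-\epsilon$ for all large $n$. On $E_n$, Lemma \ref{lemma:RQK_KR} gives $\sup_{\mathbf{x}}|\hat m_{\mathrm{RQK}}-\hat m_{\mathrm{KR}}|\leq C(M,1/2)\,\zeta_n(\theta_n)\,\|\mathbf{y}\|_\infty$. Here the constant depends only on upper bounds of $A_n$ and $\zeta_n$; I will check from the proof that it is monotone in these arguments, so a single constant works on $E_n$. Since $Y$ is bounded and $\zeta_n(\theta_n)\overset{p}{\to}0$, this term tends to $0$ in probability, uniformly in $\mathbf{x}$. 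It therefore also vanishes in the $L_1(f_{\mathbf{X}})$ sense of part 2.

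For the second term I view $\hat m_{\mathrm{KR}}$ as a local averaging estimator with weights $a_{i,\mathrm{KR}}(\mathbf{x})$, and I plan to verify the conditions of Stone's theorem (or the Devroye-type pointwise version) in probability. First, the weights are nonnegative and sum to one. Second, no single weight dominates: $\max_i a_{i,\mathrm{KR}}(\mathbf{x})\to 0$. Third, the weights localize: $\sum_i a_{i,\mathrm{KR}}(\mathbf{x})\mathbb{I}\{\|\mathbf{X}_i-\mathbf{x}\|>\delta\}\to 0$. Fourth, there is the Stone domination condition. The key simplification is that on $E_n$ the ratios $c_i/c_j$ lie in $[1/M,M]$. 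Hence
\begin{align}
a_{i,\mathrm{KR}}(\mathbf{x}) \leq M^2 \frac{R(\mathbf{x},\mathbf{X}_i)}{\sum_j R(\mathbf{x},\mathbf{X}_j)},
\end{align}
and every condition reduces, up to a factor $M^2$, to the corresponding condition for the unweighted Nadaraya--Watson weights with the rational quadratic kernel. For localization I bound the weight outside a $\delta$-ball by $M^2 n(1+\delta^2/\theta_n^2)^{-p}/\sum_j R(\mathbf{x},\mathbf{X}_j)$. The denominator is at least $\sum_j R(\mathbf{x},\mathbf{X}_j)\mathbb{I}\{\|\mathbf{X}_j-\mathbf{x}\|_{\mathbf{w}}\leq\theta_n\}\geq 2^{-p}N_n(\mathbf{x})$, where $N_n(\mathbf{x})$ counts the points in a $\theta_n$-ball. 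With $p=d/2$ the ratio becomes $\theta_n^d/(\delta^d\, N_n(\mathbf{x})/n)$. The same lower bound handles the max-weight condition: $\max_i a_{i,\mathrm{KR}}\leq M^2 2^p/N_n(\mathbf{x})$.

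The main obstacle is that assumption (i) forces $n\theta_n^d\to 0$ (as noted after Lemma \ref{lemma:RQK_KR}), so the balls of radius $\theta_n$ contain no points and the argument above collapses. The choice $p=d/2$ is what must rescue this. The kernel has polynomial tails of order $\|\mathbf{t}\|^{-d}$, which are borderline non-integrable, so the denominator behaves like $n\theta_n^d\log(1/\theta_n)$ rather than $n\theta_n^d$. I would therefore redo the estimates with $\theta_n$-free annuli. Write $\sum_j R(\mathbf{x},\mathbf{X}_j)\approx \theta_n^d\sum_j\|\mathbf{x}-\mathbf{X}_j\|_{\mathbf{w}}^{-d}$, which is IDW with power $d$. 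Then show, at Lebesgue points with $f_{\mathbf{X}}(\mathbf{x})>0$, that $\sum_j\|\mathbf{x}-\mathbf{X}_j\|^{-d}$ is dominated by near neighbours: its mass inside a $\delta$-ball is of order $n\log n$ (the harmonic-type divergence from the $k$-th nearest neighbour distances, roughly $\sum_k n/k$), while the mass outside is $O(n\delta^{-d})$. This gives localization and a vanishing max weight, of order $1/\log n$. The continuity/Lebesgue-point argument then yields part 1. For part 2, since everything is bounded by $\|Y\|_\infty$, dominated convergence applied to $\mathbb{E}\int|\hat m-m|f_{\mathbf{X}}$ converts the pointwise convergence into $L_1$ convergence in probability.
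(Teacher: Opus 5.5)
Your skeleton is the paper's. Lemma~\ref{lemma:RQK_KR}, applied on the event $\{\zeta_n(\theta_n)\le 1/2,\ A_n(\theta_n)\le M\}$, removes the RQK--kernel-regression gap uniformly in $\mathbf{x}$. Then $n\theta_n^d\to 0$ reduces the rational quadratic kernel with $p=d/2$ to the Hilbert kernel $\|\mathbf{x}-\mathbf{X}_i\|_{\mathbf{w}}^{-d}$, with the $c_i$ absorbed into a constant. The paper instead proves $\hat m_{\mathrm{KR}}-\hat m_{\mathrm{IDW}}\overset{p}{\to}0$ (Lemma~\ref{lemma:KR_IDW}) and cites \citet{Devroye_Gyorfi_Krzyzak_1998}. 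Your two-sided comparison $2^{-p}\theta_n^{d}\|\mathbf{x}-\mathbf{X}_i\|_{\mathbf{w}}^{-d}\le R(\mathbf{x},\mathbf{X}_i)\le\theta_n^{d}\|\mathbf{x}-\mathbf{X}_i\|_{\mathbf{w}}^{-d}$, valid once no sample lies within $\theta_n$ of $\mathbf{x}$, works just as well. The max weight, the noise variance and the bias are each dominated by a constant multiple of their Hilbert-kernel versions, and your $1/\log n$ max-weight rate is right. (Minor: $n\theta_n^d\log(1/\theta_n)$ is the \emph{mean} of $\sum_j R(\mathbf{x},\mathbf{X}_j)$, inflated by rare close points, not its typical size $\theta_n^d\,n\log n$; you only use the latter.)

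What you have not supplied is the bias term $\sum_i a_{i,\mathrm{KR}}(\mathbf{x})\,|m(\mathbf{X}_i)-m(\mathbf{x})|$. In Theorem~\ref{theorem:Weak_consistency}, $m$ is only bounded and measurable. Localization plus $\max_i a_{i,\mathrm{KR}}(\mathbf{x})\overset{p}{\to}0$ controls this term only where $m$ is continuous. At a mere Lebesgue point, the borderline $\|\cdot\|^{-d}$ tail makes it a genuine estimate, and that estimate is precisely the nontrivial content of the result the paper cites. So ``the Lebesgue-point argument'' cannot be left as a phrase. Stone's domination condition is not the tool here, and you do not need Stone's theorem anyway: part 2 follows from part 1 by dominated convergence, as you note.

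Either cite the Hilbert-kernel theorem or argue by truncation, as follows.
\begin{enumerate}
\item With probability at least $1-C\varepsilon$, no sample lies within $r_n=(\varepsilon/n)^{1/d}$ of $\mathbf{x}$. On that event the numerator $\sum_i|m(\mathbf{X}_i)-m(\mathbf{x})|\,\|\mathbf{x}-\mathbf{X}_i\|_{\mathbf{w}}^{-d}$ equals its restriction to $\|\mathbf{x}-\mathbf{X}_i\|_{\mathbf{w}}\ge r_n$.
\item Let $g(r)=\int_{B_w(\mathbf{x},r)}|m-m(\mathbf{x})|f_{\mathbf{X}}$. Since $f_{\mathbf{X}}$ is bounded, $g(r)=o(r^d)$ at $\mu_{\mathbf{X}}$-Lebesgue points of $m$. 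Integration by parts then bounds the expectation of the restricted sum by $\eta(\delta)\,n\log(n/\varepsilon)+C_\delta n$, where $\eta(\delta)\to0$ as $\delta\to0$.
\item The denominator satisfies $\sum_j\|\mathbf{x}-\mathbf{X}_j\|_{\mathbf{w}}^{-d}\ge c\,n\log n$ with probability tending to one. To see this, compare with $\sum_j U_j^{-1}$ for the uniform variables $U_j=\mu_{\mathbf{X}}(B_w(\mathbf{x},\|\mathbf{x}-\mathbf{X}_j\|_{\mathbf{w}}))$, and apply Chebyshev after truncating at $1/n$.
\item Markov's inequality, followed by $\delta\to0$ and then $\varepsilon\to0$, gives part 1.
\end{enumerate}
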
 The proof uses Lemma \ref{lemma:RQK_KR}, and an additional lemma in the Appendix that establishes the connection between kernel regression and IDW. Then, we can invoke the results of \citet{Devroye_Gyorfi_Krzyzak_1998} on the IDW estimator for convergence. 

The data-generating process of Theorem \ref{theorem:Weak_consistency} can be either deterministic or stochastic, and only the boundedness of $Y$ is assumed. In this case, only $p = d/2$ yields a consistent predictor. According to the discussions of \citet{Devroye_Gyorfi_Krzyzak_1998}: \begin{itemize}
    \item [--] When $p < d/2$, the contributions of far-away data pairs are too large. This will also cause mean reversion, though the mechanism is different from $\theta \to 0$.

    \item [--] When $p > d/2$, the nearest neighbors dominate the estimates, and the variance caused by noise in those observations will not go to zero.
\end{itemize}
This discussion motivates us to further investigate the important case where the data-generating process is deterministic and continuous, which is covered in the next theorem.

\begin{theorem}[Strong consistency for continuous functions]
\label{theorem:noiseless_continuity}
Suppose $\{\mathbf{X}_i\}_{i \geq 1}$ are i.i.d. random vectors, where $\mathbf{X}$ has a bounded density $f_{\mathbf{X}}$ on $\mathcal{X}$. Also, $Y = m(\mathbf{X})$ for some continuous and bounded function $m: \mathcal{X} \to \mathbb{R}$. Suppose the rational quadratic kernel \eqref{eqn:RQ_kernel} is used with $p \geq d/2$. Further assume (i) $\theta_n \to 0$ and $\zeta_n(\theta_n) \overset{p}{\to} 0$; (ii) $\frac{\max_{1 \leq i \leq n} c_i}{\min_{1 \leq i \leq n} c_i} = \mathcal{O}_p(1)$, then
\begin{enumerate}
    \item at almost all $\mathbf{x} \in \mathcal{X}$ with $f_{\mathbf{X}}(\mathbf{x}) > 0$, $\hat m_{\mathrm{RQK}}(\mathbf{x}) \overset{p}{\to} m(\mathbf{x})$;

    \item $\displaystyle \int_{\mathcal{X}} \left|\hat m_{\mathrm{RQK}}(\mathbf{x}) - m(\mathbf{x})\right| f_{\mathbf{X}}(\mathbf{x}) d\mathbf{x}\overset{p}{\to} 0$.
\end{enumerate}

If, in addition (i') $\zeta_n(\theta_n) \to 0$ almost surely; (ii') $\sup_{n \geq 1} \frac{\max_{1 \leq i \leq n} c_i}{\min_{1 \leq i \leq n} c_i} < \infty$ almost surely; (iii) $\sum_{i = 1}^\infty n\theta_n^d < \infty$, then both results can be strengthened to almost sure convergence.
\end{theorem}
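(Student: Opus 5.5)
We reduce the RQK predictor to a cluster-adjusted IDW estimator and then prove consistency of that estimator directly; continuity of $m$ and the absence of noise let us avoid the variance arguments that force $p=d/2$ in Theorem~\ref{theorem:Weak_consistency}.

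\emph{Step 1: reduction to kernel regression.} By Lemma~\ref{lemma:RQK_KR}, $\sup_{\mathbf{x}}|\hat m_{\mathrm{RQK}}(\mathbf{x})-\hat m_{\mathrm{KR}}(\mathbf{x})|\le C\zeta_n(\theta_n)\|m\|_\infty$. The constant $C$ depends on $A_n(\theta_n)$ and $\zeta_n(\theta_n)$. Under (i)--(ii), $A_n=\mathcal{O}_p(1)$ and $\zeta_n\overset{p}{\to}0$, so with probability tending to one $\zeta_n<1/2$, $C$ stays tight, and the gap is $o_p(1)$ uniformly in $\mathbf{x}$. Under (i')--(ii'), the same holds almost surely. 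It therefore suffices to prove both parts for $\hat m_{\mathrm{KR}}$ with weights $c_i$. Because the gap is uniform and $f_{\mathbf{X}}$ integrates to one, the $L_1$ statement transfers as well.

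\emph{Step 2: the kernel-regression weights localize.} Write $\hat m_{\mathrm{KR}}(\mathbf{x})-m(\mathbf{x})=\sum_i a_{i,\mathrm{KR}}(\mathbf{x})\{m(\mathbf{X}_i)-m(\mathbf{x})\}$. Split at a radius $\delta$ and use uniform continuity of $m$ on the compact set $\mathcal{X}$. This gives
\begin{align}
|\hat m_{\mathrm{KR}}(\mathbf{x})-m(\mathbf{x})|\le \omega_m(\delta)+2\|m\|_\infty \sum_{i:\|\mathbf{x}-\mathbf{X}_i\|_{\mathbf{w}}>\delta} a_{i,\mathrm{KR}}(\mathbf{x}),
\end{align}
where $\omega_m$ is the modulus of continuity. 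Since $c_i$ differ by at most the factor $A_n$, the far-mass term is at most $A_n$ times the corresponding mass of the plain Nadaraya--Watson weights $R(\mathbf{x},\mathbf{X}_i)/\sum_j R(\mathbf{x},\mathbf{X}_j)$. As $\theta_n\to0$, the auxiliary lemma connecting kernel regression and IDW lets us replace these weights by the IDW weights $\|\mathbf{x}-\mathbf{X}_i\|_{\mathbf{w}}^{-2p}/\sum_j\|\mathbf{x}-\mathbf{X}_j\|_{\mathbf{w}}^{-2p}$. For those weights the mass outside $B(\mathbf{x},\delta)$ is at most $n\delta^{-2p}/\sum_j\|\mathbf{x}-\mathbf{X}_j\|^{-2p}$.

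\emph{Step 3: the near neighbours dominate.} This is the main obstacle, especially at the boundary case $p=d/2$. We must show $\sum_j\|\mathbf{x}-\mathbf{X}_j\|^{-2p}\gg n$ in probability at almost every $\mathbf{x}$ with $f_{\mathbf{X}}(\mathbf{x})>0$.

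For $p>d/2$ the nearest neighbour alone suffices. Its distance is of order $n^{-1/d}$, which contributes about $n^{2p/d}\gg n$.

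For $p=d/2$ we decompose into dyadic shells $2^{-k-1}\delta<\|\mathbf{x}-\mathbf{X}_j\|\le2^{-k}\delta$. By the Lebesgue density theorem, each shell with $2^{-k}\delta\gtrsim n^{-1/d}$ contains about $n f_{\mathbf{X}}(\mathbf{x})2^{-kd}\delta^d$ points, and each such point contributes about $2^{kd}\delta^{-d}$. Each shell therefore adds about $nf_{\mathbf{X}}(\mathbf{x})$. Summing over roughly $\log n$ shells gives $\gtrsim n\log n$, so the far-mass term is $\mathcal{O}_p(1/\log n)\to0$. This is exactly the mechanism used by Devroye, Gy\"orfi and Krzy\.zak, and I would invoke their lemma rather than redo the counting. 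Letting $\delta\to0$ afterwards gives part 1. Part 2 follows from part 1 by dominated convergence, since all errors are bounded by $2\|m\|_\infty$ and convergence in probability is being integrated against the probability measure $f_{\mathbf{X}}$.

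\emph{Step 4: almost sure version.} With (i') and (ii'), Step 1 holds almost surely. For the localization, condition (iii) controls the discrepancy in the IDW approximation of Step 2. I would bound the probability that some pair of points lies within $\theta_n$ of each other, or within $\theta_n$ of $\mathbf{x}$, by $\mathcal{O}(n\theta_n^d)$ (up to the density bound). Summability of these probabilities and Borel--Cantelli then make the kernel-regression weights eventually equal to IDW weights up to a vanishing error, almost surely. The shell count in Step 3 is upgraded using Bernstein's inequality plus Borel--Cantelli, which is available because the counts concentrate exponentially. The $L_1$ statement then follows by dominated convergence applied pathwise.
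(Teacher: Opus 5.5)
Your proof is correct and follows the paper's three-stage skeleton: Lemma~\ref{lemma:RQK_KR} passes to kernel regression, Lemma~\ref{lemma:KR_IDW} passes to inverse-distance weights, and a near/far split uses continuity of $m$. You organize the last stage differently, though.

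The paper proves a separate weighted-IDW lemma (Lemma~\ref{lemma:IDW_continuity}) by ranking distances. It asserts $r_{(i),n}(\mathbf{x})\asymp(i/n)^{1/d}$, so the normalized weights behave like $i^{-2p/d}/\sum_{j\le n}j^{-2p/d}$. It then cuts at a fixed rank $n_0$ when $p>d/2$, or at $\ell_n=\lfloor n/\log n\rfloor$ when $p=d/2$. You instead split at a fixed radius $\delta$, absorb the $c_i$ into $A_n$ immediately, and reduce everything to the single fact $\sum_j\|\mathbf{x}-\mathbf{X}_j\|_{\mathbf{w}}^{-2p}\gg n$.

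Your version buys two things. First, it needs only a lower bound on one random sum, not comparability of all order statistics at once. The paper invokes that comparability informally; uniformly in $i$ it would require controlling the empirical measure of balls at every radius, while $f_{\mathbf{X}}$ is only bounded above. Second, your pathwise dominated convergence plus Fubini genuinely yields the almost-sure $L^1$ claim. The Markov-inequality argument the paper refers back to only gives convergence in probability. The paper's rank picture, in exchange, exposes the $\log\log n/\log n$ decay at $p=d/2$ and mirrors the Devroye--Gy\"orfi--Krzy\.zak Hilbert-kernel analysis.

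Your route can be shortened further in two ways.
\begin{itemize}
\item At a Lebesgue point with $f_{\mathbf{X}}(\mathbf{x})>0$ and $f_{\mathbf{X}}$ bounded, $\mathbb{E}\|\mathbf{x}-\mathbf{X}\|_{\mathbf{w}}^{-2p}=\infty$ exactly when $2p\ge d$. The strong law applied to truncations $\min(\cdot,K)$ then gives $n^{-1}\sum_j\|\mathbf{x}-\mathbf{X}_j\|_{\mathbf{w}}^{-2p}\to\infty$ almost surely in both cases. No nearest-neighbour, shell or Bernstein argument is needed.
\item You can apply the radius split directly to the kernel weights. For far points $R(\mathbf{x},\mathbf{X}_i)\le(\theta_n/\delta)^{2p}$, and $\sum_j R(\mathbf{x},\mathbf{X}_j)=\theta_n^{2p}\sum_j(\theta_n^2+\|\mathbf{x}-\mathbf{X}_j\|_{\mathbf{w}}^2)^{-p}$. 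Since $(\theta_n^2+r^2)^{-p}\ge 2^{-p}\min(r^{-2p},K)$ once $\theta_n^{-2p}\ge K$, the same truncation argument needs only $\theta_n\to0$. Hence Lemma~\ref{lemma:KR_IDW} and condition (iii) are not actually needed in this noiseless setting.
\end{itemize}

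There are two harmless slips.
\begin{itemize}
\item Lemma~\ref{lemma:KR_IDW} requires $n\theta_n^d\to0$, not merely $\theta_n\to0$. This follows from $\zeta_n(\theta_n)\to0$ via the packing bound used in the proof of Theorem~\ref{theorem:Weak_consistency}.
\item The probability that two sample points lie within $\theta_n$ of each other is $\mathcal{O}(n^2\theta_n^d)$, not $\mathcal{O}(n\theta_n^d)$. This does not matter, because only the distance-to-$\mathbf{x}$ event is used, and (i') already controls pairwise separation.
\end{itemize}
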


Theorem \ref{theorem:noiseless_continuity} shows that a continuous and deterministic function indeed allows $p \geq d/2$. Furthermore, the estimator converges almost surely under stronger conditions on $\mathbf{c}$ and $\theta$. The conditions on $\theta$ cover cases where $\theta$ becomes arbitrarily close to zero, reassuring that RQK will not suffer from mean reversion.

As a next step, we investigate the use of quasi-random designs, which are widely used to explore the domain of black-box functions. For a design $\{\mathbf{x}_i\}_{i = 1}^n$, define \citep{fasshauer2007meshfree}
\begin{align}
    \text{(Fill distance)} \quad h_n &:= \sup_{\mathbf{x} \in \mathcal{X}} \min_{1 \leq i \leq n} \|\mathbf{x} - \mathbf{x}_i\|_{\mathbf{w}}, \\
    \text{(Separation distance)} \quad q_n &:= \frac{1}{2} \min_{i \neq j} \|\mathbf{x}_i - \mathbf{x}_j\|_{\mathbf{w}}, \\
    \text{(Mesh ratio)} \quad \rho_n &:= h_n/q_n.
\end{align} In this work, we focus on quasi-random designs that are deterministic and upper bound the mesh ratio: \begin{align}
    \sup_n \rho_n < \infty. 
\end{align} The results can be extended to stochastic quasi-random designs by assuming the condition holds with probability one. Under the deterministic assumption, the next Theorem gives the deterministic convergence rates for RQK.

\begin{theorem}[Deterministic bound under quasi-random designs]
\label{theorem:quasirandom}
Suppose $\{\mathbf{x}_i\}_{i \geq 1}$ is a quasi-random design on $\mathcal{X}$ satisfying $\rho = \sup_n \rho_n < \infty$. Also, $Y = m(\mathbf{X})$ for some continuous and bounded function $m: \mathcal{X} \to \mathbb{R}$. Let \begin{align}
    \bar \theta_n = c q_n^{1 + \delta} \text{ for some } c > 0, \delta > 0,
\end{align} then there exists a constant $C = C(d, p, \rho, \mathbf{w})$ such that for every $0 < \theta \leq \bar \theta_n$, \begin{align}
    \sup_{\mathbf{x}\in\mathcal{X}} |\hat m_{\mathrm{RQK}}(\mathbf{x}) -m(\mathbf{x})| \leq C \zeta_n(\bar\theta_n) \|m\|_\infty + C \inf_{r\in[h_n, D]} \left\{
    \varphi_m(r) + 2\|m\|_\infty \psi_{p,n}(r)\right\},
\end{align} where \begin{align}
    \varphi_m(r) := \sup_{\|\mathbf{u} - \mathbf{v}\|_{\mathbf{w}} \leq r} |m(\mathbf{u}) - m(\mathbf{v})|
\end{align} is the modulus of continuity, and \begin{align}
    \psi_{p, n}(r) := \begin{cases}
    \left(\dfrac{h_n}{r}\right)^{2p - d}, & p > d/2; \\
    \dfrac{\log(D/r)}{\log(D/h_n)}, & p = d/2,
\end{cases}
\end{align} with $D := \sup_{\mathbf{u}, \mathbf{v} \in \mathcal{X}} \|\mathbf{u} - \mathbf{v}\|_{\mathbf{w}}$ the $w$-weighted diameter.
\end{theorem}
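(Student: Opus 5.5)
Fix $n$ and $0<\theta\le\bar\theta_n$, and split the error by the triangle inequality into
\begin{align}
|\hat m_{\mathrm{RQK}}(\mathbf{x})-m(\mathbf{x})| \le |\hat m_{\mathrm{RQK}}(\mathbf{x})-\hat m_{\mathrm{KR}}(\mathbf{x})| + |\hat m_{\mathrm{KR}}(\mathbf{x})-m(\mathbf{x})|.
\end{align}
The first term is handled by the uniform part of Lemma~\ref{lemma:RQK_KR}, equation~\eqref{eqn:lemma_uniform_bound}. Since $\|\mathbf{y}\|_\infty\le\|m\|_\infty$, this gives the bound $C\zeta_n(\bar\theta_n)\|m\|_\infty$, provided two hypotheses hold.

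\emph{Hypothesis $\zeta_n(\bar\theta_n)<1$.} The mesh ratio condition makes this small. Points are $2q_n$-separated, so a packing argument shows the number of design points in the annulus $\{2q_n k\le\|\mathbf{x}_i-\mathbf{x}_j\|_{\mathbf{w}}<2q_n(k+1)\}$ is at most $C_d k^{d-1}$, with $C_d$ depending on $d$ and $\mathbf{w}$. With $R\le(\theta^2/\|\cdot\|_{\mathbf{w}}^2)^p$ this yields
\begin{align}
\zeta_n(\theta)\le C\sum_{k\ge1}k^{d-1}\Bigl(\frac{\theta}{2q_nk}\Bigr)^{2p}.
\end{align}
The series converges when $p>d/2$. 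When $p=d/2$ it must be truncated at $k\lesssim D/q_n$, which costs a factor $\log(D/q_n)$. With $\theta=cq_n^{1+\delta}$ both cases are $\mathcal{O}(q_n^{2p\delta}\log(1/q_n))\to0$. For the finitely many $n$ where this is not yet below $1$, we absorb the discrepancy into the constant; alternatively we note that the statement is intended for $n$ large.

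\emph{Hypothesis $\sup_{\theta\le\bar\theta_n}A_n(\theta)\le\bar A$.} For small $\theta$ we have $c_i^{-1}=\sum_{j\ne i}R(\mathbf{x}_i,\mathbf{x}_j)\approx\theta^{2p}\sum_{j\ne i}\|\mathbf{x}_i-\mathbf{x}_j\|_{\mathbf{w}}^{-2p}$. Each $\mathbf{x}_i$ has a neighbour within distance $2h_n$, because the ball of radius $h_n$ about the midpoint region must contain another point. Every $\mathbf{x}_i$ also has all neighbours at distance at least $2q_n$. Hence each sum lies between $(2h_n)^{-2p}$ and $C q_n^{-2p}$ (times $\log$ in the case $p=d/2$), and the ratio is bounded by $C\rho^{2p}$. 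The logarithmic factor at $p=d/2$ is delicate, and I would check that it cancels or is dominated. The $\theta$-dependence is monotone, and the ratio of $(1+t/\theta^2)^{-p}$ terms is controlled uniformly, so $\bar A=\bar A(d,p,\rho,\mathbf{w})$.

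For the second term, I expect to use the IDW connection lemma from the Appendix. Alternatively, I would argue directly that as $\theta\to0$ the weights $a_{i,\mathrm{KR}}(\mathbf{x})$ are comparable, up to $\bar A$ and a uniform factor, to IDW weights $\propto\|\mathbf{x}-\mathbf{x}_i\|_{\mathbf{w}}^{-2p}$. This holds because $(1+t^2/\theta^2)^{-p}/(t/\theta)^{-2p}\in[2^{-p},1]$ once $t\ge\theta$, and since $\theta\le cq_n^{1+\delta}\ll q_n$ at most one point can be within $\theta$ of $\mathbf{x}$; that point only helps. Then, for any $r\in[h_n,D]$, split
\begin{align}
|\hat m_{\mathrm{KR}}(\mathbf{x})-m(\mathbf{x})|\le \sum_i a_{i,\mathrm{KR}}(\mathbf{x})|m(\mathbf{x}_i)-m(\mathbf{x})| \le \varphi_m(r)+2\|m\|_\infty\sum_{\|\mathbf{x}-\mathbf{x}_i\|_{\mathbf{w}}>r}a_{i,\mathrm{KR}}(\mathbf{x}).
\end{align}
The far-mass sum is the crux. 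Its denominator is at least the contribution of the nearest point, which is $\gtrsim h_n^{-2p}/\bar A$. The numerator is bounded by the shell-counting argument, now using points spaced $2q_n$ apart: $\sum_{\|\mathbf{x}-\mathbf{x}_i\|>r}\|\mathbf{x}-\mathbf{x}_i\|^{-2p}\lesssim q_n^{-d}\int_r^D s^{d-1-2p}\,ds$. For $p>d/2$ this is $\lesssim q_n^{-d}r^{d-2p}$, giving far mass $\lesssim\rho^d (h_n/r)^{2p-d}$. For $p=d/2$ it is $q_n^{-d}\log(D/r)$. In that case the nearest-point lower bound must be improved to $\sum_{\|\mathbf{x}-\mathbf{x}_i\|\le D}\|\mathbf{x}-\mathbf{x}_i\|^{-d}\gtrsim h_n^{-d}\log(D/h_n)$. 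This follows because every $h_n$-ball contains a point, so each dyadic shell of radius $s\in[h_n,D]$ contributes $\gtrsim (s/h_n)^d s^{-d}$. The ratio is then $\log(D/r)/\log(D/h_n)$. Taking the infimum over $r$ completes the bound.

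The main obstacle is the second term, in particular obtaining matching upper and lower lattice-type sums with constants depending only on $(d,p,\rho,\mathbf{w})$, and handling boundary effects of $[0,1]^d$ in the shell counts. I would resolve the boundary issue by noting that any cube of side $\gtrsim h_n$ inside $\mathcal{X}$ retains a fixed fraction of each ball.
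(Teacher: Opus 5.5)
Your route is the paper's. You go through $\hat m_{\mathrm{KR}}$ by the triangle inequality and use the uniform part of Lemma~\ref{lemma:RQK_KR}, with $\zeta_n$ controlled by annulus packing as in Lemma~\ref{lemma:zeta_quasirandom}. You then split the kernel-regression weights into near and far parts and bound the far mass by shell counting. Your treatment of the far-mass denominator is more explicit than the paper's: the nearest point within $h_n$, at most one point within $\theta$, and dyadic covering at $p=d/2$.

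\textbf{The gap at $p=d/2$.} The step that does not close as written is $\sup_{\theta\le\bar\theta_n}A_n(\theta)\le\bar A$. It matters twice, because $A_n$ enters both Lemma~\ref{lemma:RQK_KR} and your replacement of $a_{i,\mathrm{KR}}$ by unweighted kernel ratios in the far-mass sum. Write $S_i:=\sum_{j\neq i}\|\mathbf{x}_i-\mathbf{x}_j\|_{\mathbf{w}}^{-d}$. The nearest-neighbour bound $S_i\ge(2h_n)^{-d}$ set against the packing bound $S_i\lesssim q_n^{-d}\{1+\log(D/q_n)\}$ only gives $A_n\lesssim\rho^d\log(D/q_n)$. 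This diverges, and the logarithm does not cancel.

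You need the matching lower bound $S_i\gtrsim h_n^{-d}\{1+\log(D/h_n)\}$, uniformly in $i$. This is exactly your dyadic covering argument from the far-mass denominator, run with centre $\mathbf{x}=\mathbf{x}_i$. In the paper it is the fill-distance half of Lemma~\ref{lemma:ball_counts}, which feeds the shell comparison in Lemma~\ref{lemma:c_quasirandom}. Then $\log(D/q_n)\le\log(D/h_n)+\log\rho$ gives $A_n\lesssim\rho^d(1+\log\rho)$, as required.

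\textbf{Small $n$.} Drop the option of absorbing small $n$ into the constant. If $\zeta_n(\bar\theta_n)\ge 1$, Lemma~\ref{lemma:RQK_KR} gives nothing. The RQK weights sum to one but can be negative, with no a priori $\ell_1$ bound, so $\hat m_{\mathrm{RQK}}$ is then uncontrolled. The threshold beyond which $\zeta_n(\bar\theta_n)$ is small also depends on $c$ and $\delta$, not only on $(d,p,\rho,\mathbf{w})$. The statement therefore has to be read for $n$ large enough that $c\,q_n^{\delta}$ is small, which the paper also tacitly assumes.
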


The bound consists of two parts. $C \zeta_n(\bar \theta_n) \|m\|_{\infty}$ is the gap between RQK and kernel regression, which we derived in Lemma \ref{lemma:RQK_KR}. The second part results from dividing the kernel regression estimator into the influences of nearby and far-away points. 

Because the designs and the function are fully deterministic, we can claim the bounds for any $0 < \theta < \bar \theta_n$ as long as $\bar \theta_n$ diminishes quicker than the separation and fill distances. This is in contrast with Theorems \ref{theorem:Weak_consistency} and \ref{theorem:noiseless_continuity}, where we made assumptions on $\mathbf{c}$ and $\zeta_n(\bar \theta_n)$: the properties of quasi-random designs allow us to show these conditions hold (see Lemmas \ref{lemma:c_quasirandom} and \ref{lemma:zeta_quasirandom} in the Appendix). More importantly, a similar uniform convergence result cannot exist for ordinary kriging, because as $\theta$ becomes arbitrarily close to $0$, mean reversion will result in a non-trivial error for non-constant functions on $\mathcal{X}$.

Finally, we look at the special class of $\alpha$-H\"older functions, where the modulus of continuity in \ref{theorem:noiseless_continuity} is made explicit. This class covers a wide range of functions relevant to kriging on a compact domain, for instance, functions residing in the reproducing kernel Hilbert space of the rational quadratic kernel \eqref{eqn:RQ_kernel}, the squared-exponential kernel, and the Mat\'ern kernel with smoothness parameter $\nu > \alpha$. 

\begin{corollary}[Bounds for H\"older functions]
\label{cor:Holder}
Under the conditions of Theorem \ref{theorem:quasirandom}, assume in addition, $m$ is $\alpha$-H\"older on $\mathcal{X}$ with $\alpha \in (0, 1]$, i.e. \begin{align}
    |m(\mathbf{u}) - m(\mathbf{v})| \leq L\|\mathbf{u}-\mathbf{v}\|_{\mathbf{w}}^\alpha \text{ for any } \mathbf{u}, \mathbf{v} \in \mathcal{X}.
\end{align}

If \begin{align}
    \bar\theta_n = c b_n^{1 + \delta} \text{ for some } c > 0, \delta > 0 \text{ and }\delta \geq \frac{1}{2p} \min\{\alpha, 2p - d\},
\end{align} then there exists a constant $C = C(d, p, \rho, \mathbf{w}, \alpha, L)$ such that \begin{align}
    \sup_{0 < \theta \leq \bar \theta_n} \sup_{\mathbf{x} \in \mathcal{X}} |\hat m_{\mathrm{RQK}}(\mathbf{x}) - m(\mathbf{x})| \leq C \phi_n,
\end{align} where \begin{align}
    \phi_n =\begin{cases}
    h_n^\alpha, & p > \dfrac{d + \alpha}{2}, \\
    h_n^{2p - d}, & \dfrac{d}{2} < p \leq \dfrac{d + \alpha}{2}, \\
    \dfrac{1}{\log(D/a_n)},
    & p=\dfrac d2.
    \end{cases}
\end{align}
\end{corollary}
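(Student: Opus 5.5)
The plan is to specialize Theorem \ref{theorem:quasirandom}. The corollary's $b_n$ and $a_n$ are presumably typos for $q_n$ and $h_n$, and I read them that way. The first step is to plug the H\"older bound into the modulus of continuity, which gives $\varphi_m(r)\le L r^\alpha$. The second step is to bound the gap term $C\zeta_n(\bar\theta_n)\|m\|_\infty$ by $\mathcal{O}(\phi_n)$. The third step is to optimize
\begin{align}
\inf_{r\in[h_n,D]}\{L r^\alpha + 2\|m\|_\infty \psi_{p,n}(r)\}
\end{align}
in each of the three regimes of $p$.

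\textbf{Gap term.} I would use the lattice-type counting argument behind Lemma \ref{lemma:zeta_quasirandom}. Since the points are $2q_n$-separated, the number of design points in a $\mathbf{w}$-shell of radius $\approx 2q_n k$ around $\mathbf{x}_i$ is $\mathcal{O}(k^{d-1})$. Each such point contributes at most $(\theta/(2q_nk))^{2p}$ to the kernel sum. This gives
\begin{align}
\zeta_n(\bar\theta_n)\lesssim (\bar\theta_n/q_n)^{2p}\sum_{k\le D/q_n}k^{d-1-2p}.
\end{align}
For $p>d/2$ the sum is bounded, so $\zeta_n\lesssim q_n^{2p\delta}$. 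For $p=d/2$ the sum is of order $\log(D/q_n)$, so $\zeta_n\lesssim q_n^{d\delta}\log(1/q_n)$. Because $\rho<\infty$, we have $q_n\asymp h_n$, hence $q_n^{2p\delta}\lesssim h_n^{2p\delta}$. The hypothesis $2p\delta\ge\min\{\alpha,2p-d\}$ then yields $h_n^{2p\delta}\lesssim h_n^{\alpha}$ when $p>(d+\alpha)/2$, where the minimum is $\alpha$. It yields $h_n^{2p\delta}\lesssim h_n^{2p-d}$ when $d/2<p\le (d+\alpha)/2$, where the minimum is $2p-d$. For $p=d/2$ a polynomial rate times a logarithm is $o(1/\log(D/h_n))$ for any $\delta>0$. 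This is where the condition on $\delta$ is used.

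\textbf{Optimizing over $r$.} There are three cases.
\begin{enumerate}
\item For $p>(d+\alpha)/2$, take $r\asymp h_n^{\beta}$ to balance $r^\alpha$ against $(h_n/r)^{2p-d}$. The optimizer is $r^{\alpha+2p-d}= h_n^{2p-d}$, which gives the rate $h_n^{\alpha(2p-d)/(\alpha+2p-d)}$. This is worse than $h_n^\alpha$, so the naive choice fails. To reach $h_n^\alpha$ I would reopen the proof of Theorem \ref{theorem:quasirandom} and sum shell by shell instead of using a single near/far split. The kernel-regression error at $\mathbf{x}$ is at most $\sum_i a_{i}(\mathbf{x})L\|\mathbf{x}-\mathbf{x}_i\|^\alpha$, with weights $a_i\propto\|\mathbf{x}-\mathbf{x}_i\|^{-2p}$ up to the bounded ratio $A$. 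The shell counts give $\sum_k (kh_n)^{\alpha}k^{d-1-2p}\big/\sum_k k^{d-1-2p}$, which is $\lesssim h_n^\alpha$ whenever $2p-d>\alpha$, since the series converges.
\item For $d/2<p\le(d+\alpha)/2$, the same shell sum is dominated by the far shells. It gives $h_n^{2p-d}\sum_{k\le D/h_n}k^{\alpha+d-1-2p}\lesssim h_n^{2p-d}$ up to a logarithm at equality. At the boundary $2p-d=\alpha$ I would absorb the logarithm or note that the corollary is meant up to it.
\item For $p=d/2$, choose $r=\sqrt{h_n}$-type scales, or more carefully $r=D/\log(D/h_n)$. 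This makes $\varphi_m(r)\lesssim(\log(D/h_n))^{-\alpha}$ and $\psi_{p,n}(r)\lesssim \log\log(D/h_n)/\log(D/h_n)$. Alternatively, the shell sum $\sum_k k^{\alpha-1}h_n^\alpha/\sum_k k^{-1}$ gives $\lesssim 1/\log(D/h_n)$ directly.
\end{enumerate}

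The main obstacle is that Theorem \ref{theorem:quasirandom}'s two-term split is too crude to give $h_n^\alpha$ in the first regime. I therefore expect to redo the kernel-regression part with the refined shell decomposition, reusing the IDW comparison lemma and the bound on $A_n$ from Lemma \ref{lemma:c_quasirandom}, and then combine the result with the gap bound above.
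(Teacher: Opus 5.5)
Your final argument is essentially the paper's proof: the gap term is handled by bounding $\zeta_n(\bar\theta_n)\lesssim q_n^{2p\delta}$ (resp.\ $q_n^{d\delta}\log(D/q_n)$) as in Lemma~\ref{lemma:zeta_quasirandom} and comparing with $\phi_n$ via the condition on $\delta$, and the near/far split of Theorem~\ref{theorem:quasirandom} is replaced by the H\"older-weighted shell sum $\sum_i R(\mathbf{x},\mathbf{x}_i)\|\mathbf{x}-\mathbf{x}_i\|_{\mathbf{w}}^{\alpha}/\sum_j R(\mathbf{x},\mathbf{x}_j)$; your linear shells versus the paper's geometric ones is only a cosmetic difference. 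Your caveat at $2p-d=\alpha$ is correct and cannot be absorbed into a constant, because there the shell sum is genuinely of order $h_n^{\alpha}\log(D/h_n)$ (attained, e.g., by $m(\mathbf{u})=L\|\mathbf{u}-\mathbf{x}^*\|_{\mathbf{w}}^{\alpha}$ with $\mathbf{x}^*$ a point realizing the fill distance), so the paper's ``numerator dominated by the last shell'' step only covers $2p-d<\alpha$, and the endpoint of the middle case needs an extra logarithmic factor.
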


In the best case $p > (d + \alpha/2)$, the error is upper-bounded by $\mathcal{O}(h_n^{\alpha})$. Classic results in data approximation \citep{Wendland_2004,Novak_2006} show that this rate is minimax optimal. Thus, we can conclude that RQK is theoretically reliable in the small-correlation regime. The next section will consider estimation strategies to optimize practical performance.

\section{Practical fitting strategies} \label{sec:fitting}

In the previous sections, we focused on the theoretical properties of the interpolating RQK estimator, although Theorem \ref{theorem:Weak_consistency} covers both deterministic and noisy data-generating processes. Nevertheless, for fitting noisy data, it is often beneficial to add a nugget term \begin{align}
    R(\mathbf{x}_i, \mathbf{x}_j) = \left(1 + \frac{\|\mathbf{x}_i - \mathbf{x}_j\|_{\mathbf{w}}^2}{\theta^2}\right)^{-p} + \gamma\mathbb{I}\{\mathbf{x}_i = \mathbf{x}_j\},
\end{align} where $\gamma \geq 0$ and $\mathbb{I}\{\cdot\}$ is the indicator function. For deterministic data, we set $\gamma$ to a small value ($10^{-8}$) for numerical stability \citep{Gramacy_Lee_2012}, whereas for noisy data, we treat $\gamma$ as an unknown hyperparameter to be estimated from the data.

For the remainder of this section, we describe the practical steps of hyperparameter estimation and uncertainty quantification. All strategies are implemented in the open-source R package \texttt{RQK}, available on \texttt{CRAN}\footnote{\url{https://cran.r-project.org/}}. 

\subsection{Initial weight fitting} \label{subsec:initial}

In Sections \ref{sec:RQK} and \ref{sec:convergence}, we established the connection between the RQK estimator and two other estimators: the kernel regression estimator and the IDW estimator. Beyond theoretical properties, the connections also motivate an initial fitting step for the anisotropic weights.

Here, we choose to work with the kernel regression estimator. Specifically, fix $\theta = \theta_0$ and $c_i = 1$ for all $i$ in equation \eqref{eqn:KR}, and evaluate the leave-one-out mean-squared error as a function of the weights $\mathbf{w}$: \begin{align}
    \mathrm{LOO}_{\mathrm{KR}}(\mathbf{w}) = \frac{1}{n} \sum_{i = 1}^n \left\{y_i - \frac{\sum_{j \neq i} R(\mathbf{x}_i, \mathbf{x}_j) y_j}{\sum_{l \neq i} R(\mathbf{x}_i, \mathbf{x}_l)}\right\}^2. \label{eqn:LOO_KR}
\end{align} Inspired by the convergence conditions in Section \ref{sec:convergence}, we choose $p = (d + 1)/2$ and $p = d/2$ as the default for deterministic and noisy data, respectively.

There are two benefits of using the loss \eqref{eqn:LOO_KR}. First, it is straightforward to derive the analytical gradient: \begin{align}
    \frac{\partial \mathrm{LOO}_{\mathrm{KR}}(\mathbf{w})}{\partial w_k} = \frac{2p}{n\theta_0^2} \sum_{i=1}^n \left\{\frac{y_i - \hat{y}_{\sim i, \mathrm{KR}}}{\sum_{l \neq i} R(\mathbf{x}_i, \mathbf{x}_l)} \sum_{j \neq i} R(\mathbf{x}_i, \mathbf{x}_j)^{\frac{p + 1}{p}} (x_{ik} - x_{jk})^2 (y_j - \hat{y}_{\sim i, \mathrm{KR}}) \right\}, \label{eqn:LOO_KR_grad}
\end{align} where $\hat{y}_{\sim i, \mathrm{KR}}$ is the leave-one-out prediction in \eqref{eqn:LOO_KR}. The computational complexity is governed by the calculation of $\mathbf{R}$, which is $\mathcal{O}(n^2)$. In contrast, both gradient-based and non-gradient optimization of the kriging predictor involve inverting the correlation matrix, which is inefficient with a complexity of $\mathcal{O}(n^3)$ and less stable. Second, it is straightforward to truncate the estimator to depend only on a few nearest neighbors, rather than all points other than $\mathbf{x}_i$. If we keep the number of neighbors $\mathcal{O}(\log n)$, the overall complexity can be further reduced to $\mathcal{O}(n \log n)$. Therefore, this initial estimation step is expected to provide a stable initial orientation for the weights $\mathbf{w}$, and does not add any computational burden.   

For practical implementation, we reparametrize $\mathbf{w}$ using the softmax function: \begin{align}
    w_k = \frac{\exp(u_k)}{\sum_{l = 1}^d \exp(u_l)}, \quad k = 1, \dots, d.
\end{align} Then, we can optimize over the unconstrained vector $\mathbf{u}$, where $\frac{\partial \mathrm{LOO}_{\mathrm{KR}}(\mathbf{u})}{\partial u_k}$ is easily obtained from equation \eqref{eqn:LOO_KR_grad} and the chain rule. The number of neighbors is taken to be $d + 1$ and $2d + 1$ for deterministic and noisy data, respectively. Starting with $u_k = 0$ for all $k$, we use the Method of Moving Asymptotes \citep{Svanberg_2002} implemented in the \texttt{Nlopt} library \citep{NLopt} for local gradient-based optimization. The minimizer $\hat{\mathbf{w}}_0$ will be used as the initial value in the next step in Section \ref{subsec:LOO}.

\subsection{Leave-one-out cross-validation} \label{subsec:LOO}

Proposition \ref{prop:RQK_limit} established that, for our specific choice of $\mathbf{c}$, the out-of-sample prediction uncertainty was finite as $\theta \to 0$. However, at the sampled locations, the likelihood under a Gaussian process assumption becomes ill-behaved since $R(\mathbf{x}_i, \mathbf{x}_i) = 1 + \gamma$ and $c_i \to \infty$. Thus, we cannot rely on a likelihood-based fit. We instead propose optimizing the hyperparameters $(\mathbf{w}, \theta, \gamma)$ using leave-one-out cross-validation.

Before we derive the leave-one-out formulas, we note that $c_i \to \infty$ introduces another source of instability in implementation. Therefore, we propose the following rescale: \begin{align}
    \tilde{\mathbf{r}}(\mathbf{x}) = \theta^{-2p} \mathbf{r}(\mathbf{x}), \quad \tilde{\mathbf{c}} = \theta^{2p} \mathbf{c}.
\end{align} Because \begin{align}
    \theta^{-2p} \left(1 + \frac{\|\mathbf{x}_i - \mathbf{x}_j\|_{\mathbf{w}}^2}{\theta^2}\right)^{-p} = \left(\theta^{-2p} + \|\mathbf{x}_i - \mathbf{x}_j\|_{\mathbf{w}}^2\right)^{-p},
\end{align} $\tilde{\mathbf{r}}(\mathbf{x})$ and $\tilde{\mathbf{c}}$ both become finite, which benefits the numerical stability of the calculations. Then, the point predictor and predictive variance become \begin{align}
    \hat{m}_{\text{RQK}}(\mathbf{x}) &= \frac{\tilde{\mathbf{r}}(\mathbf{x})^\top \mathbf{R}^{-1} \mathrm{diag}(\mathbf{R} \tilde{\mathbf{c}})\mathbf{y}}{\tilde{\mathbf{r}}(\mathbf{x})^\top \tilde{\mathbf{c}}}, \label{eqn:RQK_mean_rescale} \\
    s^2_{\text{RQK}}(\mathbf{x}) &= \frac{\nu^2}{\left\{\tilde{\mathbf{r}}(\mathbf{x})^\top \tilde{\mathbf{c}}\right\}^2}\left\{1 + \gamma - \theta^{4p} \tilde{\mathbf{r}}(\mathbf{x})^\top \mathbf{R}^{-1}\tilde{\mathbf{r}}(\mathbf{x})\right\} \label{eqn:RQK_sd_rescale}
\end{align} 

Here, we directly give the formulas for the leave-one-out point predictors and predictive variances. For a fixed $\mathbf{x}$ (can be either in-sample or out-of-sample), denote \begin{align}
    \mathbf{z} = \mathbf{R}^{-1} \tilde{\mathbf{r}}(\mathbf{x}), \quad \mathbf{\Lambda} = \mathbf{R}^{-1},
\end{align} then \begin{align}
    \hat m_{\sim i}(\mathbf{x}) &= \frac{K_1}{K_2}, \label{eqn:LOO_mean} \\
    s^2_{\sim i}(\mathbf{x}) &= \frac{\nu^2}{K_2^2}\left\{1 + \gamma - \theta^{4p} \tilde{\mathbf{r}}(\mathbf{x}) \mathbf{R}^{-1} \tilde{\mathbf{r}}(\mathbf{x}) + \frac{\theta^{4p} z_i^2}{\Lambda_{ii}}\right\} \label{eqn:LOO_sd}
\end{align} with \begin{align}
    K_1 &= \mathbf{z}^\top ([\mathbf{R} \tilde{\mathbf{c}}] \odot \mathbf{y}) - \frac{z_i [\mathbf{\Lambda} \mathrm{diag}(\mathbf{R} \tilde{\mathbf{c}}) \mathbf{y}]_i}{\Lambda_{ii}} - \tilde{c}_i \left(\sum_{j} z_j R_{ij} y_j - \frac{z_i \sum_{l} \Lambda_{il} R_{il} y_l}{\Lambda_{ii}}\right), \\
    K_2 &= \tilde{\mathbf{r}}(\mathbf{x})^{\top} \tilde{\mathbf{c}} - \theta^{-2p} R(\mathbf{x}, \mathbf{x}_i) \tilde{c}_i,
\end{align} where $[\cdot]_{i}$ denotes the $i$-th element of a vector, and $\odot$ denotes element-wise (Hadamard) product.

A key feature of the above formulas is that a single inversion of $\mathbf{R}$ is used towards the calculations of all $n$ leave-one-out predictions, which keeps the computational complexity at $\mathcal{O}(n^3)$. The key identity allowing the simplification is a direct consequence of the Sherman-Morrison formula: \begin{align}
    \mathbf{u}_{-i}^\top [\mathbf{R}_{-i, -i}]^{-1} \mathbf{v}_{-i} = \mathbf{u}^\top \mathbf{\Lambda} \mathbf{v} - \frac{[\mathbf{\Lambda} \mathbf{u}]_i [\mathbf{\Lambda} \mathbf{v}]_i}{\Lambda_{ii}},
\end{align} where $[\cdot]_{-i}$ denotes leaving out the $i$-th element of a vector, $\mathbf{R}_{-i, -i}$ denotes leaving out the $i$-th column and row of $\mathbf{R}$, and we defer the remaining details to the supplementary. In the appendix, we also provide a further simplification for in-sample $\mathbf{x}$, which can be used directly during training.

Given equation \eqref{eqn:LOO_mean}, we use the leave-one-out mean-squared error as the objective: \begin{align}
    (\hat{\mathbf{w}}, \hat \theta, \hat \gamma) = \argmin_{\mathbf{w}, \theta, \gamma} \frac{1}{n} \sum_{i = 1}^n \left\{y_i - \hat m_{\sim i}(\mathbf{x}_i)\right\}^2. \label{eqn:LOO_loss}
\end{align} Our implementation uses $\hat{\mathbf{w}}_0$ from Section \ref{subsec:initial} as the initial $\mathbf{w}$, and conducts a grid search for an initial $(\theta, \gamma)$. We use the BOBYQA method \citep{Powell_2009} in the \texttt{NLopt} library \citep{NLopt} for local non-gradient optimization, which provided a stable result in 100 iterations.

\subsection{Conformal prediction intervals} \label{subsec:conformal}


Kriging and Gaussian process modeling are closely related techniques, with the former viewed as a frequentist and the latter as a Bayesian approach \citep{sacks1989, currin1991bayesian}. Under the Gaussian process assumption for the stochastic process, the $(1 - \eta)$ confidence interval for the prediction can be obtained as \citep{Joseph_2025}
\begin{align}
    \hat m_{\mathrm{RQK}}(\mathbf{x}) \pm \Phi^{-1}(1 - \eta/2) s_{\mathrm{RQK}}(\mathbf{x}),
\end{align} where $\Phi^{-1}(\cdot)$ is the quantile function of the standard normal distribution. However, the Gaussian distributional assumption may not be valid, and the above confidence interval may not have the required coverage. Therefore, we instead adapt the leave-one-out cross-conformal prediction approach \citep{Vovk_2015} to build nonparametric confidence intervals for the predictions.

Fixing the hyperparameters $(\mathbf{w}, \theta, \gamma, \nu)$, we can construct the leave-one-out scores for the $n$ in-sample points: \begin{align}
    S_i = \frac{|y_i - \hat m_{\sim i}(\mathbf{x}_i)|}{s_{\sim i}(\mathbf{x}_i)}, i = 1, \dots, n.
\end{align}
\citet{Barber_2021} points out that, for a theoretical guarantee of the cross-conformal approach, we need to build scores from the leave-one-out models: \begin{align}
    S^{(i)}(\mathbf{x}, y) = \frac{|y - \hat m_{\sim i}(\mathbf{x})|}{s_{\sim i}(\mathbf{x})}, \quad i = 1, \dots, n,
\end{align} where $\hat m_{\sim i}(\mathbf{x})$ and $s_{\sim i}(\mathbf{x})$ were given in Equations \eqref{eqn:LOO_mean} and \eqref{eqn:LOO_sd}. Then, comparing $S^{(i)}(\mathbf{x}, y)$ with $S_i$, we can construct a continuous conformal prediction interval: \begin{align}
    \left[Q_{\eta}^{-} \left(\{\hat m_{\sim i}(\mathbf{x}) - S_i s_{\sim i}(\mathbf{x})\}_{i = 1}^n \right), Q_{1 - \eta}^{+} \left(\{\hat m_{\sim i}(\mathbf{x}) + S_i s_{\sim i}(\mathbf{x})\}_{i = 1}^n \right)\right]
    \label{eqn:conformal_interval}
\end{align} where, $Q_{\eta}^-(\cdot), Q_{1 - \eta}^+(\cdot)$ denote the $\lfloor \eta(n + 1)\rfloor$-th and $\lceil (1 - \eta)(n + 1)\rceil$-th empirical quantile, respectively. The hyperparameter $\nu$ cancels out in $S_i s_{\sim i}(\mathbf{x})$, thereby eliminating the need to estimate it.

\begin{proposition} \label{prop:conformal}
Suppose $\{(\mathbf{X}_i, Y_i)\}_{i \geq 1}$ are i.i.d. random vectors, and the hyperparameters $(\mathbf{w}, \theta, \gamma, \nu)$ are fixed before seeing the data. Then, the interval \eqref{eqn:conformal_interval} satisfies \begin{align}
    \mathbb{P}\{Y_{n + 1} \in \mathcal{C}_{\eta}(\mathbf{X}_{n + 1})\} \geq 1 - 2\eta.
\end{align}
\end{proposition}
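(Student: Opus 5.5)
This is the jackknife+ coverage guarantee of \citet{Barber_2021}, adapted to a normalized (studentized) nonconformity score. The plan is to reproduce their exchangeability argument, replacing absolute residuals with scores divided by $s_{\sim i}$.

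\textbf{Setup.} Write $Z_i=(\mathbf{X}_i,Y_i)$ for $i=1,\dots,n+1$. For $i\neq j$, let $(\hat m_{\sim ij},s_{\sim ij})$ be the RQK fit on all $n+1$ points except $Z_i$ and $Z_j$. The hyperparameters $(\mathbf{w},\theta,\gamma,\nu)$ are fixed in advance, and the map from a data set to $(\hat m,s)$ in \eqref{eqn:RQK_mean_rescale}--\eqref{eqn:RQK_sd_rescale} is symmetric in the order of the points. Hence $\hat m_{\sim ij}=\hat m_{\sim ji}$ and $s_{\sim ij}=s_{\sim ji}$, and we have $s_{\sim ij}>0$ because $\gamma>0$ or by the Schur complement.

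\textbf{Tournament matrix.} Define the $(n+1)\times(n+1)$ matrix
\begin{align}
A_{ij}=\mathbb{I}\left\{\frac{|Y_i-\hat m_{\sim ij}(\mathbf{X}_i)|}{s_{\sim ij}(\mathbf{X}_i)}>\frac{|Y_j-\hat m_{\sim ij}(\mathbf{X}_j)|}{s_{\sim ij}(\mathbf{X}_j)}\right\}.
\end{align}
Two facts about it:
\begin{itemize}
\item Since the pair fit is symmetric, $A_{ij}+A_{ji}\le1$.
\item For $k\ge1$, the set of ``strange'' points $\mathcal{S}=\{i:\sum_j A_{ij}\ge(1-\eta)(n+1)\}$ satisfies $|\mathcal{S}|\le 2\eta(n+1)$. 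This is Landau's tournament counting: the rows indexed by $\mathcal{S}$ carry at least $|\mathcal{S}|(1-\eta)(n+1)$ ones, yet at most $\binom{|\mathcal{S}|}{2}+|\mathcal{S}|(n+1-|\mathcal{S}|)$. Comparing the two gives the bound.
\end{itemize}
The i.i.d. assumption makes $(Z_1,\dots,Z_{n+1})$ exchangeable, and $A$ is an equivariant function of the data. Therefore $\mathbb{P}\{n+1\in\mathcal{S}\}=\mathbb{E}|\mathcal{S}|/(n+1)\le2\eta$.

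\textbf{Deterministic inclusion.} It remains to show that $Y_{n+1}\notin\mathcal{C}_\eta(\mathbf{X}_{n+1})$ forces $n+1\in\mathcal{S}$. Take $j=n+1$, so that $\hat m_{\sim i,n+1}=\hat m_{\sim i}$ and $s_{\sim i,n+1}=s_{\sim i}$, i.e. the leave-one-out quantities in \eqref{eqn:conformal_interval}. Suppose $Y_{n+1}$ exceeds the upper endpoint. Then $Y_{n+1}>\hat m_{\sim i}(\mathbf{X}_{n+1})+S_i s_{\sim i}(\mathbf{X}_{n+1})$ for at least $(1-\eta)(n+1)$ indices $i$. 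Dividing by $s_{\sim i}(\mathbf{X}_{n+1})>0$ gives $S^{(i)}(\mathbf{X}_{n+1},Y_{n+1})>S_i$, which means $A_{n+1,i}=1$ for those $i$. The lower endpoint is handled symmetrically, so either violation puts $n+1\in\mathcal{S}$. Combined with the previous step, this yields coverage at least $1-2\eta$.

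\textbf{Main obstacle.} The delicate step is checking the symmetry needed for $A_{ij}+A_{ji}\le1$. The weights $c_i$ in \eqref{eqn:new_c}, the rescaling, and the nugget all depend on the training set only as an unordered set, so the fit is permutation invariant. We must also ensure $s_{\sim i}$ is strictly positive so that the division is valid. Finally, the quantile indexing conventions must be matched to the $(1-\eta)(n+1)$ threshold exactly as in \citet{Barber_2021}.
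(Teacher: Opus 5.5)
Your proof is correct and is essentially the paper's argument. The paper observes that \eqref{eqn:conformal_interval} contains the set $\{y:\sum_{i=1}^n\mathbb{I}\{S^{(i)}(\mathbf{x},y)>S_i\}<(1-\eta)(n+1)\}$ of \citet{Barber_2021} and then cites their coverage theorem; your deterministic-inclusion step is exactly the contrapositive of that containment, and your tournament count is the proof of the cited theorem, rechecked for the studentized score.

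One caveat applies equally to your argument and to the paper's citation. The identification $\hat m_{\sim i,n+1}=\hat m_{\sim i}$ needs $\mathbf{c}$ to be recomputed via \eqref{eqn:new_c} on each reduced training set. However, $K_2=\tilde{\mathbf{r}}(\mathbf{x})^\top\tilde{\mathbf{c}}-\theta^{-2p}R(\mathbf{x},\mathbf{x}_i)\tilde c_i$ in \eqref{eqn:LOO_mean} reuses the full-sample $\tilde{\mathbf{c}}$, which depends on $\mathbf{X}_i$. So, strictly, the $1-2\eta$ guarantee covers genuine refits rather than those shortcut formulas.
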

The interval is a superset of the conformal prediction interval defined in \citet{Barber_2021}; hence, the result follows directly from Theorem 4 in their article\footnote{We also implement the original definition in our code, although we recommend our version for continuous prediction intervals. The supplementary provides further details on the relationship.}. They also noted that, while the theory guarantees a coverage probability of $1 - 2\eta$, empirical performance is often close to the desired $1 - \eta$ coverage.

The interval no longer relies on the Gaussian distributional assumption. Instead, the uncertainties $s_{\sim i}(\mathbf{x})$ are used as references for the \textit{relative uncertainty}. When data size is reasonably large, $s_{\sim i}(\mathbf{x})$ becomes close to $s(\mathbf{x})$. Therefore, the bandwidth depends on the location $\mathbf{x}$, and we preserve the football-shaped uncertainty quantification of kriging, as we saw in Figure \ref{fig:motivating_example_RQK}.

In our implementation, we fix $\nu = 1$. Because the $n$ observations are also used to estimate $(\mathbf{w}, \theta, \gamma)$, the exchangeability condition does not strictly hold. Nevertheless, our simulations in the next section show that the empirical coverage is close to $1 - \eta$ for both deterministic and noisy data. Therefore, we keep the leave-one-out conformal prediction procedure, and note that split conformal \citep{Papadopoulos_2002, Lei_2018} is an alternative if strict theoretical guarantees are desired.

\section{Numerical experiments} \label{sec:simulation}

This section presents numerical experiments on both deterministic simulation functions and real-world datasets with noise, demonstrating the consistent performance of the proposed RQK strategy across multiple benchmarks.

\subsection{Deterministic functions} \label{subsec:noiseless}

We consider four functions that are widely used as benchmarks for black-box function emulation: 
\begin{itemize}
\item Borehole function ($d_0 = 8$): \begin{align}
    m(\mathbf{x}) = \frac{2\pi x_3 (x_4 - x_6)}{\ln(x_2 / x_1) \left[ 1 + \frac{2 x_7 x_3}{\ln(x_2 / x_1) x_1^2 x_8} + \frac{x_3}{x_5} \right]}.
\end{align}

\item Dette \& Pepelyshev 8-dimensional function ($d_0 = 8$): \begin{align}
    m(\mathbf{x}) = 4(x_1 - 2 + 8x_2 - 8x_2^2)^2 + (3x_1 - 5 + 11x_2 - 11x_2^2)^2 + \sum_{j=3}^8 j \ln \left( 1 + \sum_{i=1}^j x_i \right).
\end{align}

\item Friedman function ($d_0 = 5$): \begin{align}
    m(\mathbf{x}) = 10 \sin(\pi x_1 x_2) + 20(x_3 - 0.5)^2 + 10x_4 + 5x_5.
\end{align}

\item OTL circuit function ($d_0 = 6$): \begin{align}
    m(\mathbf{x}) = \frac{\left( \frac{12x_2}{x_1 + x_2} + 0.74 \right) x_6 (x_5 + 9)}{x_6 (x_5 + 9) + x_3} + \frac{11.35 x_3}{x_6 (x_5 + 9) + x_3}.
\end{align}
\end{itemize}
We scale the inputs to lie on $[0, 1]$. See \citet{simulationlib} for details on the original scales of the inputs.

For each function, we consider two simulation settings. In the first setting, the input has the same dimension as the original dimension ($d = d_0$). Additionally, we augment the functions with $20 - d_0$ inert dimensions to create problems with $d = 20$ dimensions, making the emulation more difficult. We run experiments with $n = 20d, 40d, 60d$ i.i.d. training samples generated from a uniform distribution. For testing, we use a 10,000-point Sobol' sequence for out-of-sample evaluation.

We run the RQK approach with $p = (d + 1)/2$ and $\gamma = 10^{-8}$, and fit the parameters $(\mathbf{w}, \theta)$ using leave-one-out cross validation. For comparison, we use several other kriging implementations available in \texttt{R}: \begin{enumerate}
\item The \texttt{rkriging} package \citep{rkriging} implements maximum likelihood estimation of both ordinary kriging and rational kriging, which are denoted as ``rkriging (OK)'' and ``rkriging (RK)'' in the plots. The package supports the rational quadratic kernel, so we use it with $p = (d + 1)/2$ for comparison. 

The remaining packages do not have built-in support for the rational quadratic kernel, so we use the squared-exponential kernel. Also, all these packages implement ordinary kriging.

\item The \texttt{DiceKriging} package \citep{DiceKriging} supports both maximum likelihood and leave-one-out estimation of deterministic data. They are denoted as ``Dice (MLE)'' and ``Dice (LOO)''.

\item The \texttt{hetGP} package also supports both maximum likelihood and leave-one-out estimation. However, the leave-one-out error is unreasonably high in our simulations, so we present only the maximum likelihood result and denote it as ``hetGP''.

\item The \texttt{mlegp} package also uses maximum likelihood estimation. It is denoted as ``mlegp''.
\end{enumerate}
All the above comparison methods are run with the default settings provided in the package, except that the nugget $\gamma$ is explicitly set to $10^{-8}$. Each simulation is repeated over 30 random seeds for data generation, and all methods share the same data.

\begin{figure}[p]
\centering

\begin{subfigure}{\textwidth}
\centering
\includegraphics[width=0.4\textwidth]{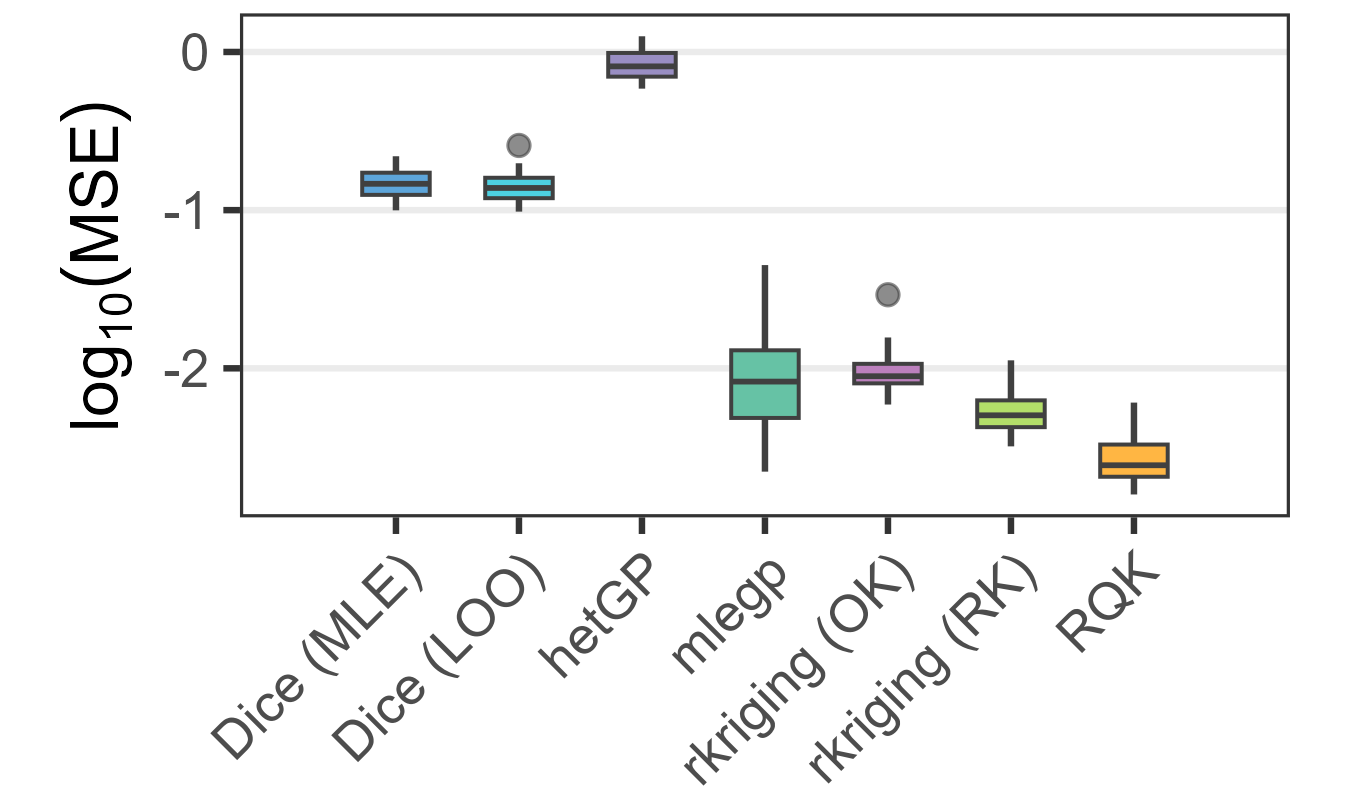}
\includegraphics[width=0.4\textwidth]{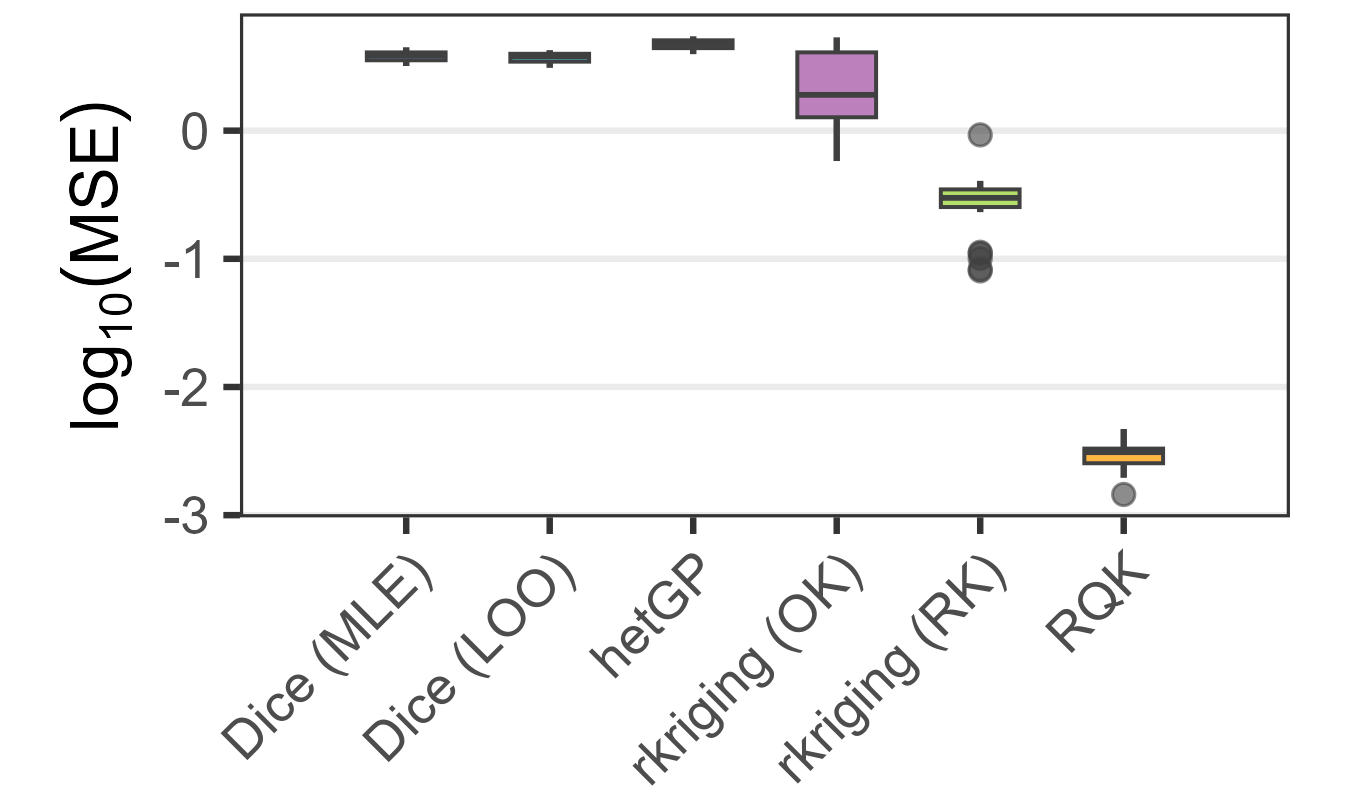}
\caption{Borehole function. Left: 8 dimensions. Right: 20 dimensions.}
\end{subfigure}%

\begin{subfigure}{\textwidth}
\centering
\includegraphics[width=0.4\textwidth]{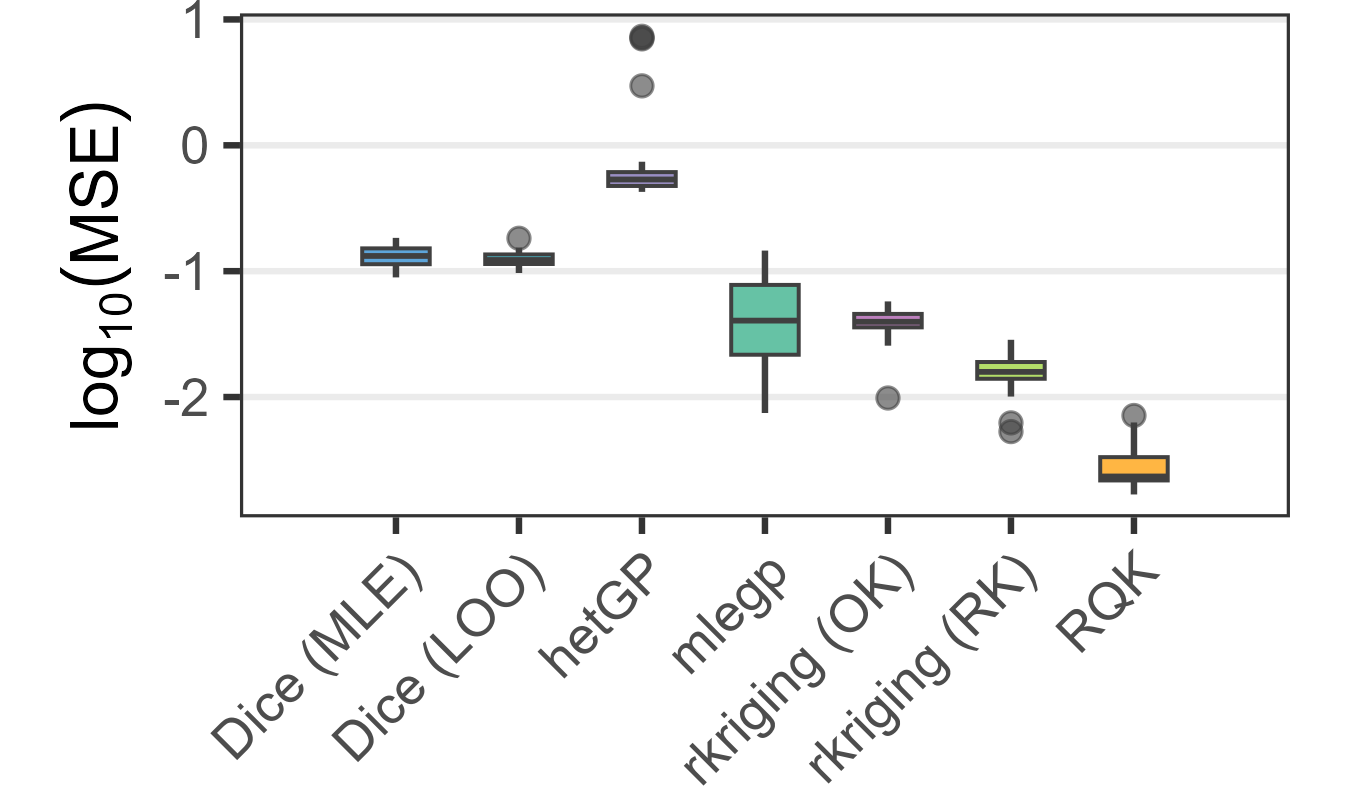}
\includegraphics[width=0.4\textwidth]{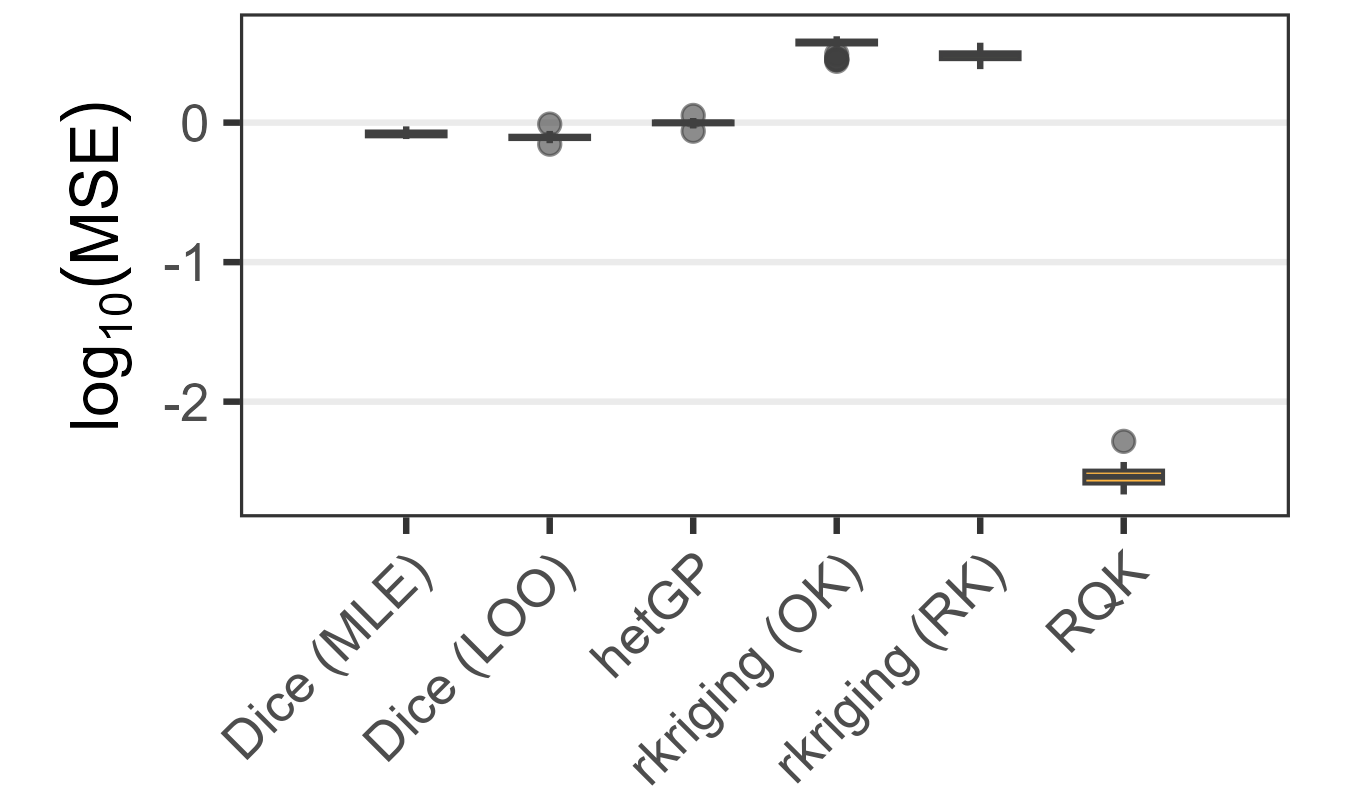}
\caption{Dette and Pepelyshev 8d function. Left: 8 dimensions. Right: 20 dimensions.}
\end{subfigure}%

\begin{subfigure}{\textwidth}
\centering
\includegraphics[width=0.4\textwidth]{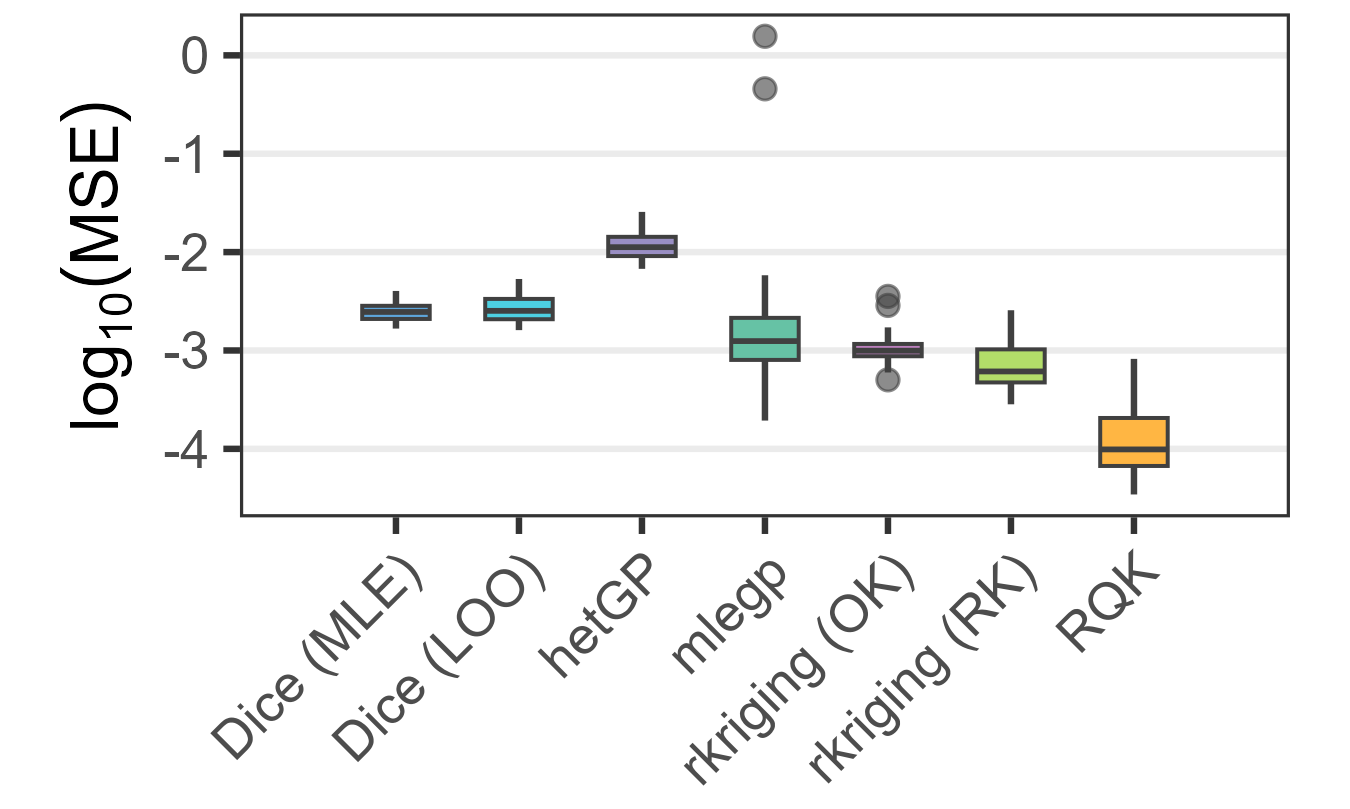}
\includegraphics[width=0.4\textwidth]{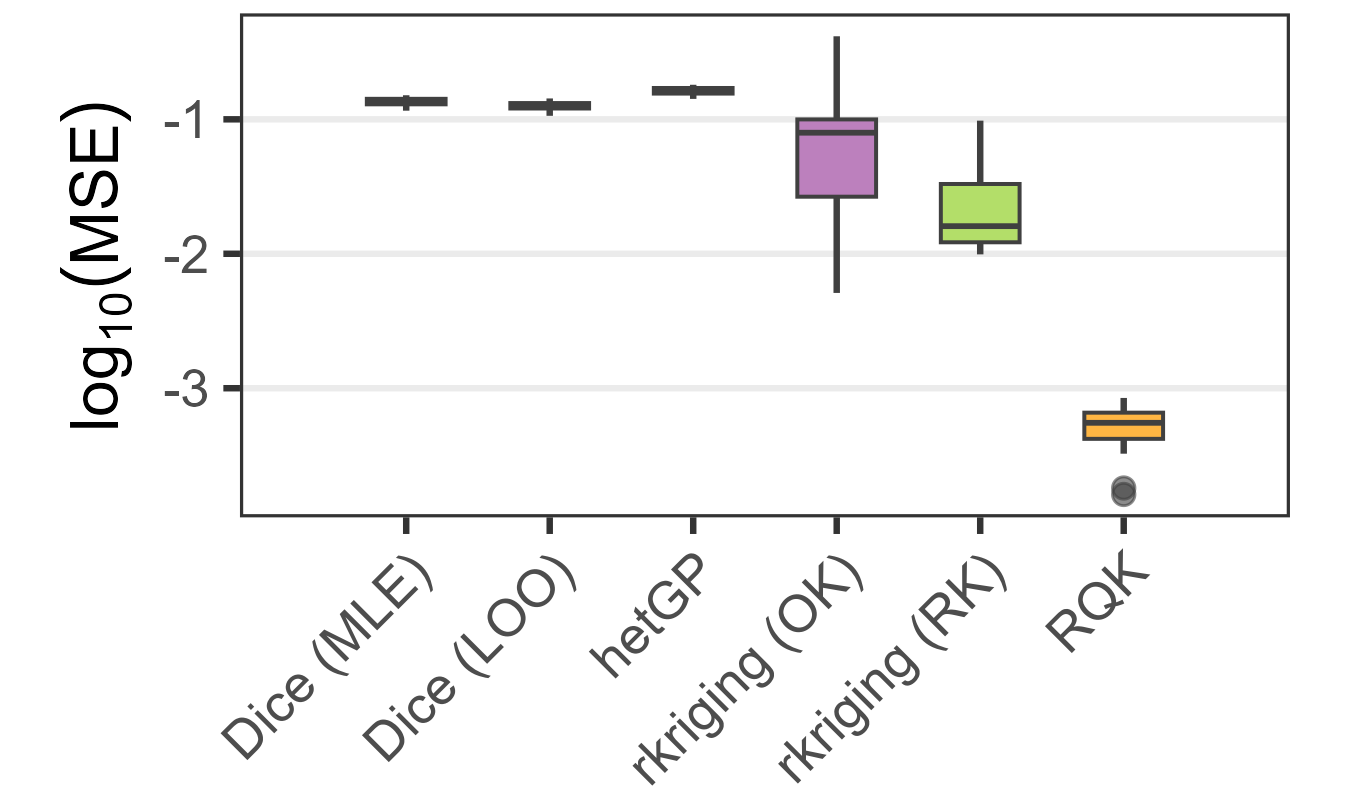}
\caption{Friedman function. Left: 5 dimensions. Right: 20 dimensions.}
\end{subfigure}%

\begin{subfigure}{\textwidth}
\centering
\includegraphics[width=0.4\textwidth]{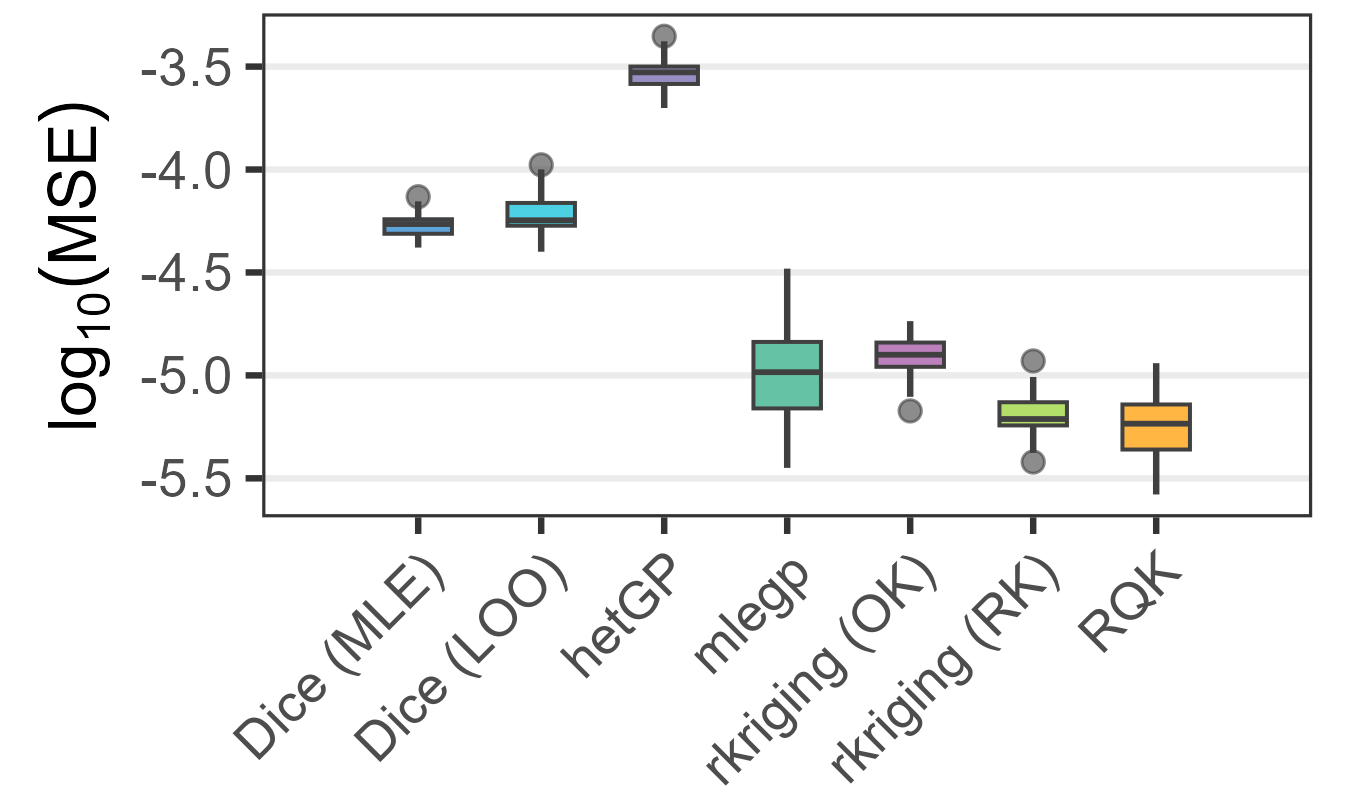}
\includegraphics[width=0.4\textwidth]{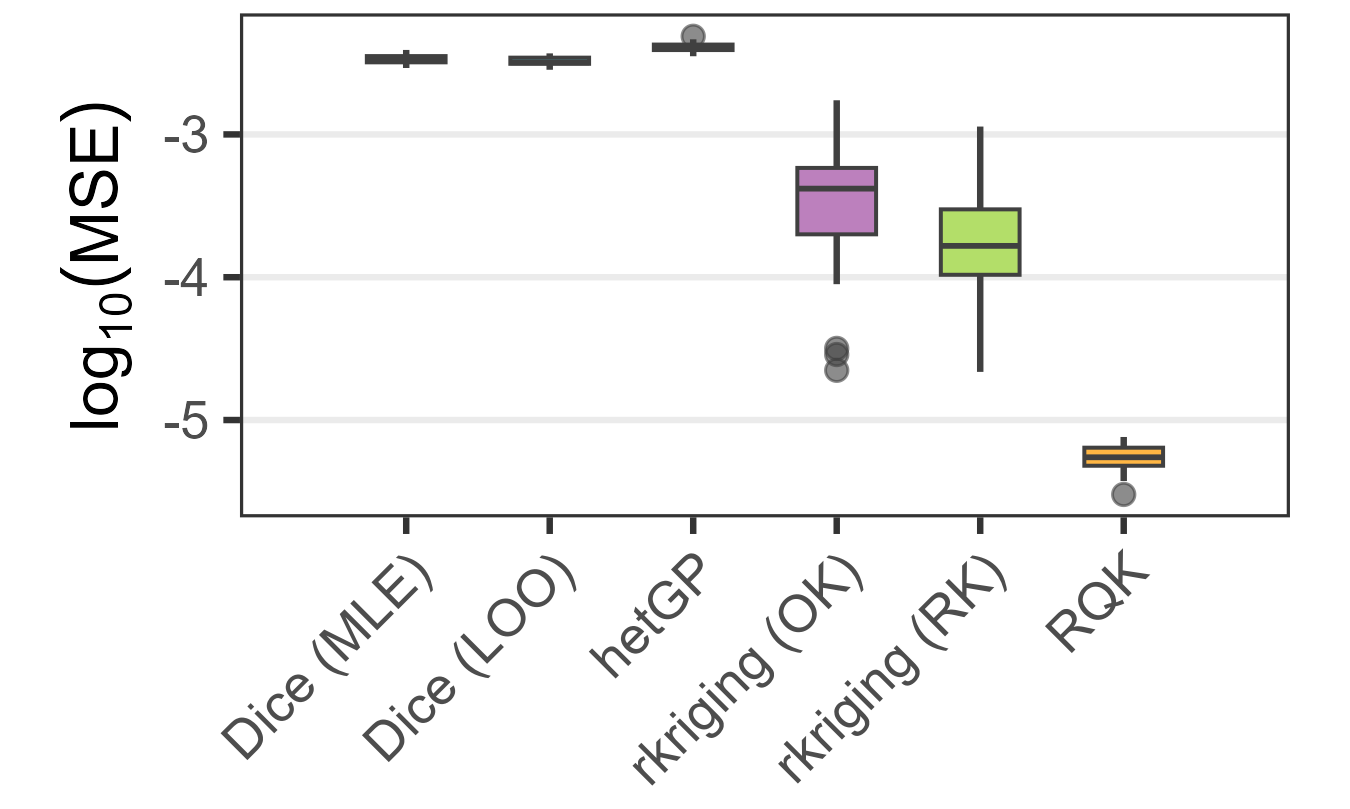}
\caption{OTL circuit function. Left: 6 dimensions. Right: 20 dimensions.}
\end{subfigure}%

\caption{Logarithm MSE plots for deterministic functions with $n = 40d$. The original functions are on the left, and functions with augmented inert variable are on the right. Each experiment is repeated over 30 random seeds for data generation.}
\label{fig:noiseless}

\end{figure}

For the quality of the point predictors, we evaluate the mean-squared error (MSE) on the test set: \begin{align}
    \mathrm{MSE} = \frac{1}{n_{\mathrm{test}}} \sum_{i = 1}^{n_{\mathrm{test}}} \{\hat m(\mathbf{x}_i) - m(\mathbf{x}_i)\}^2.
\end{align}
Figure \ref{fig:noiseless} presents the logarithm of the MSE for all four functions. Due to space constraints, we present results only for $n = 40d$ and defer the qualitatively similar $n = 20d$ and $60d$ to the supplementary. The plots clearly show that RQK consistently outperforms other methods across all functions, and its advantage is more pronounced when inert factors are present in the $d = 20$ case. 

\begin{figure}[p]
\centering

\begin{subfigure}{\textwidth}
\centering
\includegraphics[width=0.4\textwidth]{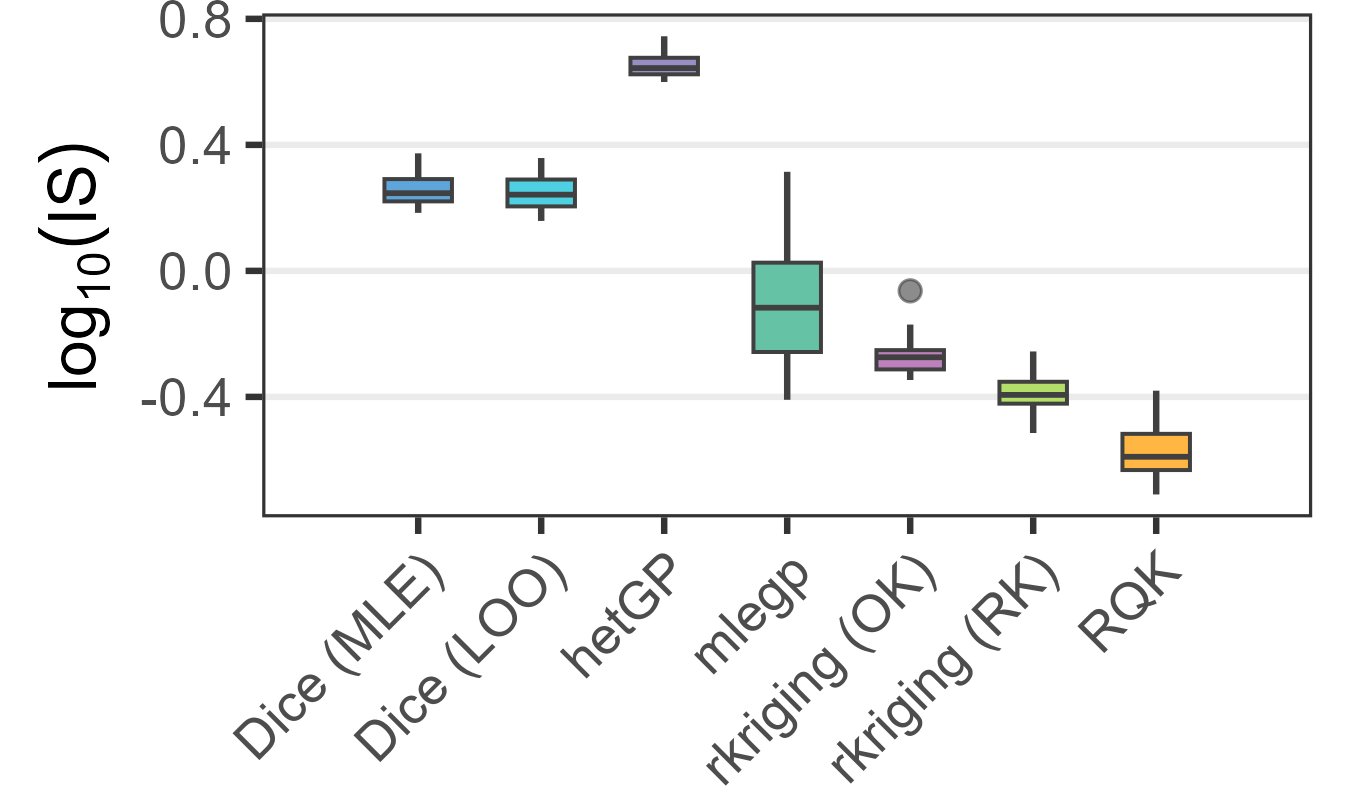}
\includegraphics[width=0.4\textwidth]{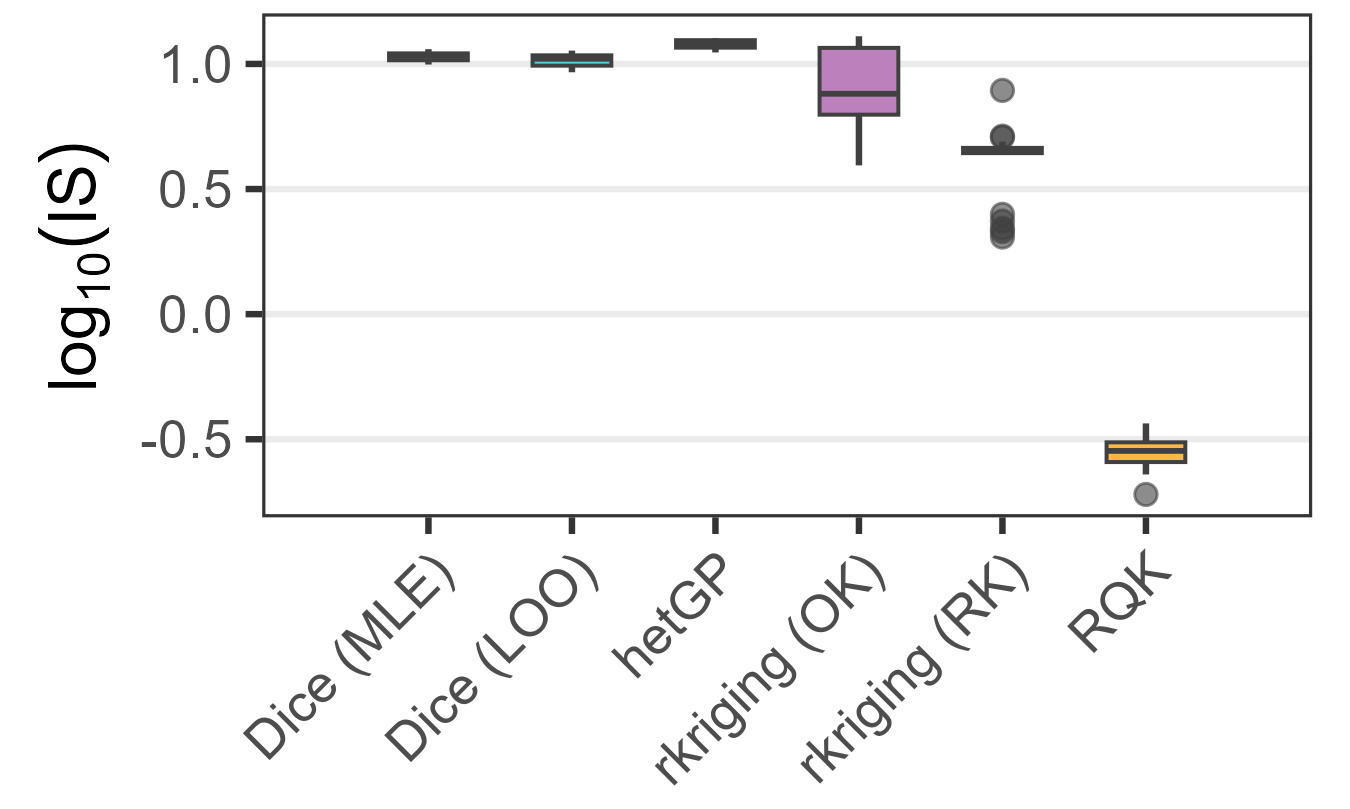}
\caption{Borehole function. Left: 8 dimensions. Right: 20 dimensions.}
\end{subfigure}%

\begin{subfigure}{\textwidth}
\centering
\includegraphics[width=0.4\textwidth]{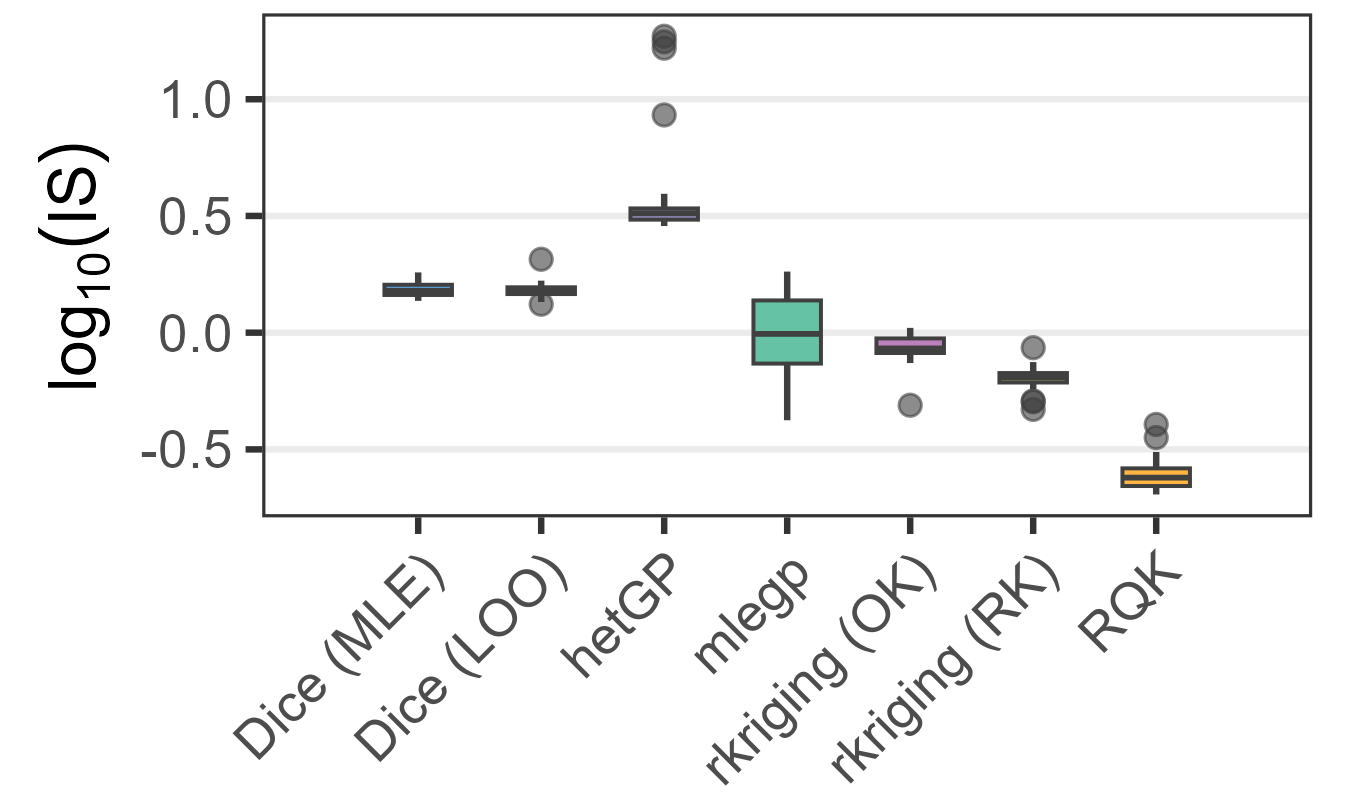}
\includegraphics[width=0.4\textwidth]{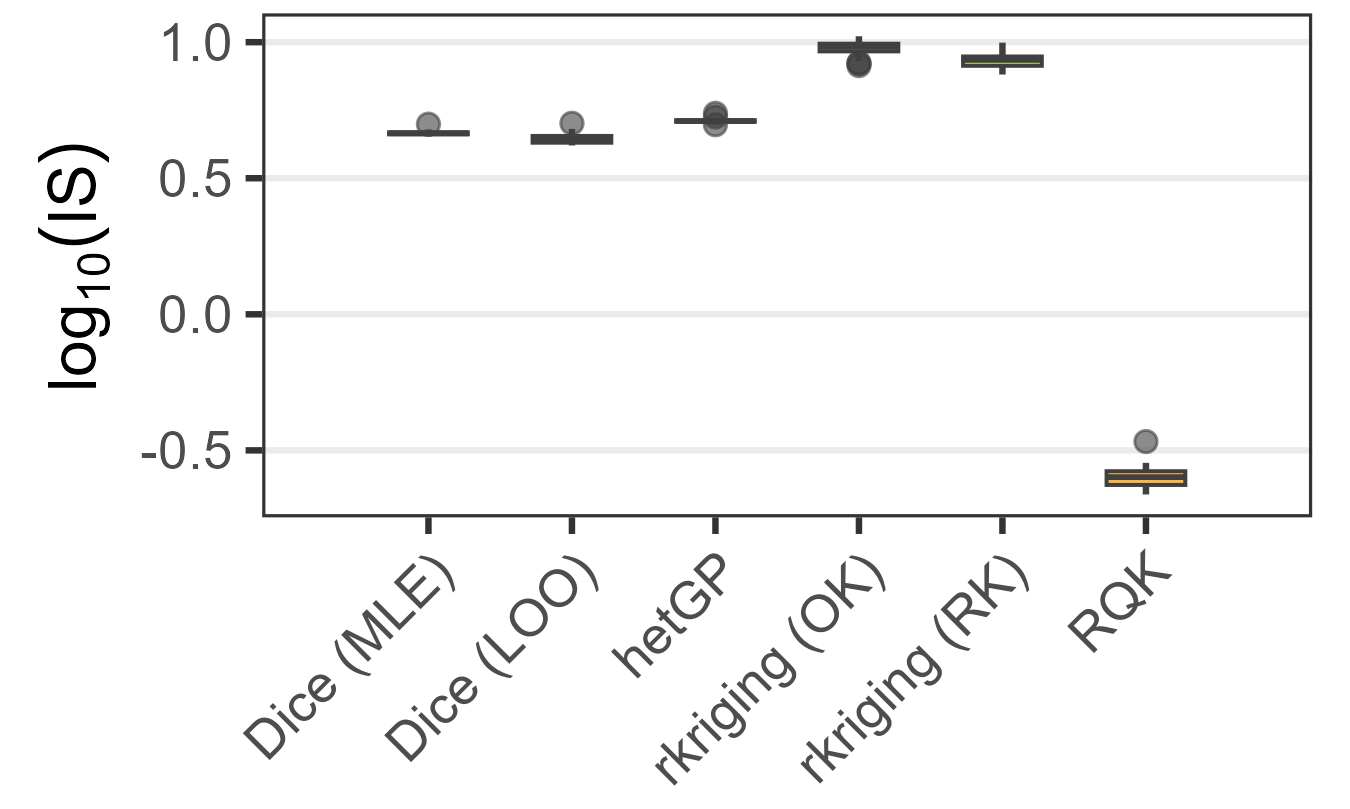}
\caption{Dette and Pepelyshev 8d function. Left: 8 dimensions. Right: 20 dimensions.}
\end{subfigure}%

\begin{subfigure}{\textwidth}
\centering
\includegraphics[width=0.4\textwidth]{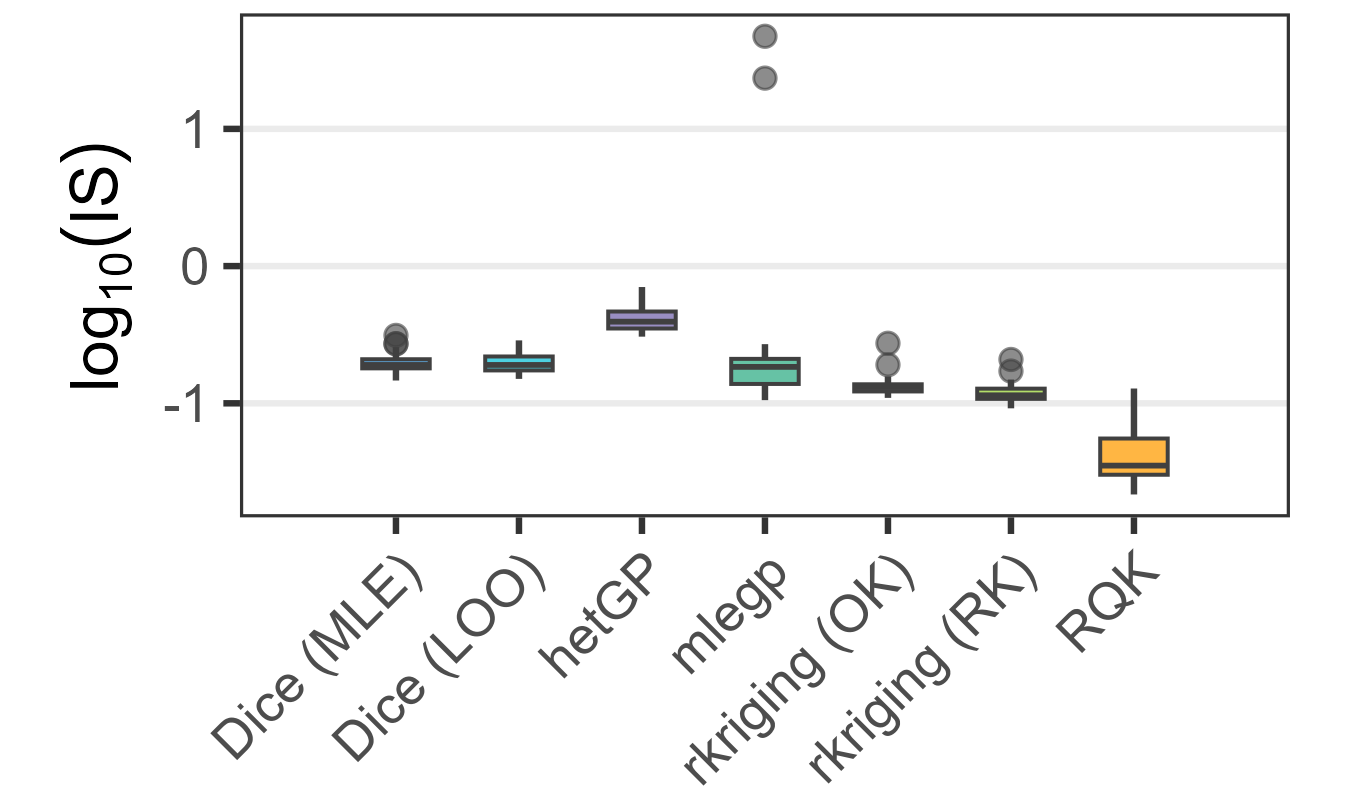}
\includegraphics[width=0.4\textwidth]{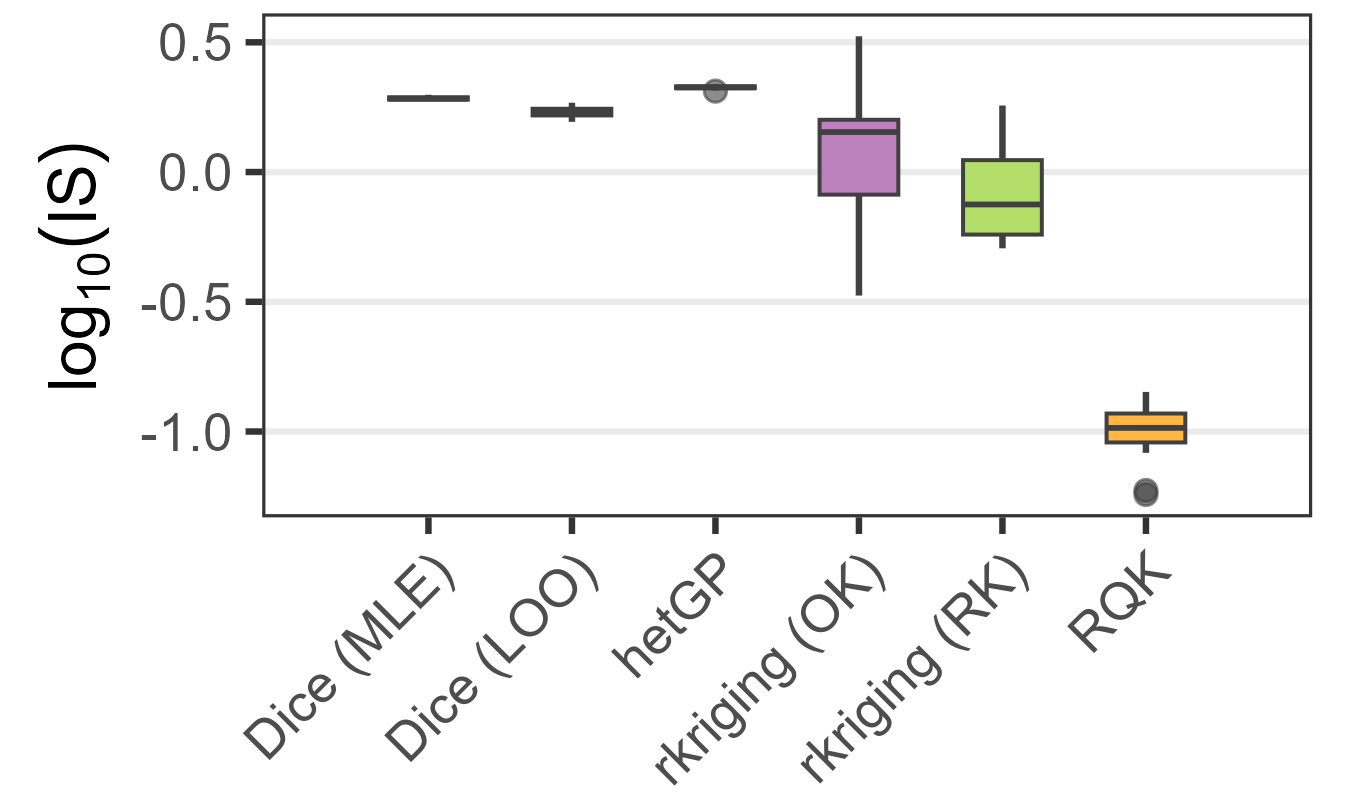}
\caption{Friedman function. Left: 5 dimensions. Right: 20 dimensions.}
\end{subfigure}%

\begin{subfigure}{\textwidth}
\centering
\includegraphics[width=0.4\textwidth]{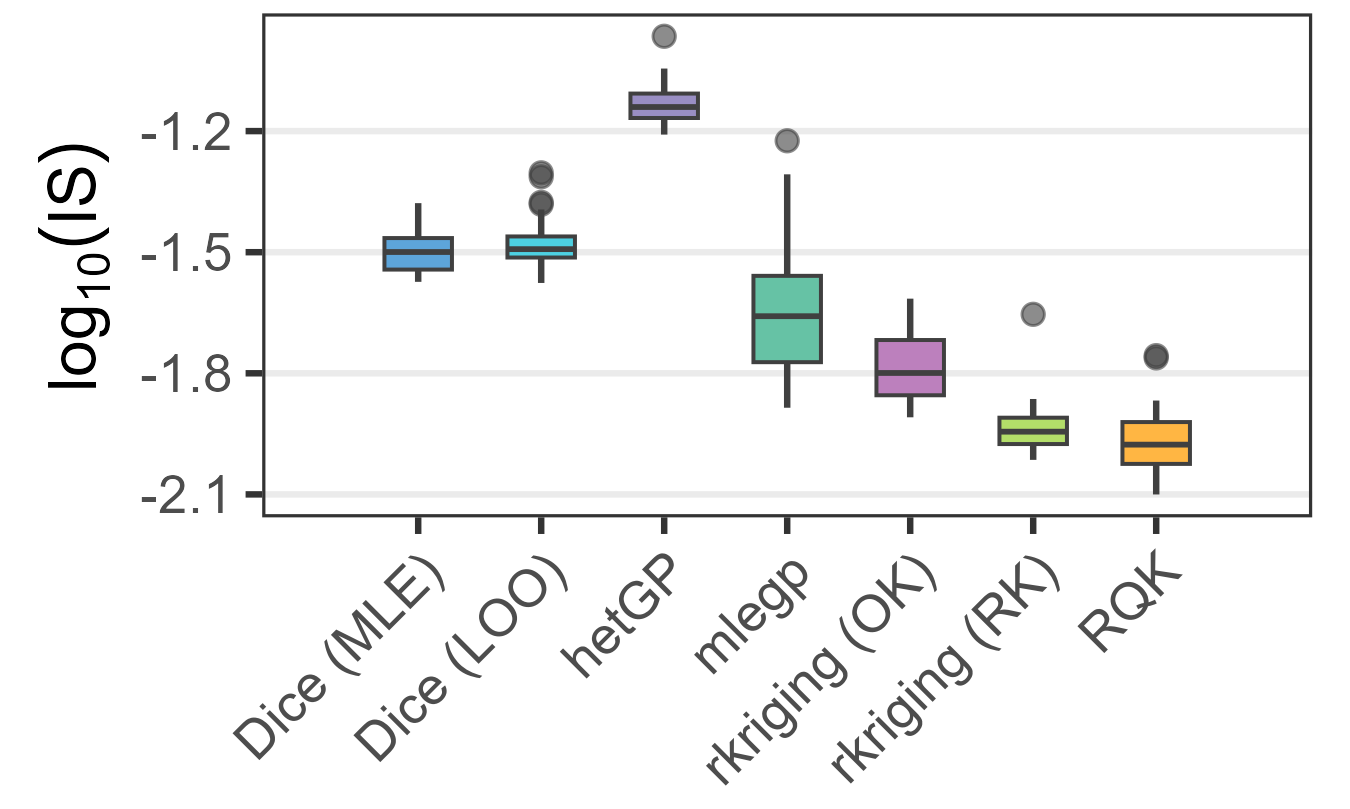}
\includegraphics[width=0.4\textwidth]{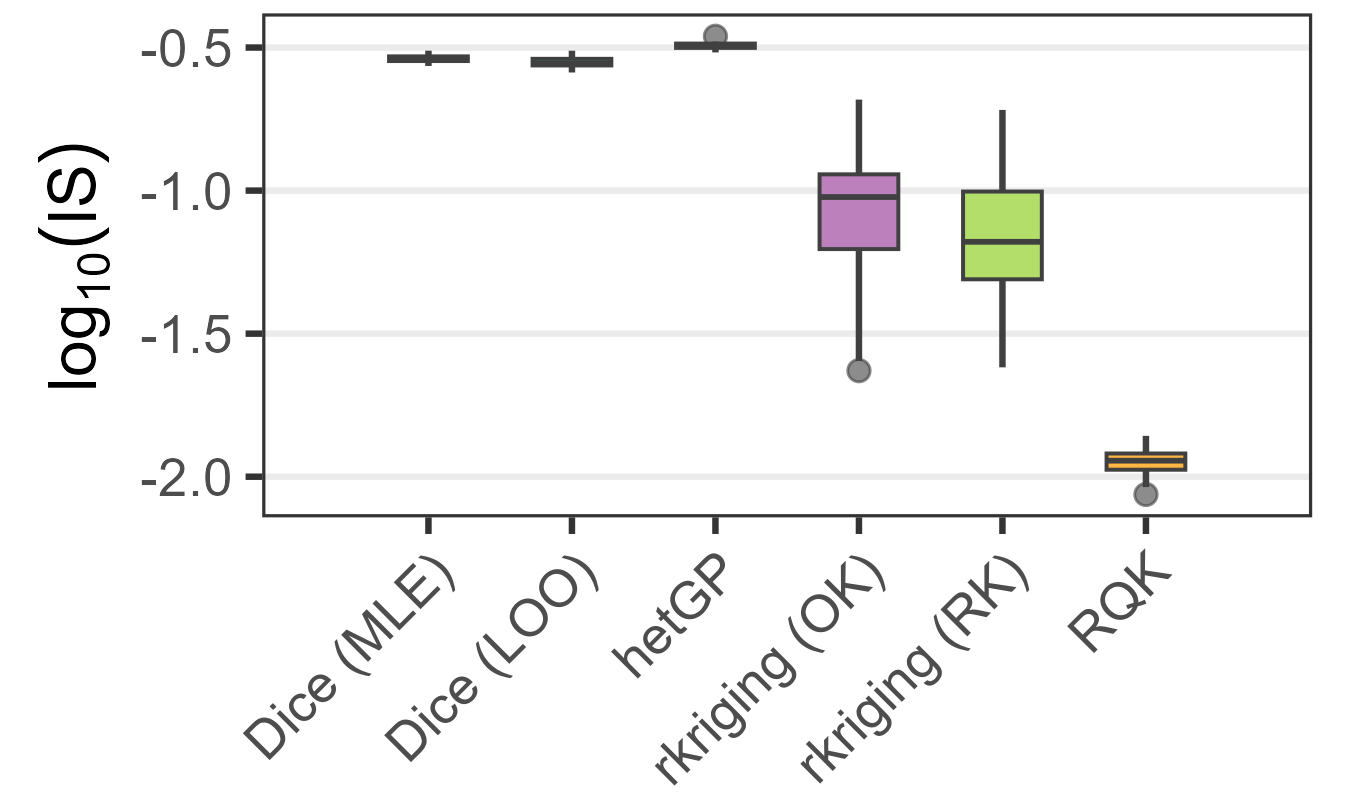}
\caption{OTL circuit function. Left: 6 dimensions. Right: 20 dimensions.}
\end{subfigure}%

\caption{Logarithm IS plots for deterministic functions with $n = 40d$. The original functions are on the left, and functions with augmented inert variable are on the right. Each experiment is repeated over 30 random seeds for data generation.}
\label{fig:noiseless_IS}

\end{figure}

Beyond point prediction, we also care about the uncertainty quantification of kriging methods. Here, we evaluate the interval score (IS, \citealp{Gneiting_Raftery_2007}): \begin{align}
    \mathrm{IS} = \frac{1}{n_{\mathrm{test}}} \sum_{i = 1}^{n_{\mathrm{test}}} \left\{(u - l) + \frac{2}{\eta} \Big([l - m(\mathbf{x}_i)]_+ + [m(\mathbf{x}_i) - u]_+\Big)\right\}, \label{eqn:interval_score}
\end{align} where $[\cdot]_+ = \max\{\cdot, 0\}$, and $[l, u]$ is the prediction interval. For RQK, the interval is constructed by the conformal prediction procedure in Section \ref{subsec:conformal}. We use the variant \eqref{eqn:conformal_interval}, as it is more conservative. For other kriging methods, the intervals are constructed by $[\hat m(\mathbf{x}) - 1.96 s(\mathbf{x}), \hat m(\mathbf{x}) + 1.96s(\mathbf{x}_i)]$. The metric favors short prediction intervals and penalizes coverage gaps by a factor of $2/\eta$ (recall that $1 - \eta$ is the desired coverage). Figure \ref{fig:noiseless_IS} depicts the logarithm IS for the four functions. The patterns are similar to Figure \ref{fig:noiseless}: more accurate methods also have shorter prediction intervals. For these relatively smooth functions, all methods provide coverage close to 95\% (not shown in the figures). In Section \ref{subsec:noisy}, we will see that coverage varies greatly for the noisy datasets.

Comparisons with existing methods show that no single component of the RQK procedure is the ``magic ingredient''. First, \texttt{DiceKriging} shows that leave-one-out estimation will not provide great improvements over maximum likelihood estimation. Using the rational quadratic kernel by \texttt{rkriging} also provides similar performance with the squared-exponential kernel. Finally, rational kriging yields lower error than ordinary kriging in many cases, but it still cannot outperform RQK. Rather, by combining the rational quadratic kernel with the rational kriging form, we can leverage the kernel regression relationship to obtain better predictors. 

\begin{figure}[tb]
\centering

\begin{subfigure}{\textwidth}
\centering
\includegraphics[width=0.48\textwidth,valign=c]{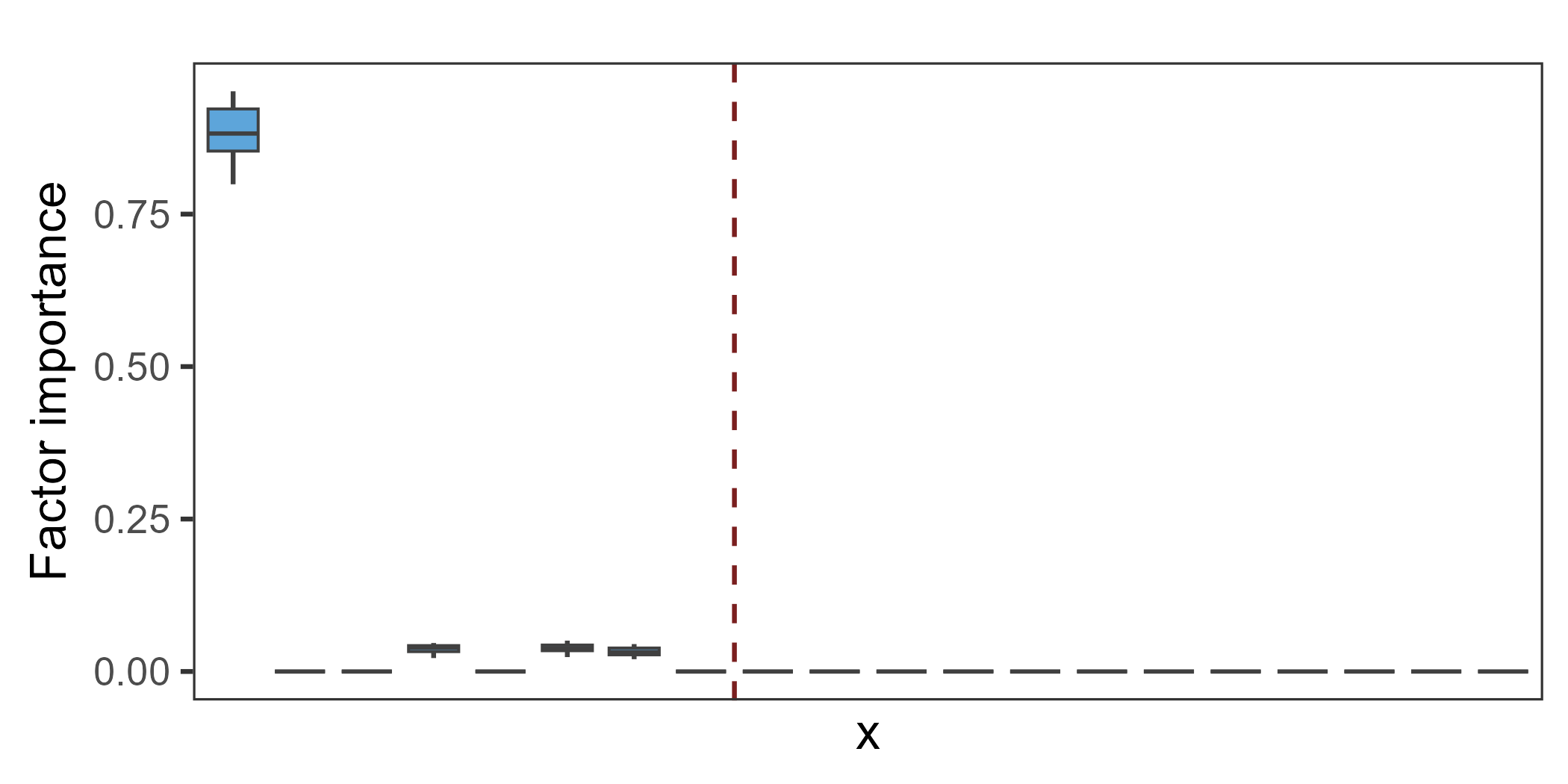}
\includegraphics[width=0.48\textwidth,valign=c]{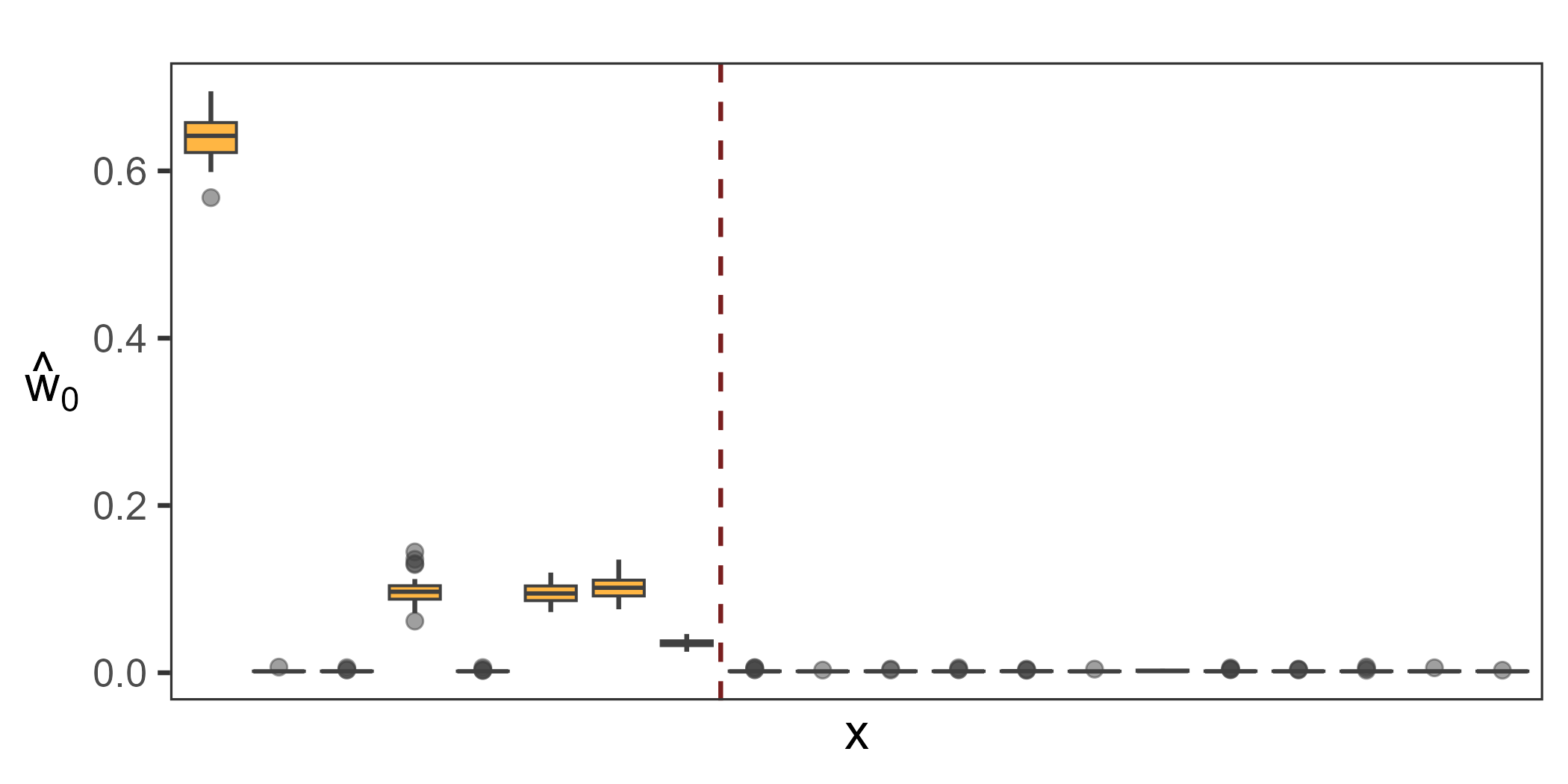}
\caption{Comparison of factor importance (left) and RQK initial weights (right).}
\end{subfigure}%

\begin{subfigure}{\textwidth}
\centering
\includegraphics[width=0.48\textwidth,valign=c]{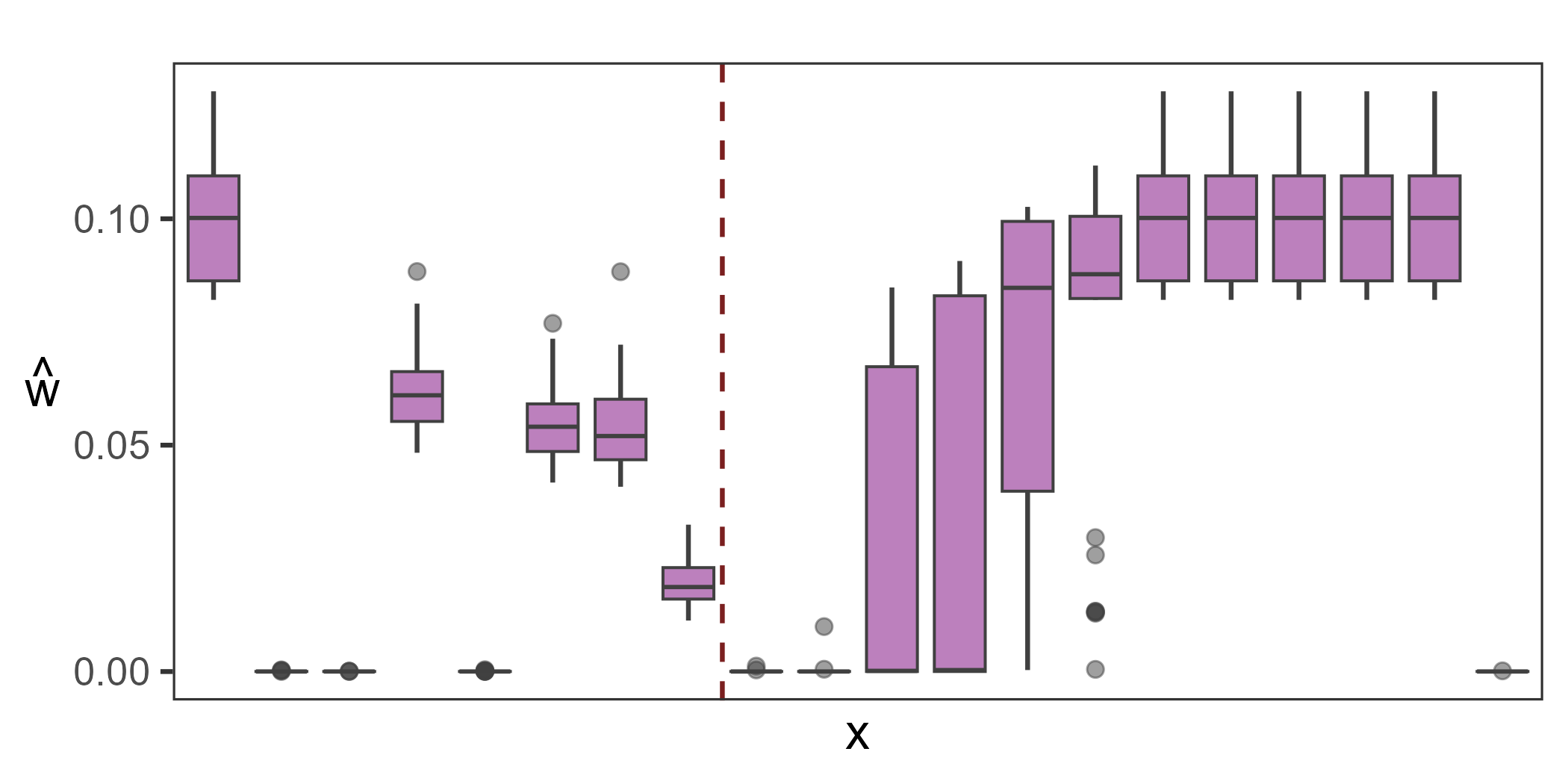}
\includegraphics[width=0.48\textwidth,valign=c]{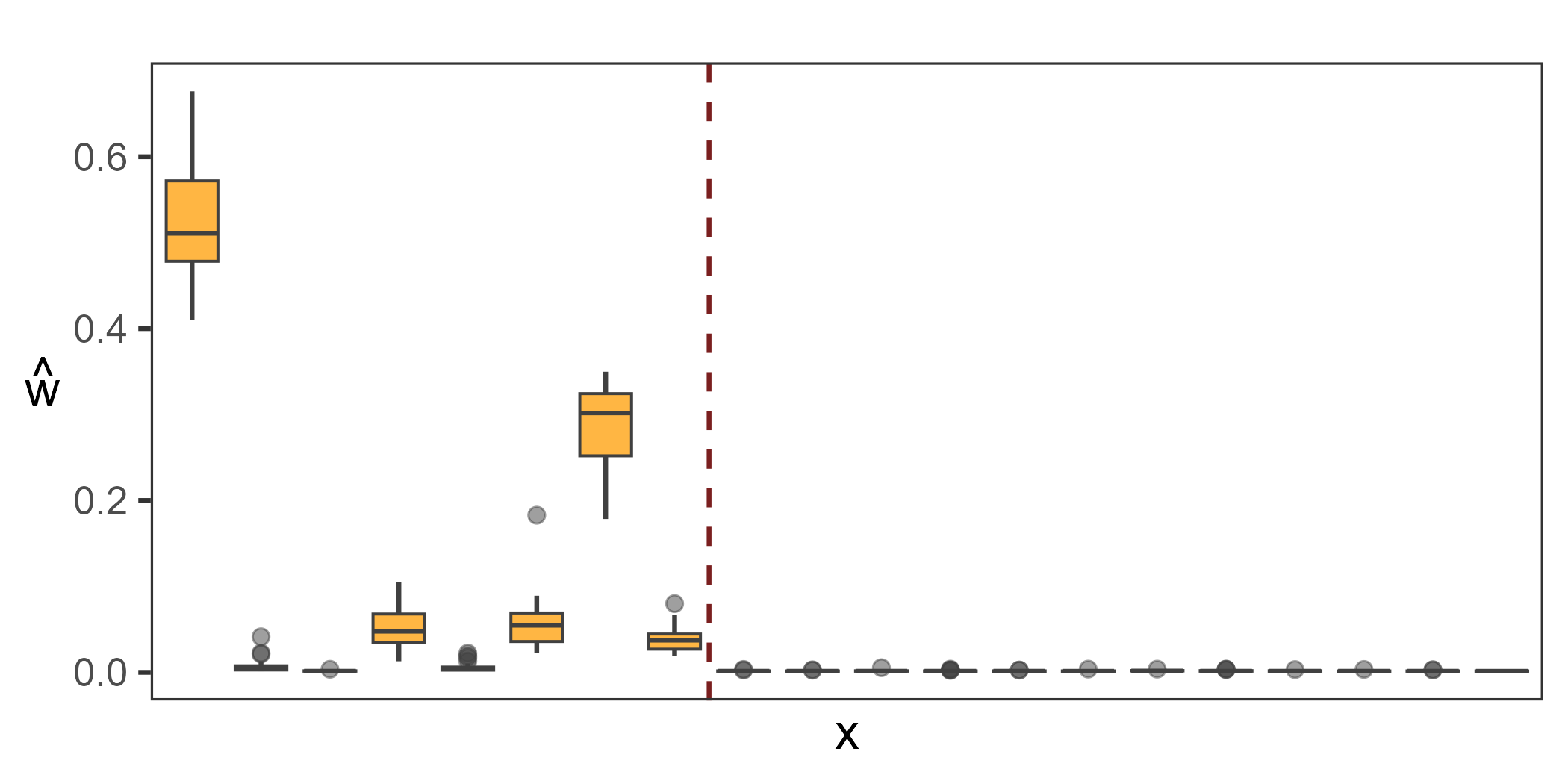}
\caption{Ordinary kriging weights (left) and RQK weights (right).}
\end{subfigure}%

\caption{Comparison of factor importance (estimated total Sobol' indices), RQK initial weights, and final weight estimations for the 20-dimensional Borehole problem. The experiment is repeated over 30 random seeds for data generation.}
\label{fig:noiseless_weight}

\end{figure}

To elaborate on the role of the initial weight estimation procedure outlined in Section \ref{subsec:initial}, we present an example in Figure \ref{fig:noiseless_weight} using the Borehole function. With $n = 800$ samples from the $d = 20$ setting, we can estimate factor importance from the data. Here, we use \texttt{FIRST}, introduced by \citet{Huang_2025}, which estimates total Sobol' indices from scattered data. The left side of panel (a) shows the factor importance summing to one given by \texttt{FIRST}. The result shows that $x_1, x_4, x_6, x_7$ are important, agreeing with the previous analysis of the Borehole function \citep{Moon_2010}. Furthermore, the additional inert variables ($x_9$ to $x_{20}$, right of the dashed line) have zero importance. Next, the right side of panel (a) presents the estimate of $\hat{\mathbf{w}}_0$ from kernel regression. We can see that, although the weights indicate relevance in the kernel distance and should not be directly interpreted as global sensitivity indices, they closely align with the factor importance. Therefore, the estimates serve as reasonable starting points for the expensive leave-one-out optimization. Panel (b) compares the ordinary kriging weights from \texttt{rkriging} and the RQK weights, showing that RQK is more stable and better reflects the relative importance, ultimately contributing to smaller approximation error.

\begin{table}[tb]
\centering
\caption{Median computation time (in seconds) over 30 runs for the Borehole function.}
\label{tab:noiseless_time}
\begin{tabular}{lcccccc}
\toprule
& \multicolumn{3}{c}{$p=8$} & \multicolumn{3}{c}{$p=20$} \\
\cmidrule(lr){2-4} \cmidrule(lr){5-7}
 & $n=160$ & $n=320$ & $n=480$ & $n=400$ & $n=800$ & $n=1200$ \\
\midrule
hetGP      & 0.07 & 0.28 & 0.58 & 0.77 & 2.54 & 6.50 \\
rkriging (OK)         & 0.12 & 0.38 & 0.80 & 0.69 & 2.70 & 6.50 \\
\textbf{RQK}        & \textbf{0.13} & \textbf{0.42} & \textbf{0.86} & \textbf{0.73} & \textbf{2.98} & \textbf{7.20} \\
Dice (MLE) & 0.12 & 0.35 & 0.77 & 1.40 & 5.42 & 11.27 \\
Dice (LOO) & 0.19 & 0.55 & 0.93 & 2.97 & 11.34 & 25.17 \\
rkriging (RK)         & 0.36 & 2.02 & 6.82 & 4.51 & 31.84 & 104.92 \\
mlegp      & 8.66 & 48.09 & 148.24 & 99.19 & $>$ 300 & $>$ 300 \\
\bottomrule
\end{tabular}
\end{table}

Finally, we compare the computational time of the packages. Table \ref{tab:noiseless_time} gives the median times (including estimation and prediction) over 30 runs on an Apple M1 chip with 16 GB RAM, sorted from fastest to slowest. We can make several observations: First, the rational kriging implementation in \texttt{rkriging} does not scale well, mainly due to the computation of the Perron eigenvector $\mathbf{c}$. In contrast, RQK obtains $\mathbf{c}$ directly through the correlation matrix $\mathbf{R}$. Second, leave-one-out estimation is more costly than maximum likelihood. Although both have complexity $\mathcal{O}(n^3)$, calculating leave-one-out errors requires more matrix solve operations and is typically slower. Nevertheless, our RQK implementation achieves a comparable computation time as the two fastest packages, \texttt{hetGP} and \texttt{rkriging}, without compromising accuracy and stability. There is one caveat: we perform efficient linear algebra on dense matrices using \texttt{RcppArmadillo} \citep{RcppArma_2025}, which will be slow using the \texttt{R} default BLAS (basic linear algebra subprogram). For optimal speed and replicating the results in this work, we strongly recommend using an external high-performance BLAS library. See (\citealp[Appendix A.3.1]{Rinstallation_2010}; \citealp{Eddelbuettel_2024}) for more details and best practices.

\subsection{Noisy data} \label{subsec:noisy}

We also apply the RQK methodology to real-world noisy datasets with varying sample sizes and dimensions. \begin{itemize}
\item Concrete compressive strength dataset \citep{Yeh_1998}. The dataset contains 1,030 samples of high-performance concrete. The objective is to use the eight input factors to predict concrete strength.

\item Appliances energy prediction dataset \citep{Energy_uci_2017}. The original dataset contains 138 daily time series sampled at 10-minute intervals. We limit our interest to samples in $[19:00, 20:00)$ each day, resulting in a dataset of 822 rows. There are 28 input factors for predicting energy usage in watts.

\item Meat spectroscopy dataset \citep{Thodberg_1993}. The dataset contains 215 meat samples, and the goal is to use a 100-channel spectrum of absorbance values to predict fat content. We note that \citet{Huang_2025} identified only three of the 100 channels as having a significant impact.
\end{itemize}

For RQK, we use $p = d/2$ and estimate $(\mathbf{w}, \theta, \gamma)$ using leave-one-out cross validation. Among the aforementioned kriging packages, \texttt{DiceKriging} (maximum likelihood only), \texttt{hetGP}, and \texttt{rkriging} (ordinary kriging only) support noisy data with an estimated nugget parameter. We also compare with other supervised machine learning methods: \begin{enumerate}
\item K-nearest neighbors (KNN), where the number of neighbors is the tuning parameter.

\item Random forest, implemented in the \texttt{randomForest} package \citep{randomForest}. We choose the maximum depth and the minimum number of samples per node as the tuning parameters.

\item Extreme gradient boosting (XGBoost), implemented in the \texttt{xgboost} package \citep{xgboost}. Similar to random forest, the tuning parameters are the maximum depth and minimum node weight.

\item Shallow neural networks. We use \texttt{nnet} \citep{nnet} to build neural networks containing a single hidden layer. The tuning parameters are the hidden-layer size and the weight decay. We also implement a Bayesian neural network (BNN), for which we tune the hidden-layer size and the prior precision on the weights.
\end{enumerate}
For the concrete strength and meat spectroscopy datasets, the samples are randomly split into 90\% for training and 10\% for testing. The appliance energy dataset is split by date, so that samples from the same date are either both in the training set or both in the test set. During training of the above methods with tunable hyperparameters, we perform light tuning (fewer than 20 combinations) via grid search for each method, using 10-fold cross-validation on the training set. 

\begin{figure}[p]
\centering

\begin{subfigure}{\textwidth}
\centering
\includegraphics[width=0.32\textwidth]{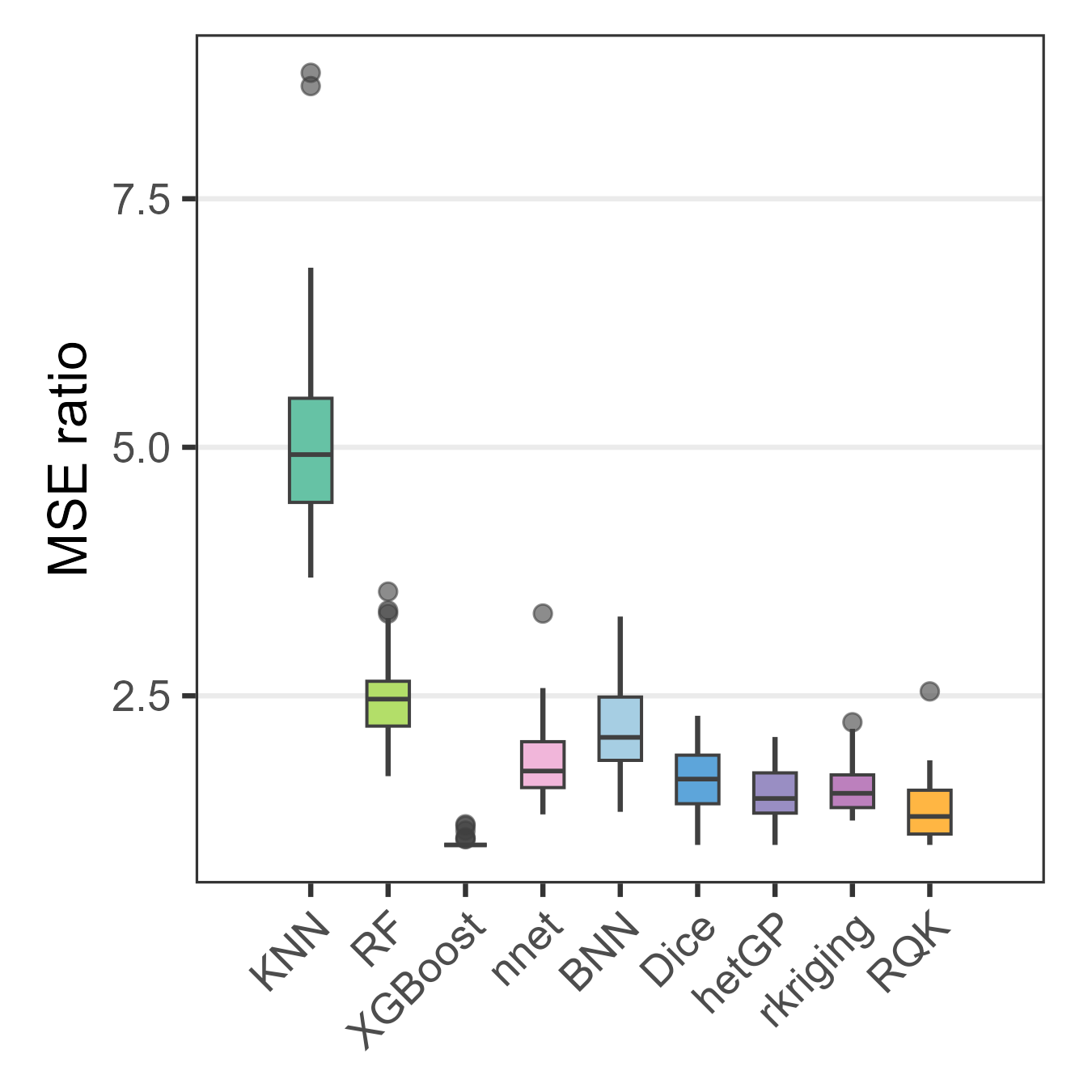}
\includegraphics[width=0.32\textwidth]{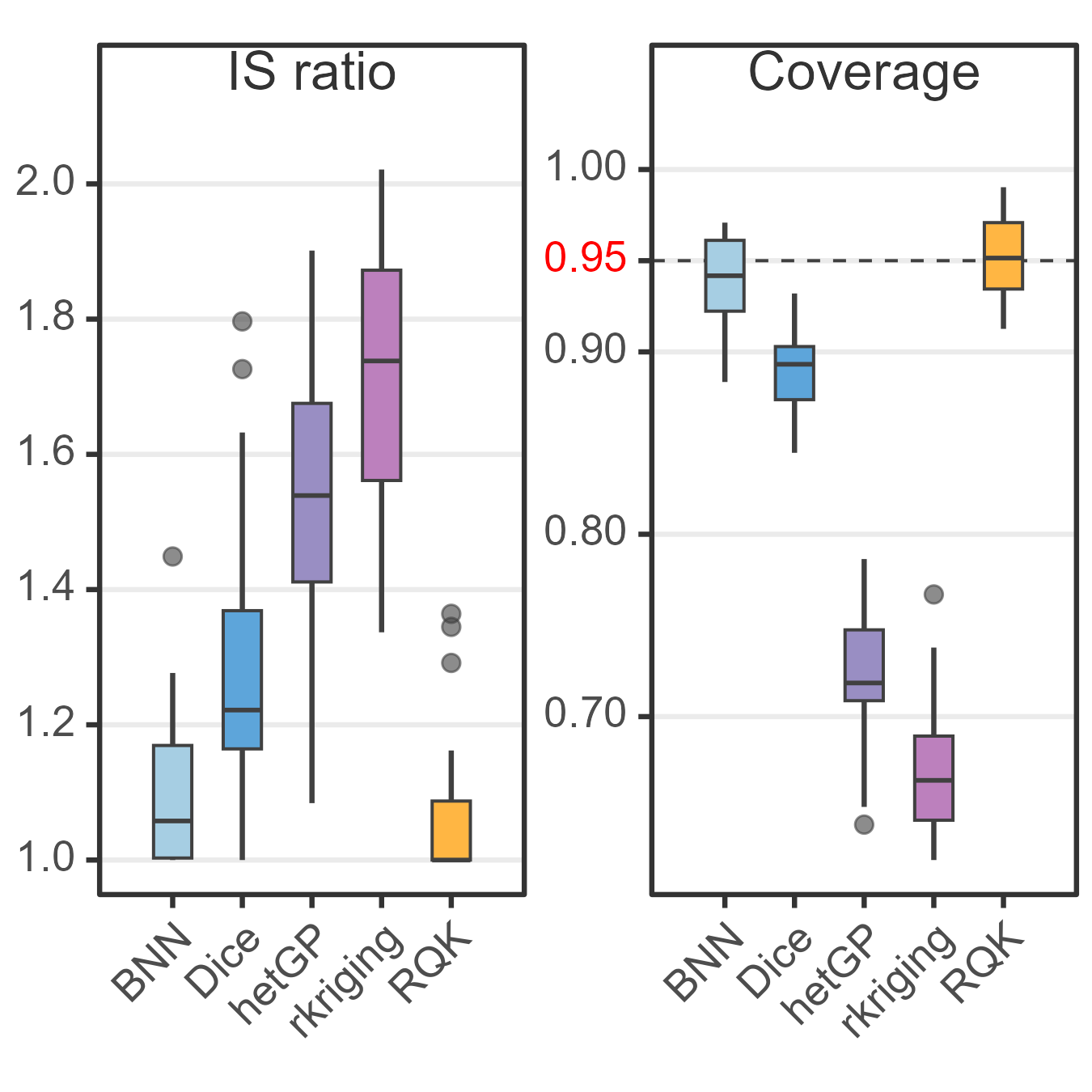}
\includegraphics[width=0.32\textwidth]{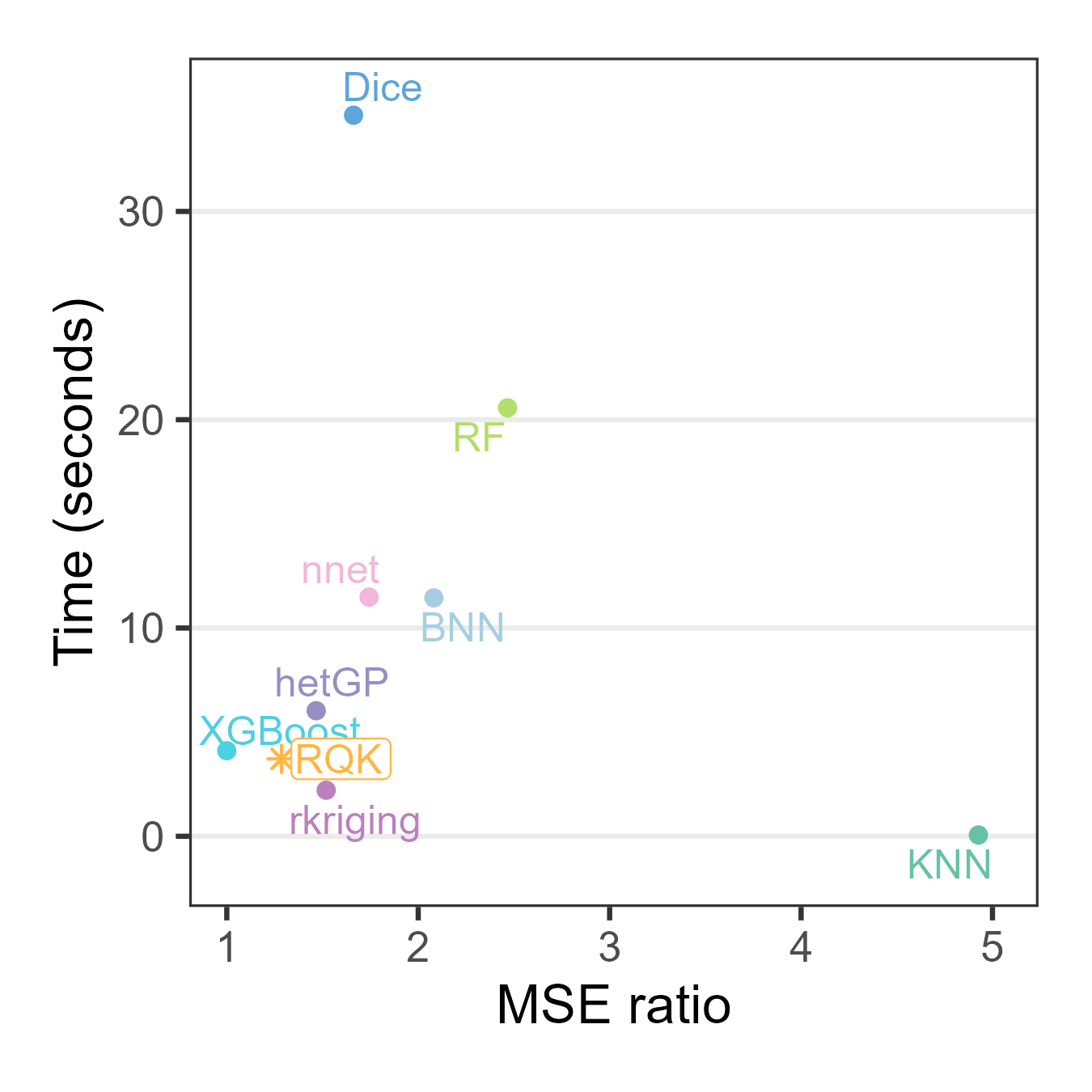}
\caption{Concrete strength dataset.}
\end{subfigure}%

\begin{subfigure}{\textwidth}
\centering
\includegraphics[width=0.32\textwidth]{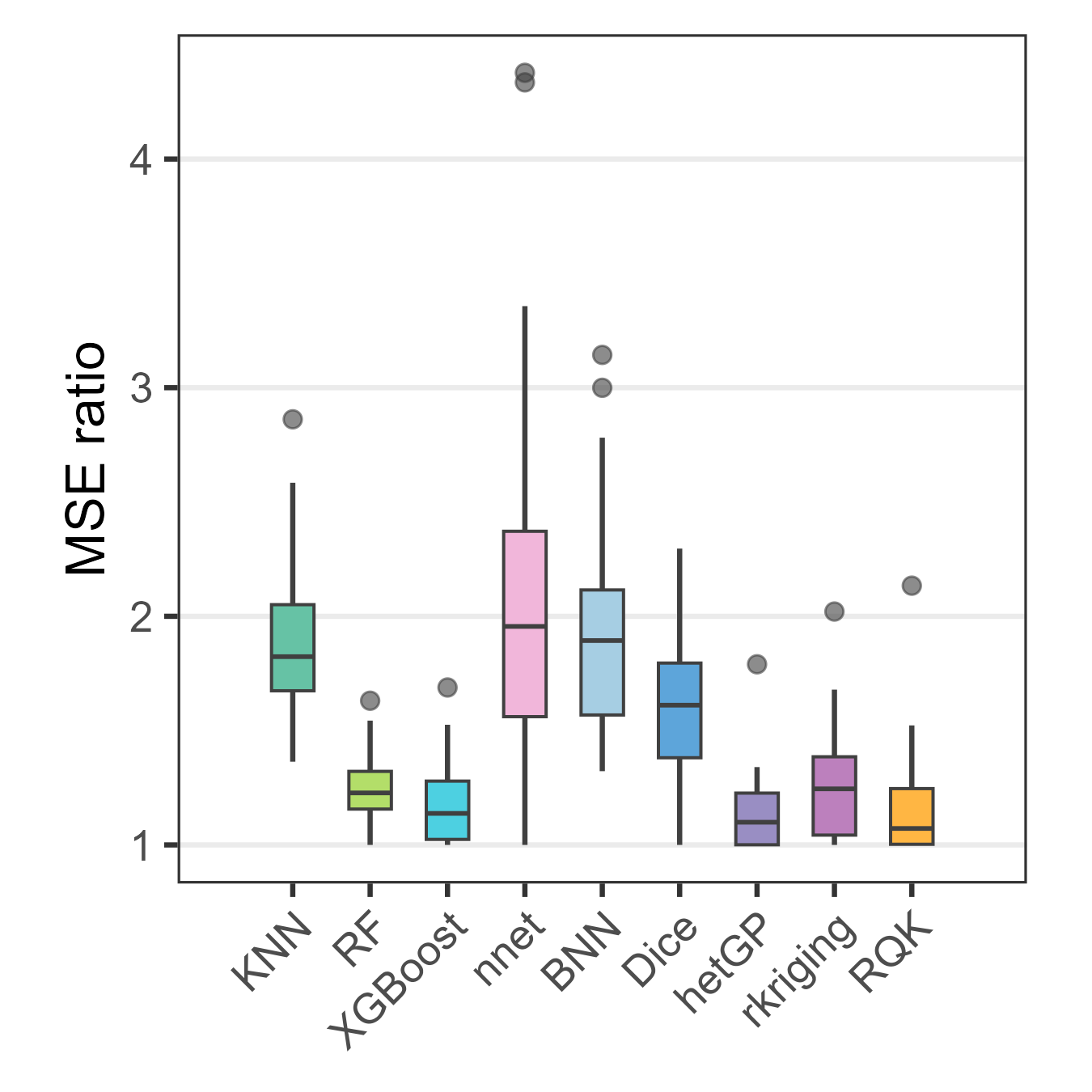}
\includegraphics[width=0.32\textwidth]{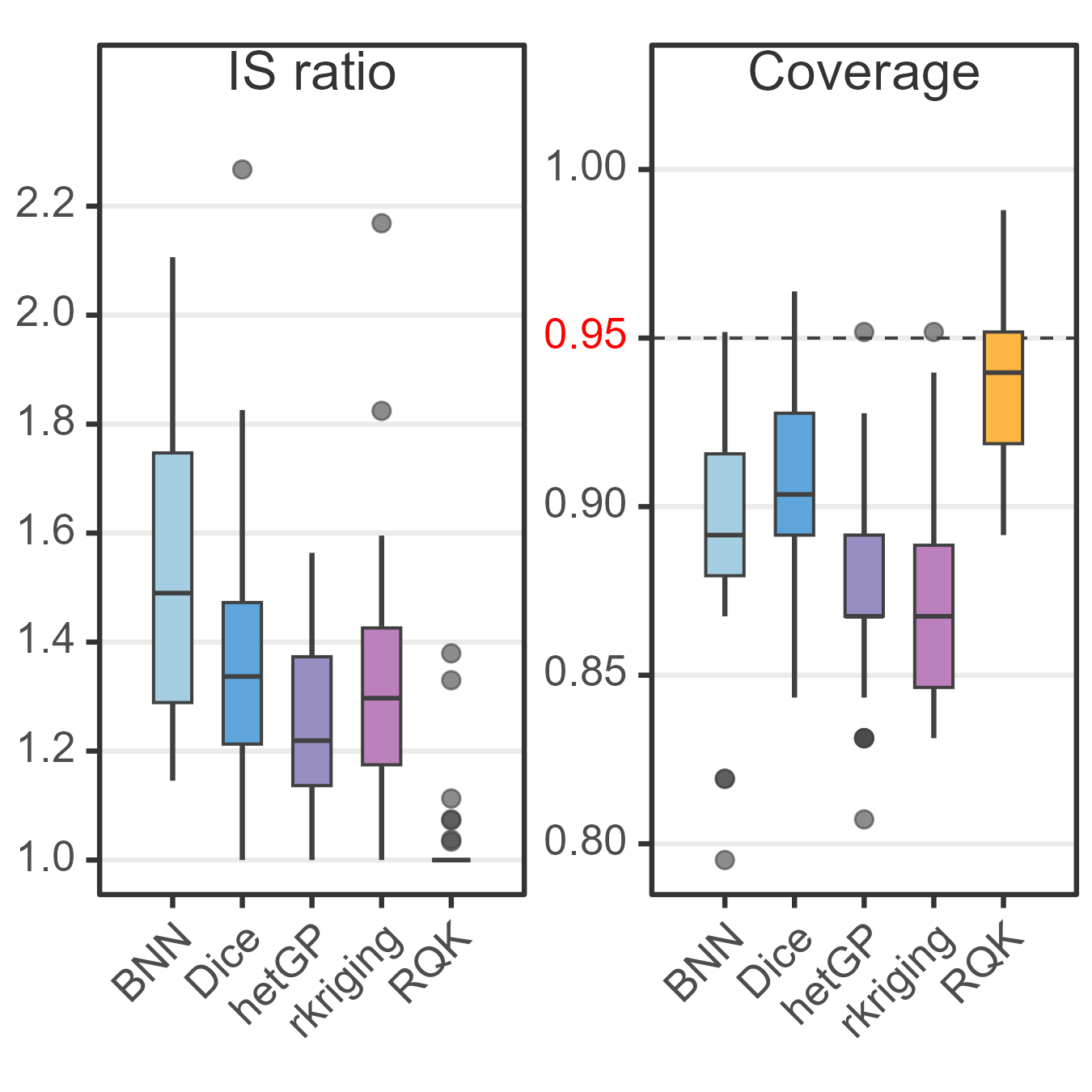}
\includegraphics[width=0.32\textwidth]{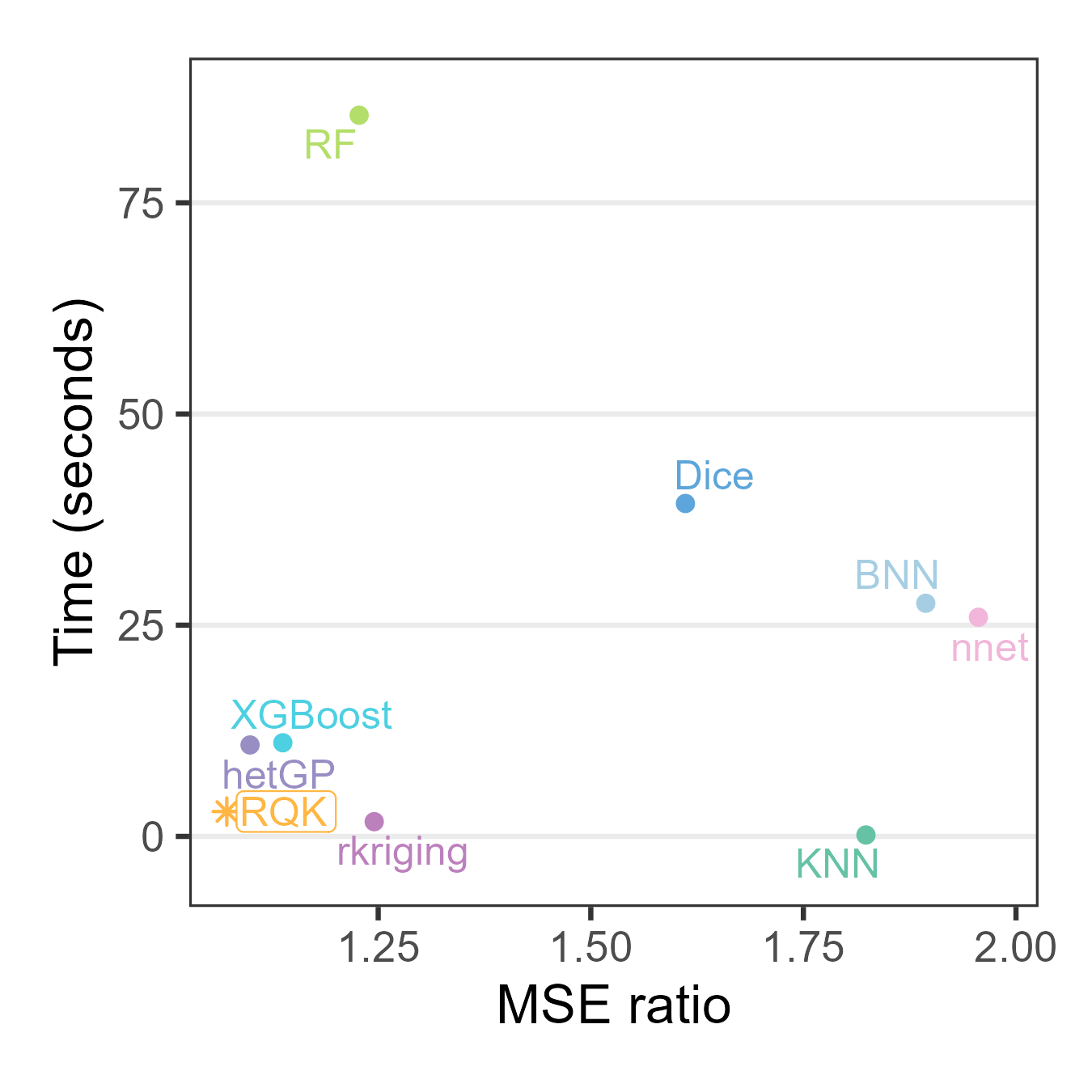}
\caption{Appliance energy prediction dataset.}
\end{subfigure}%

\begin{subfigure}{\textwidth}
\centering
\includegraphics[width=0.32\textwidth]{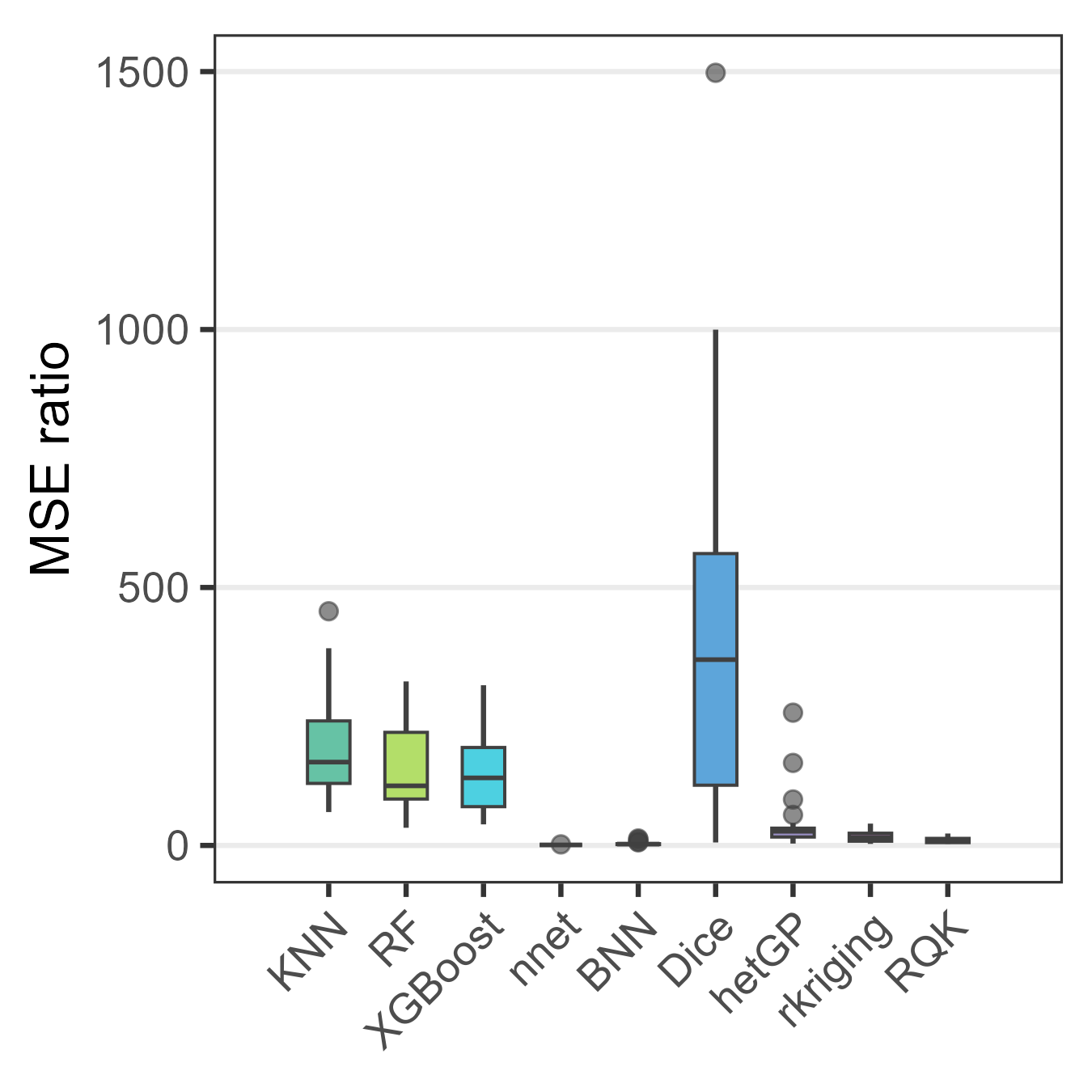}
\includegraphics[width=0.32\textwidth]{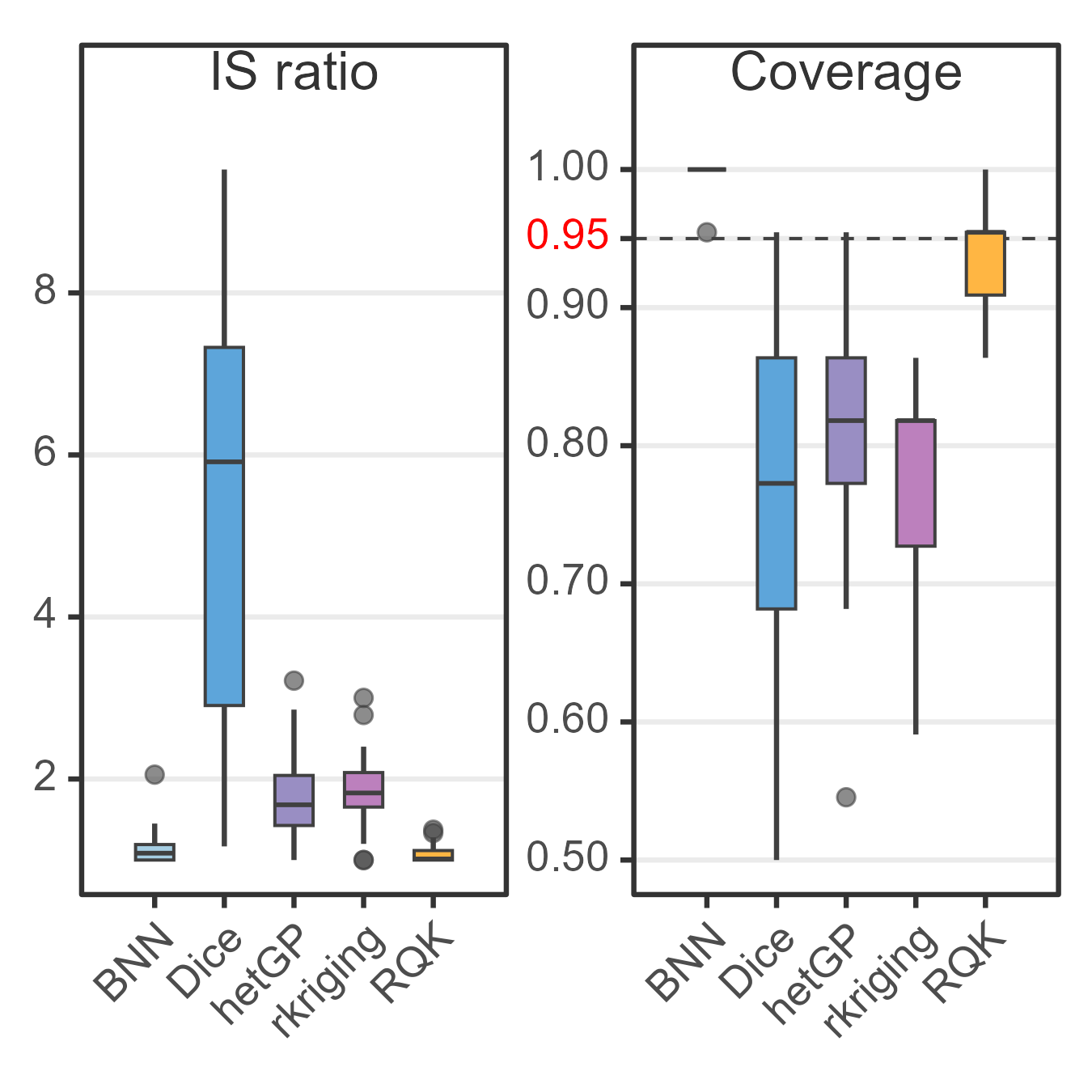}
\includegraphics[width=0.32\textwidth]{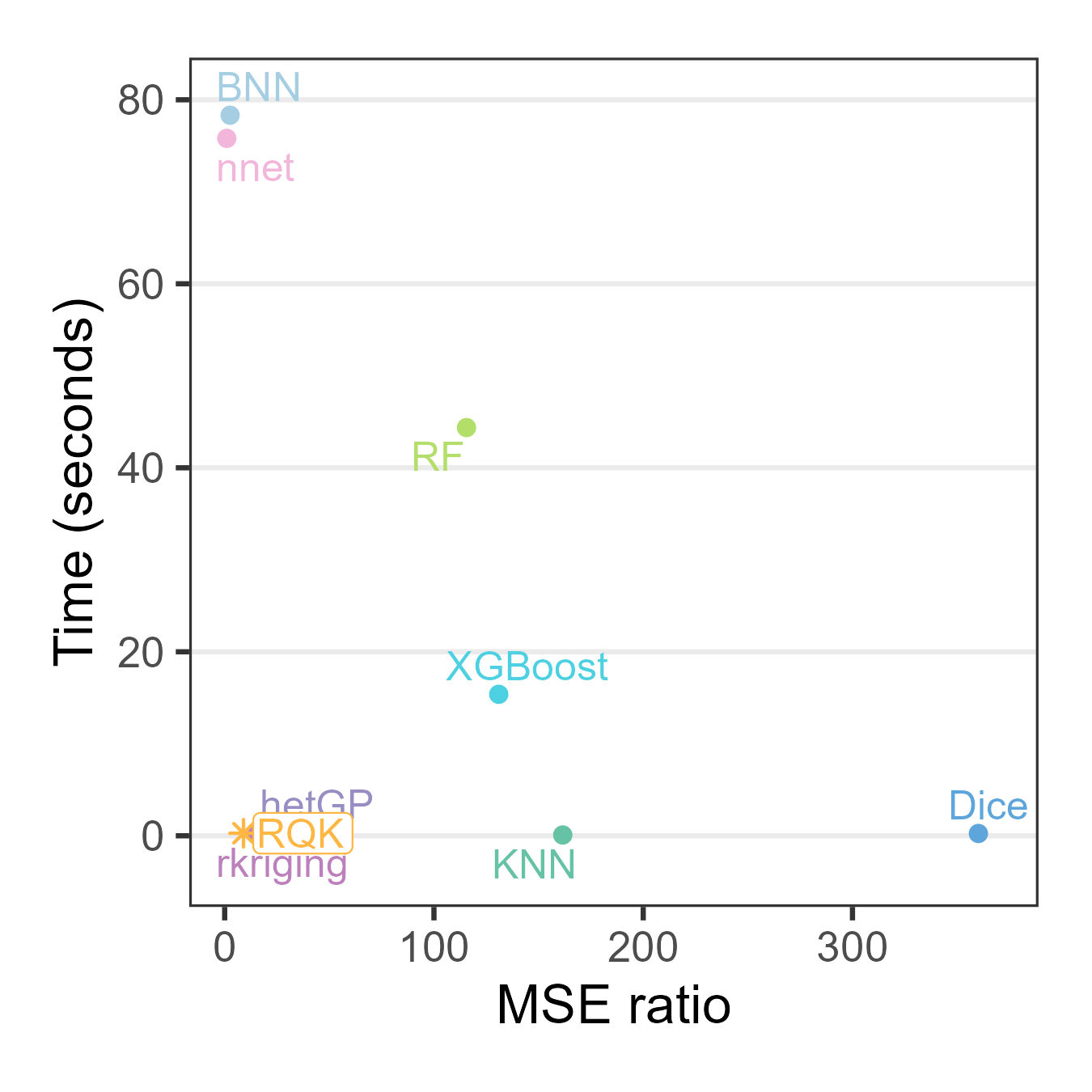}
\caption{Meat spectroscopy dataset.}
\end{subfigure}%

\caption{MSE (left), IS (middle), and computational time (right) comparisons for three noisy datasets. Each experiment is repeated over 30 random seeds for train-test splits.}
\label{fig:noisy}

\end{figure}

We repeat the fitting procedure 30 times using random train-test splits. Because the absolute value of the test error is sensitive to the specific split, we present the \textit{ratios} for the two metrics, MSE and IS: \begin{align}
    \frac{\text{metric}}{\text{best metric over all methods for the split}},
\end{align} so that closer to 1.0 indicates best performance across all methods. The left panels of Figure \ref{fig:noisy} show the MSE ratios. XGBoost performs well on the concrete and appliance energy datasets but fails on the 100-dimensional meat spectroscopy dataset. On the other hand, shallow neural networks have large errors for concrete and appliance energy but perform best for meat spectroscopy. RQK provides stable performance across all three datasets, either achieving the smallest error (appliance energy) or being close to the best-performing method (concrete and meat spectroscopy).

The middle panels present the uncertainty quantification performance for all kriging-based methods and the Bayesian neural network. Among these methods, RQK is the only one to achieve the desired coverage ratio through leave-one-out conformal prediction. Other methods generally lack coverage guarantees and often experience severe under-coverage. With correct coverage, it is unsurprising that RQK has the smallest IS.

Finally, the panels on the right plot the median computation time against the MSE ratio. We can see that the computation times of tree-based methods and neural networks grow quickly in $d$. In contrast, kriging-based methods can handle large $d$ well, but scale poorly with $n$, explaining the short computational time for the meat spectroscopy dataset ($n = 215, d = 100$). Nevertheless, RQK is always near the bottom-left of the figures, representing the Pareto front for trade-offs between efficiency and accuracy. We acknowledge that, with more extensive hyperparameter tuning, XGBoost achieves state-of-the-art performance on tabular data \citep{Shwartz-Ziv_Armon_2021}. Larger and deeper neural networks have also achieved success in predictive modeling across many real-world datasets. We are not claiming RQK is universally optimal. The take-home message is that, as an out-of-the-box machine learning method that requires minimal tuning, RQK provides both stable predictions and uncertainty quantification with limited computational resources. Therefore, we believe it will be useful for many practical learning and surrogate modeling tasks.

\section{Conclusion} \label{sec:conclusion}

This article proposes rational quadratic kriging as an improvement to existing kriging approaches. By making specific modeling choices and estimation procedures, we develop an estimator that is not only theoretically appealing for both estimation and uncertainty quantification but also numerically stable in implementation. Across a range of simulations with both deterministic functions and real-world noisy datasets, the method demonstrates reliable performance with a low computational burden.

We remark on the size of the problems appropriate for the method. Because we use a leave-one-out conformal prediction procedure for finding prediction intervals, the sample size should be reasonably large for reliable coverage. From our simulations, we find the rule of thumb $n \geq 10/\eta$ to be useful. Intuitively, the rule states that at least $10$ samples are in the tail to estimate the corresponding quantile bounds reliably. On the other hand, the approach still relies on $\mathcal{O}(n^3)$ computations. Although we never explicitly compute the matrix inverse $\mathbf{R}^{-1}$, the Cholesky factorization and linear solves still become expensive for big data. Thus, we recommend $n$ to be on the order of thousands. Extending the method to handle larger datasets using frameworks such as that described by \citet{Vakayil_Joseph_2024} will be a future direction.

Another aspect of the method that requires further understanding is the performance on high-dimensional data. Although we showed promising results on examples with $d$ ranging from $20$ to $100$, there is a need for theoretical justification for the performance on sparse and dense high-dimensional problems.

Finally, the rational quadratic kernel is a very specific choice that works well for smooth functions. However, other correlation kernels may be more appropriate for particular science and engineering problems with a known structure. Theoretical properties and empirical performance of more general kernel specifications in the rational kriging framework remain an interesting topic.

\vspace{.25in}

\FloatBarrier

\begin{center}
{\Large\bf Supplementary Materials}
\end{center}

\noindent The proofs of all theoretical results are provided in the supplementary.

\vspace{.25in}

\if0\blind{
\begin{center}
{\Large\bf Acknowledgments}
\end{center}

\noindent This research is supported by U.S. National Science Foundation grant DMS-2310637.

\vspace{.25in}
} \fi

\appendix

\setcounter{equation}{0}
\renewcommand{\theequation}{A.\arabic{equation}}
\renewcommand{\theHequation}{app.\theequation}

\setcounter{lemma}{0}
\renewcommand{\thelemma}{A.\arabic{lemma}}
\renewcommand{\theHlemma}{app.\thelemma}

\section{Proofs of theoretical results}

\subsection{Proof of Proposition 1}

\begin{proof}
Because $\mathbf{x} \neq \{\mathbf{x}_i\}_{i = 1}^n$, \begin{align}
    R(\mathbf{x}, \mathbf{x}_i) = \left(1 + \frac{\|\mathbf{x} - \mathbf{x}_i\|_{\mathbf{w}}^2}{\theta^2}\right)^{-p} = \theta^{2p} \|\mathbf{x} - \mathbf{x}_i\|_{\mathbf{w}}^{-2p}\{1 + o(1)\},
\end{align} when $\theta \to 0$. Likewise, \begin{align}
    \sum_{j \neq i} R(\mathbf{x}_i, \mathbf{x}_j) &= \theta^{2p} \sum_{j \neq i} \|\mathbf{x}_i - \mathbf{x}_j\|_{\mathbf{w}}^{-2p} \{1 + o(1)\}, \\
    \quad c_i &= \theta^{-2p} \left(\sum_{j \neq i} \|\mathbf{x}_i - \mathbf{x}_j\|_{\mathbf{w}}^{-2p}\right)^{-1} \{1 + o(1)\}.
\end{align}

Hence \begin{align}
    R(\mathbf{x}, \mathbf{x}_i) c_i = \frac{\|\mathbf{x} - \mathbf{x}_i\|_{\mathbf{w}}^{-2p}}{\sum_{j \neq i} \|\mathbf{x}_i - \mathbf{x}_j\|_{\mathbf{w}}^{-2p}} \{1 + o(1)\}. \label{eqn:rc_limit}
\end{align}

Since all off-diagonal entries of $\mathbf{R}$ vanish as $\theta \to 0$, we have $\mathbf{R} \to \mathbf{I}$ and $\mathbf{R}^{-1} \to \mathbf{I}$. In addition, \begin{align}
    [\mathbf{R} \mathbf{c}]_i = c_i + \sum_{j \neq i} R(\mathbf{x}_i, \mathbf{x}_j) c_j = c_i\{1 + o(1)\}
\end{align} because $c_i$ is $\mathcal{O}(\theta^{-2p})$ and the second term is $\mathcal{O}(1)$. Therefore, \begin{align}
    \mathbf{r}(\mathbf{x})^\top \mathbf{R}^{-1} \mathrm{diag}(\mathbf{R} \mathbf{c}) \mathbf{y} &= \sum_{i = 1}^n R(\mathbf{x}, \mathbf{x}_i) c_i y_i \{1 + o(1)\}, \\
    &\mathbf{r}(\mathbf{x})^\top \mathbf{c} = \sum_{i = 1}^n R(\mathbf{x}, \mathbf{x}_i) c_i.
\end{align}
Taking the ratio proves the result for $\hat m_{\mathrm{RQK}}(\mathbf{x})$.

Also, since $\mathbf{r}(\mathbf{x}) \to 0$, \begin{align}
    \lim_{\theta \to 0} \mathbf{r}(\mathbf{x})^\top \mathbf{R}^{-1} \mathbf{r}(\mathbf{x}) = 0.
\end{align} Combining with \eqref{eqn:rc_limit} shows that \begin{align}
    \lim_{\theta \to 0} s^2_{\mathrm{RQK}} = \nu^2 \left\{\sum_{i = 1}^n \frac{\|\mathbf{x} - \mathbf{x}_i\|_{\mathbf{w}}^{-2p}}{\sum_{j \neq i} \|\mathbf{x}_i - \mathbf{x}_j\|_{\mathbf{w}}^{-2p}}\right\}^{-2}
\end{align} and the value is $\mathcal{O}(1)$, determined by the weighted distances and hyperparameter $p$.
\end{proof}

\subsection{Proof of Lemma 1}

\begin{proof}
For a fixed $\mathbf{x} \in \mathcal{X}$. For simplicity, write $\mathbf{r} = \mathbf{r}(\mathbf{x})$. The difference between RQK weights and kernel regression weights is \begin{align} \label{eqn:RQK_KR_diff}
    \mathbf{a}_{\mathrm{RQK}}(\mathbf{x}) - \mathbf{a}_{\mathrm{KR}}(\mathbf{x}) = \frac{(\mathbf{r}^\top \mathbf{R}^{-1} - \mathbf{r}^\top) \mathbf{R} \mathbf{c}}{\mathbf{r}^\top \mathbf{c}} + \frac{\mathbf{r}^\top (\mathbf{R} \mathbf{c} - \mathbf{c})}{\mathbf{r}^\top \mathbf{c}}.
\end{align}

Let $\mathbf{E} := \mathbf{R} - \mathbf{I}$. The matrix infinity norm satisfies \begin{align}
    \|\mathbf{E}\|_{\infty} = \max_{1 \leq i \leq n} \sum_{j \neq i} R(\mathbf{x}_i, \mathbf{x}_j) = \zeta_n(\theta) < 1.
\end{align} Therefore, the Neumann series gives \begin{align}
    \mathbf{R}^{-1} = (\mathbf{I} + \mathbf{E})^{-1} = \mathbf{I} + \sum_{m = 1}^\infty (-\mathbf{E})^m,
\end{align} hence \begin{align}
    \|\mathbf{R}^{-1} - \mathbf{I}\|_\infty \leq \sum_{m = 1}^\infty \|\mathbf{E}\|_\infty^m \leq \frac{\zeta_n(\theta)}{1 - \zeta_n(\theta)}.
\end{align} Also, \begin{align}
    \|\mathbf{R} \mathbf{c} - \mathbf{c}\|_{\infty} = \|\mathbf{E} \mathbf{c}\|_{\infty} \leq \|\mathbf{E}\|_{\infty} \|\mathbf{c}\|_{\infty} \leq \zeta_n(\theta) \max_{1 \leq i \leq n} c_i.
\end{align} Finally, \begin{align}
    \mathbf{r}^\top \mathbf{c} \geq \min_{1 \leq i \leq n} c_i \sum_{j = 1}^n R(\mathbf{x}, \mathbf{x}_j).
\end{align}

We now bound the two terms of the right-hand side of \eqref{eqn:RQK_KR_diff} separately. First term: \begin{align}
    \left\|\frac{(\mathbf{r}^\top \mathbf{R}^{-1} - \mathbf{r}^\top) \mathbf{R} \mathbf{c}}{\mathbf{r}^\top \mathbf{c}}\right\|_1 &\leq \frac{\|\mathbf{r}^\top (\mathbf{R}^{-1} - \mathbf{I})\|_1 \|\mathbf{R} \mathbf{c}\|_\infty}{\mathbf{r}^\top \mathbf{c}} \\
    &\leq \frac{\|\mathbf{R}^{-1} - \mathbf{I}\|_\infty \sum_{j = 1}^n R(\mathbf{x}, \mathbf{x}_j) (\|\mathbf{c}\|_{\infty} + \|\mathbf{E} \mathbf{c}\|_\infty)}{\min_{1 \leq i \leq n} c_i \sum_{j = 1}^n R(\mathbf{x}, \mathbf{x}_j)} \\
    &\leq \frac{\zeta_n(\theta)}{1 - \zeta_n(\theta)} \frac{1 + \zeta_n(\theta) \max_{1 \leq i \leq n} c_i}{\min_{1 \leq i \leq n} c_i} \\
    &\leq C_1 \zeta_n(\theta),
\end{align} where $C_1$ only depends on $A_n(\theta)$ and $\zeta_n(\theta)$.

Second term: \begin{align}
    \left\|\frac{\mathbf{r}^\top (\mathbf{R} \mathbf{c} - \mathbf{c})}{\mathbf{r}^\top \mathbf{c}}\right\|_1 &\leq \frac{\sum_{j = 1}^n R(\mathbf{x}, \mathbf{x}_j) \|\mathbf{R} \mathbf{c} - \mathbf{c}\|_\infty}{\mathbf{r}^\top \mathbf{c}} \leq \zeta_n(\theta) \frac{\max_{1 \leq i \leq n} c_i}{\min_{1 \leq i \leq n} c_i} \leq A_n(\theta) \zeta_n(\theta).
\end{align}

Combining both inequalities proves \begin{align}
    \|\mathbf{a}_{\mathrm{RQK}}(\mathbf{x}) - \mathbf{a}_{\mathrm{KR}}(\mathbf{x})\|_1 \leq C \zeta_n(\theta).
\end{align}

The predictor bound immediately follows from \begin{align}
    |\hat m_{\mathrm{RQK}}(\mathbf{x}) - \hat m_{\mathrm{KR}}(\mathbf{x})| = \left|\left[\mathbf{a}_{\mathrm{RQK}}(\mathbf{x}) - \mathbf{a}_{\mathrm{KR}}(\mathbf{x})\right]^\top \mathbf{y}\right| \leq \|\mathbf{a}_{\mathrm{RQK}}(\mathbf{x}) - \mathbf{a}_{\mathrm{KR}}(\mathbf{x})\|_1 \|\mathbf{y}\|_{\infty}.
\end{align}

To get the uniform bound over $0 < \theta \leq \bar \theta$, directly use $A_n(\theta) \leq \bar A$ and $\zeta_n(\theta) \leq \zeta_n(\bar \theta)$ over all $0 < \theta \leq \bar \theta$.

\end{proof}

\subsection{Proof of Theorem 1}

\subsubsection{Additional lemmas for Theorem 1}

Having associated the RQK predictor with the kernel regression estimator, we further connect it to a weighted IDW estimator through the next lemma.

\begin{lemma}[Kernel regression converges to IDW] \label{lemma:KR_IDW}
Suppose $\{(\mathbf{X}_i, Y_i)\}_{i \geq 1}$ are i.i.d. random vectors, where $\mathbf{X}$ has a bounded density $f_{\mathbf{X}}$. Fix $\mathbf{x} \in \mathcal{X}$ such that $f_{\mathbf{X}}(\mathbf{x}) > 0$ and $\mathbb{P}\{\mathbf{X} = \mathbf{x}\} = 0$. Allow the lengthscale to depend on the sample size $n$, i.e., $\theta = \theta_n$. Define \begin{align}
    \hat m_{\mathrm{KR}}(\mathbf{x}) &= \sum_{i = 1}^n \frac{R(\mathbf{x}, \mathbf{X}_i) c_i}{\sum_{j = 1}^n R(\mathbf{x}, \mathbf{X}_j) c_j} Y_i, \label{eqn:KR_random} \\
    \hat m_{\mathrm{IDW}}(\mathbf{x}) &= \frac{\sum_{i = 1}^n c_i \|\mathbf{x} - \mathbf{X}_i\|_{\mathbf{w}}^{-2p} Y_i}{\sum_{j = 1}^n c_j \|\mathbf{x} - \mathbf{X}_j\|_{\mathbf{w}}^{-2p}}. \label{eqn:Weighted_IDW}
\end{align} where \eqref{eqn:KR_random} is the same as \eqref{eqn:KR} except that $\mathbf{X}_i$ is used to reflect randomness.

If $n \theta_n^d \to 0$, we have \begin{align}
    \hat m_{\mathrm{KR}}(\mathbf{x}) - \hat m_{\mathrm{IDW}}(\mathbf{x}) \overset{p}{\to} 0
\end{align} as $n \to \infty$.

If, in addition, $\sum_{n = 1}^\infty n\theta_n^d < \infty$, the convergence holds almost surely.
\end{lemma}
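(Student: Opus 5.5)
The plan is to compare the two estimators term by term through the exact identity
\begin{align}
    R(\mathbf{x}, \mathbf{X}_i) = \left(1 + \frac{\|\mathbf{x} - \mathbf{X}_i\|_{\mathbf{w}}^2}{\theta_n^2}\right)^{-p} = \theta_n^{2p}\|\mathbf{x} - \mathbf{X}_i\|_{\mathbf{w}}^{-2p}\left(1 + \frac{\theta_n^2}{\|\mathbf{x} - \mathbf{X}_i\|_{\mathbf{w}}^2}\right)^{-p}.
\end{align}
The factor $\theta_n^{2p}$ is common to all terms and cancels in the ratio \eqref{eqn:KR_random}. Hence $\hat m_{\mathrm{KR}}(\mathbf{x})$ is the weighted IDW estimator \eqref{eqn:Weighted_IDW} with the extra multiplicative factors $\beta_i := (1 + \theta_n^2/\|\mathbf{x} - \mathbf{X}_i\|_{\mathbf{w}}^2)^{-p} \in (0,1]$. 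Write $b_i := c_i\|\mathbf{x} - \mathbf{X}_i\|_{\mathbf{w}}^{-2p} \ge 0$. If $\beta_i \in [1-\epsilon, 1]$ for all $i$, a direct computation on ratios of weighted averages shows that the normalized weights $b_i\beta_i/\sum_j b_j\beta_j$ and $b_i/\sum_j b_j$ differ in $\ell_1$ by at most $2\epsilon/(1-\epsilon)$. Therefore
\begin{align}
    |\hat m_{\mathrm{KR}}(\mathbf{x}) - \hat m_{\mathrm{IDW}}(\mathbf{x})| \le \frac{2\epsilon}{1-\epsilon}\max_i |Y_i|.
\end{align}
The $c_i$ play no role beyond nonnegativity, which is convenient because they are themselves random and depend on all the data.

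So the argument reduces to showing that with high probability all $\beta_i$ are close to one. This holds as soon as $\min_i \|\mathbf{x} - \mathbf{X}_i\|_{\mathbf{w}} \ge \theta_n/\sqrt{\epsilon'}$ for a suitable small $\epsilon'$. It is more natural to use a sequence $K_n \to \infty$ with $K_n^d n\theta_n^d \to 0$, for instance $K_n = (n\theta_n^d)^{-1/(2d)}$, and to work on the event $E_n := \{\min_i \|\mathbf{x} - \mathbf{X}_i\|_{\mathbf{w}} > K_n\theta_n\}$. On $E_n$ we have $1 - \beta_i \le pK_n^{-2} \to 0$ for every $i$, uniformly.

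To control $P(E_n^c)$, take a union bound with the bounded density:
\begin{align}
    P(E_n^c) \le n\, P\{\|\mathbf{x} - \mathbf{X}\|_{\mathbf{w}} \le K_n\theta_n\} \le n\|f_{\mathbf{X}}\|_\infty\, \mathrm{vol}\{\mathbf{u}: \|\mathbf{u}\|_{\mathbf{w}} \le K_n\theta_n\} = C(d,\mathbf{w})\|f_{\mathbf{X}}\|_\infty\, K_n^d\, n\theta_n^d \to 0,
\end{align}
where the volume of the weighted ball is $V_d (K_n\theta_n)^d/\prod_k\sqrt{w_k}$. Boundedness of $Y$, assumed in the surrounding theorems, makes $\max_i|Y_i|$ bounded. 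Without that assumption one uses $\max_i|Y_i| = o_p(\cdot)$, or simply works with bounded $Y$ as the applications do. Combining these gives convergence in probability.

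For the almost-sure statement, I apply Borel--Cantelli to $E_n^c$. This needs $\sum_n P(E_n^c) < \infty$, which the bound gives once $K_n$ is chosen fixed-but-large rather than divergent. Summability of $n\theta_n^d$ gives, for every fixed $K$, that $E_n(K)$ holds eventually almost surely. Then $\limsup|\hat m_{\mathrm{KR}} - \hat m_{\mathrm{IDW}}| \le (2pK^{-2}/(1 - pK^{-2}))\sup|Y|$ almost surely, and letting $K \to \infty$ along integers (a countable intersection of probability-one events) gives almost-sure convergence.

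The main obstacle is choosing $K_n$ so that both requirements hold at once: $K_n$ must diverge for the weights to converge, yet $n(K_n\theta_n)^d$ must still vanish or be summable. The fixed-$K$-then-limit device above resolves this in the almost-sure case, and the explicit choice $K_n = (n\theta_n^d)^{-1/(2d)}$ resolves it in the in-probability case. The hypothesis $P\{\mathbf{X} = \mathbf{x}\} = 0$ guarantees that all distances are positive, so the IDW estimator is well defined.
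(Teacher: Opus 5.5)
Your proposal is correct and follows the paper's proof essentially step for step. Both arguments use the same factorization of $R(\mathbf{x},\mathbf{X}_i)$ into $\theta_n^{2p}\|\mathbf{x}-\mathbf{X}_i\|_{\mathbf{w}}^{-2p}$ times a factor tending to one, the same choice $M_n=(n\theta_n^d)^{-1/(2d)}$ with a union bound through the bounded density, and the same fixed-$M$ Borel--Cantelli argument for the almost-sure part. Your explicit $\ell_1$ bound on the perturbed weights just makes precise the paper's one-line ``differs by $o_p(1)$'' step. The boundedness of $Y$ you invoke there is used equally implicitly by the paper and is supplied by the consistency theorems. Rely on it rather than on your $\max_i|Y_i|=o_p(\cdot)$ alternative, which fails for heavy-tailed $Y$ because $(n\theta_n^d)^{1/d}$ can vanish arbitrarily slowly.
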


\begin{proof}
Choose $M_n := (n \theta_n^d)^{-1/(2d)}$, so that $M_n \to \infty$. Because $f_{\mathbf{X}}$ is bounded from above, \begin{align}
    \mathbb{P}\left\{\min_{1 \leq i \leq n} \|\mathbf{x} - \mathbf{X}_i\| \leq M_n \theta_n\right\} \leq Cn (M_n \theta_n)^d = C(n\theta_n^d)^{1/2} \to 0.
\end{align} Hence, with probability tending to 1, \begin{align}
    \frac{\|\mathbf{x} - \mathbf{X}_i\|_{\mathbf{w}}}{\theta_n} \geq M_n \text{ for all } i = 1, \dots, n.
\end{align}
On this event, \begin{align}
    R(\mathbf{x}, \mathbf{X}_i) = \left(1 + \frac{\|\mathbf{x} - \mathbf{X}_i\|_{\mathbf{w}}^2}{\theta_n^2}\right)^{-p} = \theta_n^{2p} \|\mathbf{x} - \mathbf{X}_i\|_{\mathbf{w}}^{-2p}(1 + \varepsilon_{i, n}(\mathbf{x})), \label{eqn:corr}
\end{align} with \begin{align}
    \max_{1 \leq i \leq n} |\varepsilon_{i, n}(\mathbf{x})| \to 0
\end{align}
Substituting \eqref{eqn:corr} into the definition of $\hat m_{\mathrm{KR}}(\mathbf{x})$, after cancelling out the common factor $\theta_n^{2p}$, the ratio differs from $\hat m_{\mathrm{IDW}}(\mathbf{x})$ by $o_p(1)$.

If $\sum_{i = 1}^n n \theta_n^d$, for every fixed $M > 0$, \begin{align}
    \sum_{n = 1}^\infty \mathbb{P}\left(\min_{1 \leq i \leq n} \|\mathbf{x} - \mathbf{X}_i\|_{\mathbf{w}} \leq M \theta_n\right) \leq C M^d \sum_{n = 1}^\infty n \theta_n^d < 0.
\end{align} By the Borel-Cantelli lemma, \begin{align}
    \frac{\min_{1 \leq i \leq n} \|\mathbf{x} - \mathbf{X}_i\|_{\mathbf{w}}}{\theta_n} \overset{a.s.}{\to} \infty.
\end{align} By the same argument, the convergence now holds almost surely.

\end{proof}

\begin{lemma}[Weak convergence of IDW]
\label{lemma:Weighted_IDW}
Suppose $\{(\mathbf{X}_i, Y_i)\}_{i \geq 1}$ are i.i.d. random vectors, where $\mathbf{X}$ has a density $f_{\mathbf{X}}$, and $Y$ is bounded. Let \begin{align}
    m(\mathbf{x}) := \mathbb{E}[Y | \mathbf{X} = \mathbf{x}].
\end{align}

In the weighted IDW estimator \eqref{eqn:Weighted_IDW}, let $p = d/2$, $c_i = c_i^{(n)}$ be positive random variables measurable with respect to $\sigma(\mathbf{X}_1, \dots, \mathbf{X}_n)$ and assume \begin{align}
    \frac{\max_{1 \leq i \leq n} c_i^{(n)}}{\min_{1 \leq i \leq n} c_i^{(n)}} = \mathcal{O}_p(1).
\end{align} Then as $n \to \infty$, \begin{enumerate}
    \item at almost all $\mathbf{x} \in \mathcal{X}$ with $f_{\mathbf{X}}(\mathbf{x}) > 0$, $\hat m_{\mathrm{IDW}}(\mathbf{x}) \overset{p}{\to} m(\mathbf{x})$;

    \item $\displaystyle \int_{\mathcal{X}} \left|\hat m_{\mathrm{IDW}}(\mathbf{x}) - m(\mathbf{x})\right| d \mu_{\mathbf{X}}(\mathbf{x}) \overset{p}{\to} 0$.
\end{enumerate}
\end{lemma}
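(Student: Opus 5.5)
The plan is to reduce the weighted IDW estimator with $p = d/2$ to the unweighted one. Consistency of the unweighted version is the known result of \citet{Devroye_Gyorfi_Krzyzak_1998}: it converges weakly at almost all $\mathbf{x}$ and in $L_1(\mu_{\mathbf{X}})$ for bounded $Y$. With $p = d/2$ the weights are $\|\mathbf{x} - \mathbf{X}_i\|_{\mathbf{w}}^{-d}$. This kernel $K(\mathbf{u}) = \|\mathbf{u}\|_{\mathbf{w}}^{-d}$ is exactly the borderline non-integrable case in their analysis.

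First I write the estimator as a local average with random weights
\begin{align}
    W_{ni}(\mathbf{x}) = \frac{c_i K(\mathbf{x} - \mathbf{X}_i)}{\sum_j c_j K(\mathbf{x} - \mathbf{X}_j)}.
\end{align}
Since $Y$ is bounded, Stone's theorem (in its weak-consistency form, with $\sigma(\mathbf{X}_1,\dots,\mathbf{X}_n)$-measurable weights) applies. It reduces consistency to showing, in probability, that (a) $\sum_i W_{ni}(\mathbf{x})\,\mathbb{I}\{\|\mathbf{X}_i - \mathbf{x}\|_{\mathbf{w}} > a\} \to 0$ for every $a > 0$, and (b) $\max_i W_{ni}(\mathbf{x}) \to 0$. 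The reason is that the bias term is controlled by (a) together with approximation of $m$ by continuous functions, while the variance term $\sum_i W_{ni}^2 \sigma^2(\mathbf{X}_i)$ is controlled by (b).

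The key observation is a sandwich bound. Let $\kappa_n = \max_i c_i / \min_i c_i = \mathcal{O}_p(1)$. Then
\begin{align}
    \kappa_n^{-1} W^{0}_{ni}(\mathbf{x}) \leq W_{ni}(\mathbf{x}) \leq \kappa_n W^{0}_{ni}(\mathbf{x}),
\end{align}
where $W^0_{ni}$ are the unweighted IDW weights. Both (a) and (b) are upper bounds on nonnegative sums or maxima, so each transfers from $W^0$ to $W$ at the cost of the $\mathcal{O}_p(1)$ factor $\kappa_n$. A quantity that is $o_p(1)$ times $\mathcal{O}_p(1)$ is still $o_p(1)$.

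It therefore remains to verify (a) and (b) for the unweighted weights at $\mu$-almost every $\mathbf{x}$, following Devroye et al. The far contribution $\sum_{\|\mathbf{X}_i-\mathbf{x}\|>a} K$ is at most $n a^{-d}$. The near contribution $\sum_{\|\mathbf{X}_i-\mathbf{x}\|\le a} K$ grows like $n f_{\mathbf{X}}(\mathbf{x}) \log(a/\|\mathbf{X}_{(1)} - \mathbf{x}\|)$, which is of order $n\log n$ at Lebesgue points with $f_{\mathbf{X}}(\mathbf{x}) > 0$. This divergence of the logarithmic integral at $p = d/2$ gives (a). For (b), the largest single term is $\|\mathbf{X}_{(1)} - \mathbf{x}\|^{-d} \approx n \cdot \mathcal{O}_p(1)$, which is $o_p(n\log n)$.

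The main obstacle is making the $n \log n$ lower bound on the near sum rigorous in probability. I would do it by a dyadic shell decomposition. Shells $\{2^{-k-1} a < \|\mathbf{u}\| \le 2^{-k} a\}$ for $k$ up to about $\log_2 n / d$ each contain roughly $n f_{\mathbf{X}}(\mathbf{x}) 2^{-kd} a^d$ points, and each such point contributes about $2^{kd}$. A Chebyshev or Lebesgue-density argument per shell then gives a total of order $\log n$ per $n$.

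For part 2, I combine part 1 with boundedness of $Y$, so that $|\hat m_{\mathrm{IDW}} - m| \le 2\|Y\|_\infty$. Fubini and dominated convergence give $\mathbb{E}\int |\hat m_{\mathrm{IDW}} - m|\, d\mu_{\mathbf{X}} \to 0$, provided part 1 holds in probability for each fixed $\mathbf{x}$; Markov's inequality then gives convergence in probability.

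The one subtlety is that $\kappa_n$ is only $\mathcal{O}_p(1)$ rather than bounded. I would handle it by truncation: on the event $\{\kappa_n \le M\}$ the integrand is bounded by $M$ times the unweighted quantities, and the complement has probability at most $\epsilon$ once $M$ is chosen large.
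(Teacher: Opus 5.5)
\textbf{Gap in the bias term.} Stone's theorem does not reduce consistency to your conditions (a) and (b) alone. It also needs the domination condition $\mathbb{E}\sum_i W_{ni}(\mathbf{X})g(\mathbf{X}_i)\le c\,\mathbb{E}g(\mathbf{X})$ for all nonnegative $g$. That condition is exactly what carries the error of replacing $m$ by a continuous $\tilde m$ through the weights. You neither state nor verify it, and the standard covering-lemma verification for kernel weights does not cover the unbounded, non-integrable kernel $\|\mathbf{u}\|_{\mathbf{w}}^{-d}$.

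More fundamentally, Stone's theorem concludes $\mathbb{E}\int|\hat m_{\mathrm{IDW}}-m|^2\,d\mu_{\mathbf{X}}\to 0$, which is an integrated statement. It cannot deliver part 1, convergence at almost every fixed $\mathbf{x}$. At a fixed $\mathbf{x}$, (a) plus continuous approximation disposes of the bias only when $m$ is continuous at $\mathbf{x}$. Otherwise two things are not controlled by $\int|m-\tilde m|\,d\mu_{\mathbf{X}}$: the term $|\tilde m(\mathbf{x})-m(\mathbf{x})|$, and the weight the nearest sample points place on the set where $m\neq\tilde m$. Since $m(\mathbf{x})=\mathbb{E}[Y\mid\mathbf{X}=\mathbf{x}]$ is only bounded and measurable, this is the substantive part of the lemma. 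Your part 2 inherits the gap, because you obtain it from part 1.

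\textbf{Repair.} Apply your sandwich to the absolute bias term itself,
$\bigl|\sum_i W_{ni}(\mathbf{x})\{m(\mathbf{X}_i)-m(\mathbf{x})\}\bigr|\le\kappa_n\sum_i W^0_{ni}(\mathbf{x})\,|m(\mathbf{X}_i)-m(\mathbf{x})|$.
Then show that this unweighted local average tends to zero in probability at Lebesgue points of $f_{\mathbf{X}}$ and $mf_{\mathbf{X}}$ with $f_{\mathbf{X}}(\mathbf{x})>0$. This is the step the paper takes from \citet{Devroye_Gyorfi_Krzyzak_1998}.

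Your dyadic shells can prove it directly. The expected contribution of shell $k$ to the numerator is at most a constant times $n\,2^{kd}a^{-d}\int_{\text{shell } k}|m-m(\mathbf{x})|\,d\mu_{\mathbf{X}}=n\,\epsilon_k$, with $\epsilon_k$ bounded and $\epsilon_k\to 0$ by the Lebesgue-point property. Summing over the roughly $\log n$ shells down to radius $\rho_n$ with $n\rho_n^d=\log n$ gives $o(n\log n)$ by Ces\`aro averaging. This is negligible against the denominator of order $n\log n$ that you already control. The points within distance $\rho_n$ of $\mathbf{x}$ contribute only $\mathcal{O}_p(n\log\log n)$.

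With that replacement, the rest of your plan coincides with the paper's proof: the $\kappa_n$ sandwich, truncation on $\{\kappa_n\le M\}$, the variance bound via $\max_i W_{ni}$, and Fubini, dominated convergence and Markov for part 2.
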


\begin{proof}
Define the event \begin{align}
    E_{n, M} := \left\{\frac{\max_{1 \leq i \leq n} c_i^{(n)}}{\min_{1 \leq i \leq n} c_i^{(n)}} \leq M\right\}.
\end{align} Since $\frac{\max_{1 \leq i \leq n} c_i^{(n)}}{\min_{1 \leq i \leq n} c_i^{(n)}} = \mathcal{O}_p(1)$, for every $\xi > 0$ we can choose $M$ large enough so that \begin{align}
    \limsup_{n \to \infty} \mathbb{P}[E^c_{n, M}] < \xi.
\end{align}
On the event $E_{n, M}$, define $b_i^{(n)} = c_i^{(n)}/\min_{1 \leq i \leq n} c_i^{(n)}$, so that $1 \leq b_i^{(n)} \leq M$ for all $i$. The estimator can be rewritten as \begin{align}
    \hat m_{\mathrm{IDW}}(\mathbf{x}) := \frac{\sum_{i = 1}^n b_i^{(n)} \|\mathbf{x} - \mathbf{X}_i\|_{\mathbf{w}}^{-d} Y_i}{\sum_{j = 1}^n b_j^{(n)} \|\mathbf{x} - \mathbf{X}_j\|_{\mathbf{w}}^{-d}}.
\end{align}

Fix a point $\mathbf{x}$ that is a Lebesgue point of both $f_{\mathbf{X}}(\cdot)$ and $m(\cdot) f_{\mathbf{X}}(\cdot)$, and $f_{\mathbf{X}}(\mathbf{x}) > 0$. Let \begin{align}
    W_i := V_{d, \mathbf{w}} \|\mathbf{x} - \mathbf{X}_i\|_{\mathbf{w}}^d,
\end{align} where $V_{d, \mathbf{w}}$ is the weighted unit ball. The error can be decomposed into \begin{align}
    \hat m_{\mathrm{IDW}}(\mathbf{x}) - m(\mathbf{x}) = \frac{\sum_{i = 1}^n b_i^{(n)} [Y_i - m(\mathbf{X}_i)]/W_i}{\sum_{j = 1}^n b_j^{(n)}/W_j} + \frac{\sum_{i = 1}^n b_i^{(n)} [m(\mathbf{X}_i) - m(\mathbf{x})]/W_i}{\sum_{j = 1}^n b_j^{(n)}/W_i} =: I(\mathbf{x}) + II(\mathbf{x}).
\end{align}

For the first term: \begin{align}
    \mathbb{E}[I(\mathbf{x})^2 | \mathbf{X}_1, \dots, \mathbf{X}_{n}] \leq B^2 \sum_{i = 1}^n \left(\frac{b_i^{(n)}/W_i}{\sum_{j = 1}^n b_j^{(n)}/W_j}\right)^2 \leq B^2 M \max_{1 \leq i \leq n} \frac{1/W_i}{\sum_{j = 1}^n 1/W_j},
\end{align} where $|Y| \leq B$ from the bounded assumption. Lemma 2 of \citet{Devroye_Gyorfi_Krzyzak_1998} proves that $\max_{1 \leq i \leq n} \frac{1/W_i}{\sum_{j = 1}^n 1/W_j} \overset{p}{\to} 0$, so $I(\mathbf{x}) \overset{p}{\to} 0$ on the event $E_{n, M}$.

The second term \begin{align}
    |II(\mathbf{x})| \leq M \frac{\sum_{i = 1}^n |m(\mathbf{X}_i) - m(\mathbf{x})|/W_i}{\sum_{j = 1}^n 1/W_j},
\end{align} which also converges to zero in probability by \citet{Devroye_Gyorfi_Krzyzak_1998}.

Thus, $\hat m_{\mathrm{IDW}}(\mathbf{x}) \to m(\mathbf{x})$ holds on the event $E_{n, M}$. Unconditionally, for every $\varepsilon > 0$, \begin{align}
    \mathbb{P}[|\hat m_{\mathrm{IDW}}(\mathbf{x}) - m(\mathbf{x})| > \varepsilon] \leq \mathbb{P}[E_{n, M}^c] + \mathbb{P}[|\hat m_{\mathrm{IDW}}(\mathbf{x}) - m(\mathbf{x})| > \varepsilon, E_{n, m}].
\end{align} The first term can be made arbitrarily small by choosing a large $M$, and the second term goes to zero for fixed $M$. This proves pointwise convergence in probability.

For $L^1$ convergence, we use a standard argument from \citet{Devroye_1981}. Let \begin{align}
    Z_n(\mathbf{x}) := |\hat m_{\mathrm{IDW}}(\mathbf{x}) - m(\mathbf{x})|.
\end{align} Since $Y$ is bounded, $0 \leq Z_n(\mathbf{x}) \leq 2B$. For a fixed $\xi > 0$, \begin{align}
    \mathbb{E}[Z_n(\mathbf{x})] \leq \xi + 2B \mathbb{P}(Z_n(\mathbf{x}) > \xi).
\end{align} With $Z_n(\mathbf{x}) \overset{p}{\to} 0$, the inequality implies \begin{align}
    \limsup_{n \to \infty} \mathbb{E}[Z_n(\mathbf{x})] \leq \xi.
\end{align} Since $\xi > 0$ is arbitrary, we conclude $\mathbb{E}[Z_n(\mathbf{x})] \to 0$ at almost all $\mathbf{x}$. Furthermore, by the dominated convergence theorem, \begin{align}
    \mathbb{E}\left[\int_{\mathcal{X}} Z_n(\mathbf{x}) f_{\mathbf{X}}(\mathbf{x}) d\mathbf{x}\right] = \int_{\mathcal{X}}\mathbb{E}\left[ Z_n(\mathbf{x})\right] f_{\mathbf{X}}(\mathbf{x}) d\mathbf{x} \to 0
\end{align}

By Markov's inequality, for every $\varepsilon > 0$, \begin{align}
    \mathbb{P}\left[\int_{\mathcal{X}} Z_n(\mathbf{x}) f_{\mathbf{X}}(\mathbf{x}) d\mathbf{x} > \varepsilon \right] \leq \frac{1}{\varepsilon} \mathbb{E}\left[\int_\mathcal{X} Z_n(\mathbf{x}) f_{\mathbf{X}}(\mathbf{x}) d\mathbf{x}\right] \to 0,
\end{align} which allows us to conclude $L^1$ convergence.
\end{proof}

\subsubsection{Proof of Theorem 1}

\begin{proof}
Because $\zeta_n(\theta_n) \overset{p}{\to} 0$ and $\frac{\max_{1 \leq i \leq n} c_i}{\min_{1 \leq i \leq n} c_i} = \mathcal{O}_p(1)$, on the event \begin{align}
    E = \left\{\zeta_n(\theta_n) \leq 1/2, \frac{\max_{1 \leq i \leq n} c_i}{\min_{1 \leq i \leq n} c_i} \leq A\right\},
\end{align} Lemma \ref{lemma:RQK_KR} gives \begin{align}
    \sup_{\mathbf{x} \in \mathcal{X}} |\hat m_{\mathrm{RQK}}(\mathbf{x}) - \hat m_{\mathrm{KR}}(\mathbf{x})| \leq C(A, 1/2) \zeta_n(\theta_n) \|\mathbf{y}\|_\infty.
\end{align} Choose $A$ large enough so that $\mathbb{P}\left[\frac{\max_{1 \leq i \leq n} c_i}{\min_{1 \leq i \leq n} c_i} > A \right]$ is small, we have \begin{align}
    \hat m_{\mathrm{RQK}}(\mathbf{x}) \overset{p}{\to} \hat m_{\mathrm{KR}}(\mathbf{x})
\end{align} for each fixed $\mathbf{x}$.

Next, notice that a lower bound for $\zeta_n(\theta_n)$ is \begin{align}
    \zeta_n(\theta_n) \geq \left(1 + \frac{\min_{1 \leq i < j \leq n}\|\mathbf{x}_i - \mathbf{x}_j\|^2}{\theta_n^2}\right)^{-d}.
\end{align} Thus, $\zeta_n(\theta_n) \to 0$ implies $\frac{\theta_n}{\min_{1 \leq i < j \leq n}\|\mathbf{x}_i - \mathbf{x}_j\|} \to 0$.

A packing bound on compact $\mathcal{X}$ gives \begin{align}
    \min_{1 \leq i < j \leq n}\|\mathbf{x}_i - \mathbf{x}_j\| \leq C n^{1/d}.
\end{align} Furthermore,
\begin{align}
    n \theta_n^d = n \left(\frac{\theta_n}{\min_{1 \leq i < j \leq n}\|\mathbf{x}_i - \mathbf{x}_j\|}\right)^d \min_{1 \leq i < j \leq n}\|\mathbf{x}_i - \mathbf{x}_j\|^d
    \leq C^d \left(\frac{\theta_n}{\min_{1 \leq i < j \leq n}\|\mathbf{x}_i - \mathbf{x}_j\|}\right)^d,
\end{align} which implies $n \theta_n^d \to 0$. Under this condition, Lemma \ref{lemma:KR_IDW} gives \begin{align}
    \hat m_{\mathrm{KR}}(\mathbf{x}) \overset{p}{\to} \hat m_{\mathrm{IDW}}(\mathbf{x}).
\end{align}

Finally, Lemma \ref{lemma:Weighted_IDW} shows \begin{align}
    \hat m_{\mathrm{IDW}}(\mathbf{x}) \overset{p}{\to} m(\mathbf{x}) \quad \mu_{\mathbf{X}} \text{-a.e.}
\end{align} Combining the three approximation steps yields pointwise convergence.

$L^1$ convergence follows from $L^1$ convergence of $\hat m_{\mathrm{IDW}}(\mathbf{x})$ and the triangle inequality.
\end{proof}

\subsection{Proof of Theorem 2}

In this proof and later proofs, we use the notation $A_n \asymp B_n$ to indicate there exist constants $C_{\min}, C_{\max}$ that do not depend on $n$, such that \begin{align}
    C_{\min} B_n \leq A_n \leq C_{\max} B_n.
\end{align} We refer to this as ``$A_n$ is comparable to $B_n$''. We also use $A_n \lesssim B_n$ if the right-hand side holds.

\subsubsection{Additional lemma for Theorem 2}

\begin{lemma}[Strong convergence of IDW]
\label{lemma:IDW_continuity}
Suppose $\{\mathbf{X}_i\}_{i \geq 1}$ are i.i.d. random vectors, where $\mathbf{X}$ has a bounded density $f_{\mathbf{X}}$ on $\mathcal{X}$. $Y = m(\mathbf{X})$ for some continuous and bounded function $m: \mathcal{X} \to \mathbb{R}$. 

In the weighted IDW estimator \eqref{eqn:Weighted_IDW}, let $p \geq d/2$, $c_i = c_i^{(n)}$ be positive random variables measurable with respect to $\sigma(\mathbf{X}_1, \dots, \mathbf{X}_n)$ and assume \begin{align}
    \frac{\max_{1 \leq i \leq n} c_i^{(n)}}{\min_{1 \leq i \leq n} c_i^{(n)}} = \mathcal{O}_p(1).
\end{align} Then as $n \to \infty$, \begin{enumerate}
    \item at almost all $\mathbf{x} \in \mathcal{X}$ with $f_{\mathbf{X}}(\mathbf{x}) > 0$, $\hat m_{\mathrm{IDW}}(\mathbf{x}) \overset{p}{\to} m(\mathbf{x})$;

    \item $\displaystyle \int_{\mathcal{X}} \left|\hat m_{\mathrm{IDW}}(\mathbf{x}) - m(\mathbf{x})\right| d \mu_{\mathbf{X}}(\mathbf{x}) \overset{p}{\to} 0$.
\end{enumerate}

If, in addition, \begin{align}
   \sup_{n \geq 1} \frac{\max_{1 \leq i \leq n} c_i^{(n)}}{\min_{1 \leq i \leq n} c_i^{(n)}} < \infty \text{ almost surely},
\end{align} then
\begin{enumerate}
    \item at almost all $\mathbf{x} \in \mathcal{X}$ with $f_{\mathbf{X}}(\mathbf{x}) > 0$, $\hat m_{\mathrm{IDW}}(\mathbf{x}) \overset{a.s.}{\to} m(\mathbf{x})$;

    \item $\displaystyle \int_{\mathcal{X}} \left|\hat m_{\mathrm{IDW}}(\mathbf{x}) - m(\mathbf{x})\right| d \mu_{\mathbf{X}}(\mathbf{x}) \overset{a.s.}{\to} 0$.
\end{enumerate}
\end{lemma}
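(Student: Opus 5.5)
Since $Y_i = m(\mathbf{X}_i)$ exactly, the noise term $I(\mathbf{x})$ from the proof of Lemma \ref{lemma:Weighted_IDW} vanishes identically. The whole error is therefore the bias term
\begin{align}
    |\hat m_{\mathrm{IDW}}(\mathbf{x}) - m(\mathbf{x})| \leq M \frac{\sum_{i=1}^n |m(\mathbf{X}_i) - m(\mathbf{x})|\,\|\mathbf{x} - \mathbf{X}_i\|_{\mathbf{w}}^{-2p}}{\sum_{j=1}^n \|\mathbf{x} - \mathbf{X}_j\|_{\mathbf{w}}^{-2p}},
\end{align}
valid on the event $E_{n,M}$ where the ratio of the $c_i$ is at most $M$. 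As before, I would normalize $b_i = c_i/\min_j c_j \in [1, M]$. Fix $\varepsilon > 0$. Uniform continuity of $m$ on the compact $\mathcal{X}$ gives $r > 0$ with $\varphi_m(r) \leq \varepsilon$. Splitting the sum into the points with $\|\mathbf{x} - \mathbf{X}_i\|_{\mathbf{w}} \leq r$ and the rest, the right-hand side is bounded by
\begin{align}
    M\varepsilon + 2M\|m\|_\infty \frac{S_{\mathrm{far}}}{S},
\end{align}
where $S = \sum_j \|\mathbf{x} - \mathbf{X}_j\|_{\mathbf{w}}^{-2p}$ and $S_{\mathrm{far}}$ is the same sum restricted to indices with $\|\mathbf{x} - \mathbf{X}_j\|_{\mathbf{w}} > r$. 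It therefore suffices to show $S_{\mathrm{far}}/S \to 0$ in probability, and almost surely for the second part.

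\textbf{Numerator.} The far sum satisfies $S_{\mathrm{far}} \leq n r^{-2p}$ deterministically.

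\textbf{Denominator.} Fix $\mathbf{x}$ that is a Lebesgue point of $f_{\mathbf{X}}$ with $f_{\mathbf{X}}(\mathbf{x}) > 0$. I would show $S/n \to \infty$ almost surely. Consider the truncated summands $\min\{\|\mathbf{x} - \mathbf{X}\|_{\mathbf{w}}^{-2p}, K\}$. These are bounded i.i.d. variables, so the strong law gives $S/n \geq \mathbb{E}\min\{\|\mathbf{x} - \mathbf{X}\|_{\mathbf{w}}^{-2p}, K\}$ eventually, almost surely. Letting $K \to \infty$ (monotone convergence), it remains to show $\mathbb{E}\|\mathbf{x} - \mathbf{X}\|_{\mathbf{w}}^{-2p} = \infty$.

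This is where $p \geq d/2$ enters. Near a Lebesgue point with $f_{\mathbf{X}}(\mathbf{x}) > 0$, the mass of the dyadic shell $\{2^{-k-1} < \|\mathbf{x} - \mathbf{X}\|_{\mathbf{w}} \leq 2^{-k}\}$ is $\gtrsim 2^{-kd}$ for large $k$. Hence $\mathbb{E}\|\mathbf{x} - \mathbf{X}\|_{\mathbf{w}}^{-2p} \gtrsim \sum_k 2^{k(2p-d)} = \infty$. Note this argument yields almost sure convergence directly, for all Lebesgue points. So pointwise convergence holds a.s. on $\{\sup_n \max_i c_i/\min_i c_i < \infty\}$, and in probability under the $\mathcal{O}_p(1)$ hypothesis via the $E_{n,M}$ argument of Lemma \ref{lemma:Weighted_IDW}.

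\textbf{$L^1$ statements.} The integrand is bounded by $2\|m\|_\infty$. In the almost sure case, I would apply Fubini to the product of $\Omega$ with $\mathcal{X}$: the set of $(\omega, \mathbf{x})$ where convergence holds has full measure, so for a.e. $\omega$, convergence holds for $\mu_{\mathbf{X}}$-a.e. $\mathbf{x}$. Dominated convergence then gives the integral tending to $0$ almost surely. The in-probability version follows exactly as at the end of the proof of Lemma \ref{lemma:Weighted_IDW}, using Markov's inequality.

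\textbf{Main obstacle.} The delicate point is the divergence of $S/n$ in the borderline case $p = d/2$. There the shell sum diverges only logarithmically, so a lower bound on the density near $\mathbf{x}$ is genuinely needed; the Lebesgue point condition supplies this only for all sufficiently small radii, which is enough because only the tail of the shell sum matters. A secondary subtlety is that the $c_i$ depend on the sample. This is harmless, since only the bounds $1 \leq b_i \leq M$ are used, never independence of the $b_i$.
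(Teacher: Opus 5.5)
Your argument is correct, and it takes a genuinely different route from the paper's. The paper orders the distances $r_{(1),n}(\mathbf{x})\le\cdots\le r_{(n),n}(\mathbf{x})$ and asserts that $r_{(i),n}(\mathbf{x})$ is comparable to $(i/n)^{1/d}$. It then bounds the weight mass beyond the first $n_0$ neighbours by the ratio of $\sum_{i>n_0} i^{-2p/d}$ to $\sum_{j\le n} j^{-2p/d}$. This forces a case split: a fixed $n_0$ for $p>d/2$, and $\ell_n=\lfloor n/\log n\rfloor$ neighbours, with tail mass of order $\log\log n/\log n$, for $p=d/2$.

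You instead cut at a fixed radius and bound the far sum deterministically by $n r^{-2p}$. You then get $S/n\to\infty$ almost surely from the truncated strong law together with $\mathbb{E}\|\mathbf{x}-\mathbf{X}\|_{\mathbf{w}}^{-2p}=\infty$ at a Lebesgue point with $f_{\mathbf{X}}(\mathbf{x})>0$; this is exactly where $p\ge d/2$ enters. Your route has several advantages:
\begin{itemize}
\item It handles both cases at once and yields the almost sure pointwise statement directly.
\item It sidesteps the order-statistic comparability, which the paper uses heuristically and which would take real work to make uniform in $i$.
\item It never uses the upper bound on $f_{\mathbf{X}}$.
\item Your Fubini plus dominated-convergence step for the almost sure $L^1$ claim is more complete than the paper's. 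The paper defers to the Markov-inequality argument of Lemma~\ref{lemma:Weighted_IDW}, which only delivers convergence in probability.
\end{itemize}
What the paper's route offers in return is a quantitative handle on the tail weights, and a direct parallel with the shell arguments it later uses for quasi-random designs.

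Two cosmetic fixes. First, the strong law gives $\liminf_n S/n\ge\mathbb{E}\min\{\|\mathbf{x}-\mathbf{X}\|_{\mathbf{w}}^{-2p},K\}$, not that the inequality holds eventually. Second, in the in-probability case $\varepsilon$ should be chosen after $M$.
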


\begin{proof}
Let \begin{align}
    r_{(1),n}(\mathbf{x}) \leq \cdots \leq r_{(n),n}(\mathbf{x})
\end{align} be the ordered distances $\|\mathbf{x} - \mathbf{X}_i\|_{\mathbf{w}}$, and let $a_{(i), n}(\mathbf{x})$ denote the corresponding normalized weights of $\hat m_{\mathrm{IDW}}(\mathbf{x})$. For i.i.d. sampling with $f_{\mathbf{X}}(\mathbf{x}) > 0$, $r_{(i), n}$ is comparable to $(i/n)^{1/d}$. This also indicates that, for any fixed $i$ and $n \to \infty$, $r_{(i), n} \to 0$ almost surely. See \citet[Chapter 2]{Biau_2015} for more discussion.

Because the coefficients $c_i$ are positive and their ratio is bounded in probability (or almost surely), the unnormalized weights $c_i \|\mathbf{x} - \mathbf{X}_i\|_{\mathbf{w}}^{-2p}$ are comparable to $r_{(i),n}(\mathbf{x})^{-2p}$ in the same mode. This indicates for any $n_0$, the normalized weights satisfy \begin{align}
    \sum_{i > n_0} a_{(i), n}(\mathbf{x}) \lesssim \frac{\sum_{i > n_0} i^{-2p/d}}{\sum_{j = 1}^n j^{-2p/d}}, 
\end{align} on a high-probability event when $\frac{\max_{1 \leq i \leq n} c_i^{(n)}}{\min_{1 \leq i \leq n} c_i^{(n)}}$ is $\mathcal{O}_p(1)$, or eventually on a probability-one event when $\sup_{n \geq 1}\frac{\max_{1 \leq i \leq n} c_i^{(n)}}{\min_{1 \leq i \leq n} c_i^{(n)}} < \infty$ almost surely.

Next, we divide our discussion for $p > d/2$ and $p = d/2$.

\emph{Case 1: $p > d/2$.}
$2p/d > 1$ gives $\sum_{i \geq 1} i^{-2p/d} < \infty$. Hence, for every $\varepsilon > 0$, we can choose $n_0$ so large that $\sum_{i > n_0} i^{-2p/d}$ can be made arbitrarily small, while $\sum_{j = 1}^n j^{-2p/d}$ is bounded away from from zero for finite $p$.

Since $m$ is continuous at $\mathbf{x}$ and $r_{(i),n}(\mathbf{x}) \to 0$, \begin{align}
    \max_{1 \leq i \leq n_0} |m(\mathbf{X}_{(i), n}) - m(\mathbf{x})| \to 0.
\end{align} Therefore, \begin{align}
    |\hat m_{\mathrm{IDW}}(\mathbf{x})-m(\mathbf{x})| \leq \sum_{i = 1}^{n_0} a_{(i), n}(\mathbf{x})|m(\mathbf{X}_{(i),n})-m(\mathbf{x})| + 2\|m\|_\infty \sum_{i > n_0} a_{(i), n}(\mathbf{x}), \label{eqn:IDW_gap}
\end{align} and the claim follows either in probability or almost surely.

\emph{Case 2: $p = d/2$.}
Set \begin{align}
    \ell_n := \left\lfloor\frac{n}{\log n}\right\rfloor.
\end{align} Then $\ell_n \to \infty$ and $\ell_n / n \to 0$, so $r_{(\ell_n),n}(\mathbf{x})\to 0$. Moreover, the normalized weights are comparable to $i^{-1}/\sum_{j = 1}^n j^{-1}$.

Therefore, \begin{align}
    \sum_{i > \ell_n} a_{(i), n} (\mathbf{x}) \lesssim \frac{\sum_{i > \ell_n} i^{-1}}{\sum_{j = 1}^n j^{-1}} \asymp \frac{\log\log n}{\log n} \to 0
\end{align} either in probability or almost surely. By \eqref{eqn:IDW_gap}, we again conclude either mode of convergence under the corresponding conditions.

The argument from pointwise convergence to $L^1$ convergence is the same as the argument in \ref{lemma:Weighted_IDW}.
\end{proof}

\subsubsection{Proof of Theorem 2}

\begin{proof}
By Lemma \ref{lemma:RQK_KR}, $\zeta_n(\theta_n) \overset{p}{\to} 0$ and $\frac{\max_{1\leq i \leq n} c_i}{\min_{1 \leq i \leq n} c_i} = \mathcal{O}_p(1)$ gives \begin{align}
    \hat m_{\mathrm{RQK}}(\mathbf{x})-\hat m_{\mathrm{KR}}(\mathbf{x}) \overset{p}{\to} 0;
\end{align} $\zeta_n(\theta_n) \overset{a.s.}{\to} 0$ and $\sup_{n \geq 1} \frac{\max_{1\leq i \leq n} c_i}{\min_{1 \leq i \leq n} c_i} < \infty$ gives \begin{align}
    \hat m_{\mathrm{RQK}}(\mathbf{x})-\hat m_{\mathrm{KR}}(\mathbf{x}) \overset{a.s.}{\to} 0;
\end{align}

Next, as in the proof of Theorem \ref{theorem:Weak_consistency}, $\zeta_n(\theta_n) \to 0$ implies $n\theta_n^d to 0$. By Lemma \ref{lemma:KR_IDW}, \begin{align}
    \hat m_{\mathrm{KR}}(\mathbf{x}) - \hat m_{\mathrm{IDW}}(\mathbf{x}) \overset{p}{\to} 0.
\end{align} Additionally, Lemma \ref{lemma:KR_IDW} shows if $\sum_{n = 1}^\infty n\theta_n^d < \infty$, \begin{align}
    \hat m_{\mathrm{KR}}(\mathbf{x}) - \hat m_{\mathrm{IDW}}(\mathbf{x}) \overset{a.s.}{\to} 0.
\end{align}

Finally, apply Lemma \ref{lemma:IDW_continuity} with $p \geq d/2$ and $\frac{\max_{1\leq i \leq n} c_i}{\min_{1 \leq i \leq n} c_i} = \mathcal{O}_p(1)$ gives \begin{align}
    \hat m_{\mathrm{IDW}}(\mathbf{x}) \overset{p}{\to} m(\mathbf{x});
\end{align} and $\sup_{n \geq 1} \frac{\max_{1\leq i \leq n} c_i}{\min_{1 \leq i \leq n} c_i} < \infty$ gives \begin{align}
    \hat m_{\mathrm{IDW}}(\mathbf{x}) \overset{a.s.}{\to} m(\mathbf{x});
\end{align}

Combining the three steps proves pointwise convergence, and $L^1$ convergence again follows from the standard argument.
\end{proof}

\subsection{Proof of Theorem 3}

\subsubsection{Additional lemmas for Theorem 3}

\begin{lemma}[Ball counts under quasi-randomness]
\label{lemma:ball_counts}
Let $\{\mathbf{x}_i\}_{i = 1}^n$ be a quasi-random design on $\mathcal{X}$ with mesh ratio bounded by $\rho < \infty$. Then there exist constants $0 < C_m \leq C_M < \infty$ depending only on $(d,\rho, w)$, such that for every $\mathbf{x} \in \mathcal{X}$ and $r \geq 2q_n$, \begin{align}
    C_m \left(\frac{r}{q_n}\right)^d \leq \#\{i:\|\mathbf{x} - \mathbf{x}_i\|_{\mathbf{w}} \leq r\} \leq C_M \left(\frac{r}{q_n}\right)^d,
\end{align} where $\#$ indicates the cardinality of the set.
\end{lemma}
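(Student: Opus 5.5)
The plan is a standard volume (packing/covering) argument with respect to the $\mathbf{w}$-weighted norm. Let $B_{\mathbf{w}}(\mathbf{z},s)=\{\mathbf{u}:\|\mathbf{u}-\mathbf{z}\|_{\mathbf{w}}\le s\}$. Its Lebesgue volume is $V_{d,\mathbf{w}} s^d$ with $V_{d,\mathbf{w}}=\pi^{d/2}/\{\Gamma(d/2+1)\prod_k\sqrt{w_k}\}$. We need two facts about intersections with $\mathcal{X}=[0,1]^d$.

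The first is the trivial upper bound $|B_{\mathbf{w}}(\mathbf{z},s)\cap\mathcal{X}|\le V_{d,\mathbf{w}}s^d$.

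The second is a lower bound $|B_{\mathbf{w}}(\mathbf{z},s)\cap\mathcal{X}|\ge \kappa\, V_{d,\mathbf{w}} s^d$ for $\mathbf{z}\in\mathcal{X}$ and $s\le D$, where $\kappa=\kappa(d,\mathbf{w})>0$. This is a cone/corner condition for the cube. For $s\le \min_k w_k^{-1/2}$, the $\mathbf{w}$-ball contains an axis box of side $s/\sqrt{d w_k}$ in direction $k$. At least a $2^{-d}$ fraction of that box lies inside the cube, by choosing the orthant pointing inward. For larger $s\le D$, the intersection contains the intersection at the smaller radius, which is a fixed positive volume comparable to $s^d$ since $s\le D$.

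\textbf{Upper bound.} The balls $B_{\mathbf{w}}(\mathbf{x}_i,q_n)$ are pairwise disjoint, because distinct points are at $\mathbf{w}$-distance at least $2q_n$. If $\|\mathbf{x}-\mathbf{x}_i\|_{\mathbf{w}}\le r$, then $B_{\mathbf{w}}(\mathbf{x}_i,q_n)\subset B_{\mathbf{w}}(\mathbf{x},r+q_n)\subset B_{\mathbf{w}}(\mathbf{x},\tfrac32 r)$, using $r\ge 2q_n$. Each small ball meets $\mathcal{X}$ in volume at least $\kappa V_{d,\mathbf{w}}q_n^d$, and these pieces are disjoint inside $B_{\mathbf{w}}(\mathbf{x},\tfrac32 r)\cap\mathcal{X}$, whose volume is at most $V_{d,\mathbf{w}}(\tfrac32 r)^d$. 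Dividing gives the count bound $\le (3/2)^d\kappa^{-1}(r/q_n)^d$, which depends only on $(d,\mathbf{w})$.

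\textbf{Lower bound.} This is the main obstacle, because it needs the fill distance and $r$ must dominate $h_n$. By the mesh ratio, $h_n\le\rho q_n$. First take $r\ge 4\rho q_n\ge 4h_n$. The balls $B_{\mathbf{w}}(\mathbf{x}_i,h_n)$ cover $\mathcal{X}$, so they cover $B_{\mathbf{w}}(\mathbf{x},r/2)\cap\mathcal{X}$. Any ball in this cover that meets $B_{\mathbf{w}}(\mathbf{x},r/2)$ has its center within $r/2+h_n\le r$ of $\mathbf{x}$. Hence the count $N$ satisfies $N V_{d,\mathbf{w}}h_n^d\ge \kappa V_{d,\mathbf{w}}(r/2)^d$, which requires $r/2\le D$; for $r>2D$ all $n$ points are counted, so replace $r$ by $\min(r,2D)$. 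This gives $N\ge \kappa 2^{-d}(r/h_n)^d\ge \kappa 2^{-d}\rho^{-d}(r/q_n)^d$.

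For $2q_n\le r<4\rho q_n$, the ratio $r/q_n$ is bounded by $4\rho$. The count is at least $1$ provided $r\ge h_n$, and then $N\ge 1\ge (4\rho)^{-d}(r/q_n)^d$.

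The genuinely delicate gap is $2q_n\le r<h_n$, where a point $\mathbf{x}$ could have no design point within $r$. Here I would check whether the intended statement requires $r\ge h_n$, or equivalently $r\ge 2\rho q_n$. The applications in Theorem~\ref{theorem:quasirandom} take $r\in[h_n,D]$, so I would state and prove the lower bound in that range. For $r\ge D$, the lower bound is absorbed by the count $n$, and $n\asymp q_n^{-d}$ by the same two arguments at $r=D$.
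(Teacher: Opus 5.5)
Your argument is the paper's argument. The upper bound uses disjoint $q_n$-balls inside $B_{\mathbf w}(\mathbf x,r+q_n)$. The lower bound covers $\mathcal X$ by the balls $B_{\mathbf w}(\mathbf x_i,h_n)$; the paper phrases this through Voronoi cells $V_i\subset B_{\mathbf w}(\mathbf x_i,h_n)$. For $r\in[h_n,D]$ your proof is complete, and at the point where you hesitate you are right and the paper is not.

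For $2q_n\le r\le 2h_n$ the paper simply asserts that the count is at least $1$. That is justified only when $r\ge h_n$ or when $\mathbf x$ is itself a design point. As stated, the lower bound is false whenever $\rho_n>2$. In $d=1$, put pairs of points at distance $\Delta$ separated by gaps of length $4\Delta$, so $q_n=\Delta/2$, $h_n=2\Delta$ and $\rho_n=4$. At the midpoint of a gap, with $r=3\Delta/2\ge 2q_n$, the count is $0$. A cubic grid with $w_k\equiv 1/d$ in $d\ge 5$ behaves the same way, since there $h_n/q_n=\sqrt d>2$.

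The lower bound also cannot hold for arbitrarily large $r$, because the count saturates at $n\lesssim q_n^{-d}$; a cap $r\lesssim D$ is needed. Both defects hide in the paper's comparison $|B_{\mathbf w}(\mathbf x,r-h_n)|\le\sum_i|V_i|$, which is taken in $\mathbb R^d$ rather than after intersecting with $\mathcal X$. Your corner constant $\kappa$, together with the cap $\min(r,2D)$, is the right repair.

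When you state the corrected lemma, keep the design-point case as well. Lemma~\ref{lemma:c_quasirandom} applies the ball counts at $\mathbf x=\mathbf x_i$ with radii $2\lambda^m q_n$ starting from $2q_n$, which may lie below $h_n$. There the count is at least $1$ because $\mathbf x_i$ counts itself, so the paper's step is valid in that case. The true statement is therefore:
\begin{itemize}
\item the upper bound holds for all $\mathbf x\in\mathcal X$ and $r\ge 2q_n$;
\item the lower bound holds for $r\in[2q_n,D]$ when $\mathbf x\in\{\mathbf x_i\}$;
\item the lower bound holds for $r\in[h_n,D]$ for general $\mathbf x$.
\end{itemize}
Restricting to $r\in[h_n,D]$ alone would not serve that application.

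Two small points. First, the upper bound needs no intersection with $\mathcal X$: comparing volumes in $\mathbb R^d$ already gives $((r+q_n)/q_n)^d\le(3/2)^d(r/q_n)^d$, as the paper does. Second, your threshold $s\le\min_k w_k^{-1/2}$ does not keep the inward orthant box inside the cube. Instead, take the box with a corner at $\mathbf z$, pointing into the cube, with side lengths $\min\{s/\sqrt{d w_k},1/2\}$. It lies in both $B_{\mathbf w}(\mathbf z,s)$ and $\mathcal X$, which gives $\kappa(d,\mathbf w)>0$ for all $s\le D$.
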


\begin{proof}
Upper bound: Consider weighted balls with the norm $\|\cdot\|_{\mathbf{w}}$. The balls $B_w(\mathbf{x}_i, q_n)$ are disjoint, and all such balls with centers satisfying $\|\mathbf{x} - \mathbf{x}_i\|_{\mathbf{w}} \leq r$ are contained in $B_w(\mathbf{x}, r + q_n)$. Comparing volumes gives \begin{align}
    \#\{i:\|\mathbf{x} - \mathbf{x}_i\|_{\mathbf{w}} \leq r\} \leq \left(\frac{r + q_n}{q_n}\right)^d \leq C_M \left(\frac{r}{q_n}\right)^d.
\end{align} 

Lower bound: Let $V_i$ be the Voronoi cell of $\mathbf{x}_i$. By the definition of the fill
distance, each cell is contained in $B_w(\mathbf{x}_i, h_n)$, so $|V_i| \leq |B_w(\mathbf{x}_i, h_n)|$.

If $r > h_n$, every point of $B_w(\mathbf{x}, r-h_n)$ belongs to a Voronoi cell whose nucleus lies within distance $r$ of $\mathbf{x}$. Therefore, volume comparison gives \begin{align}
    |B_w(\mathbf{x}, r - h_n)| \leq \sum_{\|\mathbf{x} - \mathbf{x}_i\|_{\mathbf{w}}\le r} |V_i| \leq |B_w(\mathbf{x}_i, h_n)| \#\{i:\|\mathbf{x}-\mathbf{x}_i\|_{\mathbf{w}} \leq r\}.
\end{align} It follows that when $r > 2h_n$ \begin{align}
    \#\{i:\|\mathbf{x} - \mathbf{x}_i\|_{\mathbf{w}} \leq r\} \geq \left(\frac{r - h_n}{h_n}\right)^d \geq 2^{-d} \left(\frac{r}{h_n}\right)^d \geq (2\rho)^{-d} \left(\frac{r}{b_n}\right)^d.
\end{align} On the other hand, when $2q_n \leq r \leq 2h_n$, notice that \begin{align}
    \#\{i:\|\mathbf{x} - \mathbf{x}_i\|_{\mathbf{w}} \leq r\} \geq 1 \geq (2\rho)^{-d} \left(\frac{r}{q_n}\right)^d.
\end{align}
\end{proof}

\begin{lemma}[Constant-order preweights under quasi-randomness]
\label{lemma:c_quasirandom}
Let $\{\mathbf{x}_i\}_{i = 1}^n$ be a quasi-random design on $\mathcal{X}$ with mesh ratio bounded by $\rho < \infty$. Then there exists a constant $A = A(d, p, \rho, w) \geq 1$ such that \begin{align}
    \frac{\max_{1 \leq i \leq n} c_i}{\min_{1 \leq i \leq n} c_i} \leq A
\end{align} for all $n$ and $\theta > 0$.
\end{lemma}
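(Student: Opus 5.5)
Since $c_i = 1/S_i(\theta)$ with $S_i(\theta) := \sum_{j\neq i} R(\mathbf{x}_i,\mathbf{x}_j)$, the ratio $\max_i c_i/\min_i c_i$ equals $\max_i S_i(\theta)/\min_i S_i(\theta)$. The plan is therefore to show that $S_i(\theta)$ is comparable, uniformly in $i$, to a single quantity $\Sigma_n(\theta)$ that depends only on $(n,\theta)$ and not on $i$. To do this, I will write each $S_i$ as a Stieltjes-type sum over distance shells and then use the two-sided ball counts of Lemma \ref{lemma:ball_counts} on every shell.

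First, fix $i$ and group the other design points into dyadic shells $D_k := \{j\neq i: 2^{k}q_n < \|\mathbf{x}_i-\mathbf{x}_j\|_{\mathbf{w}} \le 2^{k+1}q_n\}$ for $k\ge 0$. By the definition of $q_n$, every $j\neq i$ lies in some $D_k$ with $k \geq 0$, and $D_k$ is empty once $2^k q_n > D$. Since $g(s) := (1+s^2/\theta^2)^{-p}$ is decreasing, we have $g(2^{k+1}q_n)\,|D_k| \le \sum_{j\in D_k} R(\mathbf{x}_i,\mathbf{x}_j)\le g(2^kq_n)\,|D_k|$.

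For the upper bound on $|D_k|$, Lemma \ref{lemma:ball_counts} gives $|D_k| \le C_M 2^{(k+1)d}$ for every center $\mathbf{x}_i$. Hence $S_i \le U_n(\theta) := C_M\sum_{k\ge0} 2^{(k+1)d} g(2^k q_n)$, where the sum is truncated at $2^kq_n\le D$. This bound is independent of $i$.

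For the lower bound I cannot bound individual shells from below, because $|D_k|$ could vanish. Instead I will use cumulative counts. Setting $N(r) := \#\{j\neq i: \|\mathbf{x}_i-\mathbf{x}_j\|_{\mathbf{w}}\le r\}$, we get $S_i \ge \sum_k g(2^{k+1}q_n)\,[N(2^{k+1}q_n)-N(2^kq_n)]$. Abel summation turns this into $\sum_k N(2^{k+1}q_n)\,[g(2^{k+1}q_n)-g(2^{k+2}q_n)]$ plus a nonnegative boundary term. Each increment $g(s)-g(2s)$ is at least a fixed fraction $\kappa(p)>0$ of $g(2s)$ in the tail regime $s\gtrsim\theta$. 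In the regime $s\ll\theta$, where $g$ is nearly flat, I will instead bound $S_i$ directly by $g(r)\,N(r)$ at the largest admissible radius $r \lesssim \theta$. Using $N(r)\ge C_m (r/q_n)^d - 1 \gtrsim (r/q_n)^d$ for $r\ge 2q_n$ large enough (with the lower bound of Lemma \ref{lemma:ball_counts} adjusted so it is nontrivial), this yields $S_i \ge L_n(\theta) \gtrsim \sum_k 2^{kd} g(2^{k+1}q_n)$. Because $g(2s)\ge 4^{-p}g(s)$ for all $s$, this sum is comparable to $U_n(\theta)$ up to constants depending on $(d,p,\rho,\mathbf{w})$.

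One edge case needs separate care. When $\theta\to 0$ and points near $\mathbf{x}_i$ dominate, the bound $N(r)\ge1$ at $r=2q_n$ is only guaranteed at scale $2h_n\le 2\rho q_n$. This is harmless: $g(2\rho q_n)\ge (2\rho)^{-2p} g(q_n)$ when $q_n\gtrsim\theta$, and $g(2\rho q_n) \geq 2^{-p}$-type bounds hold when $q_n \lesssim \theta$. So the nearest neighbour alone gives $S_i\gtrsim g(q_n)$, which matches the $k=0$ term of $U_n$.

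The main obstacle is making the lower bound hold uniformly over all $\theta>0$, including the transition $\theta\asymp 2^kq_n$ and the boundary effect for $\mathbf{x}_i$ near $\partial\mathcal{X}$. Lemma \ref{lemma:ball_counts} is stated for arbitrary $\mathbf{x}\in\mathcal{X}$, so the boundary effect is already absorbed there. I will handle the transition by comparing everything to the single scale-free quantity $\sum_k 2^{kd}g(2^kq_n)$ and using only the doubling property $g(s)\le g(s/2)\le 4^{p}g(s)$. This property holds for every $\theta$ and $s$, which makes all constants independent of $\theta$ and $n$. Combining the two bounds then gives $A := C_M 2^{d} 4^{2p}/c(d,\rho)$.
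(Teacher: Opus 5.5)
Your argument is correct. The upper bound $S_i\le U_n(\theta)$ is the same as the paper's; the two proofs differ in the lower bound.

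\textbf{How the paper avoids empty shells.} The obstacle you identify is that dyadic shells may be empty. The paper sidesteps it by not using ratio $2$. It takes radii $r_m=2\lambda^m q_n$ with $\lambda>\max\{2,(C_M/C_m)^{1/d}\}$. Then the lower ball count at $r_{m+1}$ exceeds the upper count at $r_m$, so every shell satisfies $\#\mathcal{S}_m\ge 2^d(C_m\lambda^d-C_M)\lambda^{md}>0$. Each row sum is therefore sandwiched between $\underline C\sum_m\lambda^{md}g(r_{m+1})$ and $\overline C\sum_m\lambda^{md}g(r_m)$. The single inequality $g(\lambda s)\ge\lambda^{-2p}g(s)$ makes these two sums comparable for all $\theta$ at once. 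As a result the paper needs no Abel summation, no split into a flat regime $s<\theta$ and a tail regime $s\ge\theta$, and no separate nearest-neighbour step.

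\textbf{What your route buys.} It only ever uses cumulative counts $N(r)$ at radii $r\le D$, since the Abel boundary term is simply dropped. So it never needs a lower count for a shell whose outer radius exceeds $D$, where the ball count saturates at $n$. The paper's lower bound for the outermost shell $\mathcal{S}_{M_n-1}$ tacitly uses Lemma~\ref{lemma:ball_counts} in exactly that range. This is easily repaired at the cost of a constant, but the paper does not address it.

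\textbf{Two small points on your sketch.}
\begin{itemize}
\item The case split in your nearest-neighbour paragraph is unnecessary. The bound $g(2\rho q_n)\ge(2\rho)^{-2p}g(q_n)$ holds for every $\theta$, by the same argument as your doubling inequality.
\item The explicit $A$ you state does not account for three ingredients: $\kappa(p)$; the threshold below which $C_m(r/q_n)^d-1$ is not yet $\gtrsim (r/q_n)^d$; and the factor $(2\rho)^{2p}$. The lemma only asserts that some $A$ exists, so this does not affect correctness.
\end{itemize}
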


\begin{proof}
By Lemma \ref{lemma:ball_counts}, there exists $0 < C_m \leq C_M < \infty$ such that \begin{align}
    C_m \left(\frac{r}{q_n}\right)^d \leq \#\{j:\|\mathbf{x}_j -\mathbf{x}_i\|_{\mathbf{w}} \leq r\} \leq C_M \left(\frac{r}{q_n}\right)^d
\end{align} for every $i$ and $r \geq 2q_n$.

Choose $\lambda > \max\{2, (C_M/C_m)^{1/d}\}$ and let \begin{align}
    r_m := 2\lambda^m q_n.
\end{align} Let $M_n$ be the smallest integer with $r_{M_n} \geq D$, and define shells \begin{align}
    \mathcal{S}_{m}(\mathbf{x}_i) := \{j \neq i: r_m < \|\mathbf{x}_j - \mathbf{x}_i\|_{\mathbf{w}} \leq r_{m + 1}\}, m = 0, \dots, M_n - 1.
\end{align} Then \begin{align}
    \sum_{j \neq i} R(\mathbf{x}_i,\mathbf{x}_j) = \sum_{m = 0}^{M_n - 1} \sum_{j \in \mathcal{S}_{m}(\mathbf{x}_i)} \left(1 + \frac{\|\mathbf{x}_i - \mathbf{x}_j\|_{\mathbf{w}}^2}{\theta^2}\right)^{-p}.
\end{align}

Using the ball-count bounds at radii $r_m$ and $r_{m + 1}$, we obtain constants $0 < \underline C \leq \overline C < \infty$ depending only on $(d, \rho, w)$ such that \begin{align}
    \underline C \lambda^{md} \leq \# \{\mathcal{S}_{m}(\mathbf{x}_i)\} \leq \overline C \lambda^{md}
\end{align} for all $i, m$.

Since every distance in $\mathbf{S}_{m}(\mathbf{x}_i)$ lies between $r_m$ and $r_{m + 1}$, \begin{align}
    \left(1 + \frac{r_{m + 1}^2}{\theta^2}\right)^{-p} \leq \left(1 + \frac{\|\mathbf{x}_i - \mathbf{x}_j\|_{\mathbf{w}}^2}{\theta^2}\right)^{-p} \leq \left(1 + \frac{r_m^2}{\theta^2}\right)^{-p}.
\end{align} Therefore, \begin{align}
    \underline C \sum_{m = 0}^{M_n - 1} \lambda^{md} \left(1 + \frac{r_{m + 1}^2}{\theta^2}\right)^{-p} \leq \sum_{j \neq i} R(\mathbf{x}_i,\mathbf{x}_j) \leq \overline C \sum_{m = 0}^{M_n - 1} \lambda^{md} \left(1 + \frac{r_m^2}{\theta^2}\right)^{-p}.
\end{align} The two sums are comparable up to a constant depending only on $(d, p, \rho, w)$, so
$\sum_{j \neq i} R(\mathbf{x}_i, \mathbf{x}_j)$ for all $i$ are comparable. Taking reciprocals proves the claim.
\end{proof}

\begin{lemma}[Diagonal dominance under quasi-randomness]
\label{lemma:zeta_quasirandom}
Under the assumptions of Lemma \ref{lemma:c_quasirandom}, there exists
$C = C(d, p, \rho, w)$ such that \begin{align}
    \zeta_n(\theta) \leq \begin{cases}
    C \left(\dfrac{\theta}{q_n}\right)^{2p}, & p > d/2, \\
    C \left(\dfrac{\theta}{q_n}\right)^d \log \left(\dfrac{D}{q_n}\right), & p = d/2.
\end{cases}
\end{align}
In particular, if \begin{align}
    \bar\theta_n = c q_n^{1 + \delta} \text{ for some } c > 0, \delta > 0,
\end{align}
$\zeta_n(\bar \theta_n) \to 0$ as $n \to \infty$.
\end{lemma}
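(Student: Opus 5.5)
We bound $\zeta_n(\theta)=\max_i\sum_{j\neq i}R(\mathbf{x}_i,\mathbf{x}_j)$ with the same dyadic-shell decomposition used in the proof of Lemma \ref{lemma:c_quasirandom}, now keeping track of the size of each shell's contribution rather than only comparing them across $i$. Fix $i$ and take $\lambda=2$ (any fixed $\lambda>1$ works). Use the radii $r_m=2\cdot 2^m q_n$ for $m=0,\dots,M_n-1$, where $M_n$ is the smallest integer with $r_{M_n}\geq D$, so $M_n\lesssim \log(D/q_n)$. Every $j\neq i$ satisfies $\|\mathbf{x}_i-\mathbf{x}_j\|_{\mathbf{w}}\geq 2q_n>q_n$, so every $j\neq i$ lies in the ball of radius $r_0=2q_n$ or in one of the shells $\mathcal{S}_m(\mathbf{x}_i)$. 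By the upper bound of Lemma \ref{lemma:ball_counts} (valid for $r\ge 2q_n$), the number of points in $\{j:\|\mathbf{x}_j-\mathbf{x}_i\|_{\mathbf{w}}\le r_{m+1}\}$ is at most $C_M(r_{m+1}/q_n)^d\lesssim 2^{md}$. The inner ball contributes at most $C_M 2^d$ points.

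On each shell, use the elementary bound $(1+s^2/\theta^2)^{-p}\le (\theta/s)^{2p}$ with $s\ge r_m$. This gives
\begin{align}
\sum_{j\neq i}R(\mathbf{x}_i,\mathbf{x}_j)\lesssim \left(\frac{\theta}{2q_n}\right)^{2p}+\sum_{m=0}^{M_n-1}2^{md}\left(\frac{\theta}{2^{m+1}q_n}\right)^{2p}\lesssim \left(\frac{\theta}{q_n}\right)^{2p}\sum_{m=0}^{M_n-1}2^{m(d-2p)} .
\end{align}
The inner-ball term uses the fact that every $j\ne i$ has distance at least $2q_n$. All constants depend only on $(d,p,\rho,\mathbf{w})$, and the bound is uniform in $i$, so it also holds for the maximum over $i$.

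The case split comes from the geometric sum. If $p>d/2$, the series $\sum_m 2^{-m(2p-d)}$ converges to a constant, and we get $\zeta_n(\theta)\le C(\theta/q_n)^{2p}$. If $p=d/2$, each term equals $1$, so the sum is $M_n\lesssim \log(D/q_n)$ (plus a constant, which is absorbed for large $n$ since $D/q_n\to\infty$). Since $2p=d$, this gives $C(\theta/q_n)^d\log(D/q_n)$.

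For the final claim, substitute $\bar\theta_n=cq_n^{1+\delta}$, so that $\bar\theta_n/q_n=cq_n^{\delta}$. When $p>d/2$ we get $\zeta_n\lesssim q_n^{2p\delta}\to0$, since $q_n\to0$. The separation distance tends to zero because the bounded mesh ratio together with $n$ points in a compact set forces $h_n\to 0$, and $q_n\le h_n$. When $p=d/2$ we get $\zeta_n\lesssim q_n^{d\delta}\log(D/q_n)\to0$, because a positive power of $q_n$ beats the logarithm.

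The only delicate point is bookkeeping. The counting bound from Lemma \ref{lemma:ball_counts} needs $r\ge 2q_n$, which is why the shells start at $r_0=2q_n$. The number of shells must be controlled by $\log(D/q_n)$ in order to obtain the logarithmic factor in the critical case $p=d/2$. No step is expected to be a real obstacle.
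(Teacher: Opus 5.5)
Your proposal is correct and is essentially the paper's argument. It uses the shell decomposition around $\mathbf{x}_i$, the bound $(1+r^2/\theta^2)^{-p}\le(\theta/r)^{2p}$ on each shell, shell counts $\lesssim\lambda^{md}$ from the upper ball count, and a geometric sum that converges for $p>d/2$ and has $M_n\lesssim\log(D/q_n)$ unit terms for $p=d/2$. Your bookkeeping is a bit more careful than the paper's: your inner-ball term catches neighbours at distance exactly $2q_n$, which the paper's shells $r_m<\|\mathbf{x}_j-\mathbf{x}_i\|_{\mathbf{w}}\le r_{m+1}$ omit, and you check the $\bar\theta_n=cq_n^{1+\delta}$ claim explicitly. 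However, $q_n\to 0$ follows most directly from packing, $q_n\lesssim n^{-1/d}$, rather than through $h_n$.
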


\begin{proof}
Using the shell decomposition from the proof of Lemma \ref{lemma:c_quasirandom}: \begin{align}
    \zeta_n(\theta) = \max_{1 \leq i \leq n} \sum_{j \neq i} R(\mathbf{x}_i, \mathbf{x}_j) \leq \bar C \sum_{m = 0}^{M_n - 1} \lambda^{md} \left(1 + \frac{r_m^2}{\theta^2}\right)^{-p}.
\end{align}

Because \begin{align}
    \left(1 + \frac{r_m^2}{\theta^2}\right)^{-p} \leq \left(\frac{\theta}{r_m}\right)^{2p},
\end{align} we have \begin{align}
    \zeta_n(\theta) \lesssim \left(\frac{\theta}{q_n}\right)^{2p} \sum_{m = 1}^{M_n - 1} \lambda^{m(d - 2p)}.
\end{align}

If $p > d/2$, the series converges and we have $\zeta_n(\theta) \lesssim \left(\frac{\theta}{q_n}\right)^{2p}$.

If $p = d/2$, the exponent vanishes and we have \begin{align}
    \zeta_n(\theta) \lesssim \left(\frac{\theta}{q_n}\right)^d M_n \asymp \left(\frac{\theta}{q_n}\right)^d \log \left(\frac{D}{q_n}\right),
\end{align} which proves the claim.
\end{proof}

\subsubsection{Proof of Theorem 3}

\begin{proof}
By Lemma \ref{lemma:c_quasirandom} and Lemma \ref{lemma:zeta_quasirandom}, the assumptions of
Lemma \ref{lemma:RQK_KR} hold uniformly for $0 < \theta \leq \bar\theta_n$. Therefore, \begin{align}
    \sup_{\mathbf{x} \in \mathcal{X}} |\hat m_{\mathrm{RQK}}(\mathbf{x}) - \hat m_{\mathrm{KR}}(\mathbf{x})| \leq C \zeta_n(\bar\theta_n) \|m\|_\infty. \label{eqn:sup_RQK_KR}
\end{align}

Fix $\mathbf{x} \in \mathcal{X}$ and $r \in [h_n, D]$. Since the kernel regression weights are nonnegative and sum to one, \begin{align}
    |\hat m_{\mathrm{KR}}(\mathbf{x}) -m(\mathbf{x})| \leq \sum_{i = 1}^n a_{i, \mathrm{KR}} |m(\mathbf{x}_i) - m(\mathbf{x})|. \label{eqn:bounded_KR}
\end{align}
Split the sum into ``nearby'' and ``far-away'' parts: \begin{align}
    |\hat m_{\mathrm{KR}}(\mathbf{x}) -m(\mathbf{x})| \leq \varphi_m(r) + 2\|m\|_{\infty} \sum_{\|\mathbf{x}_i - \mathbf{x}\|_{\mathbf{w}} > r} a_{i, \mathrm{KR}}(\mathbf{x}). \label{eqn:sup_KR}
\end{align}

Applying the same argument as Lemma \ref{lemma:zeta_quasirandom}, if $p > d/2$ and $r_{m_r} < r$,
\begin{align}
    \sum_{\|\mathbf{x}_i - \mathbf{x}\|_{\mathbf{w}} > r} a_{i, \mathrm{KR}}(\mathbf{x}) \asymp \frac{\sum_{\|\mathbf{x}_i - \mathbf{x}\|_{\mathbf{w}} > r} R(\mathbf{x}, \mathbf{x}_i)}{\sum_{j = 1}^n R(\mathbf{x}, \mathbf{x}_j)} \lesssim \frac{\sum_{m > m_r}^{M_n - 1} \lambda^{m(d - 2p)}}{\sum_{m = 0}^{M_n - 1} \lambda^{m(d - 2p)}} \asymp \left(\frac{r}{h_n}\right)^{d - 2p}.
\end{align} On the other hand, when $p = d/2$, \begin{align}
    \frac{\sum_{m > m_r}^{M_n - 1} \lambda^{m(d - 2p)}}{\sum_{m = 0}^{M_n - 1} \lambda^{m(d - 2p)}} \asymp \frac{\log(D/r)}{\log(D/h_n)}.
\end{align}

Putting together \eqref{eqn:sup_RQK_KR} and \eqref{eqn:sup_KR} and taking the supremum over $\mathbf{x} \in \mathcal{X}$ gives \begin{align}
    \sup_{0 < \theta \leq \bar\theta_n}\sup_{\mathbf{x} \in \mathcal{X}} |\hat m_{\mathrm{RQK}}(\mathbf{x}) - m(\mathbf{x})| \leq C \zeta_n(\bar \theta_n) \|m\|_{\infty} + C \inf_{r \in [h_n, D]}\{\varphi_m(r) + 2\|m\|_{\infty} \psi_{p, n}(r)\}.
\end{align}
\end{proof}

\subsubsection{Proof of Corollary 1}

\begin{proof}
Following the proof of Theorem \ref{theorem:quasirandom}, in \eqref{eqn:bounded_KR}, instead of splitting, directly use the H\"older property: \begin{align}
    |\hat m_{\mathrm{KR}}(\mathbf{x}) - m(\mathbf{x})| \lesssim \frac{\sum_{i = 1}^n R(\mathbf{x}, \mathbf{x}_i) \|\mathbf{x} - \mathbf{x}_i\|_{\mathbf{w}}^{\alpha}}{\sum_{j = 1}^n R(\mathbf{x}, \mathbf{x}_i)} \asymp \frac{\sum_{i = 1}^n \|\mathbf{x} - \mathbf{x}_i\|_{\mathbf{w}}^{\alpha - 2p}}{\sum_{j = 1}^n \|\mathbf{x} - \mathbf{x}_i\|_{\mathbf{w}}^{-2p}}.
\end{align}

Again using the shell-bound argument, \begin{align}
    \|\mathbf{x} - \mathbf{x}_i\|_{\mathbf{w}}^{\alpha - 2p} &\asymp h_n^{\alpha - 2p} \sum_{m = 0}^{M_n - 1} \lambda^{m(d + \alpha - 2p)}, \\
    \|\mathbf{x} - \mathbf{x}_i\|_{\mathbf{w}}^{-2p} &\asymp h_n^{-2p} \sum_{m = 0}^{M_n - 1} \lambda^{m(d - 2p)}.
\end{align} Taking the ratio yields: \begin{itemize}
    \item When $p > (d + \alpha)/2$, both series converge, yielding $\mathcal{O}(h_n^{\alpha})$;

    \item When $d/2 < p \leq (d + \alpha)/2$, the denominator converges, but the numerator is dominated by the last shell, yielding $\mathcal{O}(h_n^{2p -d})$;

    \item When $p = 2$, the denominator is logarithmic, yielding $\mathcal{O}(1/\log(D/h_n))$,
\end{itemize} which gives the rates of $\phi_n$.

Meanwhile, $\bar \theta_n = c q_n^{1 + \delta}$ gives \begin{align}
    \zeta_n(\bar \theta_n) \lesssim \begin{cases}
        q_n^{2p\delta}, & p > d/2, \\
        q_n^{d\delta} \log(D/q_n), & p = d/2.
    \end{cases} 
\end{align} Comparing with the rates of $\phi_n$, we need \begin{enumerate}
    \item [(i)] $2p\delta \geq \alpha$ when $p > (d + \alpha)/2$, and
    \item [(ii)] $2p\delta \geq 2p - d$ when $d/2 < p \leq (d + \alpha)/2$
\end{enumerate} for $\phi_n$ to dominate. This is exactly satisfied by the condition $\delta \geq \frac{1}{2p} \min\{\alpha, 2p - d\}$.
\end{proof}

\section{Derivation of leave-one-out expressions}

Here, we derive the expressions for leave-one-out point estimators and prediction errors given in Section 4.2. In addition to $\mathbf{z} = \mathbf{R}^{-1} \tilde{\mathbf{r}}(\mathbf{x}), \mathbf{\Lambda} = \mathbf{R}^{-1}$ specified in the paper, denote $\mathbf{d} = \mathbf{R} \tilde{\mathbf{c}}$.  

First, consider the point estimator. When sample $i$ is left out from the training set, the vectors $\tilde{\mathbf{r}}(\mathbf{x})$ and $\tilde{\mathbf{c}}$ have length $n - 1$. The denominator becomes \begin{align}
    [\tilde{\mathbf{r}}(\mathbf{x})]_{-i}^{\top} \tilde{\mathbf{c}}_{-i} = \tilde{\mathbf{r}}(\mathbf{x})^{\top} \tilde{\mathbf{c}} - \theta^{-2p} R(\mathbf{x}, \mathbf{x}_i) \tilde{c}_i. \label{eqn:LOO_denom}
\end{align}

For the numerator, first consider $\mathbf{d}$. Because \begin{align}
    d_j = \sum_{k \neq i} R_{jk} \tilde{c}_k + R_{ji} \tilde{c}_i,
\end{align} when $i$ is left out, \begin{align}
    \mathbf{d}_{\sim i} = \mathbf{d}_{-i} - \tilde{c}_i \mathbf{R}_{-i, i},
\end{align} where $\mathbf{R}_{-i, i}$ denotes the $i$-th column, but leaving out the $i$-th row. Notice the difference between $\mathbf{d}_{\sim i}$ (the new $\mathbf{d}$ when sample $i$ is left out) and $\mathbf{d}_{-i}$ (vector $\mathbf{d}$ with the $i$-th element left out). This gives us the leave-one-out numerator: \begin{align}
    & [\mathbf{\tilde r}(\mathbf{x})]_{-i}^\top [\mathbf{R}_{-i, -i}]^{-1} \mathrm{diag}(\mathbf{d}_{\sim i}) \mathbf{y}_{-i} \\
    =& [\mathbf{\tilde r}(\mathbf{x})]_{-i}^\top [\mathbf{R}_{-i, -i}]^{-1} (\mathbf{d}_{-i} \odot \mathbf{y}_{-i}) - \tilde{c}_i [\mathbf{\tilde r}(\mathbf{x})]_{-i}^\top [\mathbf{R}_{-i, -i}]^{-1} (\mathbf{R}_{-i, i} \odot \mathbf{y}_{-i}) \\
    =& \left[\mathbf{z} (\mathbf{d} \odot \mathbf{y}) - \frac{z_i [\mathbf{\Lambda} \mathrm{diag}(\mathbf{d}) \mathbf{y}]_i}{\Lambda_{ii}}\right] - \tilde{c}_i \left[\sum_{j} z_j R_{ij} y_j - \frac{z_i[\sum_{l} \Lambda_{il} R_{il} y_l]}{\Lambda_{ii}} \right],
\end{align} where both expressions are from the application of the Sherman-Morrison formula.

As for the leave-one-out prediction variance, the denominator is just \eqref{eqn:LOO_denom} squared. For the quadratic term in the numerator, again using Sherman-Morrison: \begin{align}
    [\mathbf{\tilde r}(\mathbf{x})]_{-i}^\top [\mathbf{R}_{-i, -i}]^{-1} [\mathbf{\tilde r}(\mathbf{x})]_{-i} = \mathbf{\tilde r}(\mathbf{x})^\top \mathbf{R}^{-1} \mathbf{\tilde r}(\mathbf{x}) - \frac{z_i^2}{\Lambda_{ii}},
\end{align} and we have the desired expressions that do not involve the $-i$ notation. 

During the estimation of $(\mathbf{w}, \theta, \gamma)$, we only need the expressions for $\{\mathbf{x_i}\}_{i = 1}^n$. For $\mathbf{x} = \mathbf{x}_i$, we have the relationships \begin{align}
    \tilde{\mathbf{r}}(\mathbf{x_i}) \tilde{\mathbf{c}} = \theta^{-2p} d_i, \quad 
    [\mathbf{R}_{-i, -i}]^{-1} [\tilde{\mathbf{r}}(\mathbf{x}_i)]_{-i} = -\frac{\theta^{-2p} \mathbf{\Lambda}_{-i, i}}{\Lambda_{ii}}.
\end{align} Then the numerator becomes \begin{align}
    & -\frac{\theta^{-2p}}{\Lambda_{ii}} \Big[\mathbf{\Lambda}_{i, -i} (\mathbf{d}_{-i} \odot \mathbf{y}_{-i}) - \tilde{c}_i \mathbf{\Lambda}_{i, -i} (\mathbf{R}_{-i, i} \odot \mathbf{y}_{-i})\Big] \\
    =& -\frac{\theta^{-2p}}{\Lambda_{ii}} \Big[ \sum_{j \neq i} \Lambda_{ij} d_j y_j - \tilde{c}_i \sum_{j \neq i} \Lambda_{ij} R_{ij} y_j \Big] \\
    =& -\frac{\theta^{-2p}}{\Lambda_{ii}} \Big[[\mathbf{\Lambda} (\mathbf{d} \odot \mathbf{y})]_i - \Lambda_{ii} (\mathbf{d} \odot \mathbf{y}) - \tilde{c}_i \big\{[(\mathbf{\Lambda} \odot \mathbf{R}) \mathbf{y}]_{i} - \Lambda_{ii} R_{ii} y_i \big\}\Big].
\end{align}
The final expression: \begin{align}
    \hat m_{\mathrm{RQK}}(\mathbf{x}_i) = -\frac{[\mathbf{\Lambda} (\mathbf{d} \odot \mathbf{y})]_i - \Lambda_{ii} (\mathbf{d} \odot \mathbf{y}) - \tilde{c}_i \big\{[(\mathbf{\Lambda} \odot \mathbf{R}) \mathbf{y}]_{i} - \Lambda_{ii} R_{ii} y_i \big\}}{\Lambda_{ii} (d_i - R_{ii} \tilde{c}_i)}.
\end{align}

The variance term has \begin{align}
    R_{ii} - \mathbf{R}_{-i, i}^\top \Lambda_{-i, -i} \mathbf{R}_{-i, i} = \frac{1}{\Lambda_{ii}}.
\end{align} Hence, \begin{align}
    s^2_{\mathrm{RQK}}(\mathbf{x}_i) = \frac{1}{\Lambda_{ii} (d_i - R_{ii} \tilde{c}_i)^2}.
\end{align}

\section{Variants of conformal prediction}

\begin{figure}[tb]
\centering

\begin{subfigure}{0.49\textwidth}
\centering
\includegraphics[width=\textwidth]{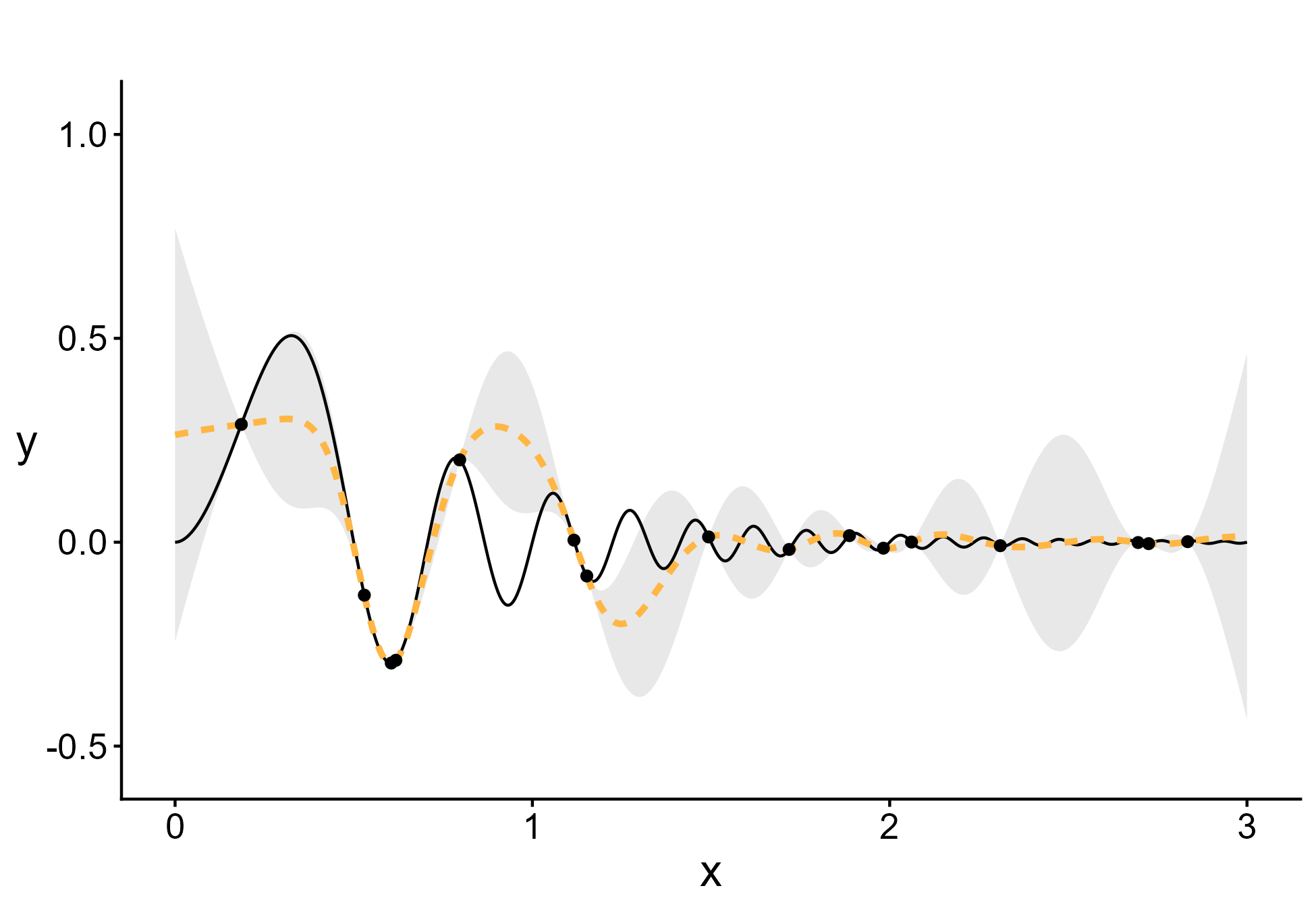}
\caption{Type 1.}
\end{subfigure}%
\begin{subfigure}{0.49\textwidth}
\centering
\includegraphics[width=\textwidth]{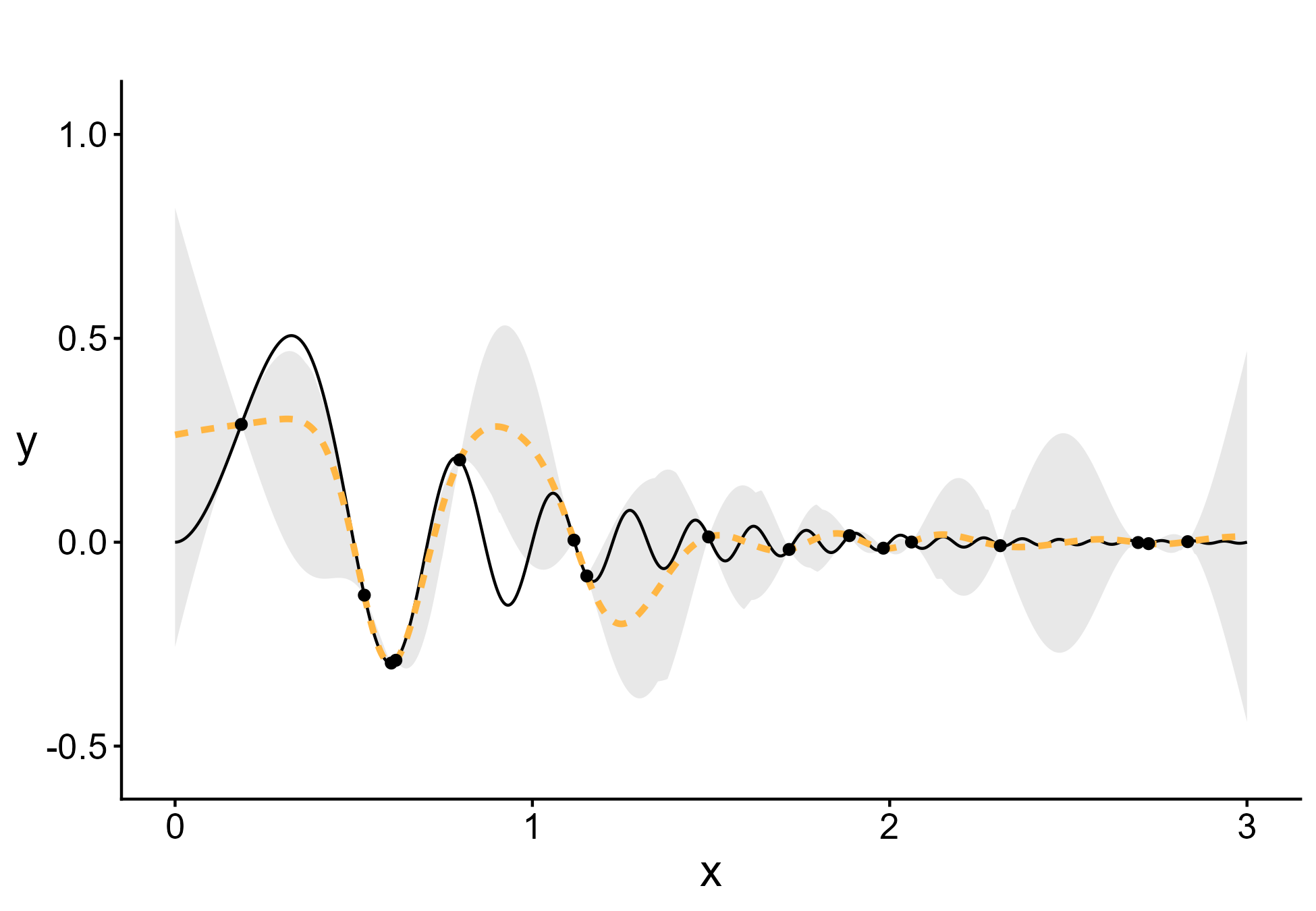}
\caption{Type 4.}
\end{subfigure}%
\caption{Comparison of different approaches to constructing conformal prediction intervals.}
\label{fig:toy_example_conformal}

\end{figure}

In Section 4.3 of the main text, we provided a construction of the conformal prediction interval. Here, we present other variants, all of which are implemented in our package.

\textit{Type 1}: Using the score $|y - \hat m_{\mathrm{RQK}}(\mathbf{x})|/s_{\mathrm{RQK}}(\mathbf{x})$ for a new $\mathbf{x}$, we can construct a confidence interval by \begin{align}
    \hat m_{\mathrm{RQK}}(\mathbf{x}) \pm Q_{1 - \eta}(\mathbf{S}) s_{\mathrm{RQK}}(\mathbf{x}), \label{eqn:raw_interval}
\end{align} where $Q_{1 - \eta}(\mathbf{S})$ is the $1 - \eta$ empirical quantile function of the scores $\mathbf{S} = (S_1, \dots, S_n)$. This interval is guaranteed to be continuous and symmetric about $\hat m(\mathbf{x})$, but there are no coverage guarantees.

\textit{Type 2}: With the scores $S^{(i)}(\mathbf{x}, y)$ defined in the main text, the conformal set given by \citet{Barber_2021} is \begin{align}
    \mathcal{C}(\mathbf{x}) = \left\{y \in \mathbb{R}: \sum_{i = 1}^n \mathbb{I}\{S^{(i)}(\mathbf{x}, y) > S_i\} < (1 - \eta)(n + 1) \right\}.
\end{align} Writing $L_i = \hat m_{\sim i}(\mathbf{x}) - S_i s_{\sim i}(\mathbf{x}), U_i = \hat m_{\sim i}(\mathbf{x}) - S_i s_{\sim i}(\mathbf{x})$, the expression simplifies to \begin{align}
    \bigcup_{t = 1}^{n - k + 1} [L_{[n - t + 1]}, U_{[t]}]
\end{align} where $k = \lfloor\eta(n + 1)\rfloor$, $L_{[\cdot]}, U_{[\cdot]}$ denotes the order statistics, and $[L_{[n - t + 1]}, U_{[t]}] = \emptyset$ if $L_{[n - t + 1]} > U_{[t]}$. This set has the $1 - 2\eta$ coverage guarantee, but it is not guaranteed to be a continuous interval.

\textit{Type 3}: To make the above set continuous, we simply write \begin{align}
    \mathrm{Conv}\left\{\bigcup_{t = 1}^{n - k + 1} [L_{[n - t + 1]}, U_{[t]}]\right\}.
\end{align}

\textit{Type 4}: This is the version presented in the main text. In the current notations, the interval can also be written as \begin{align}
    [L_{[n - k + 1]}, U_{[n - k + 1]}].
\end{align} It is not difficult to see Type 2 $\subseteq$ Type 3 $\subseteq$ Type 4, so they all possess the same $1 - 2\eta$ coverage guarantee, with Type 4 being the most conservative. In practice, these three sets are often identical, and typically wider than Type 1 because leave-one-out increases uncertainty. 

We now present a one-dimensional example for illustration: the function \begin{align}
    m(x) = \exp(-2x) \sin(4\pi x^2)
\end{align} is observed at 16 random locations in $[0, 3]$. In Figure \ref{fig:toy_example_conformal}, we plot the 90\% prediction intervals for Types 1 and 4. As expected, Type 1 produces smooth, symmetric football-shaped intervals, similar to those under a Gaussian process assumption. On the other hand, Type 4 produces upper and lower bounds that are continuous but not smooth in $\mathbf{x}$, as seen by the small ``kinks'' in the figure. 

\section{Additional simulation results}

Here, we present additional results for the deterministic functions in Section 5.1. Figures \ref{fig:noiseless_20d} and \ref{fig:noiseless_IS_20d} present the logarithm of MSE and IS, respectively, for $n = 20d$. The conclusions are similar to the main text. However, the relative advantage of RQK is smaller than that for $n = 40$. This is because RQK relies on the leave-one-out estimates, so a small dataset will be a limitation. For example, $n = 100$ for the Friedman function with no inert factors, which makes the estimates more volatile.

Figures \ref{fig:noiseless_60d} and \ref{fig:noiseless_IS_60d} present the results for $n = 60d$. Here, RQK is not only the most accurate, but also the least volatile. We specifically comment on the performance of \texttt{mlegp}: the package sometimes provides very accurate estimations (OTL circuit function with $n = 360, d = 6$), but can produce outliers in other examples. Also, it is the only method that produces prediction intervals with coverage much lower than the target 95\%, causing its IS to drop. Furthermore, the computation time is about one hour for the $n = 1200, d = 20$, which we view as unreasonable. The RQK methodology and code implementation details specifically aim to avoid these computational issues, making RQK a desirable choice for quick, accurate, and stable surrogate modeling.

\begin{figure}[p]
\centering

\begin{subfigure}{\textwidth}
\centering
\includegraphics[width=0.4\textwidth]{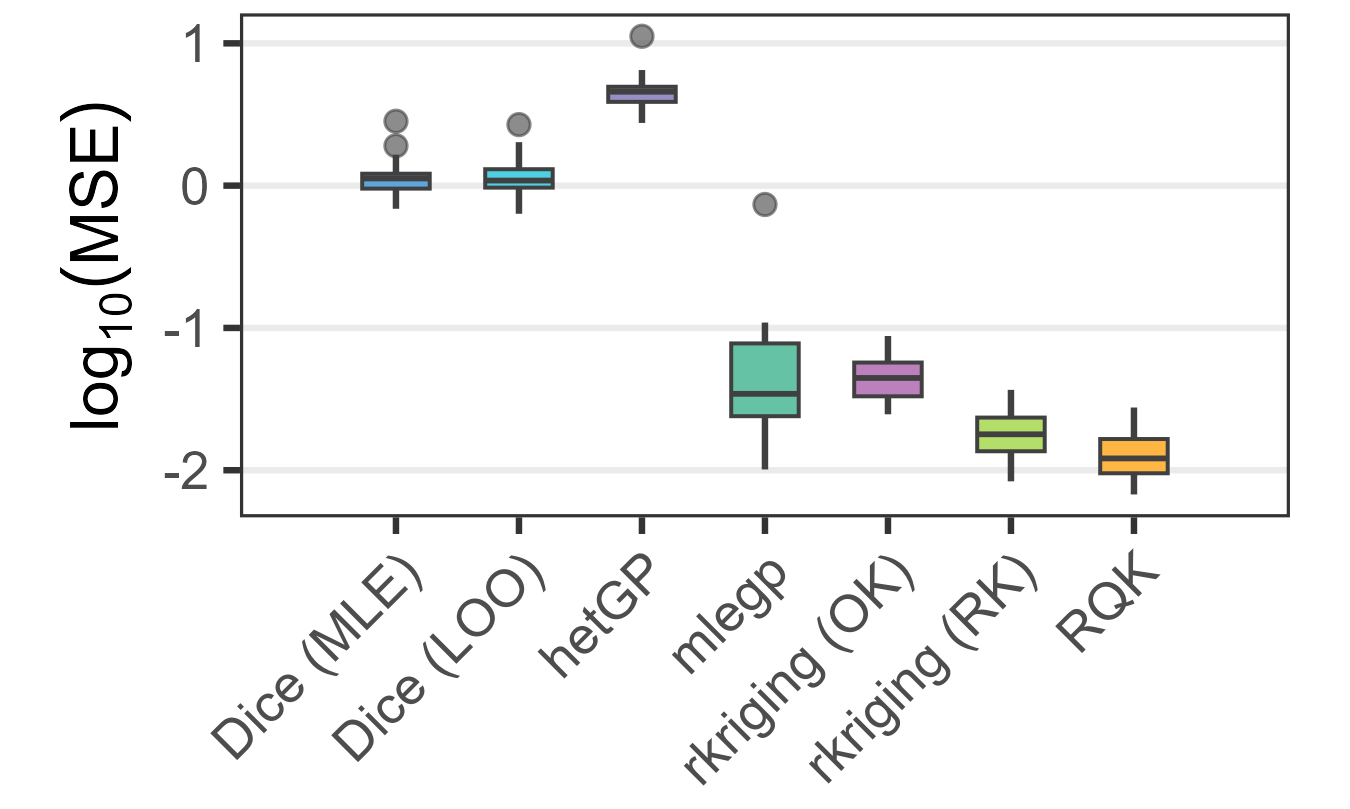}
\includegraphics[width=0.4\textwidth]{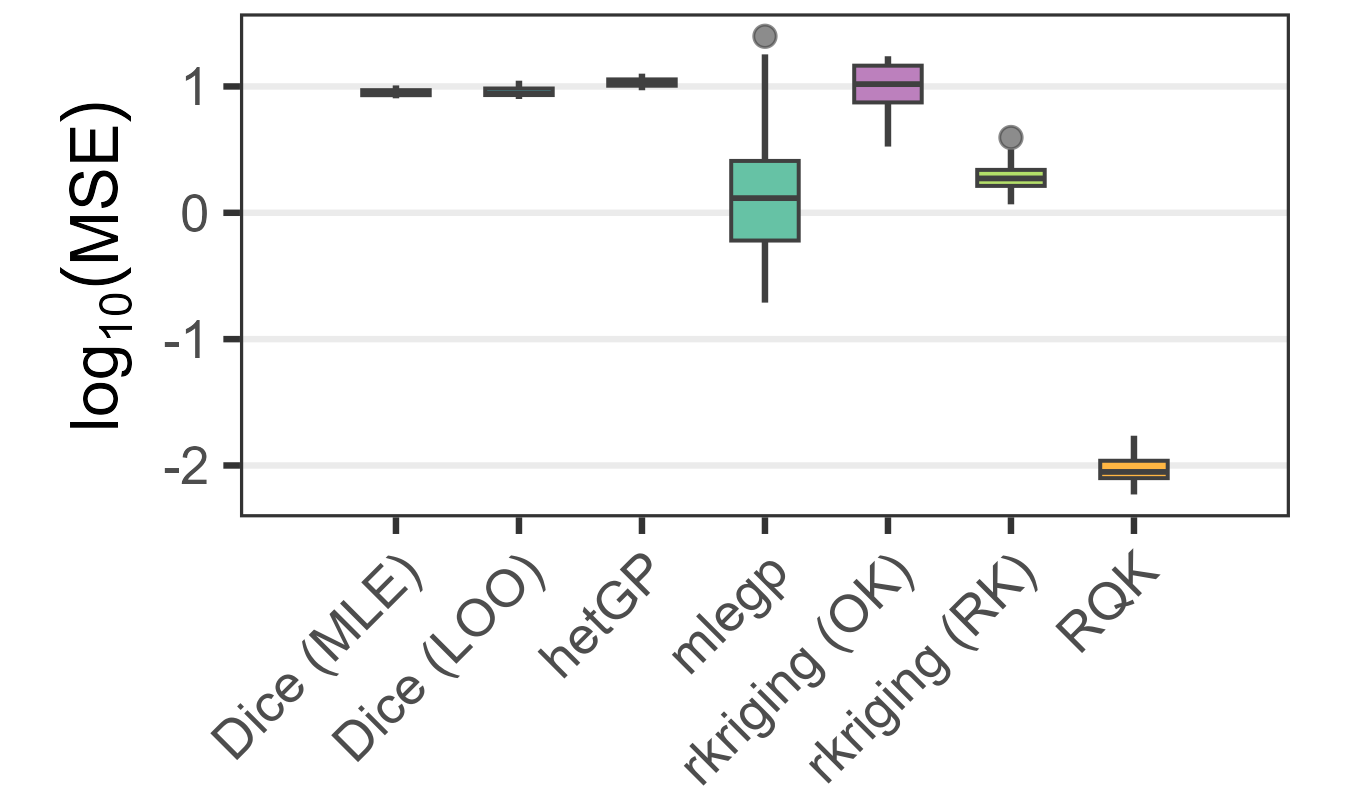}
\caption{Borehole function. Left: 8 dimensions. Right: 20 dimensions.}
\end{subfigure}%

\begin{subfigure}{\textwidth}
\centering
\includegraphics[width=0.4\textwidth]{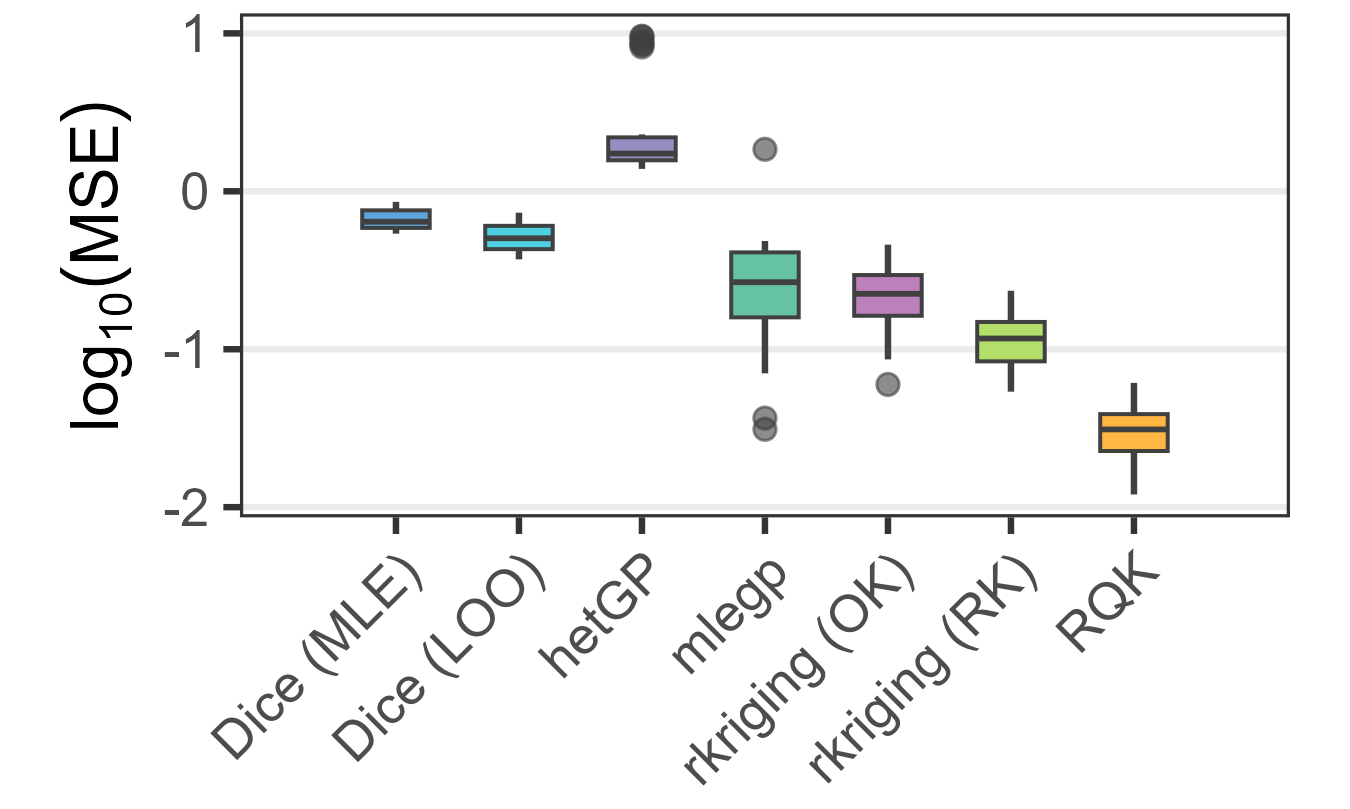}
\includegraphics[width=0.4\textwidth]{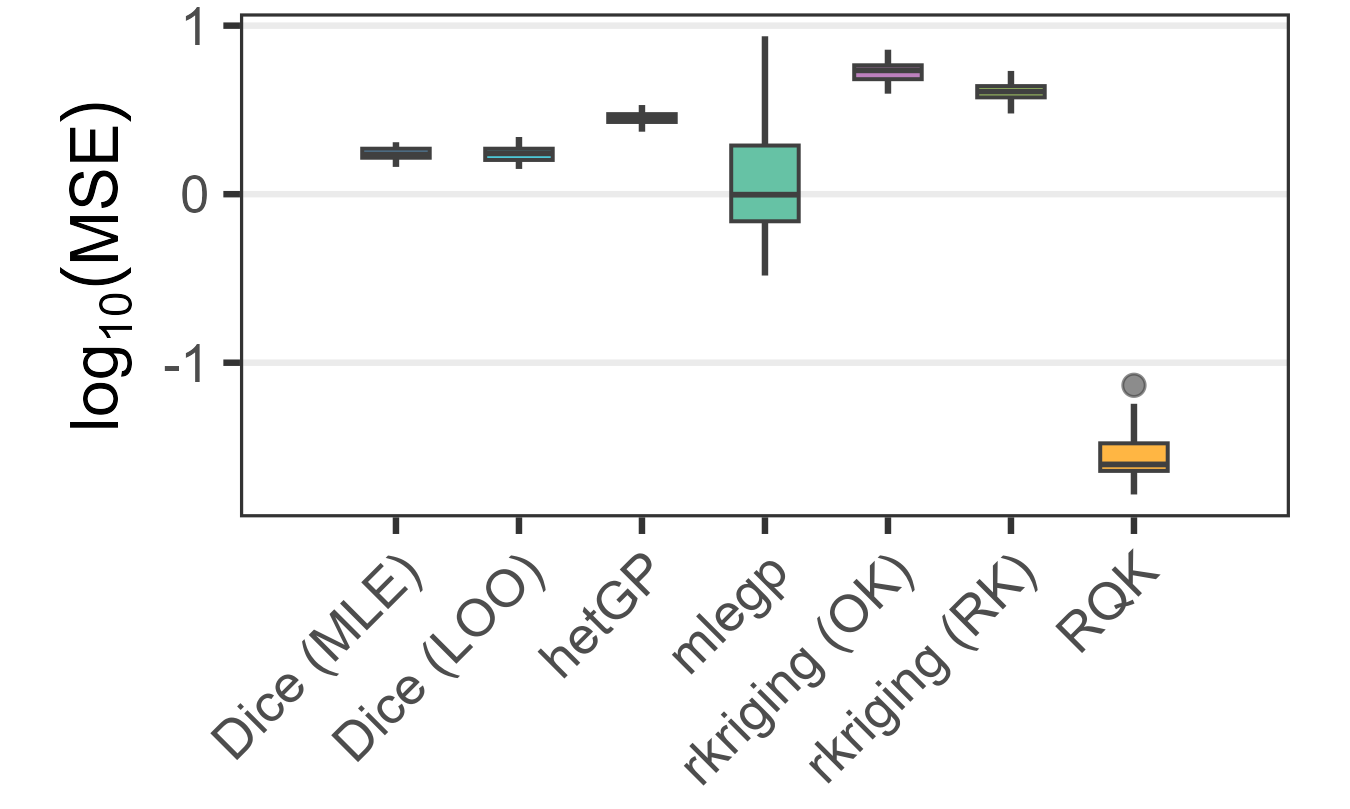}
\caption{Dette and Pepelyshev 8d function. Left: 8 dimensions. Right: 20 dimensions.}
\end{subfigure}%

\begin{subfigure}{\textwidth}
\centering
\includegraphics[width=0.4\textwidth]{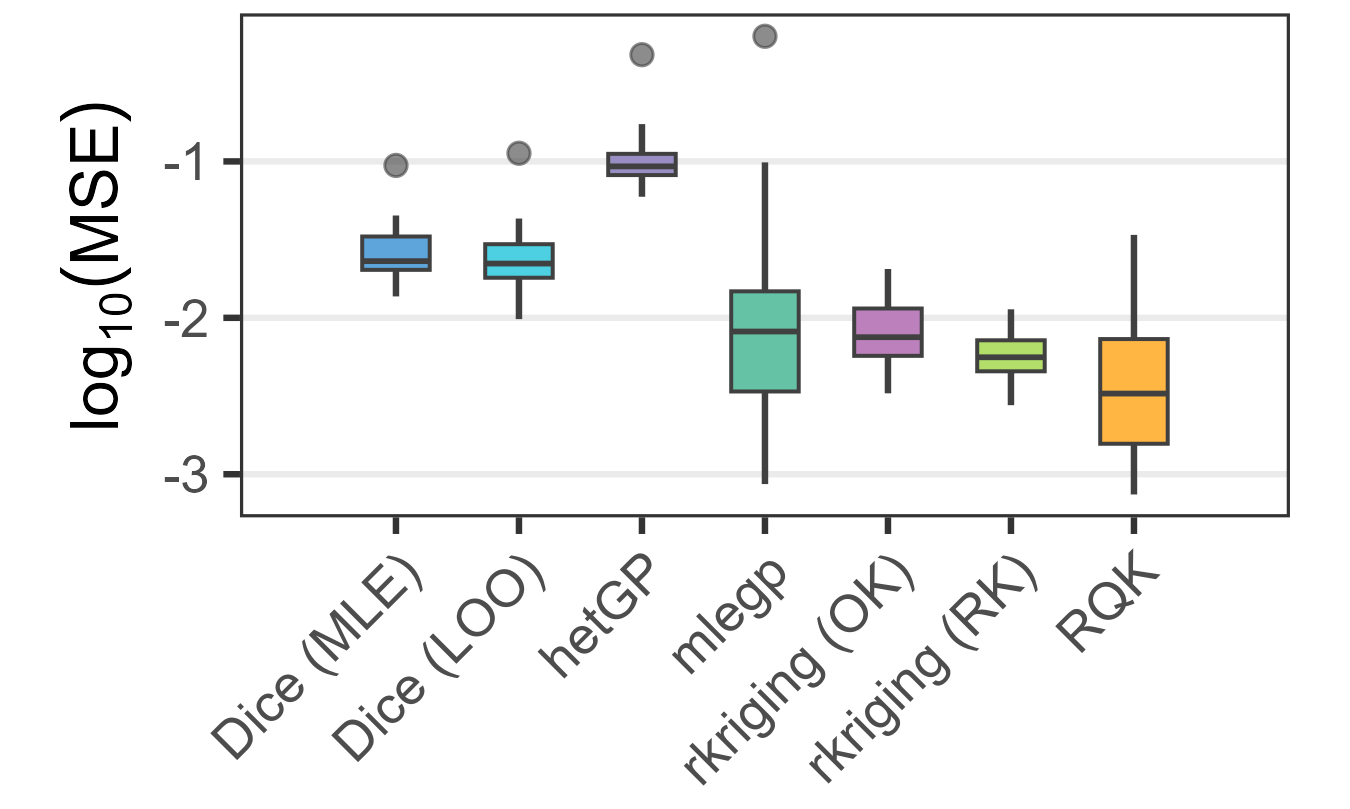}
\includegraphics[width=0.4\textwidth]{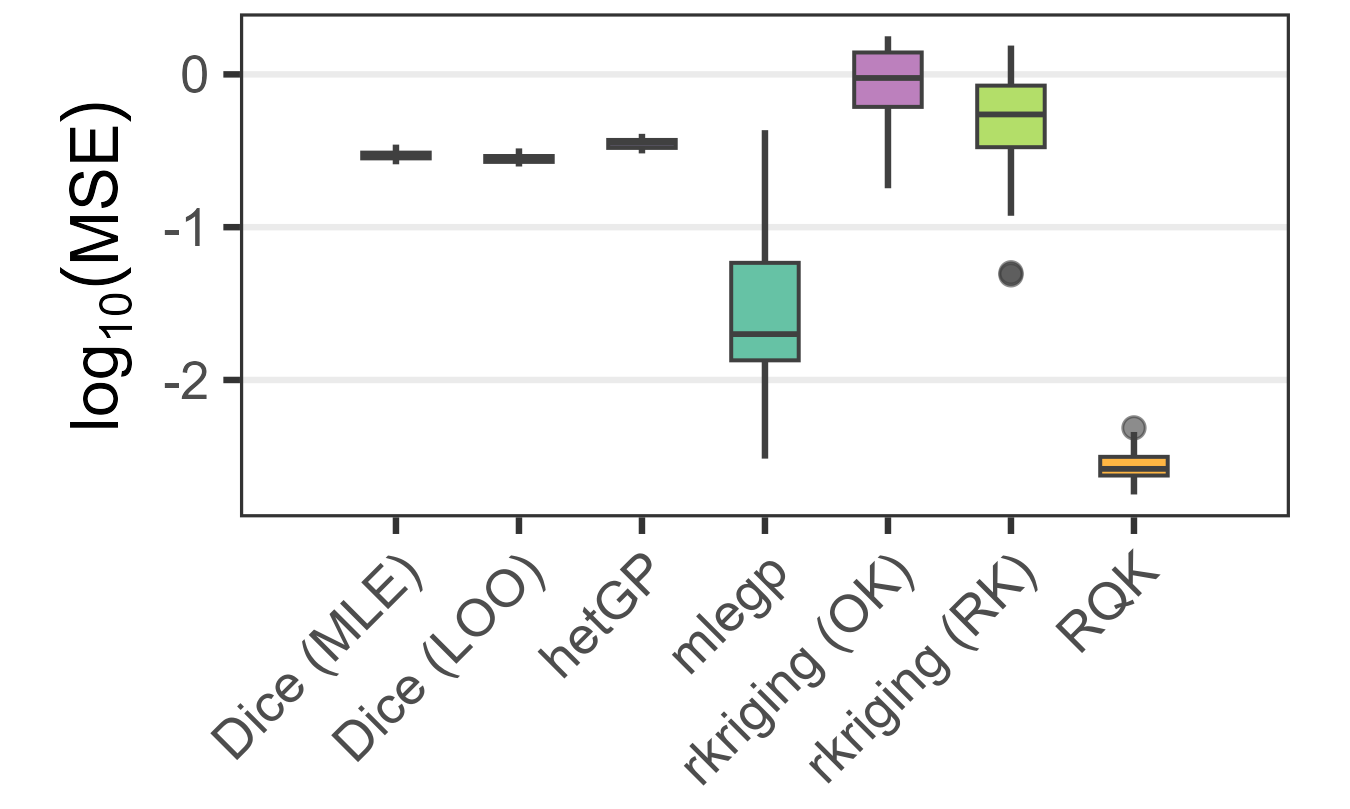}
\caption{Friedman function. Left: 5 dimensions. Right: 20 dimensions.}
\end{subfigure}%

\begin{subfigure}{\textwidth}
\centering
\includegraphics[width=0.4\textwidth]{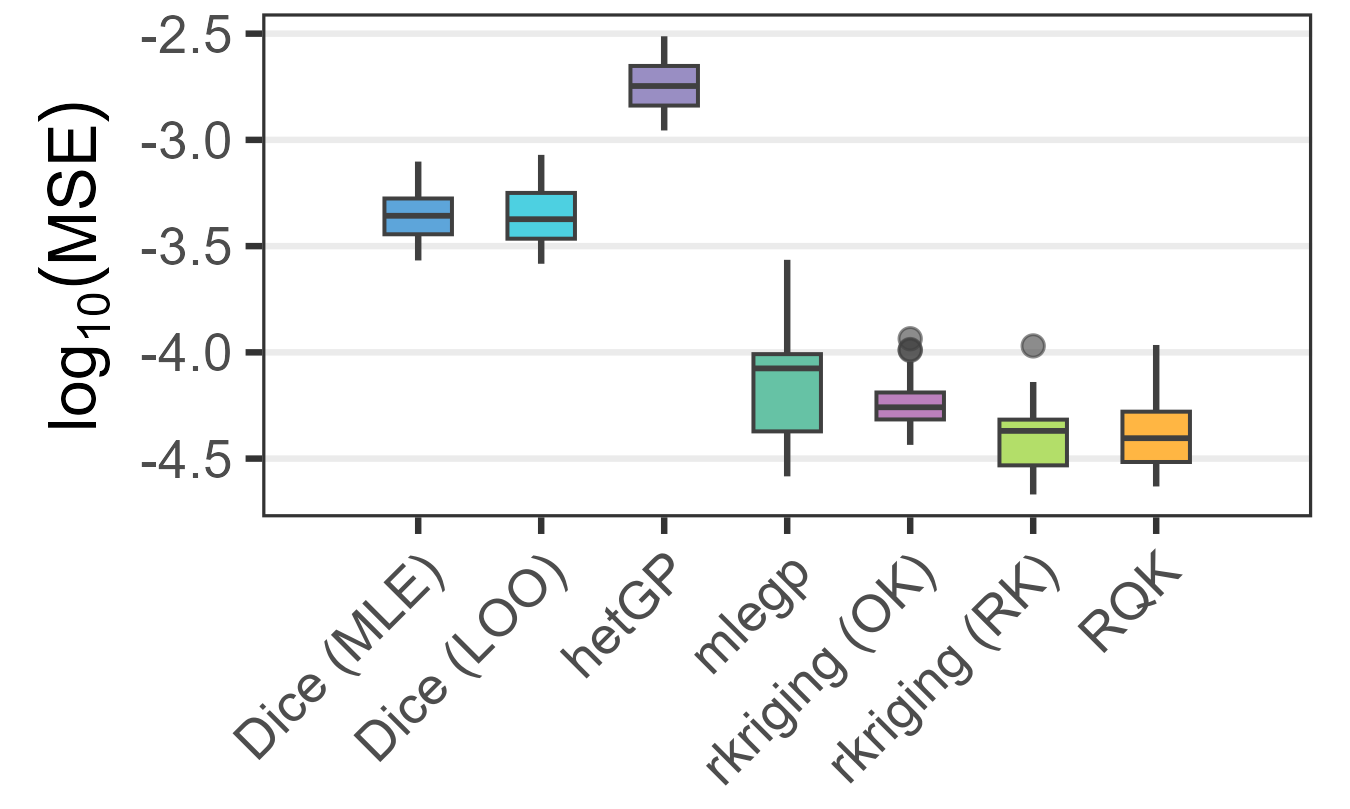}
\includegraphics[width=0.4\textwidth]{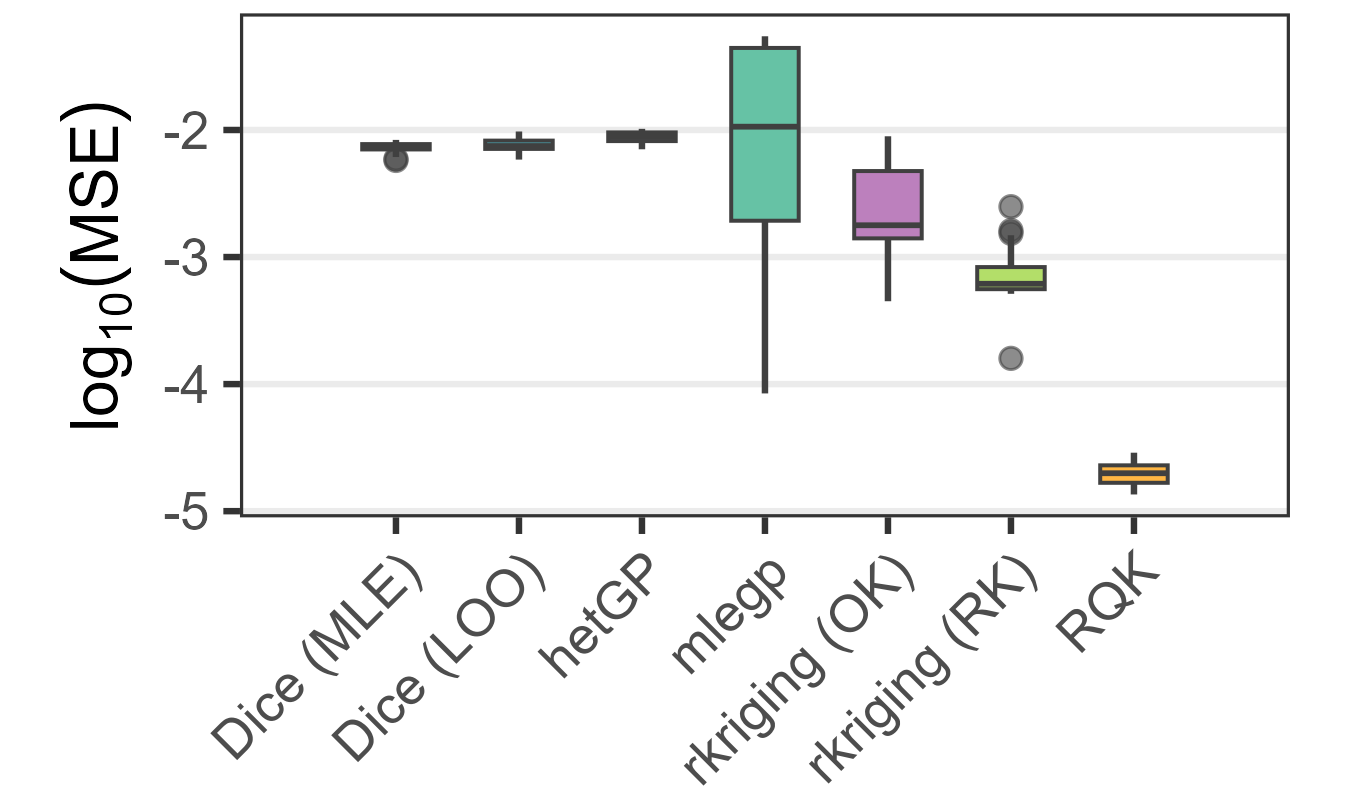}
\caption{OTL circuit function. Left: 6 dimensions. Right: 20 dimensions.}
\end{subfigure}%

\caption{Logarithm MSE plots for deterministic functions with $n = 20d$. The original functions are on the left, and functions with augmented inert variable are on the right. Each experiment is repeated over 30 random seeds for data generation.}
\label{fig:noiseless_20d}

\end{figure}

\begin{figure}[p]
\centering

\begin{subfigure}{\textwidth}
\centering
\includegraphics[width=0.4\textwidth]{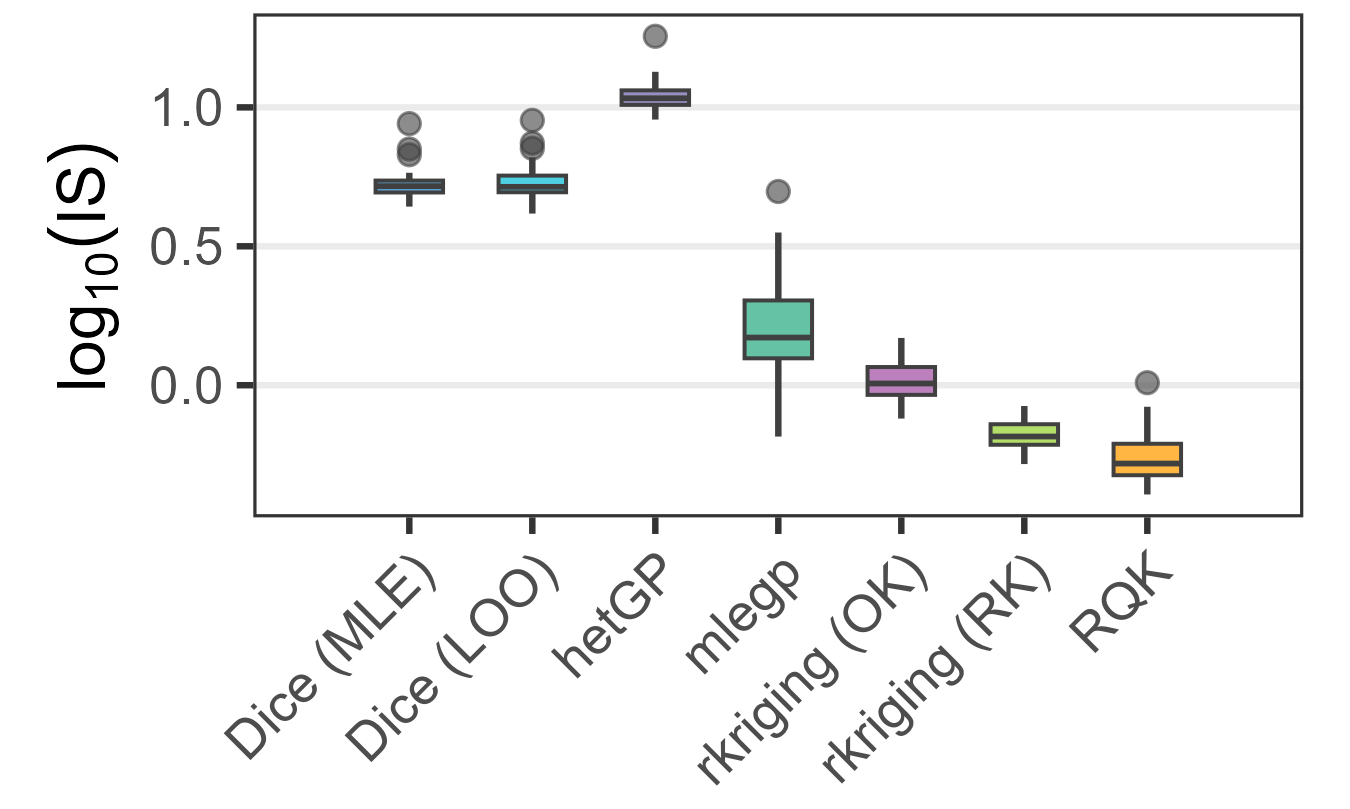}
\includegraphics[width=0.4\textwidth]{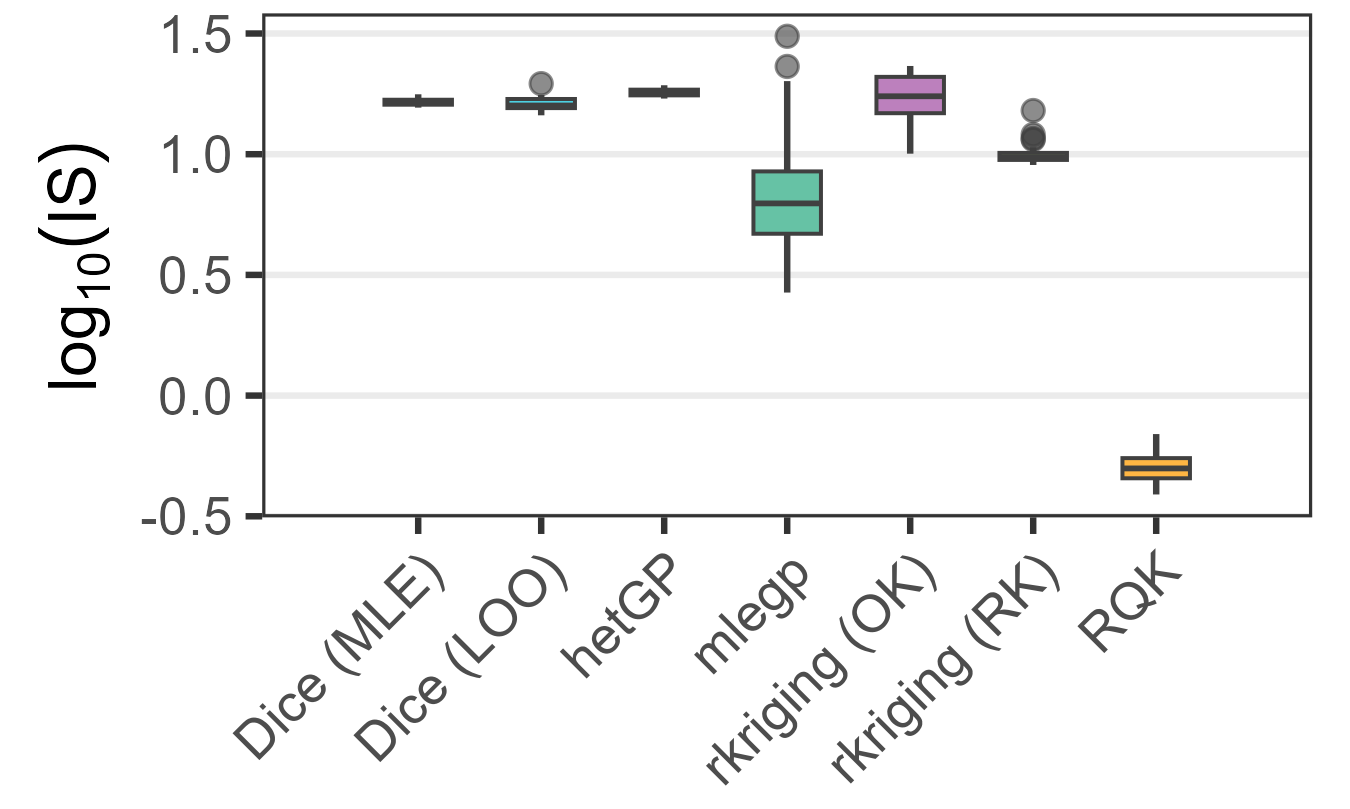}
\caption{Borehole function. Left: 8 dimensions. Right: 20 dimensions.}
\end{subfigure}%

\begin{subfigure}{\textwidth}
\centering
\includegraphics[width=0.4\textwidth]{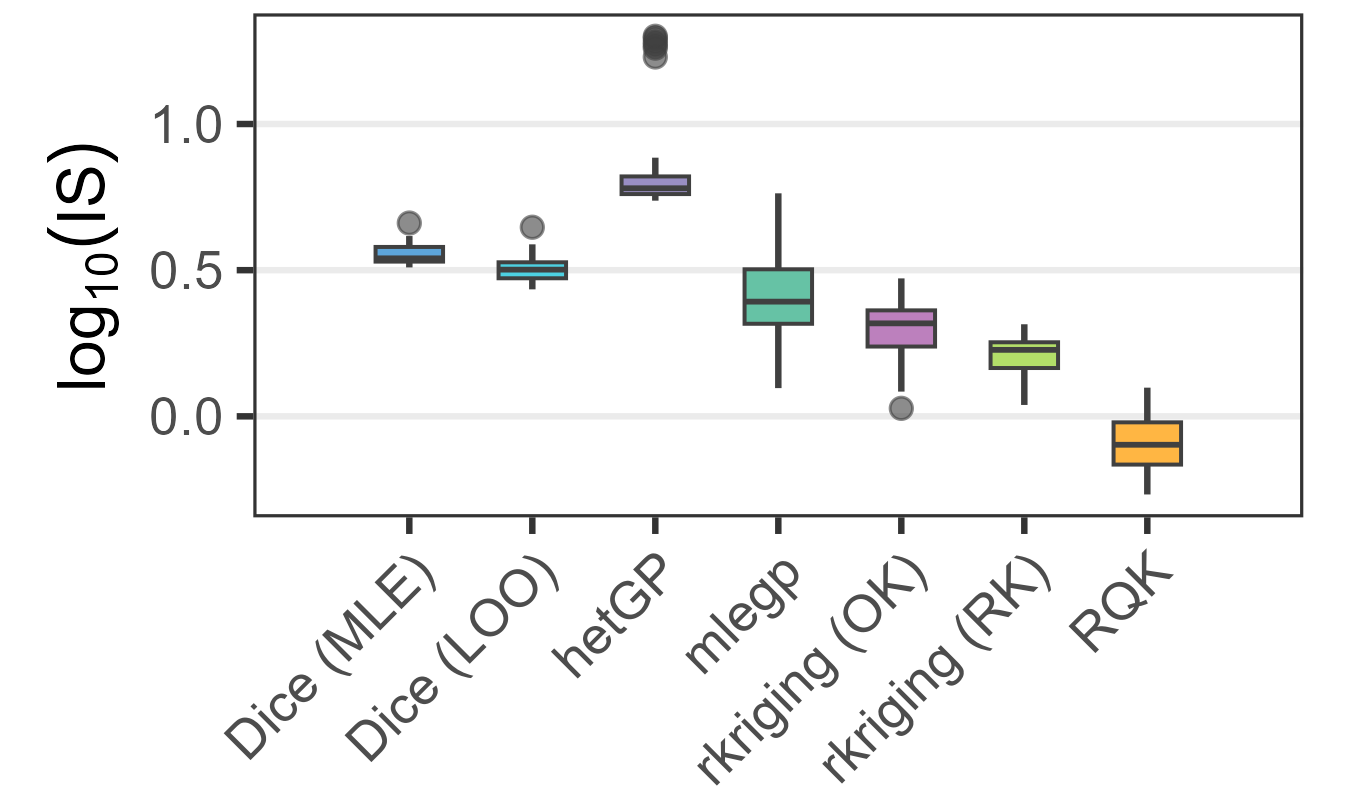}
\includegraphics[width=0.4\textwidth]{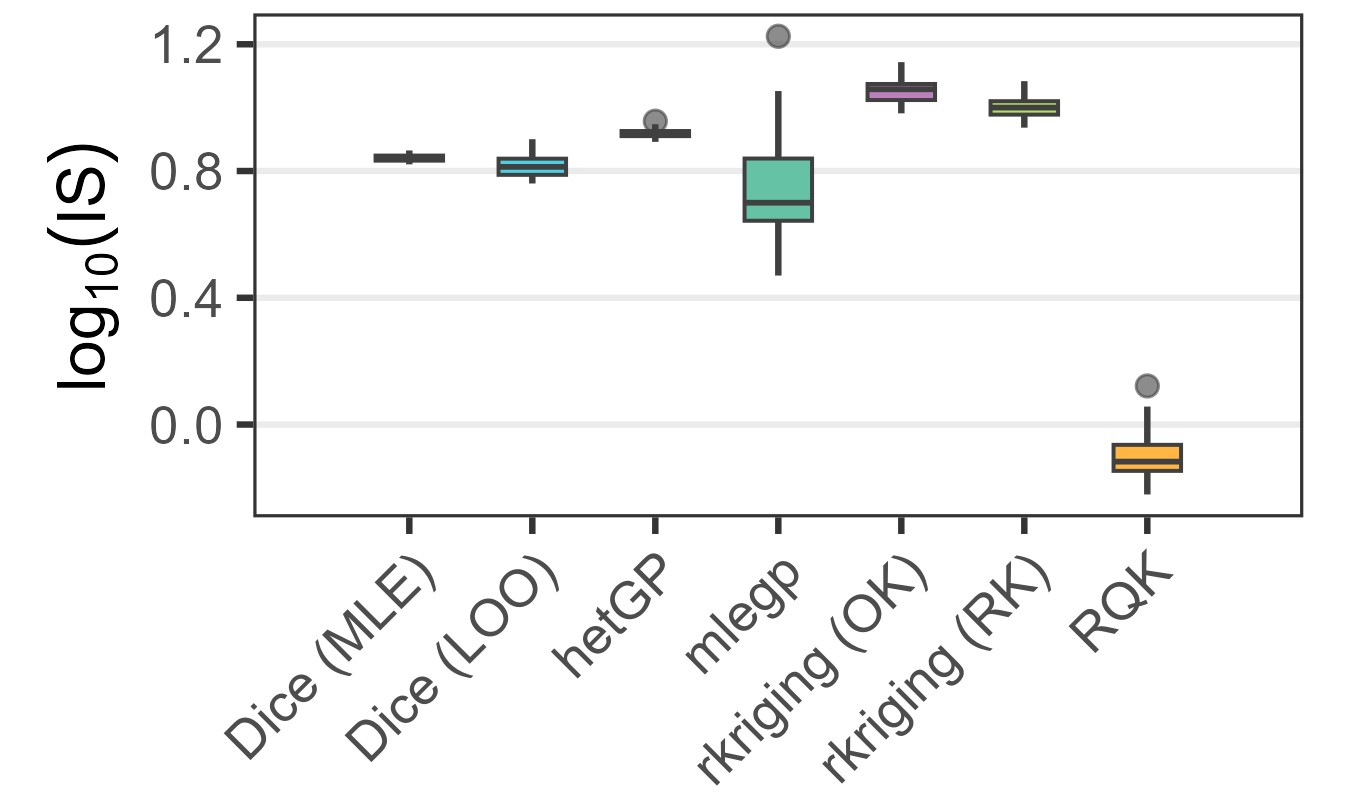}
\caption{Dette and Pepelyshev 8d function. Left: 8 dimensions. Right: 20 dimensions.}
\end{subfigure}%

\begin{subfigure}{\textwidth}
\centering
\includegraphics[width=0.4\textwidth]{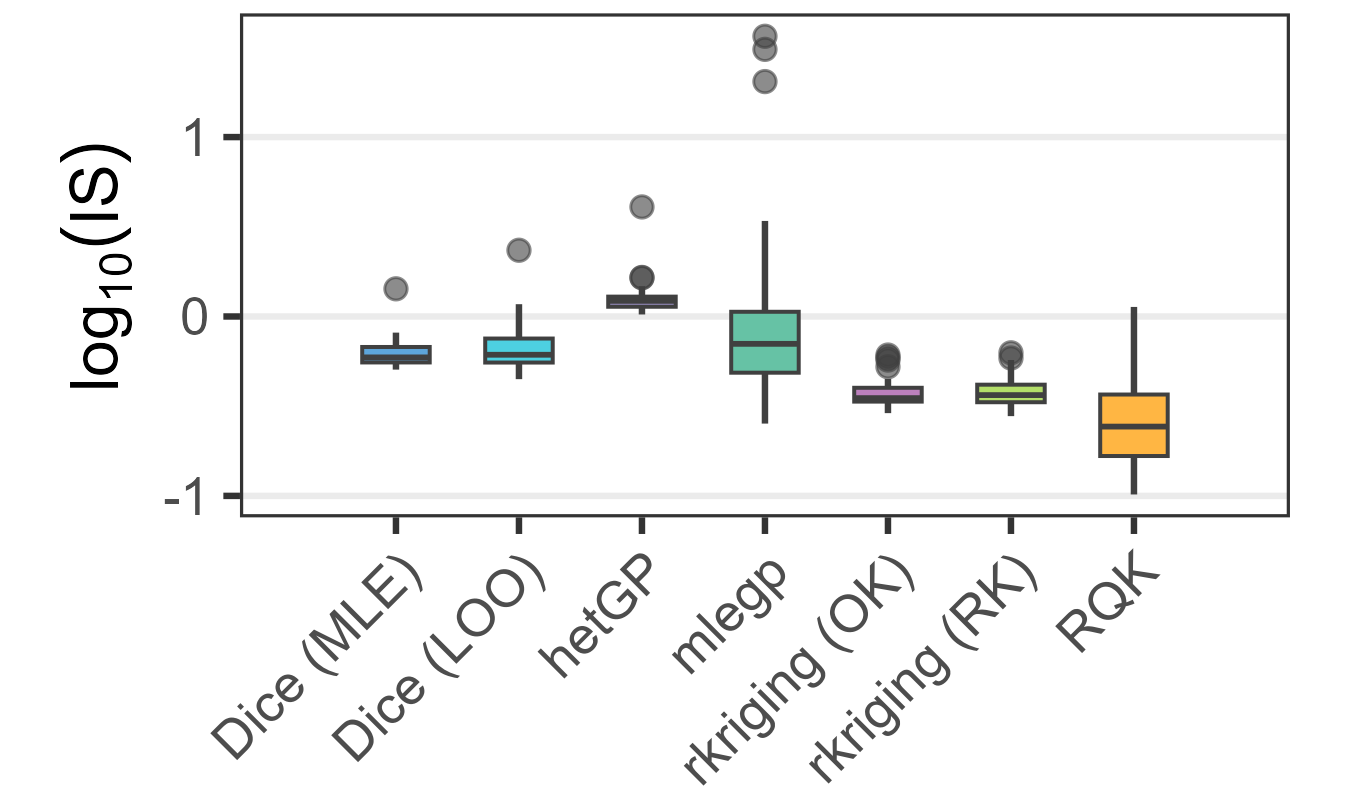}
\includegraphics[width=0.4\textwidth]{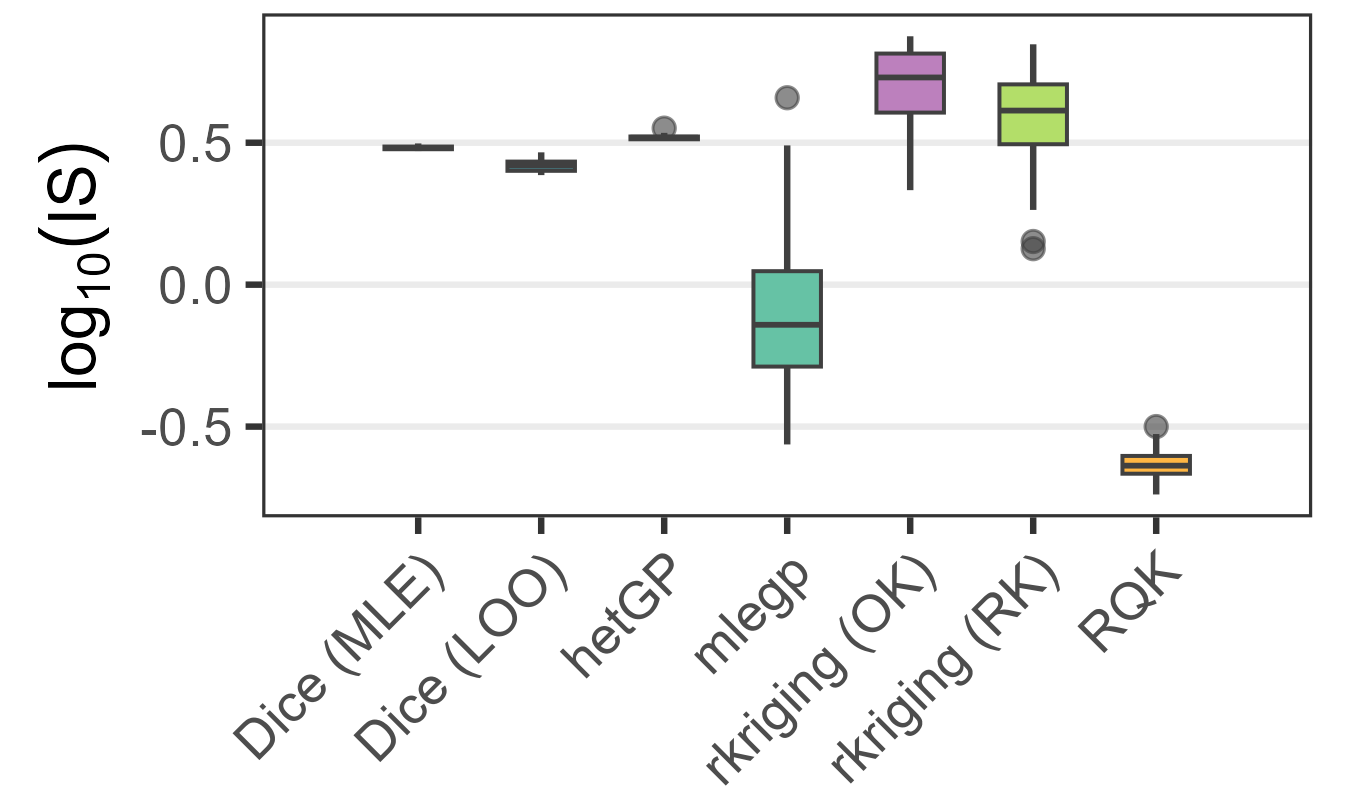}
\caption{Friedman function. Left: 5 dimensions. Right: 20 dimensions.}
\end{subfigure}%

\begin{subfigure}{\textwidth}
\centering
\includegraphics[width=0.4\textwidth]{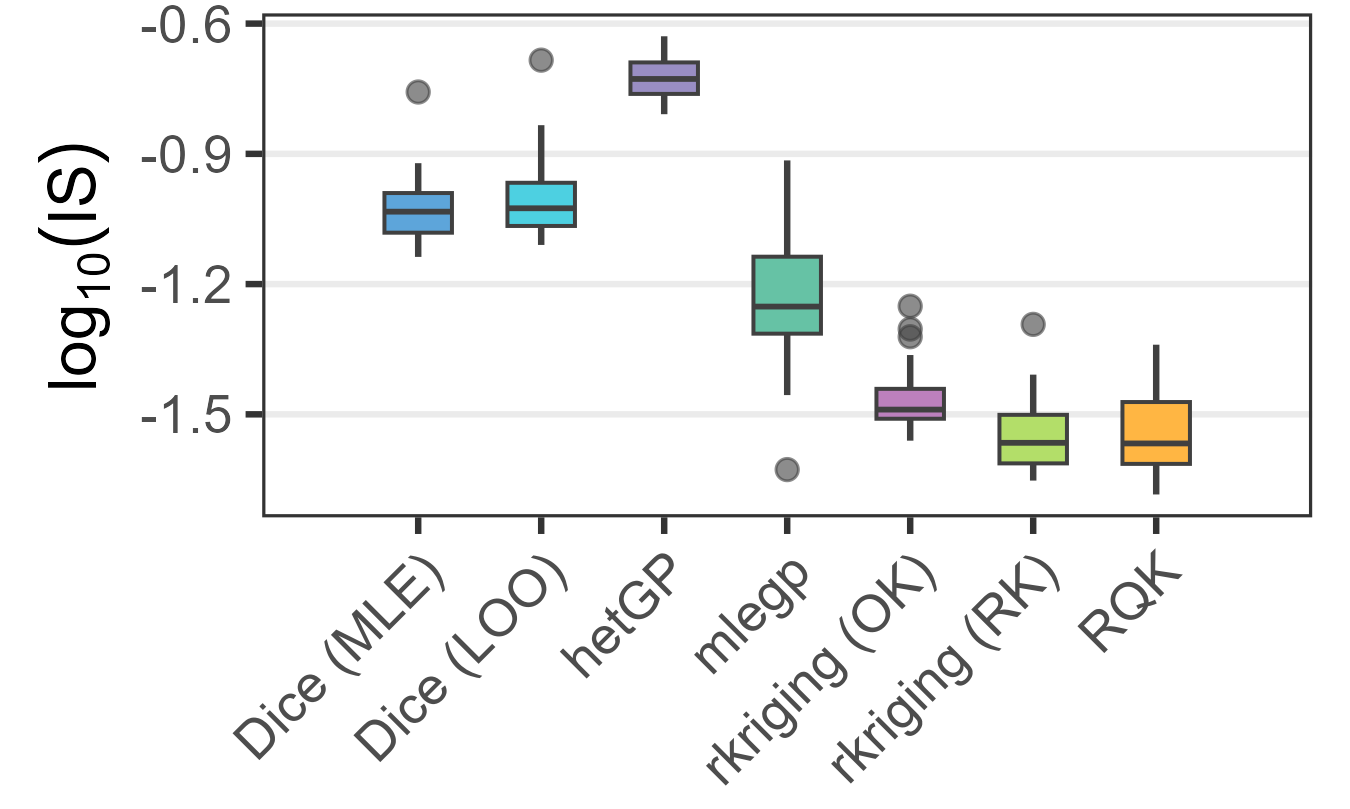}
\includegraphics[width=0.4\textwidth]{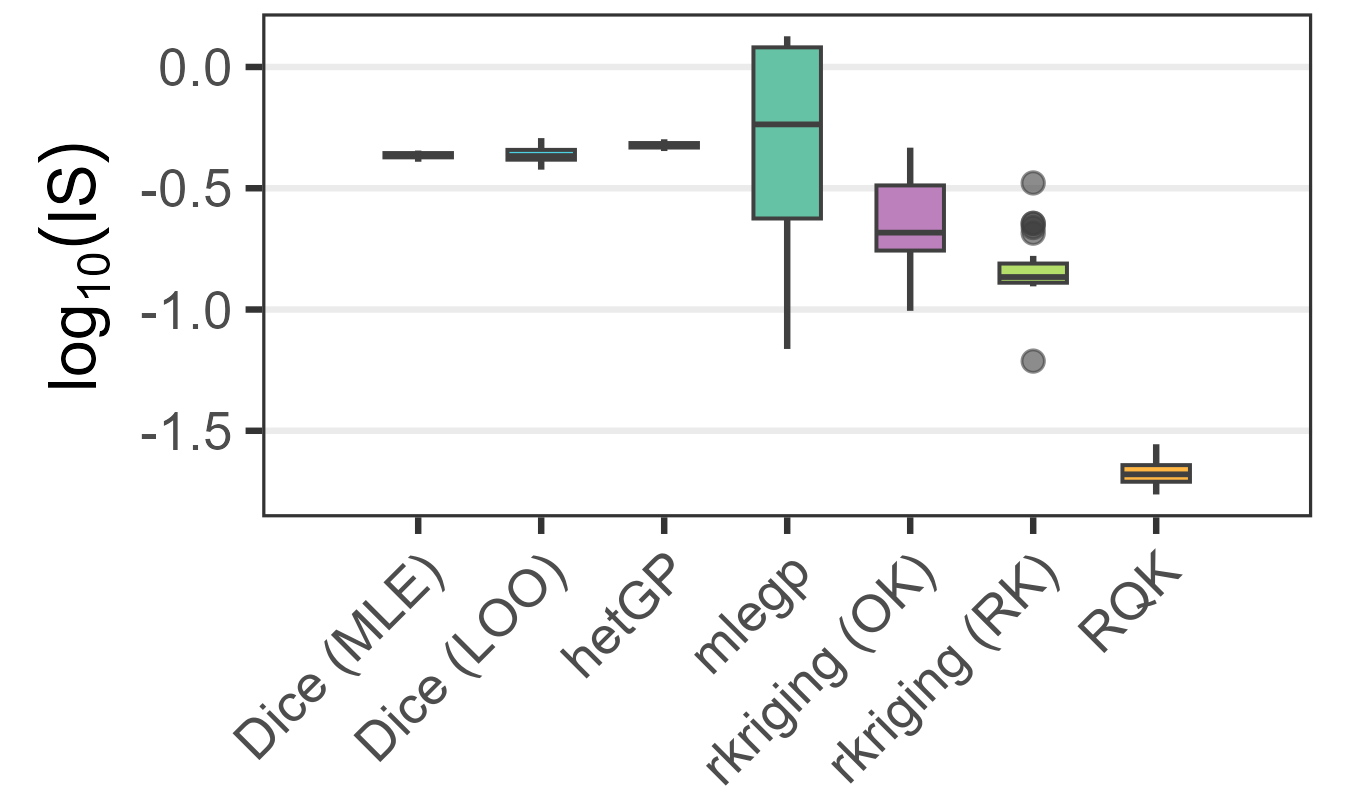}
\caption{OTL circuit function. Left: 6 dimensions. Right: 20 dimensions.}
\end{subfigure}%

\caption{Logarithm IS plots for deterministic functions with $n = 20d$. The original functions are on the left, and functions with augmented inert variable are on the right. Each experiment is repeated over 30 random seeds for data generation.}
\label{fig:noiseless_IS_20d}

\end{figure}

\begin{figure}[p]
\centering

\begin{subfigure}{\textwidth}
\centering
\includegraphics[width=0.4\textwidth]{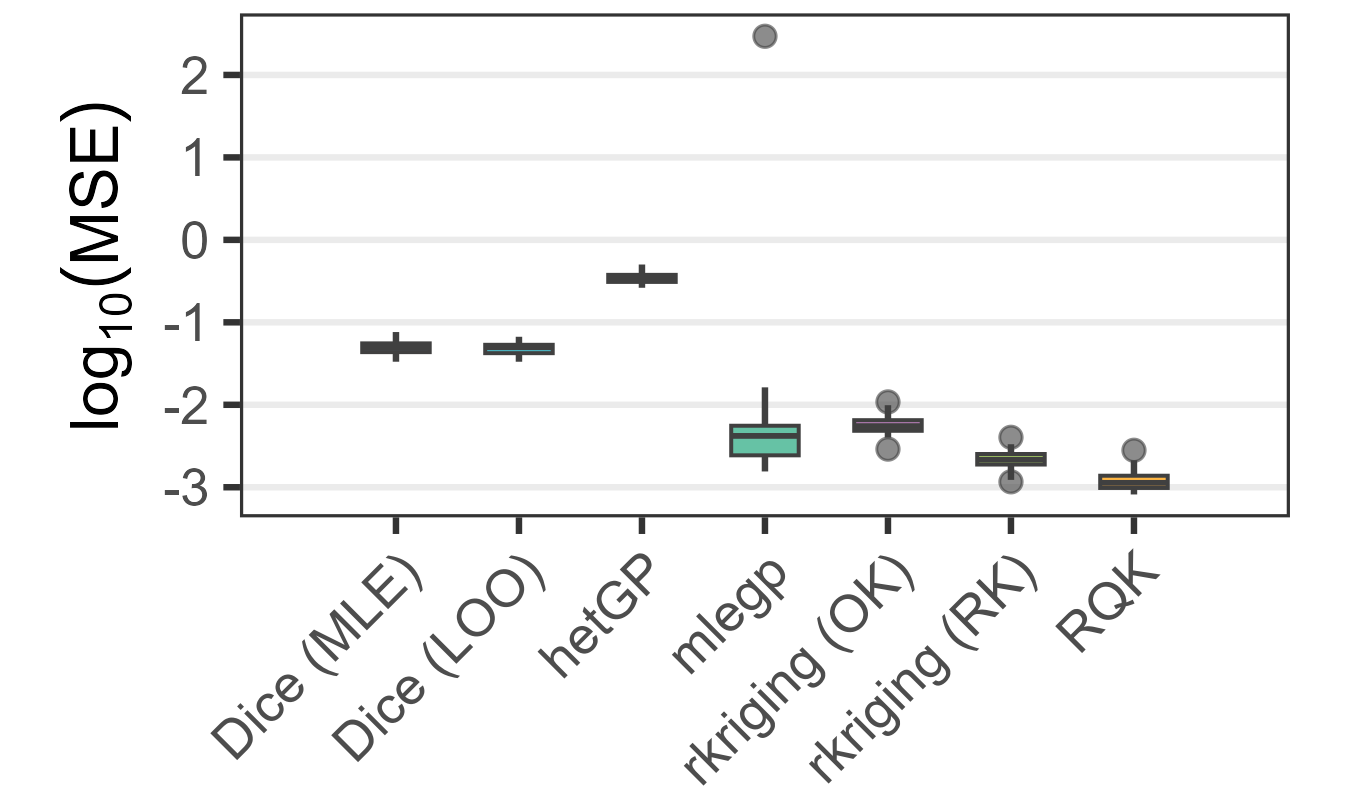}
\includegraphics[width=0.4\textwidth]{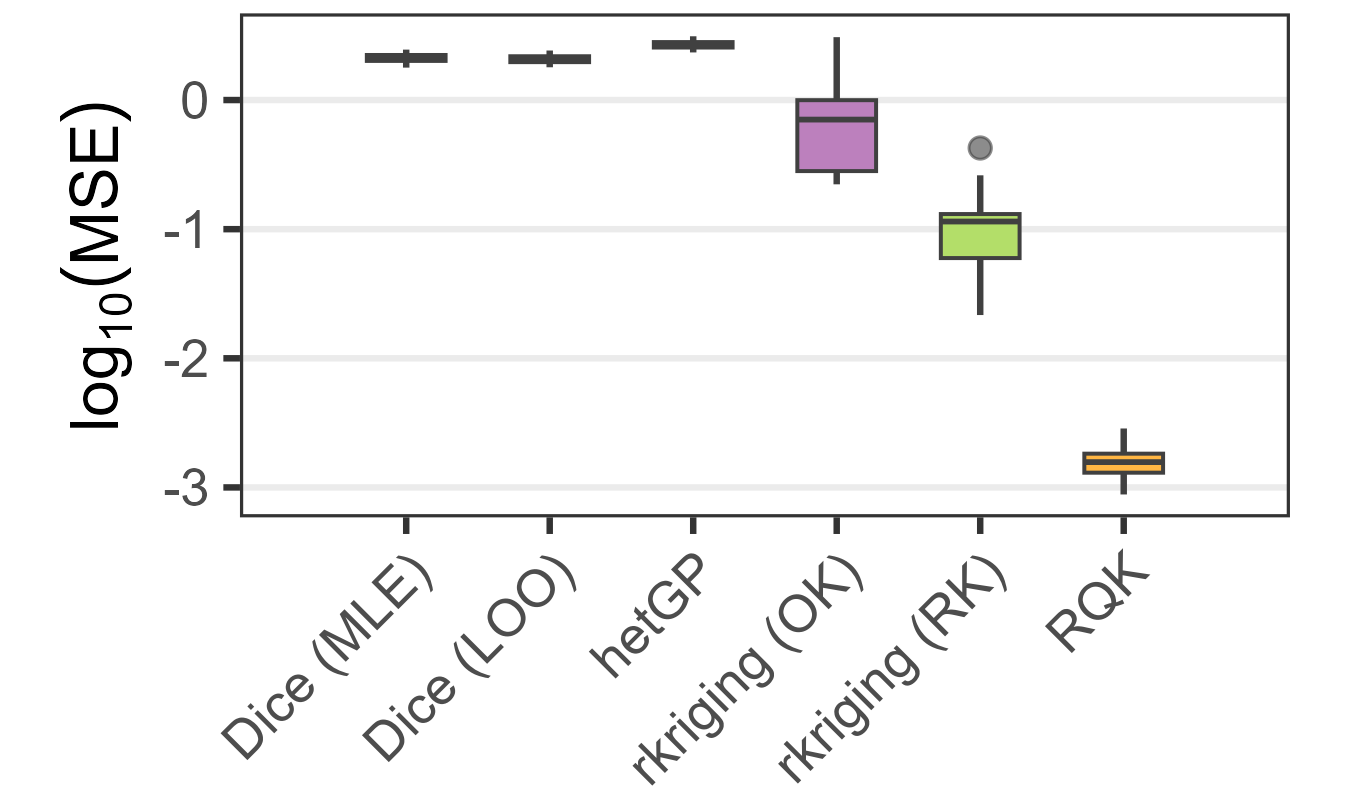}
\caption{Borehole function. Left: 8 dimensions. Right: 20 dimensions.}
\end{subfigure}%

\begin{subfigure}{\textwidth}
\centering
\includegraphics[width=0.4\textwidth]{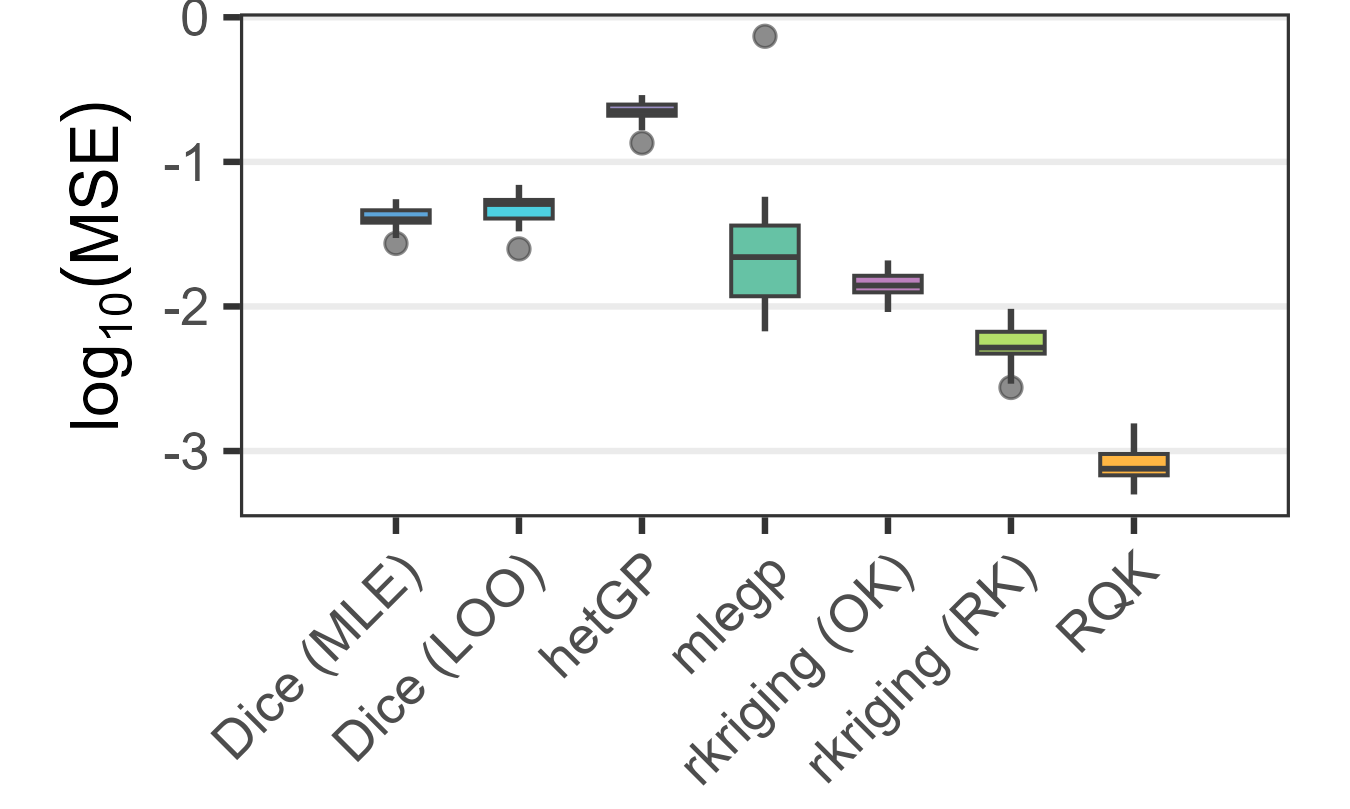}
\includegraphics[width=0.4\textwidth]{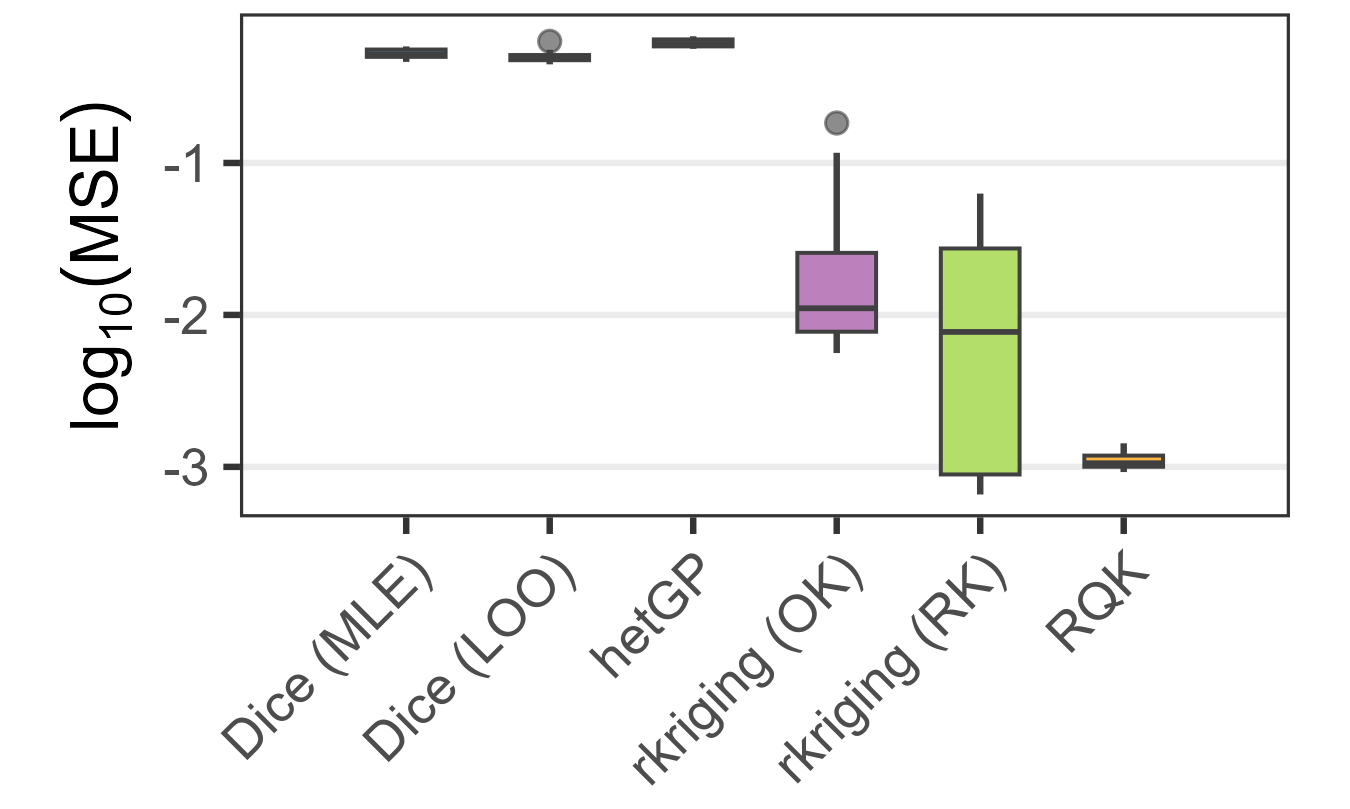}
\caption{Dette and Pepelyshev 8d function. Left: 8 dimensions. Right: 20 dimensions.}
\end{subfigure}%

\begin{subfigure}{\textwidth}
\centering
\includegraphics[width=0.4\textwidth]{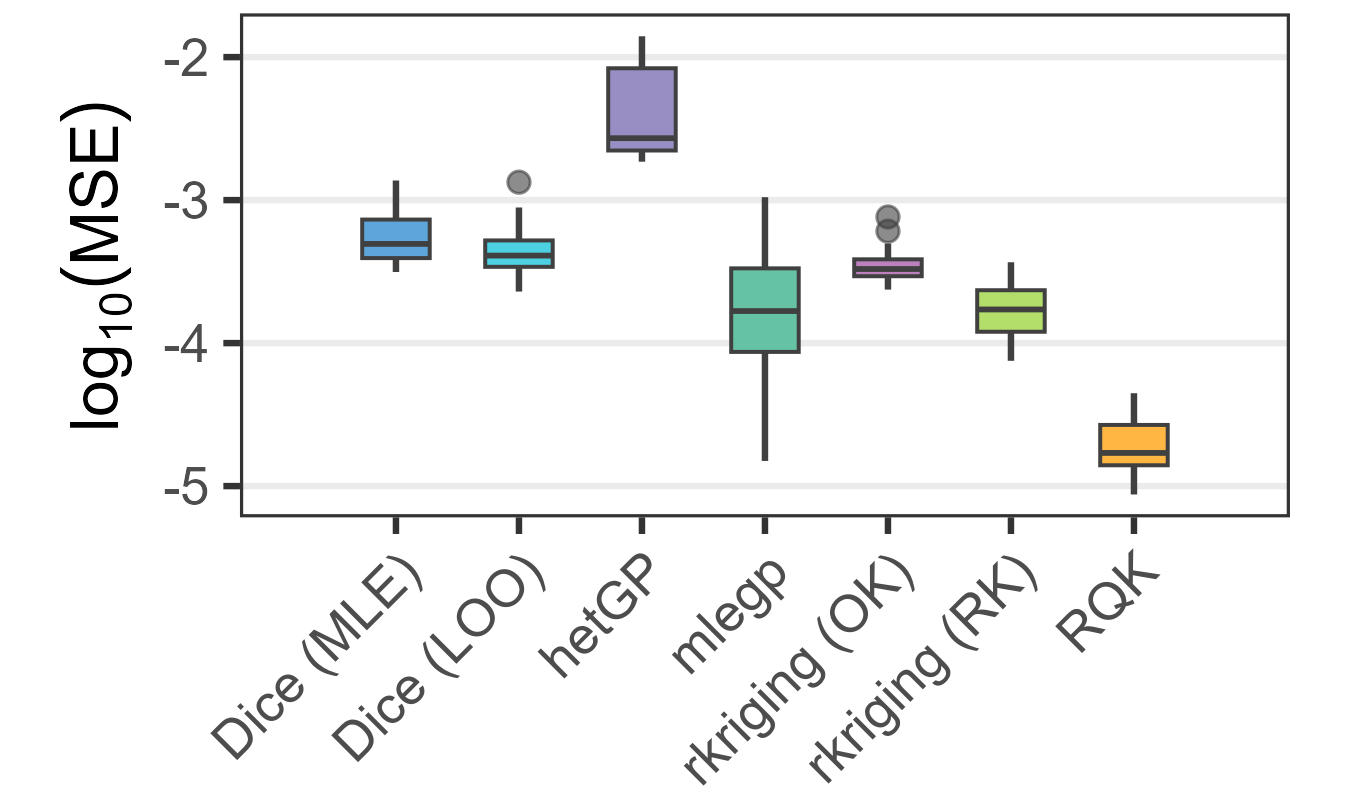}
\includegraphics[width=0.4\textwidth]{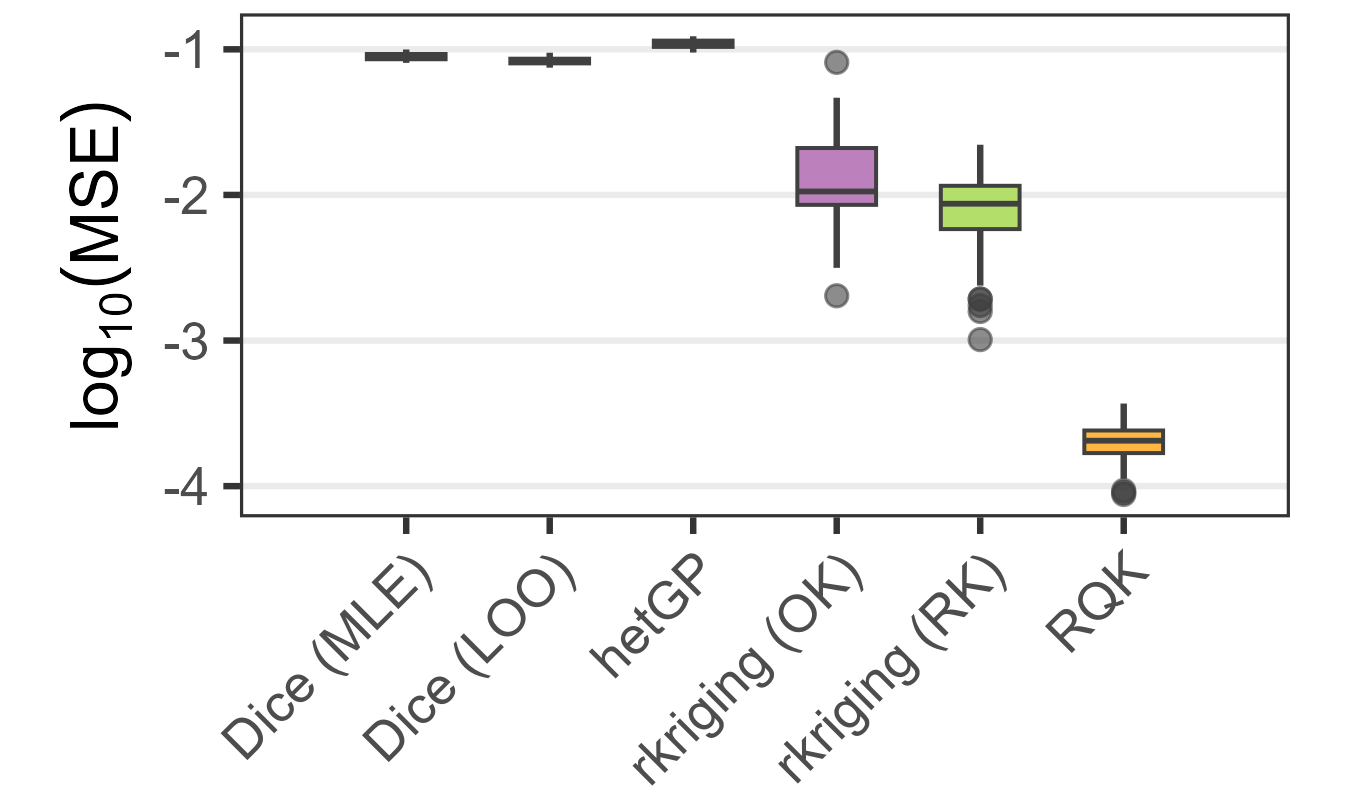}
\caption{Friedman function. Left: 5 dimensions. Right: 20 dimensions.}
\end{subfigure}%

\begin{subfigure}{\textwidth}
\centering
\includegraphics[width=0.4\textwidth]{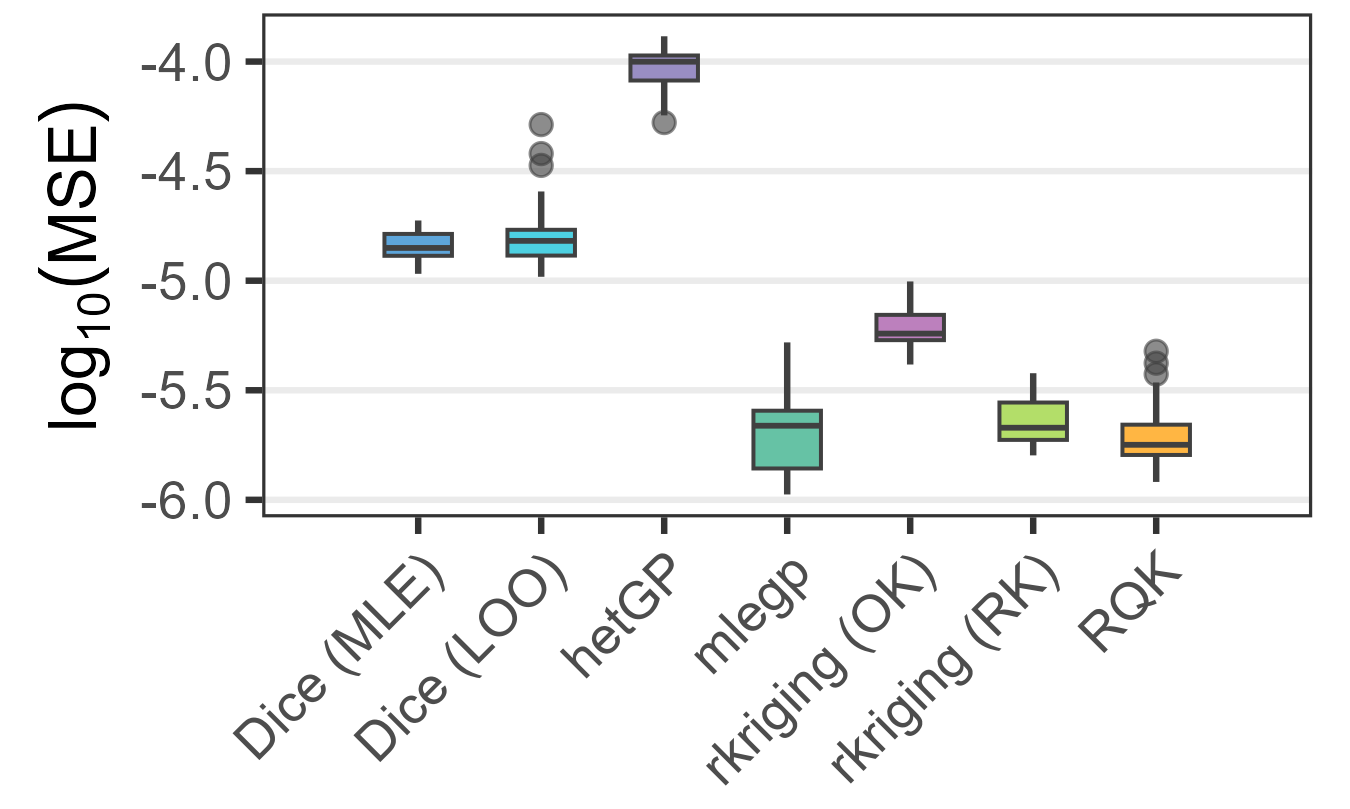}
\includegraphics[width=0.4\textwidth]{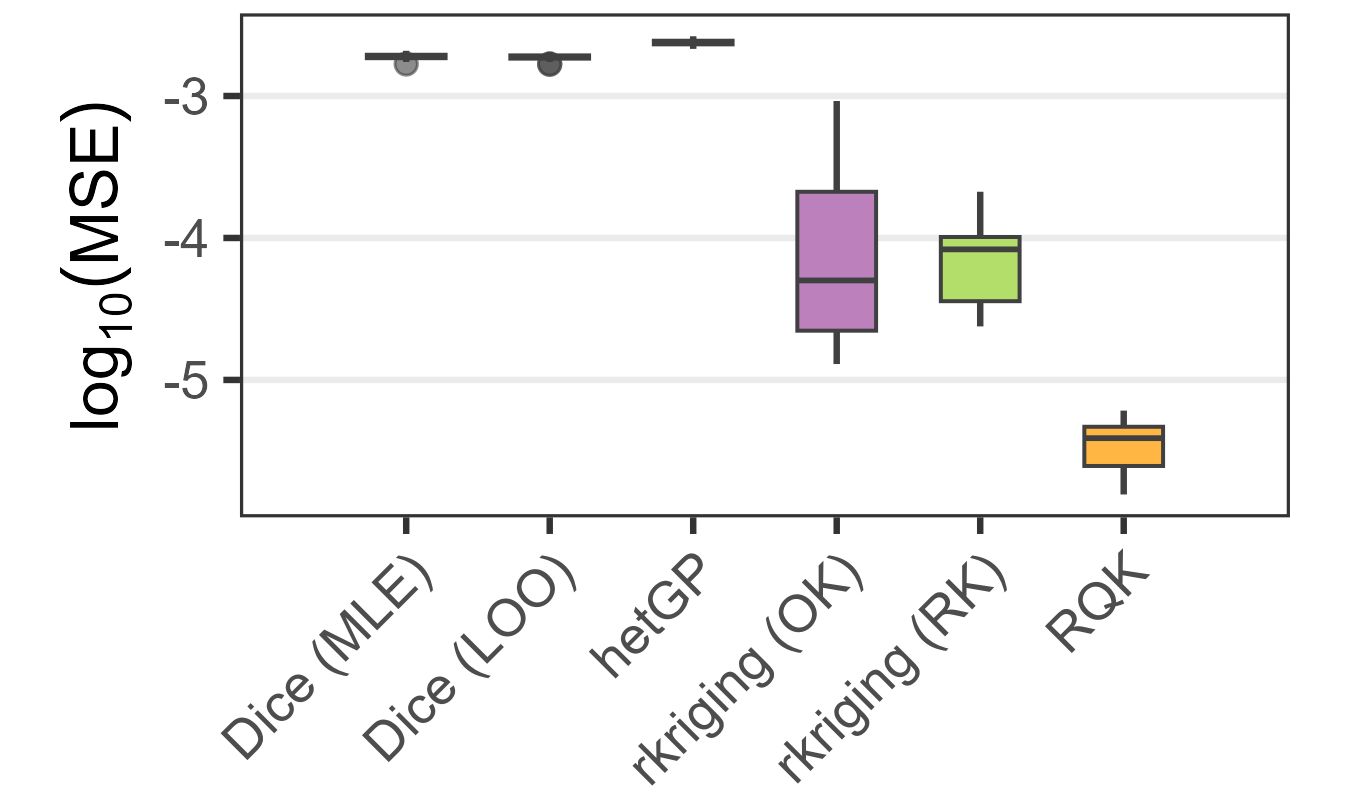}
\caption{OTL circuit function. Left: 6 dimensions. Right: 20 dimensions.}
\end{subfigure}%

\caption{Logarithm MSE plots for deterministic functions with $n = 60d$. The original functions are on the left, and functions with augmented inert variable are on the right. Each experiment is repeated over 30 random seeds for data generation.}
\label{fig:noiseless_60d}

\end{figure}

\begin{figure}[p]
\centering

\begin{subfigure}{\textwidth}
\centering
\includegraphics[width=0.4\textwidth]{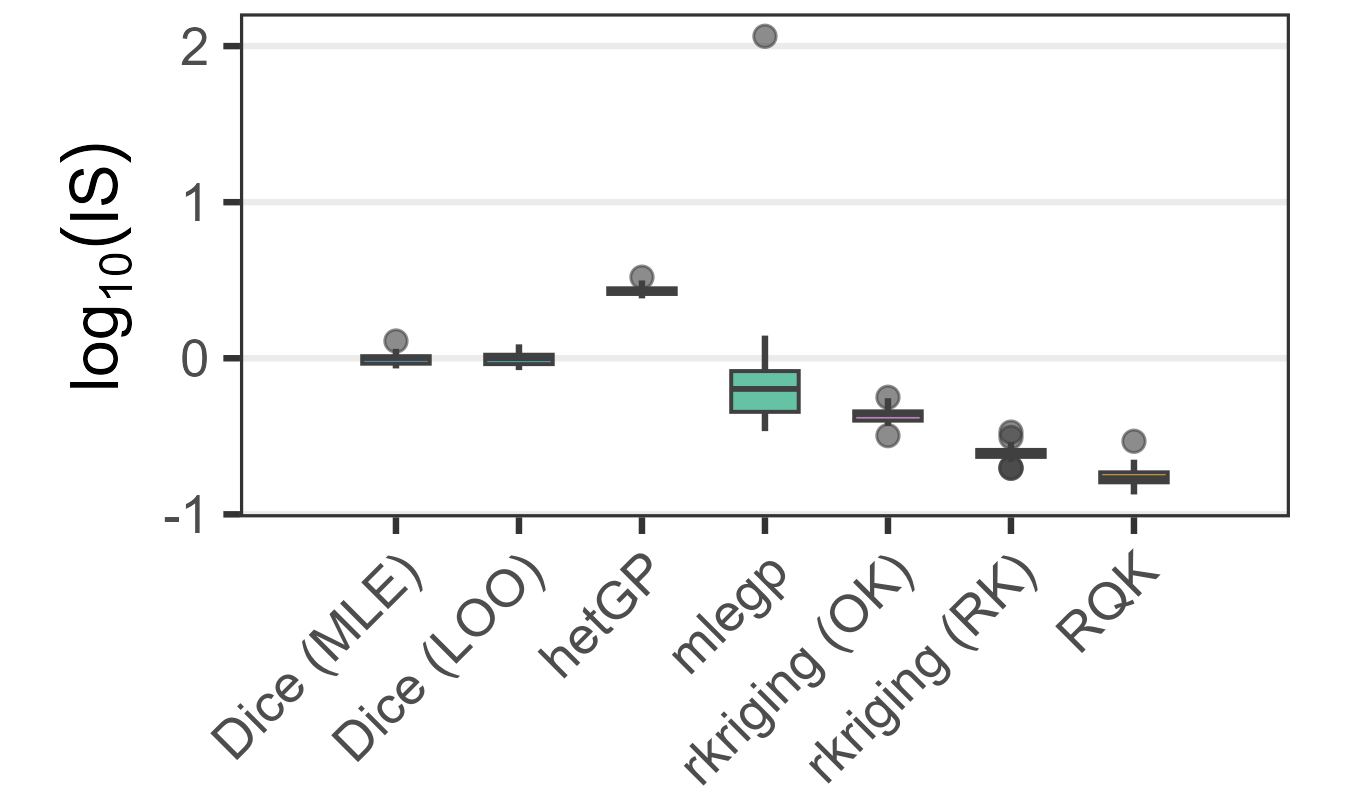}
\includegraphics[width=0.4\textwidth]{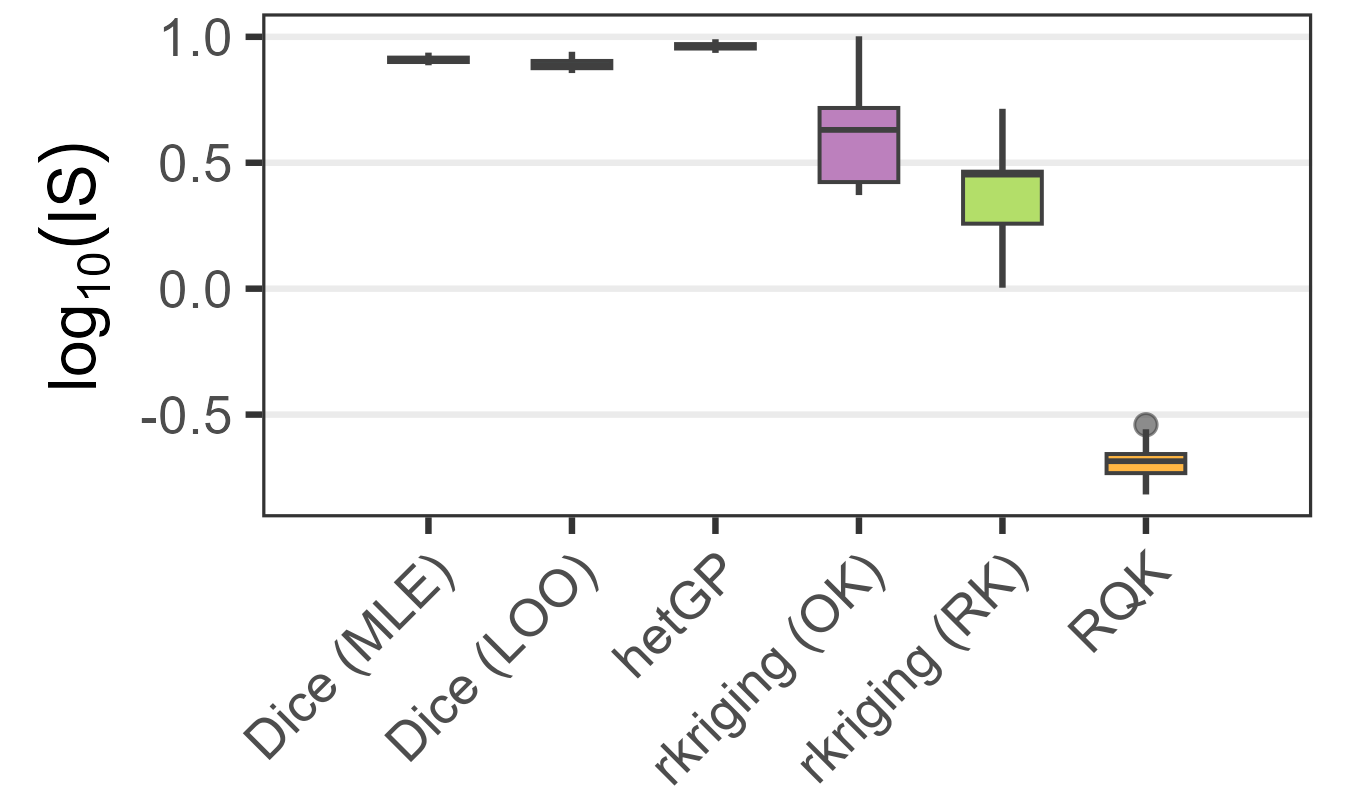}
\caption{Borehole function. Left: 8 dimensions. Right: 20 dimensions.}
\end{subfigure}%

\begin{subfigure}{\textwidth}
\centering
\includegraphics[width=0.4\textwidth]{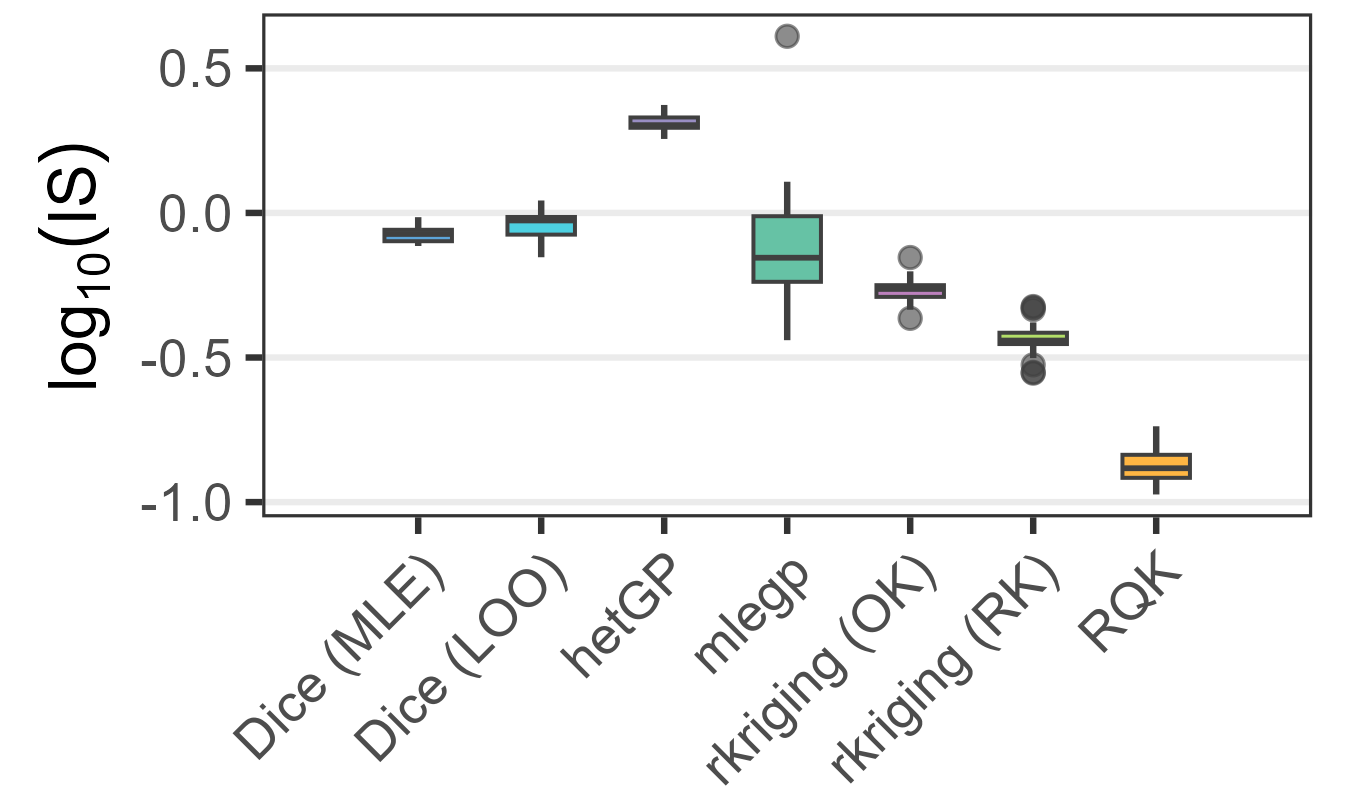}
\includegraphics[width=0.4\textwidth]{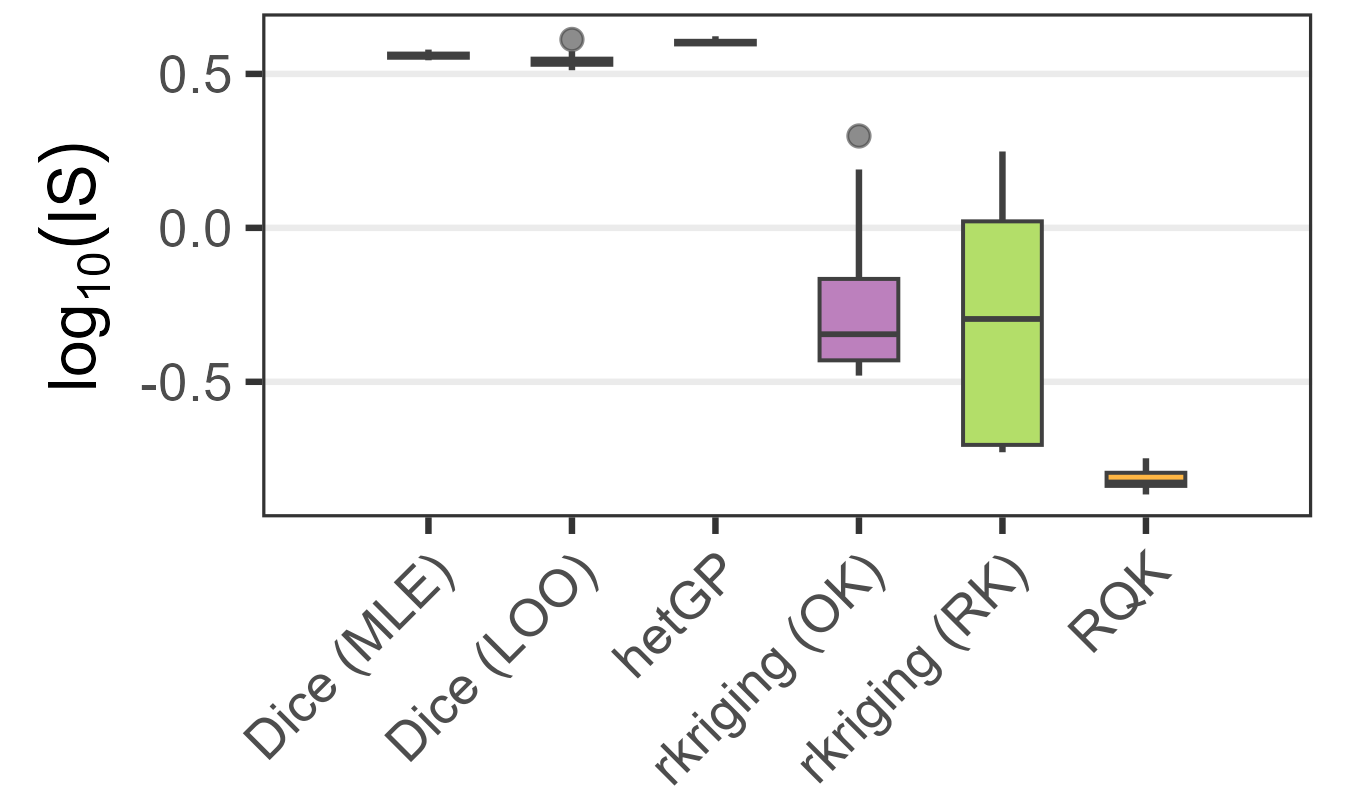}
\caption{Dette and Pepelyshev 8d function. Left: 8 dimensions. Right: 20 dimensions.}
\end{subfigure}%

\begin{subfigure}{\textwidth}
\centering
\includegraphics[width=0.4\textwidth]{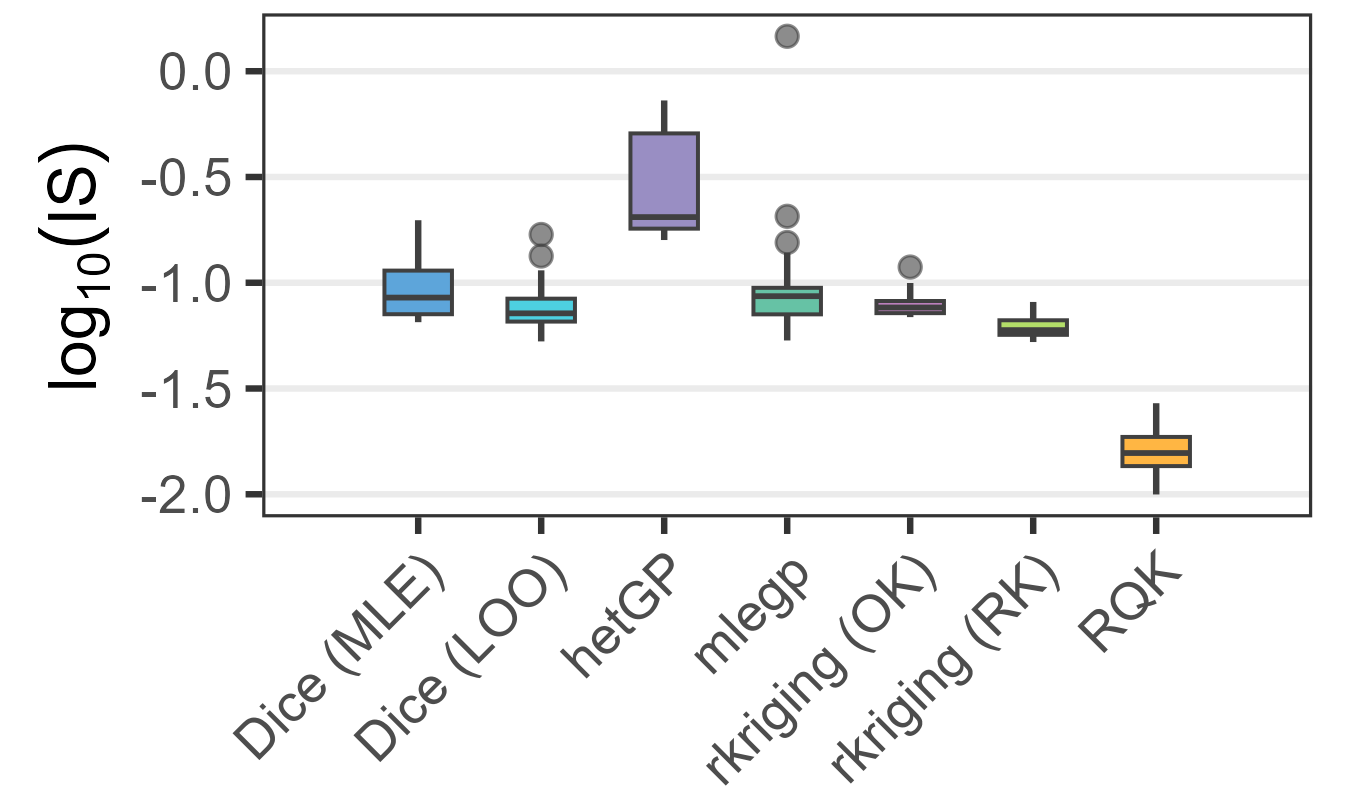}
\includegraphics[width=0.4\textwidth]{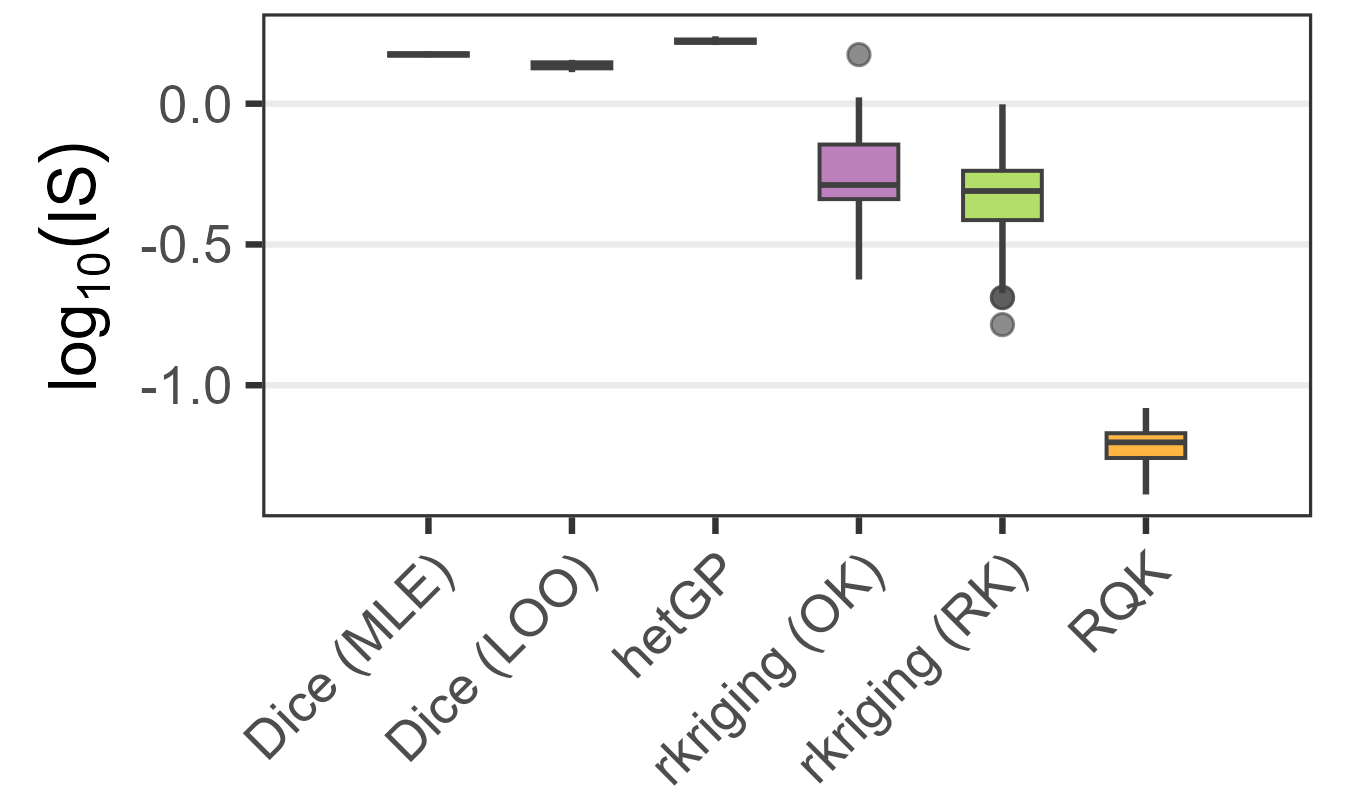}
\caption{Friedman function. Left: 5 dimensions. Right: 20 dimensions.}
\end{subfigure}%

\begin{subfigure}{\textwidth}
\centering
\includegraphics[width=0.4\textwidth]{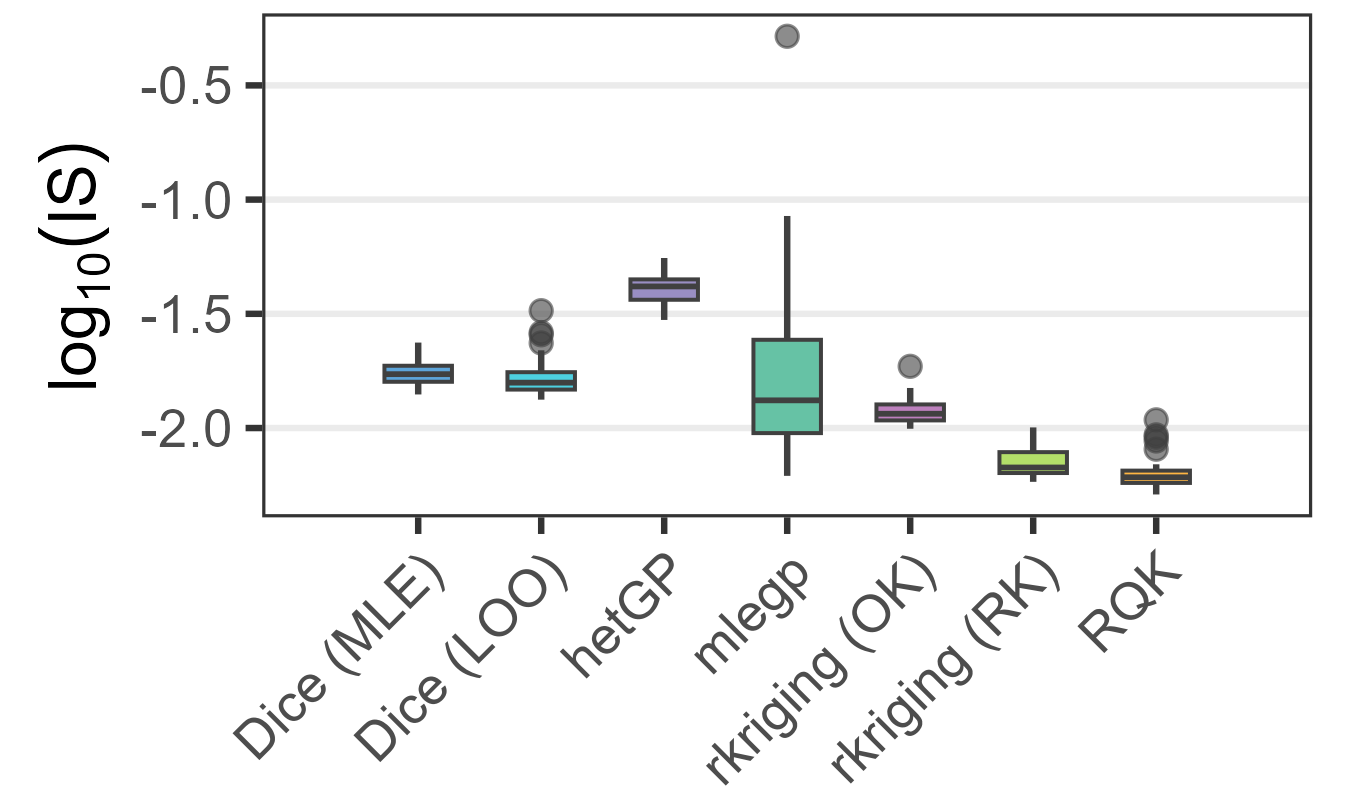}
\includegraphics[width=0.4\textwidth]{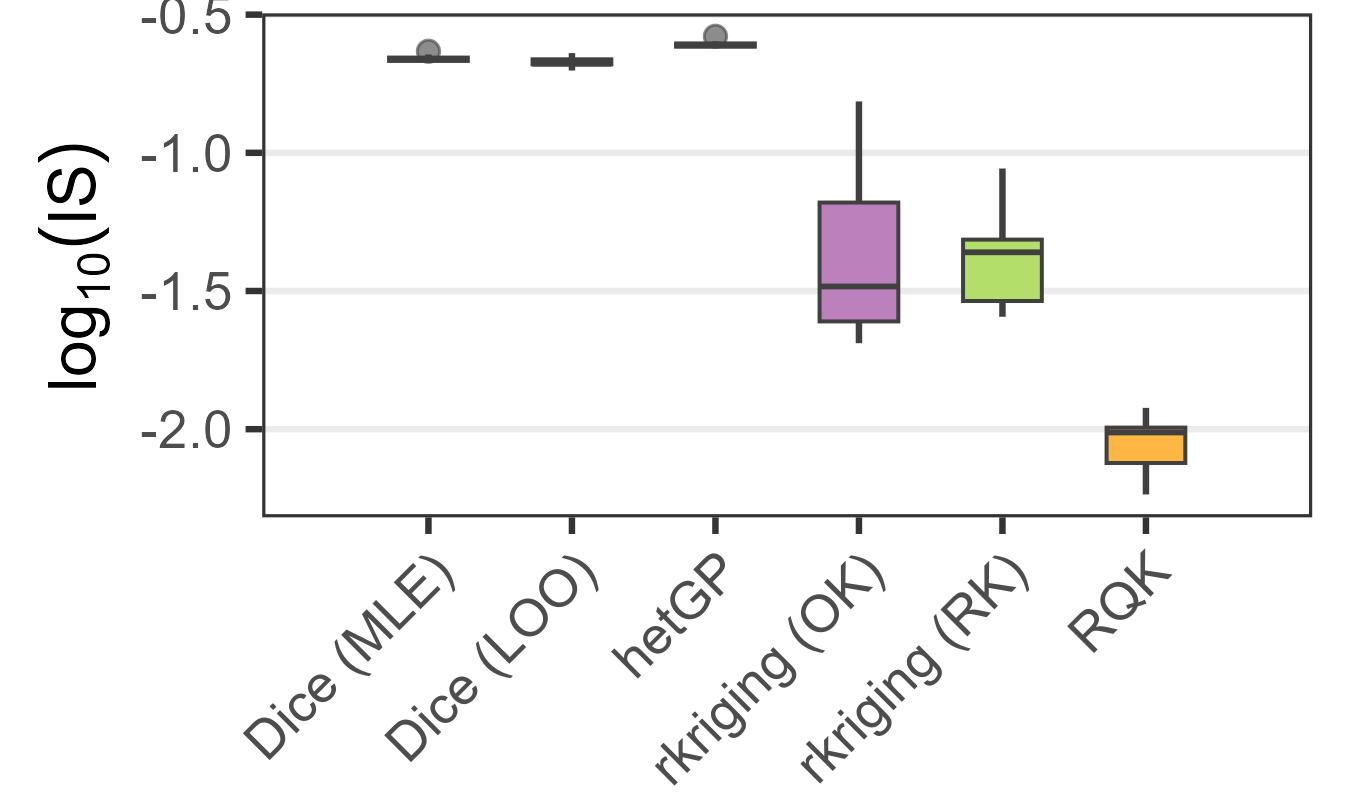}
\caption{OTL circuit function. Left: 6 dimensions. Right: 20 dimensions.}
\end{subfigure}%

\caption{Logarithm IS plots for deterministic functions with $n = 60d$. The original functions are on the left, and functions with augmented inert variable are on the right. Each experiment is repeated over 30 random seeds for data generation.}
\label{fig:noiseless_IS_60d}

\end{figure}

\vspace{.25in}


\FloatBarrier
\bibliography{bib}
\pagebreak

\end{document}